\documentclass[11pt]{article}
\usepackage[utf8]{inputenc}
\usepackage{amsmath,amsfonts,amssymb}
\usepackage{amsthm}
\usepackage{latexsym}
\usepackage{mathtools}
\usepackage[noadjust]{cite}
\usepackage{graphicx}
\usepackage{xcolor}
\usepackage{dirtytalk}
\usepackage[margin=1in]{geometry}
\usepackage{thm-restate}
\usepackage[linesnumbered,ruled]{algorithm2e}

\usepackage{cleveref}
\newtheorem{theorem}{Theorem}[section]
\newtheorem{corollary}[theorem]{Corollary}
\newtheorem{lemma}[theorem]{Lemma}

\newtheorem{proposition}[theorem]{Proposition}
\newtheorem{claim}[theorem]{Claim}
\theoremstyle{definition}
\newtheorem{definition}[theorem]{Definition}

\newtheorem{fact}[theorem]{Fact}

\newcommand{\C}{\mathbb{C}}
\newcommand{\N}{\mathbb{N}}

\newcommand{\R}{\mathbb{R}}

\DeclareMathOperator*{\E}{\mathbb{E}}

\DeclareMathOperator{\slft}{slift}
\DeclareMathOperator{\Span}{span}
\newcommand{\slift}[1]{\slft\left(#1\right)}

\newcommand{\ep}{\eps}
\newcommand{\ra}{\rightarrow}
\newcommand{\la}{\leftarrow}
\newcommand{\sym}[1][ ]{\mathbf{U}_{#1}}

\newcommand{\boldVar}[1]{\mathbf{#1}}
\newcommand{\mvar}[1]{\boldVar{#1}}
\newcommand{\vvar}[1]{\vec{#1}}

\newcommand{\mX}{\mvar{X}}

\newcommand{\mA}{\mvar{A}}
\newcommand{\mF}{\mvar{F}}
\newcommand{\mB}{\mvar{B}}
\newcommand{\mS}{\mvar{S}}
\newcommand{\mI}{\mvar{I}}
\newcommand{\mE}{\mvar{E}}
\newcommand{\mW}{\mvar{W}}
\newcommand{\mV}{\mvar{V}}
\newcommand{\mQ}{\mvar{Q}}
\newcommand{\mM}{\mvar{M}}
\newcommand{\mR}{\mvar{R}}
\newcommand{\mL}{\mvar{L}}
\newcommand{\mP}{\mvar{P}}

\newcommand{\tS}{\widetilde{\mS}}
\newcommand{\tA}{\widetilde{\mA}}
\newcommand{\tV}{\widetilde{\mV}}
\newcommand{\tB}{\widetilde{\mB}}
\newcommand{\tW}{\widetilde{\mW}}
\newcommand{\tL}{\widetilde{\mL}}
\newcommand{\tR}{\widetilde{\mR}}
\newcommand{\tM}{\widetilde{\mM}}

\newcommand{\pimin}{\pi_{\min}}

\newcommand*{\defeq}{\stackrel{\text{def}}{=}}

\newcommand{\poly}{\operatorname{poly}}
\newcommand{\polylog}{\operatorname{polylog}}

\renewcommand{\L}{\mathbf{L}}

\newcommand{\md}{\operatorname{md}}

\newcommand{\uc}{\stackrel{\circ}{\approx}}
\newcommand{\sv}{\stackrel{sv}{\approx}}

\newcommand{\eps}{\varepsilon}
\newcommand{\tO}{\tilde{O}}

\newcommand{\zo}{\{0,1\}}

 
\global\long\def\C{\mathbb{C}}%
%

%
 
%
 
%


\global\long\def\defeq{\stackrel{\mathrm{{\scriptscriptstyle def}}}{=}}%
\global\long\def\E{\mathbb{E}}%
\global\long\def\otilde{\tilde{O}}%

 
%
 
%

%
 
\global\long\def\norm#1{\|#1\|}%
%

%

%
 
%
 
%
 
%
 
%
 
%
 
%
 
%
 
%
 
%
 
%
 
%
 
%
 
%
 
%
 
%
 
%
 
%
 
%
 
\global\long\def\vx{x}%
 
\global\long\def\vy{y}%

\global\long\def\vzero{\vvar 0}%
 
\newcommand{\vones}{\vvar 1}%

\global\long\def\ma{\mvar A}%
 
\global\long\def\mb{\mvar B}%

\global\long\def\md{\mvar D}%
\global\long\def\mE{\mvar E}%

\global\long\def\mI{\mvar I}%
\global\long\def\mJ{\mvar J}%

\global\long\def\mL{\mvar L}%
\global\long\def\mm{\mvar M}%

\global\long\def\mq{\mvar Q}%
 
\global\long\def\mr{\mvar R}%
 
\global\long\def\ms{\mvar S}%

\global\long\def\mU{\mvar U}%
 
\global\long\def\mv{\mvar V}%
 
\global\long\def\mw{\mvar W}%

\global\long\def\mproj{\mvar P}%

\global\long\def\mzero{\mvar 0}%
%
%

 
%

%
%

%
 %

%

%

%
 
%

%

\global\long\def\argmin{\mathrm{argmin}}%
\global\long\def\vol{\mathrm{vol}}%
\global\long\def\supp{\mathrm{supp}}%
\global\long\def\poly{\mathrm{poly}}%

\newcommand{\mD}{\mathbf{D}}
\newcommand{\mPi}{\mathbf{\Pi}}

\newcommand{\diag}{\mathrm{diag}}
\newcommand{\mdiag}{\mvar{diag}}%

\newcommand{\uapprox}{\mathbin{\stackrel{sv}{\approx}}}

\newcommand{\dout}{d_{\mathrm{out}}}%
\newcommand{\din}{d_{\mathrm{in}}}%
\newcommand{\mdout}{\mD_{\mathrm{out}}}%
\newcommand{\mdin}{\mD_{\mathrm{in}}}%
\newcommand{\SC}{\mathrm{SC}}%

\newcommand{\Din}{\mathbf{D}_{\mathrm{in}}}

\newcommand{\Dout}{\mathbf{D}_{\mathrm{out}}}

\newcommand{\cTrans}{*}
\newcommand{\rTrans}{\top}

\newcommand{\svapprox}[3]

\newcommand{\svgeneral}[3]{\mathbin{\stackrel{#1,#2}{\approx_{#3}}}}
\newcommand{\svEulerian}[2]{\mathbin{\stackrel{#1}{\approx}_{#2}}}
\newcommand{\svgraph}[1]{\mathbin{\stackrel{\mathit{sv}}{\approx}_{#1}}}
\newcommand{\svn}[1]{\mathbin{\stackrel{\mathit{svn}}{\approx}_{#1}}}

\newcommand{\lker}{\mathrm{lker}}
\newcommand{\rker}{\mathrm{rker}}
\newcommand{\sigmamax}{\sigma_{\mathrm{max}}}
\newcommand{\tC}{\mathbf{\tilde{C}}}
\newcommand{\mC}{\mathbf{C}}
\newcommand{\mEL}{\mathbf{E}}
\newcommand{\mER}{\mathbf{F}}
\newcommand{\mN}{\mathbf{N}}
\newcommand{\tN}{\mathbf{\tilde{N}}}

\newcommand{\algUnweightedOnce}{\textsc{SparsifyCycleUnweighted}~}

\newcommand{\algSparsify}{\textsc{SparsifyGraph}~}
\newcommand{\algProduct}{\textsc{SparsifyProduct}}
\newcommand{\algSparsifyOnce}{\textsc{SparsifyCycle}~}
\newcommand{\algCycleDecomp}{\textsc{CycleDecomp}~}
\newcommand{\algExpDecomp}{\textsc{ExpanderDecomp}}
\newcommand{\TalgExpDecomp}[1]{\mathtt{T}(\textsc{ExpanderDecomp},#1)}
\newcommand{\PhialgExpDecomp}{\Phi(\algExpDecomp)}

\newcommand{\algFindClose}{\textsc{FindClose}}
\newcommand{\algPowerDi}{\textsc{SparsifyScaledPower}}

\newcommand{\algCut}{\textsc{EstimateCut}}

\newcommand{\muedge}{\mu_{\mathrm{edge}}}

\newcommand{\Cut}{\mathrm{Cut}}
\newcommand{\Uncut}{\mathrm{Uncut}}
\newcommand{\tG}{\widetilde{G}}

\newif\ifdraft
\drafttrue

\newif\ifanon
\anonfalse

\newcommand{\ds}{\text{\textcircled{s}}}

\title{Singular Value Approximation and \\ Sparsifying Random Walks on Directed Graphs}

\ifanon
\author{Anonymous submission to FOCS 2023}
\else
\author{AmirMahdi Ahmadinejad\\Amazon\thanks{The work was started prior to joining Amazon and does not relate to Amazon}\\ \texttt{ahmadi@alumni.stanford.edu} \and John Peebles\\Apple\\ \texttt{peebles@apple.com} \and Edward Pyne\thanks{Supported by an Akamai Presidential Fellowship.} \\MIT\\\texttt{epyne@mit.edu} \and Aaron Sidford\thanks{Supported by a Microsoft Research Faculty Fellowship, NSF CAREER Award CCF-1844855, NSF Grant CCF-1955039, a PayPal research award, and a Sloan Research Fellowship.}
\\Stanford University\\\texttt{sidford@stanford.edu} \and Salil Vadhan\thanks{Supported by a Simons Investigator Award.}\\Harvard University\\ \texttt{salil\_vadhan@harvard.edu}}
\fi

\begin{document}
\begin{titlepage}
\maketitle

\begin{abstract}
In this paper, we introduce a new, spectral notion of approximation between directed graphs, which we call \emph{singular value (SV) approximation}. SV-approximation is stronger than previous notions of spectral approximation considered in the literature, including spectral approximation of Laplacians for undirected graphs~\cite{ST04}, standard approximation for directed graphs~\cite{CKPPRSV17}, and unit-circle (UC) approximation for directed graphs~\cite{AKMPSV20}. Further, SV approximation enjoys several useful properties not possessed by previous notions of approximation, e.g., it is preserved under products of random-walk matrices and bounded matrices.

We provide a nearly linear-time algorithm for SV-sparsifying (and hence UC-sparsifying) Eulerian directed graphs, as well as $\ell$-step random walks on such graphs, for any $\ell\leq \poly(n)$. 
Combined with the Eulerian scaling algorithms of \cite{CKKPPRS18}, given an arbitrary (not necessarily Eulerian) directed graph and a set $S$ of vertices, we can approximate the stationary probability mass of the $(S,S^c)$ cut in an 
$\ell$-step random walk to within a multiplicative error of $1/\polylog(n)$ and an additive error of $1/\poly(n)$
in nearly linear time.
As a starting point for these results, we provide a simple black-box reduction from SV-sparsifying Eulerian directed graphs to SV-sparsifying undirected graphs; such a directed-to-undirected reduction was not known for previous notions of spectral approximation.

\end{abstract}
\vfill
\thispagestyle{empty}
\end{titlepage}
\newpage

\tableofcontents
\newpage

\section{Introduction}

Random walks on graphs play a central role in theoretical computer science. In algorithm design, they have found a wide range of applications including, maximum flow~\cite{CKMST11,LRS13,KLOS14,BGJLLPS22,BLLSS0W21},
sampling random spanning trees~\cite{KM09,MST15}, and clustering and partitioning~\cite{AM85,KVV04,ACL06,OSV12}. Correspondingly, new algorithmic results on efficiently accessing properties of random walks have the potential for broad implications.
In particular, in complexity theory, such algorithms have attracted attention as a promising approach to derandomizing space-bounded computation~\cite{SZ99,Rei08,RTV,AKMPSV20}. 

In this paper we consider the well-studied problem of \emph{estimating the $\ell$-step random walk on a directed graph}. Given a strongly connected, weighted, directed graph $G = (V, E, w)$, its associated random walk matrix $\mW \in \R^{V \times V}$, and an integer $\ell > 0$, we seek to 
approximate key properties of the $\ell$-step random walk, $\mW^\ell$, more efficiently than
we could computing $\mW^\ell$ explicitly.  For example, we may wish to estimate individual entries of $\mW^\ell$, the conductance or cut probabilities of subsets of vertices, or (expected) hitting times between pairs of vertices.

In recent years, \emph{graph sparsification} has emerged as a powerful approach for efficiently solving such problems. When the graph is undirected, we look for {\em spectral sparsifiers} of the {\em Laplacian} $\mL = \mD-\mA$, where
$\mD$ is the diagonal matrix of degrees and $\mA$ is the adjacency matrix. 
It is known that for all $\epsilon \in (0, 1)$, that there exist $\epsilon$-spectral sparsifiers with sparsity $\otilde(|V| \epsilon^{-2})$; that is, a Laplacian matrix $\tilde{\mL}$ with $\otilde(|V| \epsilon^{-2})$ non-zero entries such that 
\begin{equation} \label{eqn:spectral-sparsifier}
(1 - \epsilon) x^\top \mL x \leq x^\top \tilde{\mL} x \leq (1 + \epsilon) x^\top \mL x
\text{ for all }
x \in \R^V\,.
\end{equation}

Spectral sparsifiers can be computed in
nearly linear time~\cite{ST04,SS08,BSS12,PS13}. 
Normalizing such a sparsifier $\tilde{\mL}$ by $\mD^{-1/2}$ on both sides, we obtain a spectral approximation of 
the {\em normalized Laplacian} $\mD^{-1/2} \mL \mD^{-1/2}$, which directly gives information about random walks because it is equivalent (up to a change of basis) to the  {\em random-walk Laplacian}, $\mL\mD^{-1} = \mI-\mW$.
Indeed, 
from any spectral sparsifier $\tilde{\mL}$ satisfying \Cref{eqn:spectral-sparsifier}, we
can approximate any desired cut $(S,S^c)$ in the original graph in nearly linear time by evaluating $x^\top \tL x$ for $x$ equal to the indicator vector of $S$. 
Furthermore, there are nearly linear-time algorithms for computing sparse $\epsilon$-spectral sparsifiers corresponding to the $\ell$-step random walk, i.e., sparsifiers of the weighted graph whose random-walk Laplacian is $\mI-\mW^\ell$, for
any polynomial length $\ell$~\cite{ChengCLPT15,MRSV2}.

Obtaining analogous results for sparsifying $\mI - \mW^\ell$ for directed graphs has been more challenging. For a directed graph, we consider the \emph{directed Laplacian}~\cite{CKPPSV16} $\L=\Dout-\mA^{\rTrans}$ where $\Dout$ is the associated diagonal matrix of out-degrees and $\mA^{\rTrans}$ is the transpose of the associated weighted adjacency matrix, $\mA$.  In comparison to their symmetric counterparts for undirected graphs, nearly linear-time sparsification algorithms  (which approximate more than the associated undirected graph) were developedl more recently~\cite{CKPPRSV17,CGPSSW18} and have yet to be extended to handle long random walks.
Here we describe challenges in sparsifying $\mI-\mW^\ell$ for directed graphs.

\paragraph{Unknown Stationary Distribution.} While the kernel of an undirected Laplacian matrix is the all ones vector, computing the kernel of a directed Laplacian matrix $\L$ corresponds to computing the stationary distribution $\pi$ of the random walk on the directed graph ($\L\Dout^{-1}\pi = 0$). Without explicitly knowing the kernel, it is not known how to efficiently perform any kind of
useful sparsification or approximately solve linear systems in $\L$. 
This difficulty was overcome in \cite{CKPPSV16, AhmadinejadJSS19} which provide reductions from solving general directed Laplacian systems to the case where the graph is {\em Eulerian}, meaning that every vertex has the same in-degree as out-degree. In Eulerian graphs, the stationary distribution is simply proportional to the vertex degrees and the all ones vector is both the left and right kernels of the associated directed Laplacian.

\paragraph{Defining Approximation.} Undirected Laplacians $\L$ are symmetric and positive semidefinite (PSD), i.e., $x^\top \L x \geq 0$ for all $x$.
This leads to the natural Spielman--Teng~\cite{ST04} definition of multiplicative approximation
given in \Cref{eqn:spectral-sparsifier}. That is, we say that $\tL$ is an $\eps$-approximation of $\mL$ if $(1-\epsilon)\mL \preceq \tL \preceq (1+\epsilon)\mL$, where $\preceq$ is the L\"owner order on PSD matrices. However, even though Laplacians of directed graphs are potentially asymmetric, the quadratic form $x^\top \mL x$ depends only on a symmetrization of the Laplacian ($x^\top \mL x = x^\top ((\mL + \mL^\top)/2)x$). Consequently, the quadratic form discards key information about the associated directed graph (e.g. the quadratic form of a directed cycle and an undirected cycle are the same). Thus, defining approximation for directed graphs (even Eulerian ones) is more challenging than for undirected graphs and a more complex notion of approximation was introduced in \cite{CKPPSV16}. 
This additional complexity requires designing new sparsification algorithms that take into account the directedness of the graph. 

\paragraph{Preservation under Powering.}  Even for undirected graphs, the standard definition of spectral approximation
in~\Cref{eqn:spectral-sparsifier} 
is not preserved under powering.  That is, $\mI -\tW \approx \mI-\mW$ does not
imply that $\mI - \tW^2 \approx \mI-\mW^2$. Indeed, in graphs that are bipartite (and connected), $\mI-\mW^2$ has a two-dimensional kernel, corresponding to the $\pm1$ eigenvalues of $\mW$, whereas $\mI-\mW$ has only a one-dimensional kernel.  Standard spectral approximation requires perfect preservation of the kernel of $\mI-\mW$, but not of $\mI-\mW^2$.  Graphs that are nearly bipartite (i.e., where $\mW$ has an eigenvalue near $-1$) can also experience a large loss in quality of approximation when squaring.

Cheng et al.~\cite{ChengCLPT15} addressed this issue by (implicitly) strengthening spectral approximation to require that $\mI+\tW \approx \mI+\mW$ in addition to $\mI-\tW\approx \mI-\mW$.
This notion of approximation enabled algorithms for sparsifying
$\mI-\mW^\ell$ for undirected graphs in randomized near-linear time \cite{ChengCLPT15,MRSV2} and deterministic logspace~\cite{MRSV2,DMVZ}.  For directed graphs, the problem comes not just from bipartiteness, but general periodic structures (e.g. a directed cycle), which give $\mW$ complex eigenvalues
on or near the unit circle.  This led Ahmadenijad et al.~\cite{AKMPSV20} to propose the notion of {\em unit-circle (UC) approximation}, which amounts to requiring that $\mI-z\tW$ approximate $\mI-z\mW$ for all complex numbers $z$ of magnitude 1, with respect to the standard notion of approximation for directed graphs proposed in \cite{CKPPSV16}.
UC approximation has the property that it is preserved under taking arbitrary powers, with
no loss in the approximation error. As such, sparsification techniques for UC and stronger notions must exactly preserve periodicity. 

\paragraph{Preservation of Periodic Structures.}  Sparsifying directed graphs under UC approximation is more challenging due to the need to preserve periodic structures in the graph, which can be easily lost or introduced by common sparsification techniques such as random sampling~\cite{SS08} or patching to fix degrees.  Thus in \cite{AKMPSV20}, it
was only shown how to partially sparsify the {\em square} of a graph; that is obtain a graph with random-walk matrix
$\widetilde{\mW^2}$ such that $\mI-\widetilde{\mW^2}$ UC-approximates $\mI-\mW^2$, but has fewer edges than the true square $\mW^2$.  Still, the number of edges in $\widetilde{\mW^2}$ is larger than in $\mW$ by at least a constant factor, so if we iterate to obtain a sparse approximation of $\mW^\ell$, the number of edges will grow by a factor of $c^{\log \ell} = \poly(\ell)$ and our approximations will quickly become dense.  This was affordable in the deterministic logspace algorithms of \cite{AKMPSV20}, but is not in our setting of nearly linear time.

\subsubsection*{Our Work}

In this paper we provide several tools for overcoming these challenges, advancing both algorithmic and structural tools regarding graphs sparsification. First, we introduce a new notion of directed graph approximation called \emph{singular value (SV) approximation}. We then show that that this notion of approximation strictly strengthens unit-circle approximation and show that it has a number of desirable properties, 
such as preservation under not only powers but arbitrary products of random-walk matrices, and implying approximation
of stationary probabilities of all cuts.  
Then we provide an efficient near linear-time randomized algorithm for computing nearly linear-sized SV-sparsifiers for arbitrary Eulerian directed graphs; this implies the first proof that nearly linear-sized UC-sparsifiers exist for Eulerian directed graphs.  
As a starting point for this result, we provide a simple reduction from SV-sparsifying Eulerian directed graphs to SV-sparsifying {\em undirected graphs}; no such reduction was known for the previous, weaker forms of spectral approximation of directed graphs, and shows that SV approximation is a significant strengthening even for undirected graphs.

Combined with the Eulerian scaling algorithms of \cite{CKKPPRS18}, we obtain an algorithm for approximating
the stationary probabilities of cuts (as well as ``uncuts'') in random walks on arbitrary directed graphs, which we define as follows:
\begin{definition}[Cut values] \label{def:Cut}
    For a strongly connected, weighted digraph $G$ on $n$ vertices, let $\mu$ be the unique stationary distribution of the random walk on $G$, and let $\muedge$ be the stationary distribution on edges $(i,j)$ of $G$ (i.e., pick $i$ according to $\mu$ and $j$ by following an edge from $i$ proportional to its weight).  For subsets $S$ and $T$ of vertices, define:
    \begin{itemize}
        \item $\Cut_G(S,T) = \Pr_{(i,j)\sim \muedge}[i \in S, j\in T]$,
        \item $\Cut_G(S) = \Cut_G(S,S^c) = (\Cut_G(S,S^c)+\Cut_G(S^c,S))/2$, and
        \item $\Uncut_G(S) = (\Cut_G(S,S)+\Cut_G(S^c,S^c))/2$. 
    \end{itemize}
    If $G$ has random-walk matrix $\mW$, we may write $\Cut_\mW$ and $\Uncut_{\mW}$ instead of $\Cut_G$ and $\Uncut_G$.
\end{definition}

\begin{definition}[Powering]  For a weighted digraph $G$ with adjacency matrix $\mA$, out-degree matrix $\Dout$, and random-walk matrix $\mW=\mA\Dout^{-1}$, we write $G^\ell$ for the weighted digraph with adjacency matrix $(\mA\Dout^{-1})^\ell\cdot \Dout$ (and thus out-degree matrix $\Dout$ and random-walk matrix $\mW^\ell$).
\end{definition}

With these definitions, the main application of our SV sparsification results is the following:
\begin{theorem}[informal, see~\Cref{thm:dirCut}] \label{thm:dirCut-intro}
    There is a randomized algorithm that, given a strongly connected $n$-node $m$-edge directed graph $G$ with integer edge weights in $[1,U]$, a walk length $\ell$, an error parameter $\eps>0$, and lower bound $s$ on the minimum stationary probability of the random walk on $G$, 
   runs in time
   $O((m+n\eps^{-2})\cdot \poly(\log (U\ell/s))$
   and outputs an $O(n\eps^{-2}\cdot \poly(\log(U\ell/s)))$-edge graph $H$ such that for every two sets $S,T$ of vertices, we have:  
   \[\left|\Cut_{H}(S,T) - \Cut_{G^\ell}(S,T)\right|
    \leq \frac{\eps}{2} \cdot \sqrt{\min\left\{\Cut_{G^\ell}(S),\Uncut_{G^\ell}(S)\right\} \cdot
    \min\left\{\Cut_{G^\ell}(T),\Uncut_{G^\ell}(T)\right\}}.\]
   In particular:
    \[ \left(1-\eps\right)\cdot \Cut_{G^\ell}(S) \leq 
    \Cut_{H}(S)
    \leq  \left(1+\eps\right)\cdot \Cut_{G^\ell}(S),\]
and
    \[ \left(1-\eps\right)\cdot \Uncut_{G^\ell}(S) \leq 
    \Uncut_{H}(S)
    \leq  \left(1+\eps\right)\cdot \Uncut_{G^\ell}(S).\]   
\end{theorem}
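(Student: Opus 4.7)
The plan is to reduce cut approximation on a general directed graph to SV-sparsification of an Eulerian graph, the main technical contribution developed earlier in the paper. First, invoke the Eulerian scaling algorithm of \cite{CKKPPRS18} to compute an approximate stationary distribution $\tilde\mu$ of $G$ to precision $1/\poly(Un\ell/s)$ in time $O(m\cdot \polylog(Un\ell/s))$, and form the (approximately) Eulerian graph $E$ whose adjacency matrix is $\mW \cdot \mdiag(\tilde\mu)$. By construction, $E$'s edge weights are precisely the one-step stationary edge probabilities $\muedge$ appearing in the definition of $\Cut_G$, and this correspondence lifts to $\ell$-step powers: the Eulerian graph $E^\ell$ with adjacency matrix $\mW^\ell \cdot \mdiag(\tilde\mu)$ has $\Cut_{E^\ell}(S,T)$ matching $\Cut_{G^\ell}(S,T)$ up to a $1/\poly(n\ell)$ slack inherited from using $\tilde\mu$ in place of $\mu$.

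Second, I would apply the near-linear time SV-sparsifier for $\ell$-step Eulerian random walks to $E$, producing an Eulerian graph $H$ with $O(n\eps^{-2}\polylog(Un\ell/s))$ edges in time $O((m+n\eps^{-2})\polylog(Un\ell/s))$ such that $H$ SV-approximates $E^\ell$ with parameter $\eps/2$. Because Eulerian cut values are bilinear forms of the adjacency matrix, $\Cut(S,T) = \bone_T^\top \mA \bone_S$, and SV approximation by definition bounds exactly such bilinear differences by $(\eps/2)$ times the geometric mean of two symmetric quadratic forms evaluated at the test vectors, we immediately obtain a bilinear control on $|\Cut_H(S,T)-\Cut_{E^\ell}(S,T)|$. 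The key step is to identify the relevant symmetric quadratic form at $\bone_S$ in the Eulerian graph $E^\ell$ with $\min\{\Cut_{G^\ell}(S),\Uncut_{G^\ell}(S)\}$: this follows from the Eulerian identities $\bone_S^\top(\mD - (\mA+\mA^\top)/2)\bone_S = \Cut(S)$ and (because $\mu$ is a probability distribution) $\Uncut(S) = \tfrac12 - \Cut(S)$, together with the freedom in SV approximation to choose whichever of the $\mI\pm\mW$-type symmetrizations yields the smaller bound. The multiplicative $\Cut$ and $\Uncut$ bounds in the theorem then follow by specializing to $T = S^c$ and using $\min\{\Cut(S),\Uncut(S)\} \leq \Cut(S)$ (respectively $\leq \Uncut(S)$).

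The main obstacle is the identification of this symmetric quadratic form with $\min\{\Cut,\Uncut\}$, which requires the SV-approximation definition to simultaneously control both $\mI-\mW$-type (cut) and $\mI+\mW$-type (uncut) symmetrizations; this is precisely the strengthening that rules out blowup from periodic or near-bipartite structures, analogous in spirit to the strengthening introduced in Cheng et al.\ to preserve approximation under powering of undirected walks. A secondary subtlety is that the error from using $\tilde\mu$ rather than the true stationary $\mu$ must not amplify across $\ell$ matrix powers; I control this by invoking \cite{CKKPPRS18} to precision $1/\poly(n\ell)$, paying only $\polylog(U\ell n/s)$ overhead, which is absorbed into the final runtime and sparsity bounds.
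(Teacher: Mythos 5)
Your high-level plan matches the paper's: Eulerian scaling via \cite{CKKPPRS18}, SV-sparsify the $\ell$-th power, apply the SV definition to indicator test vectors. But your key step — ``identify the relevant symmetric quadratic form at $\bone_S$ with $\min\{\Cut,\Uncut\}$'' via ``the freedom in SV approximation to choose whichever of the $\mI\pm\mW$-type symmetrizations yields the smaller bound'' — misdescribes what SV approximation actually gives you, and this is where the argument has a genuine gap. The SV definition does not let you choose a symmetrization; the right-hand side is fixed to the quadratic forms $x^\top(\mI-\mW\mW^\top)x$ and $y^\top(\mI-\mW^\top\mW)y$. Plugging $x = \bone_S$ into this gives $\Cut_{\mW\mW^\top}(S)$, the cut value of the \emph{backward-forward walk} graph, not $\Cut(S)$ or $\Uncut(S)$. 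The bound by $\min\{\Cut(S),\Uncut(S)\}$ then requires a separate combinatorial lemma: $\Cut_{\mW\mW^\top}(S) \le 2\Cut_\mW(S)$ because a two-step backward-forward walk that crosses $(S,S^c)$ must cross in one of its two steps, and $\Cut_{\mW\mW^\top}(S) \le 2\Uncut_\mW(S)$ by a parallel argument. Your identity $\Uncut(S) = 1/2 - \Cut(S)$ is correct but does not by itself produce the geometric-mean form in the statement; you need the $\mW\mW^\top$ intermediary and its two upper bounds.

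Two further gaps are glossed over. First, for repeated-squaring-and-sparsify to produce bounded integer edge weights you need a spectral-gap guarantee along the way; the paper adds a small laziness $\gamma\mI$ to the rescaled walk matrix precisely to bound $\sigma_2$ away from $1$, and this laziness (like the imprecision in $\tilde\mu$) introduces an additive perturbation that must be tracked. Second, your plan stops at an additive $1/\poly(n\ell)$ error, but the theorem demands a purely multiplicative guarantee. Converting additive to multiplicative requires showing that any nonzero value $\Cut_{G^\ell}(S,T)$ (with $\vol(S)+\vol(T)\ge 1$) is bounded below by $(\pimin/2U)^3$; the paper proves this (\Cref{prop:statNonNeg}) via a stationary-norm decrease argument and then chooses the additive precision small relative to that floor (and separately handles the zero-cut case by pruning tiny edges). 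Without this lower bound the additive slack cannot be absorbed into the $(1\pm\eps)$ factor.
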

Note that when $U,\ell\leq \poly(n)$, $s\geq 1/\poly(n)$, and $\eps\geq 1/\poly(\log n)$, our algorithm runs in time $\tO(m)$.  
For comparison, note that, given a set $S$, we can estimate the cut value for $S$ using random walks in time roughly $\tO(\ell/(\eps^2\Cut_G(S)))$, which is slower when $\ell/\Cut_G(S)$ is $m^{1+\Omega(1)}$. (Note that $\Cut_G(S)$ can be as small as $1/m$.) 

It is also worth comparing to the following approaches that yield {\em high-precision} estimates (i.e. replacing multiplicative error $\eps$ with polynomially small additive error):
\begin{itemize}
\item Use matrix powering via repeated squaring to compute $\mW^\ell$.  This takes time $n^\omega\cdot \log(\ell)$, where $\omega$ is the matrix multiplication exponent. This is slower than our algorithm assuming $\omega>2$ or $m\leq n^{2-\Omega(1)}$. 
\item Use the algorithm of \cite{CKKPPRS18} to obtain a high-precision estimate of the stationary distribution $\mu$ of $G$ in time $\tO(m)$, and then use repeated matrix-vector multiplication to compute $\mW^\ell \mu$.  This takes time $\tO(m\ell)$, so is slower than our algorithm except when $\ell=\polylog(n)$.
\item Use the algorithm of Ahmadenijad et al.~\cite{AKMPSV20}. This also gives high-precision estimates of $\mW^\ell$, and does so in nearly logarithmic space, but the running time is superpolynomial. A running time of $\Omega(m \cdot \ell)$ seems inherent in the approach as it works by reducing to solving a directed Laplacian system of size $m\cdot \ell$.
\end{itemize}
It remains an interesting open problem to estimate any desired entry of $\mW^\ell$ to high precision in nearly linear time.

\paragraph{Other Work on SV Approximation.} 

The definition of SV approximation and some of our results on it (obtained in collaboration between Jack Murtagh and the authors) were first presented in the first author's PhD thesis~\cite{AmirThesis} in August 2020. 
Independently, Kelley~\cite{Kelley21} used a variant of SV approximation to present an alternative proof of a result of \cite{HozaPyVa21}, who used unit-circle approximation to prove that the Impagliazzo-Nisan--Wigderson pseudorandom generator~\cite{ImpagliazzoNiWi94} fools permutation branching programs.  Golowich and Vadhan~\cite{GolowichVa22} used SV approximation and some of our results (presented in the aforementioned thesis) to prove new pseudorandomness properties of expander walks against permutation branching programs.  Most recently, Chen, Lyu, Tal, and Wu~\cite{ChenLyTaWu22} have used a form of SV approximation to present alternative proofs of the results of \cite{AKMPSV20,PyneVa21a}.

\subsection{Singular-Value Approximation}
\label{sec:intro:sv}
In this paper, we present a stronger and more robust notion for addressing the challenge of defining approximation between directed graphs. Specifically, we introduce a novel definition of approximation for 
asymmetric matrices,
which we call {\em singular-value approximation (or SV approximation)}.  

For simplicity in the rest of this introduction, we focus on the case of regular directed graphs, i.e.\ directed graphs where for some value $d\geq 0$, every vertex has in-degree $d$ and out-degree $d$. (In the case of digraphs with non-negative edge weights, we obtain the in- and out-degrees by summing the in-coming or out-going edge weights at each vertex.)  However, all of our results generalize to Eulerian digraphs and some generalize to wider classes of complex matrices.

To introduce SV approximation, let $\mA$ be the adjacency matrix of a $d$-regular digraph, i.e., $\mA$ is a non-negative real matrix where every row and column sum equals $d$. Then the (in- and out-) degree matrix is simply $d\mI$.  Dividing by $d$, it is equivalent to study approximation of the {\em random-walk matrix} $\mW = \mA/d$, which is doubly stochastic, and has degree matrix $\mI$.  

\begin{definition}[SV approximation for doubly stochastic matrices]
\label{def:SV-intro}
For doubly stochastic matrices $\mW$, $\tW\in \R^{n\times n}$ we say that
$\tW$ is an {\em $\eps$-singular-value (SV) approximation of $\mW$}, written
$\tW \svgraph{\eps} \mW$, if for all ``test vectors'' $x,y\in \R^n$, we have
\begin{equation}
        \left|x^\rTrans (\tW - \mW) y\right| \leq \frac{\eps}{4} \cdot \left[ x^\rTrans (\mI - \mW\mW^{\rTrans}) x + y^\rTrans (\mI - \mW^{\rTrans}\mW) y\right]. \label{ineq:SV-intro-doublystoch}
\end{equation}
\end{definition}
This formulation of SV-approximation is one of several equivalent formulations we provide later in the paper (\Cref{SVequiv}). We can equivalently define SV approximation between doubly stochastic matrices by requiring \Cref{ineq:SV-intro-doublystoch} to hold for all complex test vectors $x,y\in \C^n$.  SV-approximation can also be defined equivalently by replacing condition \eqref{ineq:SV-intro-doublystoch} with 
\begin{equation}
        \left|x^{\rTrans} (\tW - \mW) y\right| \leq \frac{\eps}{4} \cdot \sqrt{ \left[x^{\rTrans} (\mI - \mW\mW^{\rTrans}) x\right]\cdot \left[ y^{\rTrans} (\mI - \mW^{\rTrans}\mW) y\right]}. 
        \label{ineq:SV-intro-doublystoch-geometric}
\end{equation}
These two formulations, \eqref{ineq:SV-intro-doublystoch} and \eqref{ineq:SV-intro-doublystoch-geometric}, differ only in using the geometric mean or the arithmetic mean of the terms involving $x$ and $y$ on the right-hand side.  The formulation in terms of the geometric mean implies the one in terms of the arithmetic mean (since the geometric mean is no larger than the arithmetic mean); the converse follows by optimizing over scalar multiples of $x$ and $y$ (as was done in e.g.~\cite{CKPPRSV17,AKMPSV20}).
 Both formulations can be rewritten more simply by noting that
$$ x^{\rTrans} (\mI - \mW\mW^{\rTrans}) x = \|x\|^2 - \|x^{\rTrans} \mW\|^2 \text{ and }
y^{\rTrans} (\mI - \mW^{\rTrans}\mW) y = \|y\|^2 - \|\mW y\|^2,
$$
but the description in terms of quadratic forms will be more convenient for comparison with previous notions of approximation.
In Section~\ref{sec:sv}, we provide more general definitions of SV approximation which also apply to unnormalized directed Laplacians and even to complex matrices.

We prove that SV approximation is strictly stronger than previous notions of spectral approximation considered in the literature, even for undirected graphs, and enjoys several useful properties not possessed by the previous notions.  Most notably, there is a simple black-box reduction from SV-sparsifying Eulerian directed graphs to SV-sparsifying undirected graphs; no such reduction is known for prior notions of asymmetric spectral approximation. 

Furthermore, we give efficient algorithms for working with SV approximation.  These include nearly linear-time algorithms for SV-sparsifying undirected and hence also Eulerian directed graphs (\Cref{thm:sparsify-intro}), as well as random-walk polynomials of directed graphs (\Cref{thm:sparsify-intro}).  We also show that  a simple repeated-squaring and sparsification algorithm for solving Laplacian systems also works for Eulerian digraphs whose random-walk matrix is normal (i.e., unitarily diagonalizable), if we use SV-sparsification at each step (\Cref{thm:squaringBased-intro}). Prior Laplacian solvers for Eulerian graphs are more complex. We elaborate on these results in the next several subsections.

\subsection{Comparison to Previous Notions of Approximation}
Let us compare \Cref{def:SV-intro} to previous definitions of approximation.  

\paragraph{Undirected spectral approximation.} Let's start with the undirected case, where $\mW=\mW^{\rTrans}$.  In this case, it can be shown that we can without loss of generality restrict~\Cref{def:SV-intro} to $x=y$, obtaining the following: $\tW \svgraph{\eps} \mW$ requires that for all $x\in \R^n$,
    \begin{equation}
        \left|x^{\rTrans} (\tW - \mW) x\right| \leq \frac{\eps}{2} \cdot \left[ x^{\rTrans} (\mI - \mW^2)x\right]. \label{ineq:SV-intro-undirected}
    \end{equation}
In contrast, the standard definition of spectral approximation (introduced by Spielman and Teng~\cite{ST04}), which we denote by $\tW\approx_\eps \mW$, is equivalent to requiring that for all $x\in \R^n$, we have
    \begin{equation}
        \left|x^{\rTrans} (\tW - \mW) x\right| \leq \eps \cdot \left[ x^{\rTrans} (\mI - \mW)x\right].
        \label{ineq:ST-intro}
    \end{equation}
To compare inequalities~\eqref{ineq:SV-intro-undirected} and \eqref{ineq:ST-intro}, we write $x=\sum_i c_iv_i$, where $v_1,\ldots,v_n$ is an 
orthonormal eigenbasis for $\mW$ with associated eigenvalues $\lambda_1,\ldots,\lambda_n$. Since $\mW$ is stochastic, $|\lambda_i|\leq 1$ for all $i \in [n]$, the right-hand side of SV inequality~\eqref{ineq:SV-intro-undirected} becomes
$$\frac{\eps}{2}\cdot \sum_{i \in [n]} c_i^2 \cdot (1-\lambda_i^2),$$
whereas the right-hand side of ST inequality~\eqref{ineq:ST-intro}
becomes
$$\eps\cdot \sum_{i \in [n]} c_i^2 \cdot (1-\lambda_i).$$
Since each $|\lambda_i| \leq 1$, the fact that SV approximation implies ST approximation
then follows from 
$$(1-\lambda_i^2) = (1-\lambda_i)(1+\lambda_i) \leq 2(1-\lambda_i).$$
However, we also see that inequality~\eqref{ineq:SV-intro-undirected} can be much stronger than inequality~\eqref{ineq:ST-intro} when $\mW$ has eigenvalues $\lambda_i$ close to -1 (e.g. in a bipartite graph with poor expansion) because then $1-\lambda_i^2$ is close to 0, but $1-\lambda_i$ is bigger than 2. More generally,  
inequality~\eqref{ineq:SV-intro-undirected} requires that $\tW$ approximates $\mW$ very well on every test vector $x$ that is concentrated on the eigenvectors whose eigenvalues have {\em magnitude} close to 1, whereas inequality~\eqref{ineq:ST-intro} only requires close approximation on the (signed) eigenvalues that are close to 1. 

We remark that another way of ensuring $\tW$ preserves unit singular values
is to 
replace $\mW^2$ in the SV inequality~\eqref{ineq:SV-intro-undirected} with
the matrix $|\mW|$ where we replace all eigenvalues of $\mW$ with their absolute value rather than their square,\footnote{Another way of describing $|\mW|$ is 
as the psd square root of the psd matrix $\mW^2$.}
so that we have:
$$x^\top (\mI - |\mW|)x = \sum_{i \in [n]} c_i^2 \cdot (1-|\lambda_i|).$$
Using $|\mW|$ instead of $\mW^2$
results in an equivalent definition up to a factor of 2 in $\eps$, and $\mW^2$ turns out to be convenient to work with.\footnote{$|\mW|=(\mW^2)^{1/2}$, i.e., $|\mW|$ is the PSD square root of $\mW^2$.  In Definition~\ref{def:SV-intro}, we could similarly replace $\mW\mW^\top$ and $\mW^\top\mW$ with their PSD square roots and obtain a definition that is equivalent up to a factor of 2 in $\eps$.}  This viewpoint also explains why we stop at $\mW^2$ in the definition and don't explicitly use higher powers; it is simply a convenient proxy for $|\mW|$, which captures all powers.  Indeed, for all $k\in \N$
$$\mI-\mW^2 \preceq 2\cdot \left(1-|\mW|\right) \preceq 2\cdot \left(\mI-\mW^k\right).$$

\paragraph{Directed spectral approximation.} Turning to previous notions of spectral approximation for directed graphs, \emph{standard approximation}~\cite{CKPPRSV17} generalizes the definition of Spielman and Teng~\cite{ST04} by saying $\tW \approx_\eps \mW$ if for all $x,y\in \R^n$
    \begin{equation}
        \left|x^{\rTrans} (\tW - \mW) y\right| \leq \frac{\eps}{2} \cdot \left[x^{\rTrans} (\mI - \mW) x + y^{\rTrans} (\mI - \mW) y\right]. \label{ineq:strong-intro}
    \end{equation}
Equivalently, we can require that for all $x,y\in \R^n$,
    \begin{equation}
        \left|x^{\rTrans} (\tW - \mW) y\right| \leq \eps \cdot \sqrt{\left[x^{\rTrans} (\mI - \mW) x\right]\cdot\left[y^{\rTrans} (\mI - \mW) y\right]}. \label{ineq:strong-intro-geometric}
    \end{equation}
It can be shown that for undirected graphs, standard approximation is equivalent to the condition of Equation~\ref{ineq:ST-intro}, so we will also refer to it as standard approximation.

The use of different left and right test vectors $x$ and $y$ on the left-hand side is crucial for capturing the asymmetric information in $\tW$ and $\mW$.  As before, if $x$ or $y$ is concentrated on eigenvectors of $\mW$ whose eigenvalues are close to 1, then the right-hand side of ST inequality~\eqref{ineq:strong-intro-geometric} is close to 0 and $\tW$ must approximate $\mW$ very well.  However, like the standard undirected ST inequality~\eqref{ineq:ST-intro}, not much is required on eigenvalues near -1.  Moreover, asymmetric matrices can have eigenvalues that are not real and are equal to or close to complex numbers of magnitude 1.  For example, the eigenvalues of a directed $n$-cycle are the complex $n$'th roots of unity.

To address this issue, {\em unit-circle (UC) approximation}~\cite{AKMPSV20}, written $\tW\uc_\eps \mW$, requires that for all {\em complex} test vectors $x,y\in \C^n$, we have 
    \begin{equation}
        \left|x^* (\tW - \mW) y\right| \leq \frac{\eps}{2} \cdot \left[\|x\|^2+\|y\|^2 -\left|x^*\mW x + y^*\mW y\right|\right]. \label{ineq:UC-intro}
    \end{equation}
That is, we take the complex magnitude of the terms involving $\mW$ on the right-hand side of ST inequality~\eqref{ineq:strong-intro}.  That way, if $x$ and $y$ are concentrated on eigenvectors of $\mW$ that have eigenvalue near some complex number $\mu$ of magnitude 1, we require that $\tW$ approximates $\mW$ very well.  
For example, consider the case where $\mW$ is normal, i.e., has an orthonormal basis of complex eigenvectors $v_1,\ldots,v_n$ and with corresponding complex eigenvalues $\lambda_1,\ldots,\lambda_n$.  Then if we write $x=\sum_i c_i v_i$ and $y=\sum_i d_i v_i$, the right-hand side of UC inequality~\eqref{ineq:UC-intro} becomes:
\begin{equation} \label{expr:UC-intro}
    \frac{\eps}{2}\cdot \sum_{i \in [n]} (|c_i|^2+|d_i|^2) - \left|\sum_{i \in [n]} (|c_i|^2+|d_i|^2)\cdot \lambda_i\right|.
\end{equation}
If $x$ and $y$ are concentrated on eigenvalues $\lambda_i\approx \mu$ where $|\mu|=1$, then this expression will be close to 0.
Unit-circle approximation has valuable properties not enjoyed by standard approximation, in particular being preserved under powering: If $\tW$ is an $\eps$-UC approximation of $\mW$, then for every positive integer $k$, $\tW^k$ is an $O(\eps)$-UC approximation of $\mW^k$; note that the quality of approximation does not degrade with $k$. This property was crucial for the results of \cite{AKMPSV20}.

However, UC approximation has two limitations compared to SV approximation.  First,  UC expression~\eqref{expr:UC-intro} is only small if $x$ and/or $y$ is concentrated on eigenvalues that are all close to the {\em same} point $\mu$ on the complex unit circle. Even in the undirected case, if $x$ and $y$ are  mixtures of eigenvectors with eigenvalue close to 1 and eigenvalue close to -1, then there will be cancellations in the second term of
UC expression~\eqref{expr:UC-intro} and the result will not be small.  Second, some properties of asymmetric matrices are more directly captured by {\em singular values} than eigenvalues, since singular values 
treat the domain and codomain as distinct. For example, the second-largest singular value of $\mW$ equals 1 if and only if there is a probability distribution $\pi$ on vertices that does not mix at all in one step (i.e., $\|\mW\pi - u\| = \|\pi - u\|$, where $u=\vones/n$ and $\pi\neq u$), but the latter can hold even when all nontrivial eigenvalues have magnitude strictly smaller than 1.\footnote{For instance, consider a directed graph on $\{1,2,3,4\}$ with edges $\{(1,2),(1,2),(2,3),(2,4),(4,1),(3,1),(3,4),(4,3)\}$. A walk started on vertex $1$ will not mix at all after the first step.}

To see how SV approximation addresses these limitations, let $\sigma_1,\ldots,\sigma_n\geq 0$ be the singular values of $\mW$, let $u_1,u_2,\ldots,u_n\in \C^n$ the corresponding left-singular vectors of $\mW$, and let $v_1,\ldots,v_n\in \C^n$ the corresponding right-singular vectors.  If we write $x=\sum_i c_iu_i$ and $y=\sum_i d_iv_i$, then the right-hand side of SV inequality~\eqref{ineq:SV-intro-doublystoch} becomes:
\begin{equation} \label{expr:SV-intro}
\frac{\eps}{4}\cdot \left[\sum_{i \in [n]} (|c_i|^2+|d_i|^2)\cdot (1-\sigma_i^2)\right].
\end{equation}
Consequently, SV-approximation requires high-quality approximation if $x$ is concentrated on left-singular vectors of singular value close to 1 and/or $y$ is concentrated on right-singular vectors of singular value close to 1. (For the ``or'' interpretation, use the formulation of SV approximation in terms of inequality~\eqref{ineq:SV-intro-doublystoch-geometric}.)
To compare with UC expression~\eqref{expr:UC-intro}, let us consider what happens with a {\em normal} matrix, where $u_i=v_i$ and $\sigma_i=|\lambda_i|$.  In this case, SV expression~\eqref{expr:SV-intro} amounts to bringing the absolute value of UC expression~\eqref{expr:UC-intro} inside the summation (and squaring, which only makes a factor of 2 difference), to avoid cancellations between eigenvalues of different phases. 

Furthermore, for non-normal matrices, SV approximation retains the asymmetry of $\mW$ even on the right-hand side, by always using $x$ on the left of $\mW$ (thus relating to its decomposition into left singular vectors) and $y$ on the right of $\mW$ (thus relating to its decomposition into right singular vectors).  Indeed, this feature allows us to even extend the definition of SV approximation to non-square matrices. (See Section~\ref{sec:sv}.)

Following the above intuitions, we prove
that SV approximation is indeed strictly stronger than the previous notions of approximation, even for undirected graphs:
\begin{theorem}
    For all doubly stochastic matrices $\mW$ and $\tW$, if $\tW \svn{\eps} \mW$, then $\tW \uc_{\eps} \mW$ (and hence $\tW \approx_\eps \mW$). On the other hand, for every $n\in \N$ there exist random walk matrices $\tW,\mW$ for $n$-node undirected graphs such that $\tW\uc_{O(1/\sqrt{n})}\mW$, but it is not the case that $\tW\svn{.3}\mW$.
\end{theorem}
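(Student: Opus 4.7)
The statement has two parts: an easy implication SV $\Rightarrow$ UC (and hence $\Rightarrow$ standard approximation), plus a separation example. For the forward direction, the plan is purely algebraic. Rewrite the SV right-hand side using $x^{\ast}(\mI - \mW\mW^{\ast})x = \|x\|^2 - \|\mW^{\ast} x\|^2$ and similarly for $y$; the goal reduces to checking that $(\eps/4)(\|x\|^2 - \|\mW^{\ast} x\|^2 + \|y\|^2 - \|\mW y\|^2)$ is at most $(\eps/2)(\|x\|^2 + \|y\|^2 - |x^{\ast}\mW x + y^{\ast}\mW y|)$. After multiplying through and cancelling, this is equivalent to the pointwise inequality
\[
2|x^{\ast}\mW x + y^{\ast}\mW y| \;\le\; \|x\|^2 + \|\mW^{\ast} x\|^2 + \|y\|^2 + \|\mW y\|^2.
\]
I would prove this by the triangle inequality $|x^{\ast}\mW x + y^{\ast}\mW y| \le |x^{\ast}\mW x| + |y^{\ast}\mW y|$, then Cauchy--Schwarz in the form $|x^{\ast}\mW x| = |\overline{x^{\ast}\mW x}| = |x^{\ast}\mW^{\ast} x| \le \|x\|\|\mW^{\ast} x\|$ and $|y^{\ast}\mW y| \le \|y\|\|\mW y\|$, followed by AM-GM $2ab \le a^2 + b^2$ on each term. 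The implication UC $\Rightarrow$ standard is then immediate since $|z| \ge \mathrm{Re}(z)$ makes the UC right-hand side no larger than the standard one.

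For the separation, my plan is to construct, for each $n$, a pair $\mW, \tW$ of symmetric doubly stochastic matrices with a carefully engineered near-bipartite spectrum. I would take $\mW$ with two eigenvalues $\lambda_u \approx 1$ and $\lambda_v \approx -1$ at distance $\Theta(1/n)$ from $\pm 1$ on orthonormal eigenvectors $u, v$, with the remaining spectrum small, e.g.\ a suitably shifted convex combination of $\mI/n$ and the random walk on $K_{n/2, n/2}$. Then set $\tW = \mW + \eta(uv^{\top} + vu^{\top}) + \text{(correction)}$, where $\eta = \Theta(1/\sqrt n)$ induces a cross-coupling of $u$ and $v$, and the correction is a small symmetric adjustment restoring row sums and nonnegativity without moving $|\lambda_u|, |\lambda_v|$ significantly.

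The key calculation is at the test pair $(x,y) = (u,v)$: one finds $|u^{\ast}(\tW - \mW) v| = \eta = \Theta(1/\sqrt n)$, the SV right-hand side is $(\eps/4)((1 - \lambda_u^2) + (1 - \lambda_v^2)) = \Theta(\eps/n)$, and the UC right-hand side is $(\eps/2)(2 - |\lambda_u + \lambda_v|) = \Theta(\eps)$, since $\lambda_u + \lambda_v \approx 0$ cancels inside the absolute value. Hence SV at $\eps = 0.3$ is violated by a factor of $\Omega(\sqrt n)$, while UC is satisfied at $\eps = O(1/\sqrt n)$ on this pair.

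The main obstacle is verifying UC uniformly over all complex test pairs, not just $(u, v)$. Since $\tW - \mW$ is supported on $\mathrm{span}(u, v)$, it suffices to consider $x, y$ in this plane; writing $x = au + bv$ and $y = cu + dv$, the task is to bound $\eta|ad + bc|$ by $(\eps/2)(\|x\|^2 + \|y\|^2 - |\lambda_u(|a|^2 + |c|^2) + \lambda_v(|b|^2 + |d|^2)|)$, which I would handle by case analysis according to which of $|a|^2 + |c|^2$ and $|b|^2 + |d|^2$ dominates (this controls how much the cancellation structure of the absolute value buys). The simultaneous demand that the correction keep $\tW$ entrywise nonnegative, doubly stochastic, and near-bipartite is the subtle part of the construction, and is where I expect to spend the most effort.
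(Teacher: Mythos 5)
Your proof of the implication SV $\Rightarrow$ UC $\Rightarrow$ standard is correct, and it is more elementary than the paper's. The paper (Lemma~\ref{lem:SVimpliesUC}) routes through the equivalent block-matrix formulation in \Cref{SVequiv} and then optimizes over the phase $z$; you instead reduce directly to the pointwise inequality $2|x^*\mW x + y^*\mW y| \le \|x\|^2 + \|\mW^*x\|^2 + \|y\|^2 + \|\mW y\|^2$ and dispatch it with triangle inequality, Cauchy--Schwarz, and AM--GM. Both are valid; yours avoids any of the Schur-complement machinery.

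Your plan for the separation, however, has a concrete quantitative bug and two genuine gaps. The parameter $\eta=\Theta(1/\sqrt n)$ does not give UC approximation with $\eps = O(1/\sqrt n)$. The test pair $(u,v)$ that you check is not the binding one: take $x = a\,u$, $y = d\,v$ with $p=|a|^2$, $s=|d|^2$, say $p \gg s$. Then the left side is $\eta\sqrt{ps}$ while the UC right side is $\frac{\eps}{2}(p + s - |p\lambda_u + s\lambda_v|) \approx \frac{\eps}{2}\bigl(p(1-\lambda_u) + s(1+|\lambda_v|)\bigr) \approx \frac{\eps}{2}(\alpha p + 2s)$ where $\alpha = 1-\lambda_u = \Theta(1/n)$. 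By AM--GM the right side is at least $\eps\sqrt{2\alpha}\sqrt{ps}$, so the constraint is $\eps \ge \eta/\sqrt{2\alpha}$. With $\eta=\Theta(1/\sqrt n)$ and $\alpha = \Theta(1/n)$ this forces $\eps = \Omega(1)$, so your construction is \emph{not} a UC $O(1/\sqrt n)$-approximation. To make the numbers come out you would need $\eta = \Theta(1/n)$; then $\eps \ge \eta/\sqrt{2\alpha} = \Theta(1/\sqrt n)$ for UC, and at $(u,v)$ the SV bound forces $\eps' \ge 4\eta/((1-\lambda_u^2)+(1-\lambda_v^2)) = \Theta(\eta n) = \Theta(1)$, which can be arranged to exceed $0.3$. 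This is an honest fix, but it is not what you wrote.

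Beyond the parameter bug, there are two gaps you flag yourself but do not close. First, the uniform verification of UC over all test pairs: you argue one may restrict to $x,y\in\mathrm{span}(u,v)$ because $\tW-\mW$ is supported there; this is only true if the nonnegativity correction is \emph{also} supported in that span, which is not clear (a correction supported there changes the rank-two structure). Second, the nonnegativity/double-stochasticity of $\tW$: your perturbation $\eta(uv^\top + vu^\top)$ has mixed-sign entries of size up to $O(\eta/n)$, and $\mW$ must have at least this much slack entrywise while simultaneously having $\lambda_2, |\lambda_n|$ at distance $\Theta(1/n)$ from $1$; this tension is real and you only describe it as "where I expect to spend the most effort." The paper sidesteps all of this: it takes $\mW,\tW$ to be the bipartite lifts of the $1/2$- and $(1/2+\Theta(1/\sqrt n))$-lazy directed cycles, shows UC directly by a clean edge-Laplacian argument (Lemma~\ref{lem:UCholds}), and then refutes SV \emph{indirectly}: if SV held, preservation under powering (Lemma~\ref{lem:svproducts}) would make $\tW^k$ a constant-error standard approximation of $\mW^k$, which a binomial-concentration calculation (Lemma~\ref{lem:standardFails}) shows is false for $k = \Theta(n)$. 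This avoids checking SV or UC over all test vectors entirely, which is exactly the work your approach would need to do.
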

Since UC approximation implies standard approximation, we likewise separate SV from standard approximation. Finally, we note that our separation implies that several useful properties enjoyed by SV approximation, such as preservation under products (\Cref{prop:UC_prop_fail}), are not satisfied by UC approximation.

\subsection{Properties of SV Approximation}
SV approximation enjoys a number of novel properties not known to be possessed by previous notions of spectral approximation.  Most striking is the fact that directed approximation reduces to undirected approximation. To formulate this, we define the symmetric lift of a matrix:
\begin{definition}
    Given $\mW\in \C^{m\times n}$, let the \emph{symmetric lift} of $\mW$ be defined as
    \[
    \slift{\mW} \defeq \begin{bmatrix}\mzero^{n\times n} & \mW^*\\\mW & \mzero^{m\times m}\end{bmatrix}.
    \]
\end{definition}
Graph theoretically, the symmetric lift of $\mW$ is the following standard operation:  Given our directed graph $G$ on $n$ vertices with random-walk matrix $\mW$, we lift $G$ to an undirected bipartite graph $H$ with $n$ vertices on each side, where we connect left-vertex $i$ to right-vertex $j$ if there is a directed edge from $i$ to $j$ in $G$.  Then $\slift{\mW}$ is the random-walk matrix of $H$. 

\begin{theorem} \label{thm:directed2undirected-intro}
    Let $\mW$ and $\tW$ be doubly stochastic matrices.  Then
    $\mW \svgraph{\eps} \tW$ if and only if $\slift{\tW}\svgraph{\eps} \slift{\mW}$.
\end{theorem}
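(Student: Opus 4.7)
The plan is to expand the SV-approximation condition $\slift{\tW} \svgraph{\eps} \slift{\mW}$ by block-decomposing the test vectors, and verify that the resulting inequality is equivalent, in both directions, to the SV-approximation condition between $\tW$ and $\mW$.

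First, I would establish the basic structural facts. Since $\mW$ is doubly stochastic, so is $\slift{\mW}$, because each row (and column) of the lift contains exactly one nonzero block, a row (or column) of $\mW$ or $\mW^*$; thus \Cref{def:SV-intro} applies to both sides of the biconditional. Since $\slift{\mW}$ is Hermitian, a direct block multiplication gives
\[
\slift{\mW}\,\slift{\mW}^* \;=\; \slift{\mW}^*\,\slift{\mW} \;=\; \slift{\mW}^2 \;=\; \diag\bigl(\mW^*\mW,\; \mW\mW^*\bigr).
\]
For complex test vectors $x_z, y_z \in \C^{2n}$, write $x_z$ as the concatenation of $a, b \in \C^n$ (top and bottom halves) and $y_z$ similarly as the concatenation of $c, d \in \C^n$. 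A direct calculation then yields
\begin{align*}
x_z^*\bigl(\slift{\tW} - \slift{\mW}\bigr)y_z &= a^*(\tW-\mW)^*\,d + b^*(\tW-\mW)\,c,\\
x_z^*\bigl(\mI - \slift{\mW}^2\bigr)x_z &= a^*(\mI - \mW^*\mW)\,a + b^*(\mI - \mW\mW^*)\,b,
\end{align*}
with the analogous expression for $y_z$.

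For the ``lift implies base'' direction, given arbitrary $x, y \in \C^n$, I would substitute $(a, b, c, d) = (y, 0, 0, x)$ into the lift condition. All terms involving $b$ or $c$ vanish, so the inequality collapses to exactly
\[
\bigl|x^*(\tW - \mW)\,y\bigr| \;\leq\; \tfrac{\eps}{4}\bigl[x^*(\mI - \mW\mW^*)\,x + y^*(\mI - \mW^*\mW)\,y\bigr],
\]
which is the complex-test-vector form of \Cref{def:SV-intro} (equivalent to the real-vector version for doubly stochastic matrices, as the paper notes).

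For the converse, given arbitrary $a, b, c, d \in \C^n$, I would apply the triangle inequality
\[
\bigl|a^*(\tW - \mW)^* d + b^*(\tW - \mW)\,c\bigr| \;\leq\; \bigl|d^*(\tW - \mW)\,a\bigr| + \bigl|b^*(\tW - \mW)\,c\bigr|,
\]
using the identity $\bigl|a^*(\tW-\mW)^*d\bigr| = \bigl|d^*(\tW-\mW)\,a\bigr|$, and then bound each summand by the SV condition applied to the pairs $(d, a)$ and $(b, c)$. Summing these two bounds recovers exactly the right-hand side required for the lift condition. I do not foresee a significant obstacle: the argument reduces to block-matrix algebra plus one triangle inequality, with the only subtle point being the use of the complex-test-vector formulation of SV approximation, which is what allows the clean block substitutions in both directions.
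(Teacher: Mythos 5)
Your proof is correct and follows essentially the same approach as the paper's proof of this equivalence (which appears as the $1\Leftrightarrow 2$ case of Lemma~\ref{SVequiv}): the forward direction restricts the lifted test vectors to ones supported on a single half, and the reverse direction splits the lifted bilinear form into its two off-diagonal block contributions via the triangle inequality, applying the base SV inequality to each piece and summing. The only difference is that you work directly in the normalized doubly stochastic setting ($\Din = \Dout = \mI$) rather than deriving it from the more general degree-matrix version, which is fine given the theorem's hypotheses.
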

Thus, for the first time (as far as we know), sparsification of directed graphs reduces directly to sparsification of undirected graphs.  It would be very interesting to obtain a similar reduction for other algorithmic problems in spectral graph theory, such as solving Laplacian systems.

Another novel property of SV approximation is that it is preserved under products:
\begin{theorem} \label{thm:products-intro}
Let $\mW_1,\ldots,\mW_k$ and $\tW_1,\ldots,\tW_k$ be doubly stochastic matrices such that $\tW_i \svgraph{\eps} \mW_i$ for each $i \in [k]$.  Then $\tW_1\tW_2\cdots\tW_k \svgraph{\eps+O(\eps^2)} \mW_1\mW_2\cdots \mW_k$.
\end{theorem}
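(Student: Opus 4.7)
The plan is to use a telescoping identity combined with Cauchy--Schwarz and the geometric-mean formulation of SV-approximation. Specifically, expand
$$\tW_1\cdots\tW_k - P = \sum_{i=1}^{k} \mW_1\cdots\mW_{i-1} (\tW_i - \mW_i) \tW_{i+1}\cdots\tW_k,$$
where $P = \mW_1\cdots\mW_k$. For test vectors $x, y$, define $u_i = \mW_{i-1}^\top\cdots\mW_1^\top x$ and $v_i = \tW_{i+1}\cdots\tW_k y$, so that the $i$-th term equals $u_i^\top(\tW_i-\mW_i)v_i$. Applying the GM form of the hypothesis $\tW_i \svgraph{\eps} \mW_i$, each term is bounded by $(\eps/2)\sqrt{a_i b_i}$, where $a_i = u_i^\top(\mI-\mW_i\mW_i^\top)u_i$ and $b_i = v_i^\top(\mI-\mW_i^\top\mW_i)v_i$. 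Cauchy--Schwarz on the sum then yields
$$|x^\top(\tW_1\cdots\tW_k - P) y| \leq \tfrac{\eps}{2}\sqrt{\Big(\textstyle\sum_i a_i\Big)\Big(\textstyle\sum_i b_i\Big)}.$$

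The key structural fact making the $x$-side work is the identity
$$\mI - PP^\top = \sum_{i=1}^k \mW_1\cdots\mW_{i-1}(\mI-\mW_i\mW_i^\top)\mW_{i-1}^\top\cdots\mW_1^\top,$$
obtained by iterating $\mI - ABB^\top A^\top = (\mI - AA^\top) + A(\mI - BB^\top)A^\top$. Since $u_{i+1} = \mW_i^\top u_i$, the $a_i$ telescope exactly to $\sum_i a_i = x^\top(\mI-PP^\top)x$, matching the desired right-hand quantity. The test vectors $u_i$ are specifically designed to diagonalize this decomposition of $\mI - PP^\top$.

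The main obstacle is the $y$-side: since $v_{i-1} = \tW_i v_i$ rather than $\mW_i v_i$, the $b_i$ do not telescope directly. Letting $B_i y := \mW_{i+1}\cdots\mW_k y$ be the ``clean'' suffix images (for which $\sum_i(\|B_iy\|^2 - \|\mW_i B_iy\|^2) = y^\top(\mI-P^\top P)y$ by the analogous identity for $\mI-P^\top P$), the discrepancy is
$$b_i - (\|B_iy\|^2 - \|\mW_i B_iy\|^2) = (v_i - B_iy)^\top(\mI-\mW_i^\top\mW_i)(v_i + B_iy),$$
which involves the suffix-product differences $v_i - B_iy = (\tW_{i+1}\cdots\tW_k - \mW_{i+1}\cdots\mW_k)y$. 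I would close the loop with a bootstrap: let $\eta$ denote a common SV-approximation bound governing all $k$ suffix products against their $\mW$-counterparts, use SV-approximation together with Cauchy--Schwarz recursively to bound each $\|v_i - B_iy\|$ and hence the total swap error by $O(\eta) \cdot y^\top(\mI-P^\top P) y$, then derive and solve the fixed-point inequality $\eta \leq \eps(1 + O(\eta))$, which gives $\eta \leq \eps + O(\eps^2)$. The $O(\eps^2)$ slack in the theorem statement is exactly what absorbs these second-order swap corrections, and the reason the total error does not scale with $k$ is that the suffix errors are all controlled by the same uniform inductive bound rather than accumulating additively across indices.
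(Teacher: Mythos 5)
Your telescoping decomposition, the choice of test vectors $u_i = \mW_{i-1}^\top\cdots\mW_1^\top x$, and the identity $\mI - PP^\top = \sum_i \mW_1\cdots\mW_{i-1}(\mI-\mW_i\mW_i^\top)\mW_{i-1}^\top\cdots\mW_1^\top$ are all correct, and the observation that the $x$-side telescopes exactly is the right thing to notice. (There is also a small constant slip: with the paper's normalization the geometric-mean form carries $\eps/4$, not $\eps/2$, but that is cosmetic.) This is a genuinely different strategy from the paper's: the paper never telescopes the product at all. Instead (Lemma on products in the appendix) it embeds all the $\mN_i$ onto a superdiagonal of a single $O(kn)\times O(kn)$ block matrix, shows via the summability and lifting properties that the two block matrices are $\eps/2$-standard approximations of one another with respect to the symmetrization, and then eliminates the middle $n(k-1)\times n(k-1)$ block by \emph{one} Schur complement. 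The key is Theorem 4.8 of [AKMPSV20], which says a single Schur complement degrades a standard approximation only from $\eps$ to $\eps + O(\eps^2)$ \emph{independently of the size of the eliminated block}, so the factor of $k$ simply never appears.

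The gap in your proposal is precisely the step you flag as ``close the loop with a bootstrap.'' You assert that the swap error $\sum_i\bigl(b_i - (\|B_iy\|^2 - \|\mW_i B_iy\|^2)\bigr)$ can be bounded by $O(\eta)\cdot y^\top(\mI-P^\top P)y$, with no dependence on $k$, ``because the suffix errors are all controlled by the same uniform inductive bound rather than accumulating additively.'' But they do accumulate. Writing $p_i = v_i - B_iy$ and $q_i = v_i + B_iy$, Cauchy–Schwarz gives
\[
\sum_i \bigl|p_i^\top(\mI-\mW_i^\top\mW_i)q_i\bigr| \;\le\; \sqrt{\sum_i p_i^\top(\mI-\mW_i^\top\mW_i)p_i}\;\cdot\;\sqrt{\sum_i q_i^\top(\mI-\mW_i^\top\mW_i)q_i}.
\]
The second factor is fine ($\le \sqrt{2\sum_i b_i + 2C}$ where $C = y^\top(\mI-P^\top P)y$). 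For the first, the bootstrap hypothesis on the suffix $\tW_{i+1}\cdots\tW_k$ gives $\|p_i\|\le (\eta/4)\sqrt{y^\top(\mI - Q_{i+1}^\top Q_{i+1})y}$ with $Q_{i+1}=\mW_{i+1}\cdots\mW_k$, so
\[
\sum_i p_i^\top(\mI-\mW_i^\top\mW_i)p_i \;\le\; \sum_i\|p_i\|^2 \;\le\; \frac{\eta^2}{16}\sum_i y^\top(\mI-Q_{i+1}^\top Q_{i+1})y.
\]
Each summand is at most $C$ (prepending contractions only shrinks the norm) but there are $k$ of them, and the bound $kC$ is essentially tight when, say, $\mW_k$ alone already removes all of $y$'s non-fixed energy. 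So the swap error is controlled by $O(\eta\sqrt{k})\cdot C$, not $O(\eta)\cdot C$, and the resulting inequality is $\eta \le \eps\bigl(1+O(\eta\sqrt{k})\bigr)$, not $\eta \le \eps\bigl(1+O(\eta)\bigr)$. (Alternatively, if you instead run the bootstrap as an induction on suffix length using the two-matrix case as the step, you get $\eta_j \le \eta_{j+1}\bigl(1+O(\eta_{j+1})\bigr)$, which compounds to $\eps\,(1+O(\eps))^{k}$.) Either way the $k$-dependence you claim is absorbed into $O(\eps^2)$ is actually present and unbounded. Closing this gap requires a genuinely new idea — the paper's block-matrix/Schur-complement route is exactly a device for getting a size-independent $\eps\mapsto\eps+O(\eps^2)$ degradation — and your sketch does not supply a substitute for it.
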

Notably the approximation error does not grow with the number $k$ of matrices being multiplied.  This property does not hold for UC approximation, only the weaker property of preservation under powering, i.e., $\mW_1=\mW_2=\cdots=\mW_k$ and $\tW_1=\tW_2=\cdots=\tW_k$.

In addition, SV approximation is preserved under multiplication on the left and right by arbitrary matrices of bounded spectral norm. Indeed, it can be seen as the ``closure'' of standard approximation under this operation (up to a factor of 2).
\begin{theorem}\label{thm:rotate_invariant_informal}~
The following hold for all doubly stochastic matrices $\mW$ and $\tW$:
\begin{enumerate}
    \item If $\tW \svgraph{\eps} \mW$ then for all complex matrices $\mU$ and $\mV$ of spectral norm at most 1, we have $\mU\tW\mV \svgraph{\eps} \mU\mW\mV$, and hence  $\mU\tW\mV \approx_{\eps} \mU\mW\mV$.
    \item If for all complex matrices $\mU$ and $\mV$ of spectral norm at most 1, we have $\mU\tW\mV \approx_{\eps} \mU\mW\mV$ then
    $\tW \svgraph{2\eps} \mW$.
\end{enumerate}
\end{theorem}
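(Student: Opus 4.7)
I would handle the two implications separately. Part~1 is a direct substitution of test vectors that reduces to SV-approximation of $\tW$ versus $\mW$ itself; Part~2 rests on a single application of the hypothesis with a carefully chosen $(\mU,\mV)$ coming from the polar decomposition of $\mW$.

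\emph{Part~1.} For arbitrary test vectors $x,y$, rewrite
\[|x^*(\mU\tW\mV - \mU\mW\mV)y| = |(\mU^* x)^*(\tW - \mW)(\mV y)|\]
and apply the hypothesis $\tW\svgraph{\eps}\mW$ to the pair $(\mU^*x,\mV y)$, yielding an upper bound of $(\eps/4)\big[\|\mU^*x\|^2 - \|\mW^*\mU^*x\|^2 + \|\mV y\|^2 - \|\mW\mV y\|^2\big]$. Since $\|\mU\|,\|\mV\|\le 1$, the two inequalities $\|\mU^*x\|^2\le \|x\|^2$ and $\|\mV^*\mW^*\mU^*x\|^2\le \|\mW^*\mU^*x\|^2$ combine to give
\[\|\mU^*x\|^2 - \|\mW^*\mU^*x\|^2 \le \|x\|^2 - \|(\mU\mW\mV)^*x\|^2,\]
and symmetrically for $y$. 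Summing these bounds yields $\mU\tW\mV\svgraph{\eps}\mU\mW\mV$, and the ``hence $\approx_\eps$'' clause then follows from the already-established implication SV $\Rightarrow$ standard approximation.

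\emph{Part~2.} Let $\mW = \mU_\mW \Sigma \mV_\mW^*$ be a singular value decomposition, and set $\mU := \mV_\mW\mU_\mW^*$ (unitary, hence $\|\mU\|=1$) and $\mV := \I$. Then $\mU\mW = \mV_\mW \Sigma \mV_\mW^*$ is Hermitian and PSD with eigenvalues $\sigma_i\in [0,1]$, so the Hermitian part of $\mU\mW$ is $\mU\mW$ itself, and the hypothesis applied to this $(\mU,\mV)$ specializes to
\[|a^*\mU(\tW - \mW)b| \le \frac{\eps}{2}\left[a^*(\I - \mU\mW)a + b^*(\I - \mU\mW)b\right].\]
Since $1-\sigma_i \le 1 - \sigma_i^2$ on $[0,1]$, we have $\I - \mU\mW \preceq \I - (\mU\mW)^2 = \I - \mW^*\mW$, bounding the right-hand side by $(\eps/2)[a^*(\I - \mW^*\mW)a + b^*(\I - \mW^*\mW)b]$. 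Finally, substitute $x := \mU^*a$ (so $\|a\| = \|x\|$) and use the key identity $\mU^*\mW^*\mW\mU = (\mW\mU)^*(\mW\mU) = (\mU_\mW\Sigma\mU_\mW^*)^2 = \mW\mW^*$ to rewrite the $a$-term of the right-hand side as $x^*(\I - \mW\mW^*)x$; the left-hand side becomes $|x^*(\tW-\mW)b|$. Since $\eps/2 = (2\eps)/4$, this is exactly the arithmetic form of $\tW \svgraph{2\eps} \mW$.

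The conceptual obstacle is picking the right $\mU$ in Part~2. Using the unitary factor from the polar decomposition of $\mW$ is essentially forced: it is what makes $\mU\mW$ PSD (so that the asymmetric weight $\I - (\mU\mW+(\mU\mW)^*)/2$ collapses to $\I - \mU\mW$) while also unitarily conjugating $\mW^*\mW$ back into $\mW\mW^*$ on the left test vector. The scalar inequality $1-\sigma\le 1-\sigma^2$ then upgrades the standard-approximation weight to the SV weight, and is the sole source of the factor of $2$ in the ``$2\eps$'' conclusion.
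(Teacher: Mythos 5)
Your proof is correct in both parts and takes a genuinely different route from the paper, which establishes this via Lemma~\ref{lem:svregular}. For Part~1, you argue directly on the bilinear form, substituting $(\mU^*x,\mV y)$ as test vectors and using $\|\mU\|,\|\mV\|\le 1$ to show each error quadratic can only shrink when the extra contraction is inserted; the paper instead conjugates the block-lifted characterization of SV approximation (\Cref{SVequiv}, Item~3) by $\mathrm{diag}(\mU,\mV^*)$ and invokes \Cref{lem:leftrighterror}. Same underlying idea, but your version is more elementary and self-contained. The divergence is more substantial in Part~2: you apply the hypothesis exactly once, with a \emph{fixed}, test-vector-independent pair $(\mU,\mV)=(\mV_{\mW}\mU_{\mW}^*,\I)$ from the polar decomposition of $\mW$, so that $\mU\mW$ becomes Hermitian PSD; the scalar inequality $1-\sigma\le 1-\sigma^2$ then upgrades the standard weight to $\mI-\mW^*\mW$, and the unitary change of variable $a=\mU x$ transports this into $\mI-\mW\mW^*$ on the left. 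The paper instead keeps $\mU=\I$ and, for each test pair $(x,y)$, constructs a unitary $\mV$ aligning a normalized $x$ with a normalized $\mA^*x$ and a normalized $\mA y$ with a normalized $y$; this requires a Gram-matrix compatibility argument that your construction sidesteps entirely. What your approach buys: a single application of the hypothesis (with a unitary pair, so the hypothesis can even be weakened to unitary $\mU,\mV$ as in \Cref{lem:svregular}), an explicit and universal choice, and a transparent accounting of the factor of $2$. One small omission to flag for a full writeup: in Part~1 you should also note that $\|\mU\mW\mV\|\le 1$, so the SV error matrices $\mI-(\mU\mW\mV)(\mU\mW\mV)^*$ and $\mI-(\mU\mW\mV)^*(\mU\mW\mV)$ are PSD and the approximation statement is well-posed — this is immediate from the three norm bounds but is part of \Cref{def:svapprox}.
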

Since $\mU\mW\mV$ and $\mU\tW\mV$ need not be doubly stochastic matrices, \Cref{thm:rotate_invariant_informal} uses the generalization of SV approximation to more general matrices, which can be found in Section~\ref{sec:sv}.

Recall that standard spectral sparsifiers~\cite{ST04} are also {\em cut sparsifiers}~\cite{BenczurKa00}.
That is, if $\widetilde{G}$ is an $\eps$-approximation of $G$, then for every set $S$ of vertices, 
the weight of the cut $S$ in $\widetilde{G}$ is within a $(1\pm \eps)$ factor of the weight of $S$ in $G$.  
Indeed, if we take the test vector $x$ to be the characteristic vector of the set $S$ in inequality~\eqref{ineq:ST-intro},
we obtain
\begin{equation}
\left|\Cut_{\tG}(S) - \Cut_G(S)\right| 
\leq \eps\cdot 
\Cut_G(S),
\label{eq:cut_approximator}
\end{equation}
where $\Cut(\cdot)$ is as in \Cref{def:Cut}.

Similarly, we can obtain a combinatorial consequence of SV approximation, by taking $x$ to be a characteristic vector of a set $S$ of vertices and taking $y$ to be a characteristic vector of a set $T$ of vertices.  This yields:

\begin{proposition} \label{prop:SV-combinatorial-intro}
Let $\tW$ and $\mW$ be doubly stochastic $n\times n$ matrices and suppose that $\tW \svgraph{\eps} \mW$.  Then for every two subsets $S,T\subseteq [n]$, we have 

\[\left|\Cut_{\tW}(S,T) - \Cut_\mW(S,T)\right| 
\leq \frac{\eps}{2} \cdot \sqrt{\Cut_{\mW\mW^\top}(S) \cdot \Cut_{\mW^\top\mW}(T)}. 
\]
\end{proposition}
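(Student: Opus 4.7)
The plan is to substitute the indicator vectors $x = \mathbf{1}_S$ and $y = \mathbf{1}_T$ (with the orientation chosen to align them with the left and right slots of the SV-definition) into the geometric-mean form of SV-approximation, and then translate each of the resulting quadratic forms into a cut value.

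The crucial bookkeeping is the following ``cut--quadratic-form dictionary'': for any doubly stochastic $n \times n$ matrix $\mM$ and any $A \subseteq [n]$,
\[
\mathbf{1}_A^\top (\mI - \mM) \mathbf{1}_A = n \cdot \Cut_\mM(A, A^c).
\]
This is a one-line calculation: double stochasticity makes the uniform distribution $\mathbf{1}/n$ stationary, so the edge measure $\muedge$ of Definition~\ref{def:Cut} is proportional to the entries of $\mM$, and $\mathbf{1}^\top \mM \mathbf{1}_A = \mathbf{1}_A^\top \mM \mathbf{1} = |A|$ turns the quadratic form into a count of the $(A, A^c)$-edges under $\muedge$. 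The same computation also identifies $\Cut_\mW(S, T)$ with $\frac{1}{n}$ times a bilinear form in $\mathbf{1}_S$ and $\mathbf{1}_T$ against $\mW$, and similarly for $\tW$.

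Since $\mW$ is doubly stochastic, both $\mW\mW^\top$ and $\mW^\top\mW$ are symmetric and doubly stochastic; in particular the notation $\Cut_{\mW\mW^\top}(S) = \Cut_{\mW\mW^\top}(S,S^c)$ (and likewise for $\mW^\top\mW$) is unambiguous, and applying the dictionary to these two matrices yields
\[
\mathbf{1}_S^\top (\mI - \mW\mW^\top)\mathbf{1}_S = n \cdot \Cut_{\mW\mW^\top}(S), \qquad \mathbf{1}_T^\top (\mI - \mW^\top\mW)\mathbf{1}_T = n \cdot \Cut_{\mW^\top\mW}(T).
\]

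Invoking the geometric-mean form of SV-approximation --- which follows from the arithmetic-mean inequality~\eqref{ineq:SV-intro-doublystoch} by the standard scalar-optimization trick (rescaling $x \to \lambda x$, $y \to y/\lambda$ and optimizing over $\lambda$, gaining a factor of 2 relative to AM--GM) --- with $x = \mathbf{1}_S$, $y = \mathbf{1}_T$, then substituting the three cut identifications and dividing both sides by $n$, yields the claim. The argument is purely routine substitution and I do not foresee any real obstacle; the only care required is matching the orientation conventions so that the $S$-side of $\Cut_\mW(S,T)$ lines up with the $x$-slot of the SV-definition (and hence with $\mW\mW^\top$) and the $T$-side with the $y$-slot (and hence with $\mW^\top\mW$).
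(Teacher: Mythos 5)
Your proposal is correct and is exactly the approach the paper itself gestures at: the paper gives no detailed proof, only the remark that the proposition ``yields'' from plugging the characteristic vectors $\vones_S, \vones_T$ into the SV definition, and your cut--quadratic-form dictionary is precisely the missing bookkeeping. Two small remarks. First, you are right that the geometric-mean form should carry the coefficient $\eps/2$, not $\eps/4$: starting from \eqref{ineq:SV-intro-doublystoch} and optimizing the rescaling $x\mapsto \lambda x$, $y\mapsto y/\lambda$ gives $\frac{\eps}{4}\cdot 2\sqrt{(x^\top \mEL x)(y^\top \mER y)}$, matching the formal Definitions~\ref{def:leftrightapprox} and~\ref{def:svapprox} (and so inequality~\eqref{ineq:SV-intro-doublystoch-geometric} as printed, with $\eps/4$, is a typo). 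This is what makes the constant in the proposition come out to $\eps/2$ after dividing by $n$. Second, the ``orientation'' caveat you flag is a real one and worth a sentence: depending on whether one reads the random-walk matrix as row- or column-stochastic in Definition~\ref{def:Cut}, the derivation naturally produces either $\Cut_{\mW\mW^\top}(S)\Cut_{\mW^\top\mW}(T)$ or its $S\leftrightarrow T$ swap $\Cut_{\mW\mW^\top}(T)\Cut_{\mW^\top\mW}(S)$; these differ when $\mW$ is not normal, so one does have to pin down the convention. The discrepancy is harmless in all downstream uses (which take $T=S^c$ or minimize over both $\Cut$ and $\Uncut$ symmetrically in $S,T$), but you would need to state which convention you're using to land on the proposition's exact pairing.
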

Note that $\mW^\top\mW$ (resp., $\mW \mW^\top$) is the transition matrix for the {\em forward-backward walk} (resp. backward-forward walk), namely where we take one step using a forward edge of the graph followed by one step using a backward edge.

Let us interpret Proposition~\ref{prop:SV-combinatorial-intro}. 
First, consider the case that $\mW=\mJ$, the matrix with every entry equal to $1/n$ (the random-walk matrix for the complete graph with self-loops).  Then the distribution $\muedge$ on pairs $(i,j)$ in the definition of $\Cut_\mW$ (\Cref{def:Cut}) has $i$ and $j$ as uniform and independent vertices, and the same is true for
$\Cut_{\mW \mW^\top}$ and
$\Cut_{\mW^\top \mW}$.  Thus,
Proposition~\ref{prop:SV-combinatorial-intro} says:
\[\left|\Cut_{\tW}(S,T) - \mu(S)\cdot \mu(T)\right| 
\leq \frac{\eps}{2} \cdot \sqrt{\mu(S)\cdot (1-\mu(S)) \cdot \mu(T)\cdot (1-\mu(T))},\]
where $\mu(S)=|S|/n$ and $\mu(T)=|T|/n$ are the stationary probabilities of $S$ and $T$, respectively.  This amounts to a restatement of the Expander Mixing Lemma (cf., \cite[Lemma 4.15]{Vad}); indeed 
$\tW \svgraph{\eps} \mJ$ if and only if $\tW$ is a spectral expander with all nontrivial singular values at most $\eps/2$.

Next, let's consider the case that $T=S^c$.  Since 
$\Cut_{\mW\mW^\top}(S) = \Cut_{\mW\mW^\top}(S^c)$, 
SV approximation implies that: 
\begin{equation}
\left|\Cut_{\tW}(S) - \Cut_{\mW}(S)\right| 
\leq \frac{\eps}{2}\cdot
\sqrt{
\Cut_{\mW\mW^\top}(S)
\cdot 
\Cut_{\mW^\top\mW}(S)}
\label{eq:sv_conductance_bound}
\end{equation}

We claim that \eqref{eq:sv_conductance_bound} is stronger than the standard notion of a cut approximator 
\eqref{eq:cut_approximator}.
Indeed, it can be shown that
$$\Cut_{\mW\mW^\top}(S) \leq 2\cdot\Cut_\mW(S),$$
and similarly for $\Cut_{\mW^\top\mW}(T)$.
The reason is that if a backward-forward walk crosses between $S$ and $S^c$, then it must cross between $S$ and $S^c$ in either the first step or in the second step.
Similar reasoning shows that
$$\Cut_{\mW\mW^\top}(S) \leq 2\cdot\Uncut_\mW(S),$$
and similarly for $\Cut_{\mW^\top\mW}(T)$.
Thus SV approximation also implies:
\begin{equation}
\left|\Uncut_{\tG}(S) - \Uncut_G(S)\right| 
\leq \eps\cdot 
\Uncut_G(S),
\label{eq:uncut_approximator}
\end{equation}
Thus, we conclude that an SV-approximator not only approximates every cut to within a small additive error that is scaled by the weight of the cut edges (as in \eqref{eq:cut_approximator}), but also scaled by the weight of the {\em uncut} edges.

\subsection{Algorithmic Results}

Even though SV approximation is stronger than previously considered notions of spectral approximation, we show that it still admits sparsification: 
\begin{theorem} \label{thm:sparsify-intro}
There is a randomized nearly-linear time algorithm that given a regular directed graph $G$ with $n$ vertices and $m$ edges, integer edge weights in $[0,U]$, and random-walk matrix $\mW$,  
and $\eps>0$,
 whp outputs a  weighted graph $\widetilde{G}$ with at most 
 $O(n\eps^{-2}\cdot \poly(\log(nU)))$ 
 edges such that its random-walk matrix $\tW$ satisfies $\tW \svgraph{\eps} \mW$.
\end{theorem}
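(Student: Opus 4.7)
The plan is to reduce directed SV-sparsification to undirected SV-sparsification via the symmetric-lift reduction of \Cref{thm:directed2undirected-intro}, then invoke a nearly linear-time undirected SV-sparsification algorithm on the lift. First, I would form the bipartite double cover $H = \slift{G}$ on $2n$ vertices and $m$ edges, whose random-walk matrix is precisely $\slift{\mW}$; since $G$ is $d$-regular, so is $H$, and integer edge weights in $[0,U]$ carry over directly. This step takes $O(m)$ time.

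Next, apply the (previously established in the paper) nearly linear-time algorithm for SV-sparsifying an undirected graph to $H$, obtaining a weighted bipartite graph $\tilde{H}$ on $2n$ vertices with $O(n\eps^{-2}\poly(\log(nU)))$ edges whose random-walk matrix $\eps$-SV-approximates $\slift{\mW}$. Finally, ``fold'' $\tilde{H}$ back to a directed graph $\tilde{G}$ by interpreting each undirected edge $\{i_L,j_R\}$ of weight $w$ as a directed edge $i \to j$ of weight $w$. By construction, the symmetric lift of $\tilde{G}$ is exactly $\tilde{H}$, so the random-walk matrix $\tW$ of $\tilde{G}$ satisfies $\slift{\tW} \svgraph{\eps} \slift{\mW}$, and \Cref{thm:directed2undirected-intro} (applied in the reverse direction) then yields $\tW \svgraph{\eps} \mW$, as desired. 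The overall running time is dominated by the undirected sparsification and is nearly linear.

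The principal obstacle is establishing the undirected SV-sparsification subroutine invoked in the second step. Because SV-approximation for undirected $\mW$ is governed by the quadratic form of $\mI - \mW^2 = (\mI - \mW)(\mI + \mW)$ rather than just $\mI - \mW$, classical effective-resistance sampling in the style of Spielman--Srivastava does not immediately suffice, and one must design sampling probabilities that control both the $\mI - \mW$ and $\mI + \mW$ factors simultaneously (for instance, generalized effective resistances defined with respect to $\mI - \mW^2$, combined with a matrix Chernoff bound). A secondary technicality is that the undirected sparsifier may fail to preserve bi-regularity of $H$ exactly, so folding may produce a $\tW$ that is not literally doubly stochastic; this is handled by working with the more general definition of SV-approximation given in \Cref{sec:sv} (which applies beyond the doubly-stochastic case) or, alternatively, by a lightweight degree-patching step that contributes only lower-order terms to the final edge count.
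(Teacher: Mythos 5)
Your reduction to the undirected case via the symmetric lift is exactly the paper's starting point (\Cref{SVequiv}, item 2), and you correctly locate the crux of the theorem in the undirected SV-sparsification subroutine. The gap is in how you propose to fill that crux.

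You suggest that the subroutine can be obtained by an importance-sampling scheme with sampling probabilities derived from $\mI-\mW^2$ (a ``generalized effective resistance'') plus a matrix Chernoff bound, and that any failure to preserve degrees can be handled either by appealing to the more general SV definition or by a ``lightweight degree-patching step.'' Neither escape hatch works. First, $\tW \svgraph{\eps} \mW$ is \emph{not} a notion you can relax by switching definitions: the theorem's conclusion is precisely the $\svgraph{}$ statement, and \Cref{lem:svStationary} shows that $\tA \svgraph{\eps} \mA$ already forces $\tA\vones = \mA\vones$ and $\tA^\rTrans\vones = \mA^\rTrans\vones$ exactly. So any i.i.d.\ edge-sampling scheme (effective-resistance-style or otherwise) must be followed by an exact degree fix. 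Second, the obvious patch --- adding self-loops --- destroys bipartiteness and periodicity, which are spectral features that SV approximation must preserve (they correspond to singular values at $1$ and eigenvalues like $-1$ or other roots of unity); the paper flags this explicitly in the discussion preceding the sparsification section. A generic post-hoc patch that merely fixes the degree sequence, without also respecting the bipartition of the lift, will fail the SV guarantee.

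The paper's actual route avoids i.i.d.\ edge sampling entirely. It first proves (\Cref{lem:diag_to_sv}) that inside a bipartite \emph{expander}, a degree- and bipartition-preserving $\eps/2$-approximation with respect to the diagonal error matrix $\mD$ already implies SV approximation, at the cost of a $1/\lambda^2$ blow-up in $\eps$. It then obtains such approximations by a cycle-based scheme: run an expander decomposition, and within each expanding piece decompose the high-degree edges into edge-disjoint (necessarily even) cycles, then for each cycle flip a coin and keep either the odd or even edges with doubled weight (\Cref{lem:unweighted_sparsify_once}). This randomization exactly preserves every vertex degree and the bipartition by construction, and matrix Chernoff bounds the deviation w.r.t.\ $\mD$. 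Iterating this (\algSparsifyOnce{} inside \algSparsify) halves the edge count each round. So the missing idea is not a better sampling probability, but a \emph{correlated} sampling primitive (paired edges along a cycle) that makes degree and bipartition preservation automatic, together with the expander-localization lemma that converts $\mD$-approximation into SV approximation.
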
 
A more general theorem that also applies to Eulerian digraphs is stated in the main body of the paper (\Cref{thm:rwSparsifier}).
Prior to this work, it was open whether or not even UC-sparsifiers with $O(n \cdot \poly(\log n,1/\eps))$ edges existed for all unweighted regular digraphs.
Instead, it was only known how to UC-sparsify powers of a random walk matrix 
in such a way that the number of edges increases by at most a polylogarithmic factor
compared to the original graph (rather than decrease the number of edges)~\cite{AKMPSV20}. 

By Theorem~\ref{thm:directed2undirected-intro}, it suffices to prove \Cref{thm:sparsify-intro} for undirected bipartite graphs. We obtain the latter via an undirected sparsification algorithms based on expander partitioning \cite{ST04}.
It remains an open question whether algorithms based on edge sampling can yield SV approximation or unit circle approximation, even in undirected graphs.  The standard approach to spectral sparsification of undirected graphs via sampling, namely keeping each edge independently with probability proportional to its effective resistance~\cite{SS08}, does not work for SV or UC approximation. For example, this method does not exactly preserve degrees, which we show is necessary for SV sparsification (\Cref{lem:svStationary}).\footnote{Unlike standard spectral approximation, degrees cannot be fixed just by adding self loops; indeed, self-loops ruin bipartiteness and periodicity, which are properties that UC and SV approximation retain (as they are captured by eigenvalues like -1 or other roots of unity).}

However, we remark that the work of Chu, Gao, Peng, Sawlani, and Wang~\cite{CGPSSW18}
does yield something closer to sparsification via \emph{degree preserving} sampling for standard approximation~\cite{CKPPRSV17} but not unit circle approximation. They show that if one has a directed graph and decomposes it into short ``cycles'' without regard for the direction of the edges on the cycle, then one can sparsify by randomly eliminating either the clockwise or counterclockwise edges on each such cycle. 
We build on their procedure and use it to obtain SV sparsification (and hence, unit circle) by showing that this technique obtains SV approximation, even if the cycles are not short, as long as (a) all the cycles are within expanding subgraphs, and (b) the cycles alternate between forward and backward edges.  (Note that such alternating cycles in a directed graph correspond to ordinary cycles in the undirected lift given by Theorem~\ref{thm:directed2undirected-intro}.)
\begin{center}
    \begin{tabular}{c|c|c|c|c}
  Sparsity & Approximation & Time & Subgraph? & Citation\\\hline
  $O(n\eps^{-2}\log^{c}n)$ & Standard & $O(m\log^cn)$ & No & \cite{CKPPRSV17}\\
  $O(n\eps^{-2}\log^cn)$ & Standard & $O(m^{1+o(1)})$ & Yes & \cite{CGPSSW18,PY19}\\
  $O(n\eps^{-2}\log^4 n)$ & Standard & $O(nm)$ & Yes & \cite{CGPSSW18}\\
  $O(n\eps^{-2}\log^{12} n)$ & SV & Existential & Yes & \Cref{cor:sparseExis}\\
  $O(n\eps^{-2}\log^{20} n)$ & SV & $O(m\log^{7}n)$ & Yes & \Cref{cor:sparseFast}\\
\end{tabular}
\end{center}

Given
\Cref{thm:sparsify-intro}, we obtain our algorithm for longer walks 
(\Cref{thm:dirCut-intro})
as follows:
\begin{enumerate}
\item First, we show that we can SV-sparsify the {\em squares} of random-walk matrices of Eulerian digraphs; we follow the approach of~\cite{CKKPPRS18} by locally sparsifying the bipartite complete graphs that form around each vertex when squaring, and then applying \Cref{thm:sparsify-intro} to globally sparisfy further. We likewise show the ``derandomized square'' approach used in \cite{RozenmanVa05,PS13,MRSV2,AKMPSV20} gives a square sparsifier.
\item Then we SV-sparsify arbitrary powers of 2 by repeatedly squaring and sparsifying, using the fact that SV approximation is preserved under powering. During this process, we need to ensure that the ratio between the largest and smallest edge weights remains bounded.  We 
do this by restricting to graphs that have second-largest singular value bounded away from 1 by $1/\poly(nU\ell)$, which allows us to discard edge weights that get too small and make small patches to preserve degrees.  We can achieve this assumption on the second-largest singular value by adding a small amount of laziness to our initial graph.
\item Then to sparsify arbitrary powers $\mW^\ell$, we can multiply sparsifiers for the powers of 2 appearing in the binary representation of $\ell$.  For example, to get a sparsifier for $\mW^7$, we multiply sparsifiers for $\mW^4$, $\mW^2$, and $\mW^1$, sparsifying and eliminating small edge weights again in each product. The use of SV approximation plays an important role in the analysis of this algorithm, because  it has the property that the product of the approximations of the powers still approximates the product of the true powers (\Cref{thm:products-intro}).

\item Given \Cref{thm:sparsify-intro}, we obtain Theorem~\ref{thm:dirCut-intro} for general directed graphs by using \cite{CKKPPRS18} to compute a high-precision estimate of the stationary distribution, which allows us to construct an Eulerian graph whose random-walk matrix closely approximates that of the original graph.  SV-sparsifying the $\ell$'th power of the Eulerian graph gives us a graph all of whose $\Cut$ and $\Uncut$ values approximate the $\ell$'th power of our input graph. 
The use of \cite{CKKPPRS18} to estimate the stationary distribution and the introduction of laziness to $\mW$ both incur a small
additive error $\delta$, but we can absorb that into $\eps$ by 
setting $\delta=1/\poly(nU/s)$ and
observing
that $\Cut_{G^\ell}(S)$ and $\Uncut_{G^\ell}(S)$ are at
least $1/\poly(nU/s)$ (if nonzero).
\end{enumerate}

Our final contribution concerns algorithms for solving directed Laplacian systems. The recursive identities used for solving undirected Laplacian systems, while behaving nicely with respect to PSD approximation, do not behave as nicely with respect to the previous approximation definitions for directed graphs.
This led to different, more sophisticated recursions with a more involved analysis of the error~\cite{CKPPRSV17,CKKPPRS18,AKMPSV20,KMPG22}.
We make progress towards simplifying the recursion and analysis of solving directed Laplacian linear systems in the following way. We show that a simpler recursion, a variant of the one used by Peng and Spielman~\cite{PS13} (\Cref{eqn:pengspielman-intro} below), and a simpler analysis suffice if the directed Laplacian is normal (i.e., unitarily diagonalizable) and we perform all sparsification with respect to SV approximation.  Note that this result is the only result in our paper that relies on a normality assumption; the aforementioned sparsification results hold for all Eulerian directed graphs.

\begin{theorem}\label{thm:squaringBased-intro}
For a doubly stochastic normal matrix $\mw \in \R^{n\times n}$ with $\| \mw \| \leq 1$, let $\mw = \mw_0, \ldots, \mw_{k-1}$ be a sequence of matrices such that for $\eps \leq 1/4k$ we have
\begin{equation} \label{eqn:repeatedsquare-intro}
\mw_{i} \sv_{\epsilon} \mw_{i-1}^2 \quad \forall 0 < i < k,
\end{equation}
and 
\begin{equation} \label{eqn:pengspielman-intro}
\mproj_i = \frac{1}{2} \left[\mI + (\mI + \mw_i) \mproj_{i+1} (\mI + \mw_i)\right] \quad \forall 0 \leq i < k
\end{equation}
defining the Peng-Spielman squaring recursion. \\
Then, for a matrix $\mproj_k$, such that $ \left\| (\mI - \mw^{2^k})^{\frac{1}{2}} \left[ \mproj_k - (\mI - \mw^{2^k})^+ \right] (\mI - \mw^{2^k})^{\frac{1}{2}}\right\| \leq O(k \epsilon)$, we have
\begin{equation} \label{eqn:precond-intro}
\left\| \mproj_0 (\mI - \mw) - \mI \right\|_{\mb}
    =
    \left\| (\mI - \mw)^{\frac{1}{2}} \left[ \mproj_0 - (\mI - \mw)^+ \right] (\mI - \mw)^{\frac{1}{2}}\right\| 
    \leq 
    O(k^2 \epsilon)
\end{equation}
where $\mb = ((\mI - \mw)^{1/2})^* (\mI - \mw)^{1/2}$.
\end{theorem}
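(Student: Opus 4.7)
The plan is to unroll the Peng--Spielman recursion and control error accumulation in the preconditioned norm. The algebraic cornerstone is the identity
\[
(\mI - \mz)^+ = \tfrac{1}{2}\bigl[\mI + (\mI + \mz)(\mI - \mz^2)^+(\mI + \mz)\bigr],
\]
valid eigenvalue-wise for any normal contraction $\mz$ away from its $1$-eigenspace (which is projected out consistently on both sides). Consequently, if the $\mw_i$ were exact squares $\mw_i = \mw^{2^i}$ and $\mproj_k$ were exactly $(\mI - \mw^{2^k})^+$, the recursion \eqref{eqn:pengspielman-intro} would produce $\mproj_0 = (\mI - \mw)^+$ exactly. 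The error in $\mproj_0$ therefore stems from only two sources: the SV-approximation slack at each level of \eqref{eqn:repeatedsquare-intro}, and the base-case slack of size $O(k\eps)$.

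I would first iterate Theorem~\ref{thm:products-intro} along the chain \eqref{eqn:repeatedsquare-intro} to deduce $\mw_i \sv_{O(i\eps)} \mw^{2^i}$ for every $i\leq k$. Next I introduce the ideal sequence $\mproj_i^\star \defeq (\mI - \mw^{2^i})^+$ and measure the level-$i$ error in the norm
\[
\|\mE_i\| \defeq \bigl\|(\mI - \mw^{2^i})^{1/2}(\mproj_i - \mproj_i^\star)(\mI - \mw^{2^i})^{1/2}\bigr\|.
\]
Writing the noisy update $T_i(\mx) = \tfrac12[\mI + (\mI + \mw_i)\mx(\mI + \mw_i)]$ and the clean update $T_i^\star$ defined using $\mw^{2^i}$, I split
\[
\mproj_i - \mproj_i^\star = \underbrace{\bigl[T_i(\mproj_{i+1}) - T_i(\mproj_{i+1}^\star)\bigr]}_{\text{propagated}} + \underbrace{\bigl[T_i(\mproj_{i+1}^\star) - T_i^\star(\mproj_{i+1}^\star)\bigr]}_{\text{injected}}.
\]
For the injected piece, $\mw_i \sv_{O(i\eps)} \mw^{2^i}$ together with Theorem~\ref{thm:rotate_invariant_informal} (preservation of SV approximation under left/right multiplication by bounded-spectral-norm matrices) yields a contribution of $O(\eps)$ to $\|\mE_i\|$. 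For the propagated piece, normality of $\mw$ lets me diagonalize in its eigenbasis, where an eigenvalue-wise calculation using $1 - \lambda^{2^{i+1}} = (1 - \lambda^{2^i})(1 + \lambda^{2^i})$ shows that, when $\mw^{2^i}$ is used in place of $\mw_i$, the sandwich map from $\|\mE_{i+1}\|$ to $\|\mE_i\|$ has scalar factor $\tfrac12(1 + \lambda^{2^i})$, which is bounded by $1$ in magnitude since $|\lambda|\leq 1$; swapping $\mw^{2^i}$ for $\mw_i$ costs another $O(\eps)$ slack via SV-approximation and Theorem~\ref{thm:rotate_invariant_informal}.

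The main obstacle is making this propagation bound watertight despite $\mw_i \neq \mw^{2^i}$: one must argue that the two-sided factor $(\mI + \mw_i)$ behaves like $(\mI + \mw^{2^i})$ in the level-$i$ preconditioned norm up to $O(\eps)$, which is precisely where normality (to reduce to a scalar eigenvalue-wise inequality) and SV-approximation together with Theorem~\ref{thm:rotate_invariant_informal} (to swap the operator without amplifying the error) must be combined carefully. Assembling the two parts yields the per-level bound $\|\mE_i\| \leq \|\mE_{i+1}\| + O(\eps)$, which with the base case $\|\mE_k\| \leq O(k\eps)$ telescopes to $\|\mE_0\| \leq O(k^2\eps)$. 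Finally, the equivalence of the two left-hand sides of \eqref{eqn:precond-intro} follows from a direct manipulation using $\mb = ((\mI-\mw)^{1/2})^*(\mI-\mw)^{1/2}$ together with the projection identity $(\mI - \mw)^{1/2}(\mI - \mw)^+(\mI - \mw)^{1/2} = \mI|_{\im(\mI - \mw)}$, completing the proof.
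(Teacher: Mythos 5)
Your overall plan mirrors the paper's: track the error in the preconditioned norm $\|\mE_i\|$ defined via $(\mI-\mw^{2^i})^{1/2}(\cdot)(\mI-\mw^{2^i})^{1/2}$, split the one-step update into a propagated piece (inherited from level $i+1$) and an injected piece (from replacing $\mw^{2^i}$ by $\mw_i$), use $\mw_i\svn{O(i\eps)}\mw^{2^i}$ from the products property, and telescope. The paper does exactly this via the intermediates $\mq_i,\mr_i,\ms_i$ in its Lemma~\ref{lem:squaringBased}.

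However, there is a genuine arithmetic gap in the error bookkeeping. You correctly derive $\mw_i \svn{O(i\eps)} \mw^{2^i}$, but then claim the injected error and the swap cost each contribute only $O(\eps)$ to $\|\mE_i\|$, and you state the per-level recurrence $\|\mE_i\| \leq \|\mE_{i+1}\| + O(\eps)$. That recurrence, with $\|\mE_k\| = O(k\eps)$, telescopes to $O(k\eps)$, not $O(k^2\eps)$, so your own arithmetic does not match your stated conclusion. The correct per-level increment is $O(i\eps)$, not $O(\eps)$: both the injected piece and the swap in the sandwiching factor $(\mI+\mw_i)$ inherit the $O(i\eps)$-sized approximation error of $\mw_i$. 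The paper's recurrence is $\eps_i = (1+O(i\eps))\eps_{i+1} + O(i\eps)$, and the additive $O(i\eps)$ terms are what sum to $O(k^2\eps)$.

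A secondary issue is the invocation of Theorem~\ref{thm:rotate_invariant_informal} to control terms like $\left\|(\mI-\mw^{2^i})^{1/2}(\mw_i-\mw^{2^i})\big((\mI-\mw^{2^{i+1}})^{1/2}\big)^+\right\|$. The two conjugating operators here do not have spectral norm at most $1$ (the right one is the pseudoinverse of a possibly near-singular square root), so Theorem~\ref{thm:rotate_invariant_informal} does not apply directly. The paper instead proves a dedicated estimate (Lemma~\ref{lem:sq_normal_prop}) by applying the SV-approximation inequality with test vectors scaled by $(\mI-\mv)^{1/2}$ and $((\mI-\mv^2)^+)^{1/2}$ and then reducing the resulting two quadratic forms to scalar inequalities on the unit disk via normality, namely $|1-z|\,(1-|z|^2)\leq 2$ and $(1-|z|^2)/|1-z^2|\leq 1$. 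Your appeal to normality and an eigenvalue-wise reduction gestures at this, but the bounded-operator theorem alone does not supply the needed inequality; you must carry out the scaled-test-vector argument (or cite the lemma that does).
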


\Cref{thm:squaringBased-intro} says that that we can compute a good {\em preconditioner} $\mP_0$ for the Laplacian $\mI-\mW$ (eq.~\ref{eqn:precond-intro}) by repeatedly computing SV-approximate squares (eq.~\ref{eqn:repeatedsquare-intro}) and use  the simple recurrence (eq.~\ref{eqn:pengspielman-intro}, starting with a preconditioner $\mP_k$ for a sufficiently large power of $\mW$.
Generally, $\mP_k$ is easy to obtain for a large enough $k=O(\log n)$ since $\mW^{2^k}$ is well-approximated by a complete graph (assuming the original graph is connected and aperiodic).

\subsection{Open Problems}
One open problem is to determine whether or not it is possible to obtain {\em linear-sized} sparsifiers.  Recall that undirected graphs have sparsifiers with respect to standard spectral approximation that have only $O(n/\eps^2)$ nonzero edge weights~\cite{BSS12}.
If this result could be extended to obtain linear-sized SV-sparsifiers of undirected graphs, we would also have linear-sized sparsifiers for directed graphs by Theorem~\ref{thm:directed2undirected-intro}, which would be a new result even for standard approximation~\cite{CKPPRSV17}.

\subsection{Roadmap}
The rest of the paper is organized as follows:
\begin{itemize}
    \item \Cref{sec:prelim} gives definitions we use throughout the paper.
    
    \item \Cref{sec:sv} defines SV-approximation and provides several equivalent characterizations of it which are useful when analyzing algorithms and also interesting in their own right. These results immediately imply a reduction from the directed to undirected case for SV approximation.

    \item \Cref{sec:sparsification} gives a direct SV sparsification algorithm based on sparsifying alternating cycles within expanders, and shows that the derandomized square gives SV square sparsifiers, and shows how to sparsify powers of random walk matrices with respect to SV approximation.
    
    \item \Cref{sec:normal} gives our simpler algorithm and analysis for solving Laplacian linear systems when the matrix is unitarily diagonalizable (normal).
    
    \item In \Cref{app:SVproofs}, we prove equivalences, properties, and separations, including several deferred proofs from the body of the paper.
\end{itemize}

\section{Preliminaries}\label{sec:prelim}

We begin by providing notation and definitions that we will use throughout the paper.

\paragraph{Complex Numbers.}
Let $\C$ denote the set of complex numbers. For $z\defeq a+bi\in \C$, let $z^* \defeq a-bi$ and $\Re[z] \defeq a$ and $\Im[z] \defeq b$. Define the {\em magnitude} of $z$ as $|z| \defeq \sqrt{z^*z}$.

\paragraph{Matrix Notation.} 
For $x \in \C^n$ we let $\mdiag(x) \in \C^{n \times n}$ denote the diagonal matrix with $[\mdiag(x)]_{i,i} = x_i$ for all $i \in [n]$. 
For complex matrix $\mM$ that is not necessarily square, we write $\mM^*$ to denote the conjugate transpose of $\mM$ and $\mM^+$ to denote the Moore-Penrose pseudoinverse of $\mM$.
We let $\lker(\mM)$ and $\rker(\mM)$ be the left and right kernels of $\mM$ respectively.
We say $\mM \in \C^{n \times n}$ is \emph{Hermitian} if $\mM^*=\mM$. 
$\mM$ is {\em positive semidefinite (PSD)} if it is Hermitian and for every vector $x\in \C^n$, we have
$x^*\mM x\geq 0$.
$\mM$ is {\em normal} if $\mM^*\mM=\mM\mM^*$.
For a square matrix $\mM$, let $\mS_{\mM} = (\mM+\mM^*)/2$ denote its symmeterization. Note that $\mS_{\mM}$ is Hermitian, and if $\mM$ is a scalar then $\mS_{\mM} = \Re[\mM]$.  
Given any $\mM \in \C^{n \times n}$ and subsets $S,T\subset [n]$, we let $\mM_{S,T} \in \C^{S\times T}$ be the submatrix of $\mM$ with rows specified by $S$ and columns specified by $T$. We let $S^c \defeq [n]\setminus S$ be the complement of $S$. 

\paragraph{Schur Complements.} Given an $n\times n$ matrix $\mM$ and $S\subset [n]$, we define the \emph{Schur Complement of $\mM$ onto $S$} as
\[\SC_S(\mM) = \mM_{S^c,S^c}-\mM_{S^c,S}\mM_{S,S}^+\mM_{S,S^c}.
\]

\paragraph{Eigenvalues and Singular Values.} For a symmetric matrix $\mA$ with nonnegative entries, let $\lambda_i(\mA)$ denote the $i$th largest eigenvalue of $\mA$. For a matrix $\mA$, let $\sigma_i(\mA) \defeq \lambda_i(\mA^*\mA)$. Given an undirected graph $G$ with adjacency matrix $\mA$ and degree matrix $\mD\defeq \mdiag(\mA \vones)$, let $\lambda(G)\defeq \lambda_2(\mD^{+/2} \mA\mD^{+/2})$.
    
\paragraph{Norms.}   
Throughout we use $\|\cdot\|$ to denote the spectral norm, where for any $\mA \in \C^{n\times m}$,
\[
\|\mA\| \defeq \sup_{x \in \C^m \setminus \{\vzero\}}\frac{\|\mA x\|}{\|x\|}
\,.
\]

\paragraph{L\"owner Order.}
Given Hermitian $\mA,\mB \in \C^{n\times n}$ we write $\mA \preceq \mB$ if $\mB-\mA$ is PSD, i.e., for every $x\in \C^n$ we have $x^*\mA x \leq x^*\mB x$.

\section{Singular Value Approximation}\label{sec:sv}

In this section we formally define the notions of approximation we work with throughout the paper (\Cref{sec:sv:matrix_approx}), provide a number of equivalent definitions of SV approximation (\Cref{sec:sv:equivalences}), and give key properties of SV approximation  (\Cref{sec:sv:prop}). All proofs in this section are deferred to their analogous subsection in \Cref{app:SVproofs}.

\subsection{Matrix Approximation}
\label{sec:sv:matrix_approx}

First we define a general notion of matrix approximation with respect to arbitrary PSD matrices (\Cref{def:leftrightapprox}). This definition and the equivalences are a generalization of those established in \cite{CKPPRSV17}.

Intuitively, we say that a matrix $\tA$ is an \emph{$\epsilon$-approximation of $\mA$ with respect to error matrices $\mEL$ and $\mER$} if we can bound the bilinear form of their difference, i.e., $x^* (\tA - \ma) y$, by the quadratic forms of $x$ with $\mEL$ and $y$ with $\mER$, i.e., $x^* \mEL x$ and $y^* \mER y$. We use the term \emph{error matrix} here to distinguish this notion of approximation from standard approximation (\Cref{def:stdapprox}) which we define later.

\begin{definition}[Matrix approximation] \label{def:leftrightapprox}
    Let $\mA, \tA \in \C^{m\times n}$, and let $\mEL\in \C^{m\times m},\mER\in \C^{n\times n}$ be PSD matrices.
    For $\eps\geq 0$, we say that $\tA$ is an \emph{$\eps$-approximation of $\mA$ with respect to error matrices  $\mEL$ and $\mER$} if any of the following equivalent conditions hold:
    \begin{enumerate}
\item 
$\left|x^*(\tA-\mA)y\right|\leq \frac{\eps}{2}\left(x^*\mEL x+y^*\mER y\right)$ for all $x \in \C^m,y\in \C^n$ 
    \item $\left|x^*(\tA-\mA)y\right|\leq \eps \cdot \sqrt{x^*\mEL x}\cdot \sqrt{y^*\mER y}$ for all $x \in \C^m,y\in \C^n$ 
\item $\left\|\mEL^{+/2} (\tA-\mA) \mER^{+/2}\right\| \leq \eps$,
$\lker(\tA-\mA) \supseteq \ker(\mEL)$, and
$\rker(\tA-\mA) \supseteq \ker(\mER)$.

\end{enumerate}
If $\mEL=\mER$, we say that $\tA$ is an \emph{$\eps$-approximation of $\mA$ with respect to error matrix $\mE$}.
\end{definition}

In the following \Cref{lem:Hermitianapprox} we specialize and simplify the equivalences presented in \Cref{def:leftrightapprox} to Hermitian matrices.

\begin{lemma}[Hermitian matrix approximation] \label{lem:Hermitianapprox}
    Let $\mA, \tA \in \C^{m\times m}$ be Hermitian matrices and let $\mE\in \C^{m\times m}$ be a PSD matrix.
    Then the following are equivalent conditions.
    
    \begin{enumerate}
        \item $\tA$ is an $\eps$-approximation of $\mA$ with respect to error matrix $\mE$.
        \item  For all $x \in \C^m$ we have $\left|x^*(\tA-\mA)x\right|\leq \eps\cdot \left(x^*\mE x\right)$.
        \item $-\eps\mE \preceq \tA-\mA \preceq \eps\mE$.
    \end{enumerate}
\end{lemma}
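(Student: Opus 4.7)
The plan is to prove the equivalences by a short cycle $(1) \Rightarrow (2) \Rightarrow (3) \Rightarrow (1)$, leveraging the equivalent formulations already established in \Cref{def:leftrightapprox} rather than reproving anything from scratch.

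First, $(1) \Rightarrow (2)$ is immediate: set $y = x$ in item 1 of \Cref{def:leftrightapprox} with $\mEL = \mER = \mE$. This gives $|x^*(\tA-\mA)x| \leq \tfrac{\eps}{2}(x^*\mE x + x^*\mE x) = \eps (x^*\mE x)$.

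Next, $(2) \Leftrightarrow (3)$ is a direct translation. Writing $\mB \defeq \tA-\mA$, the Hermitian assumption on $\tA$ and $\mA$ makes $\mB$ Hermitian, so $x^* \mB x$ is real for every $x \in \C^m$. Hence $|x^*\mB x| \leq \eps (x^*\mE x)$ for all $x$ is equivalent to $-\eps(x^*\mE x) \leq x^*\mB x \leq \eps(x^*\mE x)$ for all $x$, which by definition of the L\"owner order is precisely $-\eps\mE \preceq \mB \preceq \eps\mE$.

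The substantive step is $(3) \Rightarrow (1)$, where I plan to invoke item 3 of \Cref{def:leftrightapprox}. That requires two ingredients from the sandwich $-\eps\mE \preceq \mB \preceq \eps\mE$: the kernel containment $\ker(\mE) \subseteq \ker(\mB)$ (which covers both left and right kernels because $\mB$ is Hermitian), and the spectral bound $\|\mE^{+/2}\mB\mE^{+/2}\| \leq \eps$. For the kernel step, if $x \in \ker(\mE)$ then both PSD matrices $\eps\mE \pm \mB$ vanish on $x$ in their quadratic form, and the standard fact that $\mM \succeq 0$ with $x^*\mM x = 0$ implies $\mM x = 0$ forces $(\eps\mE + \mB)x = 0$ and hence $\mB x = 0$. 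For the norm bound, conjugating by $\mE^{+/2}$ and using $\mE^{+/2}\mE\mE^{+/2} = \Pi_{\mathrm{im}(\mE)}$ gives $-\eps \Pi_{\mathrm{im}(\mE)} \preceq \mE^{+/2}\mB\mE^{+/2} \preceq \eps \Pi_{\mathrm{im}(\mE)}$, and the operator norm of a Hermitian matrix trapped between $\pm \eps \Pi$ is at most $\eps$.

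The only mild obstacle is the $(3) \Rightarrow (1)$ step, specifically extracting the kernel containment from the L\"owner sandwich; but this reduces to the standard PSD fact above and is routine. Once both hypotheses of item 3 of \Cref{def:leftrightapprox} are in hand, the reverse implication comes for free from the equivalences in that definition.
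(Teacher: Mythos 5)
Your proof is correct. The paper does not actually supply a proof of this lemma (it is left as a routine consequence of the equivalences asserted in \Cref{def:leftrightapprox}), so there is nothing to compare against; but your cycle $(1)\Rightarrow(2)\Rightarrow(3)\Rightarrow(1)$ is sound. The implication $(1)\Rightarrow(2)$ by setting $y=x$, the translation $(2)\Leftrightarrow(3)$ using that $x^*(\tA-\mA)x$ is real for Hermitian matrices, and the extraction of both the kernel containment (via the standard fact that a PSD matrix vanishing in quadratic form on $x$ annihilates $x$) and the operator-norm bound $\|\mE^{+/2}(\tA-\mA)\mE^{+/2}\|\le\eps$ from the L\"owner sandwich, are all valid. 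One small remark: for the norm step it is worth noting explicitly that $\mE^{+/2}(\tA-\mA)\mE^{+/2}$ vanishes on $\ker(\mE)$ automatically (because $\mE^{+/2}$ does), so the conclusion $\|\mE^{+/2}(\tA-\mA)\mE^{+/2}\|\le\eps$ follows from the sandwich restricted to $\im(\mE)$, where $\Pi_{\im(\mE)}$ acts as the identity. An alternative, equally short route from $(2)$ directly to $(1)$ is the complex polarization identity
\[
\Re\bigl(x^*(\tA-\mA)y\bigr)=\tfrac14\bigl[(x+y)^*(\tA-\mA)(x+y)-(x-y)^*(\tA-\mA)(x-y)\bigr],
\]
combined with a phase rotation $y\mapsto e^{-i\theta}y$ to make $x^*(\tA-\mA)y$ real and nonnegative; this avoids routing through item 3 of \Cref{def:leftrightapprox} but buys nothing over your argument.
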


Next, for matrices $\ma, \tA, \mD \in \C^{m \times m}$ we define a natural notion of approximation between $\mD - \mA$ and $\mD - \tA$ which we call \emph{standard approximation} (\Cref{def:stdapprox}). The choice of this name is because when $\mA$ and $\tA$ are adjacency matrices of undirected graphs with the same degrees, standard approximation coincides with spectral approximation of the Laplacian matrices of the associated graphs. Further, when $\mA$ and $\tA$ are adjacency matrices of directed Eulerian digraphs with the same degrees then standard approximation coincides with $\epsilon$-approximation from \cite{CKPPRSV17}. Standard approximation generalizes these two cases in a natural way even when $\mD$, which we call a {\em degree matrix}, is not diagonal (as it was in these two cases and as we will often choose it to be). We use the term degree matrix to emphasize when we are using standard approximation (\Cref{def:stdapprox}) and its variants rather than (\Cref{def:leftrightapprox}). 

We now present a series of definitions and equivalences and properties. For clarity, the reader can initially think of the case that $\mD=\mI$ and $\mA=\mW$ is a doubly stochastic matrix.
\begin{definition}[Standard approximation] \label{def:stdapprox}
    Let $\mA,\tA \in \C^{m\times m}$
    and $\mD\in \C^{m\times m}$ be a PSD matrix.
     For $\eps\geq 0$, we say $\tA$ is a \emph{standard $\eps$-approximation of $\mA$ with respect to degree matrix $\mD$} if $\mE=\mD - \mS_\mA$ is PSD and $\tA$ is an $\eps$-approximation of $\mA$ with respect to error matrix $\mE$.\footnote{Recall by \Cref{sec:prelim} that $\mS_\mA \defeq (\mA+\mA^\cTrans)/2$.}
\end{definition}

With matrix approximation and standard approximation established, we can now define unit-circle (UC) approximation and singular-value (SV) approximation. UC approximation as we present it here, is a generalization of UC approximation as it was introduced for random walk matrices in~\cite{AKMPSV20}. As discussed in the introduction, SV approximation is a new notion of approximation introduced in this paper; it has a number of natural desirable properties and facilitates our results on sparsification (\Cref{sec:sparsification}) and linear system solving (\Cref{sec:normal}).

\begin{definition}[UC approximation] \label{def:ucapprox}
    Let $\mA,\tA \in \C^{m\times m}$ and $\mD\in \C^{m\times m}$ be PSD.
     For $\eps\geq 0$, we say $\tA$ is a \emph{unit-circle (UC) $\eps$-approximation of $\mA$ with respect to degree matrix $\mD$} if for every $z\in \C$ with $|z|=1$, $z\tA$ is a standard $\eps$-approximation of $z\mA$ with respect to degree matrix $\mD$. If this holds, we write $\tA\uc_\eps \mA$ with respect to $\mD$. We omit \say{respect to $\mD$} when $\mD = \mI$.
\end{definition}
Note that for doubly stochastic matrices $\tW,\mW$, the above definition (with $\mD=\mI$) corresponds to requiring for every $z\in \C$ and $x,y\in \C^n$,
\begin{equation}
\label{eq:UC-formula}
\left|x^*(z\tW-z\mW)y\right| \leq \frac{\eps}{2}\left(\|x\|^2+\|y\|^2-x^*z\mW x-y^*z\mW y\right)\,.
\end{equation}
Optimizing over $z$ in \eqref{eq:UC-formula} implies \Cref{ineq:UC-intro}.

\begin{definition}[SV approximation] \label{def:svapprox}
    Let $\mA,\tA \in \C^{m\times n}$ and let $\Din\in \C^{m\times m}$ and $\Dout\in \C^{n\times n}$ be PSD matrices.
     For $\eps\geq 0$, we say $\tA$ is a \emph{$\eps$-singular-value (SV) approximation of $\mA$ with respect to degree matrices $\Din$ and $\Dout$}, 
     if  
     \begin{enumerate}
         \item $\ker(\Din)\subseteq \lker(\mA)$, and $\ker(\Dout)\subseteq\rker(\mA)$,
         \item $\mEL = \Din -\mA\Dout^{+}\mA^*$ and
     $\mER = \Dout-\mA^*\Din^{+}\mA$ are PSD, and 
        \item 
     $\tA$ is an $\eps/2$-approximation of $\mA$ with respect
     to error matrices $\mEL$ and $\mER$.
     \end{enumerate}
    If this holds, we write
    $\tA \svgeneral{\Din}{\Dout}{\eps} \mA$.
    If $\mA\in \R^{n\times m}_{\geq 0}$ and $\Din = \mdiag(\mA \vones_n)$, $\Dout = \mdiag(\vones_m^{\rTrans} \mA)$, then 
    we write $\tA \svgraph{\eps} \mA$.
    If $\Din=\mI_m$ and $\Dout=\mI_n$, then
    we write $\tA \svn{\eps} \mA$, which we call \emph{normalized SV approximation}.
    If $m=n$ and $\Din=\Dout=\mD$, then we write
    $\tA \svEulerian{\mD}{\eps} \mA$.
\end{definition}
Note that normalized SV approximation does not require the relevant matrices to be non-negative, whereas $\tA\svgraph{\eps}\mA$ is only defined for real non-negative matrices (such as the adjacency matrices of graphs). The maximally general definition captures both cases, so we will prove properties with respect to this notion and note their implications for the specialized notions.

\subsection{Equivalent Definitions of SV Approximation}
\label{sec:sv:equivalences}

Here we give several equivalent formulations of SV approximation. We first give conditions under which the error matrices $\mEL$ and $\mER$ in the definition of SV approximation are PSD.
\begin{restatable}[Conditions for SV approximation to be defined]{lemma}{svDefined} \label{lem:svdefined}
    Let $\mA,\tA \in \C^{m\times n}$, 
    and let $\Din\in \C^{m\times m}$ and $\Dout\in \C^{n\times n}$ be PSD matrices such that
    $\ker(\Din)\subseteq \lker(\mA)$ and $\ker(\Dout)\subseteq\rker(\mA)$.
    Then the following are equivalent:
    \begin{enumerate}
        \item $\sigmamax(\Din^{+/2} \mA \Dout^{+/2})\leq 1$. \label{itm:norm}
        \label{itm:specnorm}
        \item $\Din -\mA\Dout^{+}\mA^*$ is PSD. \label{itm:leftpsd}
        \item $\Dout-\mA^*\Din^{+}\mA$ is PSD.
        \label{itm:rightpsd}
        \item 
        For some scalar $z\in \C$ with $|z|=1$,
        $\begin{bmatrix}
    \mdin & z\ma \\
    z^\cTrans\ma^\cTrans & \mdout
    \end{bmatrix}$ is PSD. \label{itm:twobysomez} 
        \item For every scalar $z\in \C$ with $|z|\leq 1$,
        $\begin{bmatrix}
    \mdin & z\ma \\
    z^\cTrans\ma^\cTrans & \mdout
    \end{bmatrix}$ is PSD. \label{itm:twobytwoallz}
    \end{enumerate}
    Suppose further that $\Din=\mdiag(\din)$ and $\Din=\mdiag(\dout)$
    for $\din\in \R_{\geq 0}^m$, $\dout\in \R_{\geq 0}^n$.
    Then Condition~\ref{itm:graph} below implies Condition~ \ref{itm:diagdom} below, which implies Conditions~\ref{itm:norm}--\ref{itm:twobytwoallz} above.
    \begin{enumerate}
        \item $\mA$ is nonnegative, $\dout = \mA \vones_n$, $\din = \vones_m^{\rTrans} \mA$. \label{itm:graph}
        \item For all $i\in [n]$, $(\dout)_i\geq \sum_j |\mA_{i,j}|$, and for all $j\in [m]$, $(\din)_j\geq \sum_i |\mA_{i,j}|$.
        \label{itm:diagdom}
    \end{enumerate}
\end{restatable}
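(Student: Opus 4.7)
The plan is to reduce conditions (1)--(5) to equivalent statements about the ``normalized'' matrix $\mN \defeq \Din^{+/2}\mA\Dout^{+/2}$, and then to handle the diagonal-dominance implications separately. First I would rewrite the kernel hypotheses as the image inclusions $\im(\mA)\subseteq\im(\Din)$ and $\im(\mA^*)\subseteq\im(\Dout)$, which are exactly the statements that $\Din^{1/2}\Din^{+/2}\mA=\mA$ and $\mA\Dout^{+/2}\Dout^{1/2}=\mA$. Together these yield the factorization $\mA=\Din^{1/2}\mN\Dout^{1/2}$ that drives the rest of the argument.

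For the equivalence of (1), (2), and (3), I would use this factorization to compute $\mA\Dout^{+}\mA^{*}=\Din^{1/2}\mN\mN^{*}\Din^{1/2}$ and symmetrically $\mA^{*}\Din^{+}\mA=\Dout^{1/2}\mN^{*}\mN\Dout^{1/2}$; here the interior projector $\Dout^{1/2}\Dout^{+}\Dout^{1/2}=\Pi_{\im(\Dout)}$ can be absorbed because $\mN=\mN\,\Pi_{\im(\Dout)}$ (the right factor $\Dout^{+/2}$ already vanishes on $\ker(\Dout)$). Since $\sigmamax(\mN)^{2}=\lambda_{\max}(\mN\mN^{*})=\lambda_{\max}(\mN^{*}\mN)$, condition (1) is equivalent to each of $\mN\mN^{*}\preceq\Pi_{\im(\Din)}$ and $\mN^{*}\mN\preceq\Pi_{\im(\Dout)}$, and conjugating these by $\Din^{1/2}$ and $\Dout^{1/2}$ respectively produces precisely (2) and (3).

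Next, for the block-matrix conditions, I would invoke the standard Schur-complement criterion: a Hermitian block matrix with PSD upper-left block $\Din$ is PSD iff $\im(z\mA)\subseteq\im(\Din)$ and the Schur complement $\Dout-|z|^{2}\mA^{*}\Din^{+}\mA$ is PSD. The image inclusion is $z$-independent and holds by hypothesis, so for $|z|=1$ the Schur complement is exactly condition (3), giving (3)$\Leftrightarrow$(4). For $|z|\leq 1$, monotonicity $|z|^{2}\mA^{*}\Din^{+}\mA\preceq\mA^{*}\Din^{+}\mA$ yields (3)$\Rightarrow$(5), while (5)$\Rightarrow$(4) is immediate.

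Finally, for the diagonal case, (6)$\Rightarrow$(7) is trivial because nonnegativity of $\mA$ makes $|\mA_{i,j}|=\mA_{i,j}$. For (7)$\Rightarrow$(5), I would show (5) directly by verifying that the Hermitian matrix $\bigl(\begin{smallmatrix}\Din & z\mA \\ z^{*}\mA^{*} & \Dout\end{smallmatrix}\bigr)$ has nonnegative diagonal and, for $|z|\leq 1$, is weakly diagonally dominant: each row in the top block has off-diagonal $\ell_{1}$-mass $|z|\sum_{j}|\mA_{i,j}|\leq(\din)_{i}$ and analogously for the bottom block, so Gershgorin's theorem gives PSDness. The main technical care lies in Step 2: when manipulating pseudoinverses and square roots with nontrivial kernels, one must track image-projectors carefully so that the ``identity matrices'' in the unrestricted equivalence $\|\mN\|\leq 1\Leftrightarrow\mN\mN^{*}\preceq\mI$ get replaced by the correct projectors $\Pi_{\im(\Din)}$ and $\Pi_{\im(\Dout)}$ without disturbing the PSD orderings.
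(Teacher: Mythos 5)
Your proposal is correct and uses essentially the same ingredients as the paper's proof: the kernel hypotheses are used to insert projectors when conjugating by $\Din^{\pm 1/2}$, $\Dout^{\pm 1/2}$ (the paper phrases this as a substitution $y\leftarrow\Dout^{1/2}y$ or $y\leftarrow\Dout^{+/2}y$ inside the quadratic form, you phrase it as the operator factorization $\mA=\Din^{1/2}\mN\Dout^{1/2}$); the block-matrix conditions go through the same Schur-complement criterion (the paper's Fact A.1); and $(7)\Rightarrow(5)$ is the same diagonal-dominance/Gershgorin observation. Your packaging via the normalized matrix $\mN$ unifies $(1)\Leftrightarrow(2)\Leftrightarrow(3)$ in one pass and is arguably cleaner, but the underlying argument is the same.
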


Next we give several equivalent definitions of SV approximation.

\begin{restatable}[Equivalent formulations of SV approximation]{lemma}{svLifts}
    \label{SVequiv}
    Let $\mA,\tA \in \C^{m\times n}$ and let $\Din\in \C^{m\times m}$ and $\Dout\in \C^{n\times n}$ be PSD matrices.
    Then the following are equivalent
    \begin{enumerate}
        \item \label{SVequiv:main} $\tA \svgeneral{\Din}{\Dout}{\eps} \mA$. 
    
    \item \label{SVequiv:bip} $\slift{\tA} \svEulerian{\mD}{\eps} \slift{\tB}$, where 
    $$\slift{\tA} 
    = \begin{bmatrix}
     \mzero^{m \times m} & \mA\\
    \mA^* & \mzero^{n \times n}
    \end{bmatrix},\  
    \slift{\mA}
    = \begin{bmatrix}
     \mzero^{m \times m} & \tA\\
    \tA^* & \mzero^{n \times n}
    \end{bmatrix},\ 
    \mD
    = \begin{bmatrix}
     \Din & \mzero^{m \times n}\\
    \mzero^{n \times m} & \Dout
    \end{bmatrix}.$$
    
    \item\label{SVequiv:somez} For some scalar $z\in \C$ with $|z|=1$, $\tC$ is $\eps/2$-approximation of $\mC$ with respect to
    error matrix $\mE$, where 
    $$\mC
    = \begin{bmatrix}
     \mzero^{m \times m} & z\mA\\
    \mzero^{n \times m} & \mzero^{n \times n}
    \end{bmatrix},\  
    \tC
    = \begin{bmatrix}
     \mzero^{m \times m} & z\tA\\
    \mzero^{n \times m} & \mzero^{n \times n}
    \end{bmatrix},\ 
    \mE =
    \begin{bmatrix}
    \mdin & z\ma \\
    z^\cTrans\ma^\cTrans & \mdout
    \end{bmatrix}
    $$ 
    
    \item\label{SVequiv:allz} \Cref{SVequiv:somez} holds for every $z \in \C$ such that $|z| \leq 1$.
    \end{enumerate}
\end{restatable}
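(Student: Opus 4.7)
The plan is to establish (1) $\Leftrightarrow$ (2) by a direct computation of the SV error matrix for the symmetric lift, and to establish (1) $\Leftrightarrow$ (3) $\Leftrightarrow$ (4) by computing explicit bilinear and quadratic forms and then minimizing out the ``extra'' blocks of the test vectors.

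First I would handle (1) $\Leftrightarrow$ (2). Since $\slift{\mA}$ is Hermitian and $\mD$ is block-diagonal, one checks that
\[
\slift{\mA}\,\mD^{+}\,\slift{\mA}^{*}
= \begin{bmatrix} \mA\Dout^{+}\mA^{*} & \mzero\\ \mzero & \mA^{*}\Din^{+}\mA \end{bmatrix},
\]
so the SV error matrix for $\slift{\mA}$ with respect to $\mD$ is the block diagonal $\mathrm{diag}(\mEL,\mER)$ with $\mEL = \Din - \mA\Dout^{+}\mA^{*}$ and $\mER = \Dout - \mA^{*}\Din^{+}\mA$. The bilinear form for two lifted test vectors splits as $[x\,;y]^{*}(\slift{\tA}-\slift{\mA})[x';y'] = x^{*}(\tA-\mA)y' + y^{*}(\tA-\mA)^{*}x'$, and the error quadratic form splits as $x^{*}\mEL x + y^{*}\mER y$. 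Setting $y=x'=0$ in condition (2) immediately recovers condition (1) (up to the factor $\eps/2$ in Definition~\ref{def:svapprox}); conversely, the triangle inequality applied to the two cross terms, followed by condition (1) on each, recovers condition (2).

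For (1) $\Leftrightarrow$ (3), fix any $z$ with $|z|=1$. A direct calculation gives
\[
\begin{bmatrix} x\\ y \end{bmatrix}^{*}(\tC - \mC)\begin{bmatrix} x'\\ y' \end{bmatrix} = z\,x^{*}(\tA-\mA)y',
\]
whose magnitude is $|x^{*}(\tA-\mA)y'|$, independent of $z$. The error quadratic form is
\[
\begin{bmatrix} x\\ y \end{bmatrix}^{*}\mE\begin{bmatrix} x\\ y \end{bmatrix} = x^{*}\Din x + y^{*}\Dout y + 2\,\Re(z\,x^{*}\mA y),
\]
which, as a quadratic in $y$, is minimized at $y = -z^{*}\Dout^{+}\mA^{*}x$ (well-defined because $\ker(\Dout)\subseteq \rker(\mA)$ forces $\mA^{*}x \in \im(\Dout)$), with minimum $x^{*}\mEL x$. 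A symmetric calculation over $x'$ gives minimum $y'^{*}\mER y'$. Since condition (3) must hold for \emph{all} $y$ and $x'$, minimizing the right-hand side over these two blocks shows that (3) (for this $z$) is equivalent to $|x^{*}(\tA-\mA)y'| \leq (\eps/4)(x^{*}\mEL x + y'^{*}\mER y')$ for all $x,y'$, which is exactly condition (1) in the form of Definition~\ref{def:leftrightapprox} item~1 (with $\eps/2$). Because neither the LHS magnitude nor the two minima depend on the phase $z$ on the unit circle, the choice of $z$ in (3) is irrelevant.

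Finally, (3) $\Leftrightarrow$ (4) is the short step. For $|z|\leq 1$, the LHS scales by $|z|\leq 1$, while minimizing the corresponding quadratic form in $y$ produces $x^{*}(\Din - |z|^{2}\mA\Dout^{+}\mA^{*})x \succeq x^{*}\mEL x$ (and analogously for $x'$); hence condition (3) at any single unit-modulus $z$ already implies condition (4). The reverse is immediate. The main obstacle is bookkeeping of kernel conditions so that all pseudoinverses and minimizers are well-defined: item~1 of Definition~\ref{def:svapprox} guarantees $\im(\mA)\subseteq\im(\Din)$ and $\im(\mA^{*})\subseteq\im(\Dout)$, and Lemma~\ref{lem:svdefined} equates PSD-ness of $\mE$ with PSD-ness of $\mEL,\mER$, which makes the statements in (3) and (4) meaningful simultaneously with those in (1) and (2).
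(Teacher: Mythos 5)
Your proposal is correct and tracks the paper's own proof closely. Both prove $(1)\Leftrightarrow(2)$ by identifying the lift's error matrix with $\mathrm{diag}(\mEL,\mER)$ and splitting the bilinear form into its two cross terms, and both prove $(1)\Leftrightarrow(3)\Leftrightarrow(4)$ by observing that the minimum over the ``slack'' blocks $y,x'$ of the RHS $\frac{\eps}{4}\big([x;y]^*\mE[x;y]+[x';y']^*\mE[x';y']\big)$ is exactly $\frac{\eps}{4}(x^*\mEL x + y'^*\mER y')$ via the Schur complement of $\mE$. You phrase this as completing the square in $y$ and $x'$; the paper phrases it by taking $\argmin$ test vectors and invoking the Schur-complement identity, but it is the same computation. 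Your observation that the LHS magnitude and the two minima are both $|z|$-homogeneous in a way that only tightens for $|z|\leq 1$ is also what the paper uses (implicitly, via $\Din-|z|^2\mA\Dout^+\mA^*\succeq\mEL$) to get from a single $|z|=1$ to all $|z|\leq 1$. The only cosmetic difference is the order of implications: the paper cycles $(1)\to(4)\to(3)\to(1)$, while you argue $(1)\Leftrightarrow(3)$ directly at fixed unit-modulus $z$ and then derive $(4)$; both are complete.
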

Each formulation of SV approximation in \Cref{SVequiv} has useful properties. \Cref{SVequiv:bip} implies that SV approximation between directed graphs is equivalent to a natural related statement between \textit{undirected} graphs, which we use for sparsification (See \Cref{sec:sparsification}). A version of \Cref{SVequiv:bip} is not known to hold for prior definitions of approximation between directed graphs, such as standard approximation and unit circle approximation. 
\Cref{SVequiv:somez} and
\Cref{SVequiv:allz} have an error matrix $\mE$ that is linear in $\ma$, $\mdin$, and $\mdout$, which enables short proofs of properties such as summability. In addition, \Cref{SVequiv:somez} and \Cref{SVequiv:allz} characterize SV approximation of $\mA$ in terms of $\eps$-approximation of the \say{asymmetric lift} of $\mA$ with respect to its symmetrization. This enables us to leverage results developed for $\eps$-approximation, such as preservation under Schur complements (\Cref{thm:schurStandard}).

We next note the relation between SV and normalized SV approximation.
\begin{restatable}{lemma}{normalizedSV}\label{lem:normalizedSV}
Let $\mA,\tA \in \C^{m\times n}$, and let $\Din\in \C^{m\times m}$ and $\Dout\in \C^{n\times n}$ be PSD matrices such that $\ker(\Din)\subseteq \lker(\mA)$, and $\ker(\Dout)\subseteq\rker(\mA)$.
    Let $\mN = \Din^{+/2} \mA \Dout^{+/2}$
    and $\tN = \Din^{+/2} \tA \Dout^{+/2}$
    Then 
     $\tA \svgeneral{\Din}{\Dout}{\eps} \mA$
     if and only if
     $\tN \svn{\eps} \mN$.
\end{restatable}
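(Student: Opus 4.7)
My plan is to unpack both notions of approximation via characterization (1) of \Cref{def:leftrightapprox} and transfer the resulting bilinear inequality through the substitution $u = \Din^{1/2} x$, $v = \Dout^{1/2} y$. By hypothesis, $\mA = P_{\im(\Din)}\mA\, P_{\im(\Dout)}$, where $P_{\im(\cdot)}$ denotes orthogonal projection onto the image. In the forward direction the same identity holds for $\tA$, because the $\svgeneral{\Din}{\Dout}{\eps}$ bound kills $\ker(\Din)$ and $\ker(\Dout)$ (since $\mEL$ and $\mER$ do there), which forces $\lker(\tA-\mA)\supseteq \ker(\Din)$ and $\rker(\tA-\mA)\supseteq \ker(\Dout)$. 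In the reverse direction we may assume it without loss of generality, since $\tN = \Din^{+/2}\tA\Dout^{+/2}$ is insensitive to modifications of $\tA$ outside $\im(\Din)\times \im(\Dout)$.

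First I would handle the PSD conditions. Using $\Din^{1/2}\Din^{+/2} = P_{\im(\Din)}$, a short computation gives
\[
\mI_m - \mN\mN^* \;=\; (\mI_m - P_{\im(\Din)}) \;+\; \Din^{+/2}\, \mEL\, \Din^{+/2},
\]
which is PSD iff $\mEL$ is (the first summand is always PSD and vanishes on $\im(\Din)$, while conjugation by $\Din^{+/2}$ bijects the PSD cones of $\im(\Din)$); the analogous identity handles $\mER$ and $\mI_n-\mN^*\mN$. Next I would verify the bilinear identities under the substitution. Using $\Din^{1/2}\Din^{+/2}=P_{\im(\Din)}$ together with the kernel conditions on $\mA$ and $\tA$,
\[
u^*(\tN - \mN)v \;=\; x^* P_{\im(\Din)}(\tA - \mA) P_{\im(\Dout)} y \;=\; x^*(\tA-\mA) y,
\]
and expanding $\|\mN^* u\|^2 = x^*\mA\Dout^+\mA^* x$ yields $u^*(\mI_m - \mN\mN^*)u = x^*\mEL x$, with the symmetric identity $v^*(\mI_n-\mN^*\mN)v = y^*\mER y$. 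Thus the inequality of \Cref{def:leftrightapprox}(1) for $(\tN,\mN)$ at $(u,v)$ is literally the inequality for $(\tA,\mA)$ at $(x,y)$.

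To close both directions I would observe that the $\svgeneral{\Din}{\Dout}{\eps}$ inequality for arbitrary $(x,y)$ reduces to the inequality at $(P_{\im(\Din)} x, P_{\im(\Dout)} y)$ (since $\tA-\mA$, $\mEL$, $\mER$ all annihilate the relevant kernels), while the normalized inequality for arbitrary $(u,v)$ is weaker than its restriction to $(P_{\im(\Din)} u, P_{\im(\Dout)} v)$ (the kernel components only enlarge the right-hand side via $\|(\mI-P)u\|^2$ contributions). Hence the restriction of $x\mapsto \Din^{1/2}x$ to $\im(\Din)$ is a bijection onto $\im(\Din)$ with inverse $\Din^{+/2}$, and the same for $y\mapsto \Dout^{1/2}y$, so the equivalence follows in both directions from the identities above. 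The main delicate point is the bookkeeping with pseudoinverses and projections, in particular confirming that the kernel conditions for $\tA$ (not stated as a hypothesis) are either forced in the forward direction or WLOG in the reverse direction; once these identities are in place, the equivalence is immediate.
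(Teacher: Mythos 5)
Your proof is correct and takes essentially the same route as the paper's: you unfold the definition and transfer the bilinear inequality under $u = \Din^{1/2}x$, $v = \Dout^{1/2}y$, whereas the paper packages the identical computation into one application of \Cref{lem:leftrighterror} with $\mU=\Din^{+/2}$, $\mV=\Dout^{+/2}$ and the observation $\Din^{+/2}\mEL\Din^{+/2}= P_{\im(\Din)}-\mN\mN^*\preceq \mI-\mN\mN^*$. Your identity $\mI-\mN\mN^* = (\mI-P_{\im(\Din)}) + \Din^{+/2}\mEL\Din^{+/2}$ is the same observation stated as an equality, and your extra care about the implicit kernel conditions on $\tA$ (forced in the forward direction, recovered in the reverse by showing only $P_{\im(\Din)}\tA P_{\im(\Dout)}$ is constrained) is a real subtlety the paper glosses over with "the other direction is analogous."
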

Note that if $\mA$ is nonnegative and $\Din = \mdiag(\mA \vones_n)$, $\Dout = \mdiag(\vones_m^{\rTrans} \mA)$, then $\mN$ and $\tN$ are the normalized adjacency matrices of $\mA$ and $\tA$. Furthermore, when $\mA,\tA$ are the adjacency matrices of regular digraphs, then $\mN$ and $\tN$ are the random-walk matrices of $\mA$ and $\tA$ respectively. If instead $\mA$ and $\tA$ are the adjacency matrices of Eulerian digraphs, we obtain that $\mN$ and $\tN$ are similar to the random-walk matrices of $\mA$ and $\tA$. 

Normalized SV approximation is implied by standard $\epsilon/2$-approximation with respect to the original degree matrix holding for all unitary multiples of the adjacency matrix.
\begin{restatable}[Unitary transformation characterization of SV-approximation]{theorem}{svRegular}\label{lem:svregular}
    For $\tN,\mN \in \C^{n\times n}$, we have that $\tN \svn{\eps} \mN$ if
    for every pair of unitary matrices $\mU,\mV$, $\mU\tN\mV$ is a standard $\eps/2$-approximation of $\mU\mN\mV$ with respect to degree matrix $\mI$. Moreover, if $\tN \svn{\eps} \mN$ then
    for every pair of matrices $\mU,\mV$ satisfying $\|\mU\|\leq 1,\|\mV\|\leq 1$, we have that $\mU\tN\mV\svn{\eps}\mU\mN\mV$, and hence $\mU\tN\mV$ is a standard $\eps$-approximation of $\mU\mN\mV$ with respect to degree matrix $\mI$.
\end{restatable}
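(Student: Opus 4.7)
The plan is to prove the two implications separately; each reduces to a short linear-algebra manipulation that exploits the explicit form of the SV error matrices $\mI - \mN\mN^*$ and $\mI - \mN^*\mN$.

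For the first implication, I would instantiate the hypothesis at the single pair of unitaries arising from an SVD of $\mN$. Writing $\mN = \mU_0 \mSigma \mV_0^*$ and choosing $\mU = \mU_0^*$, $\mV = \mV_0$, the product $\mU\mN\mV = \mSigma$ is already Hermitian with nonnegative diagonal, so the assumed standard $\eps/2$-approximation, unpacked via \Cref{def:leftrightapprox}, reads
\[
|x^*\mU_0^*(\tN-\mN)\mV_0\, y| \leq \tfrac{\eps}{4}\bigl(x^*(\mI-\mSigma)x + y^*(\mI-\mSigma)y\bigr) \quad \text{for all } x,y\in \C^n.
\]
The PSD-ness of the error matrix $\mI - \mSigma$ built into standard approximation already forces $\sigma_i(\mN) \leq 1$ for every $i$, whence $\mSigma^2 \preceq \mSigma$ and so $\mI - \mSigma \preceq \mI - \mSigma^2$. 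The substitution $x \mapsto \mU_0^* x$, $y \mapsto \mV_0^* y$ rewrites the left-hand side as $|x^*(\tN-\mN)y|$ and, after invoking the preceding PSD inequality, turns the right-hand side into $\tfrac{\eps}{4}(x^*(\mI-\mN\mN^*)x + y^*(\mI-\mN^*\mN)y)$, which is exactly the SV inequality.

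For the second implication, I would substitute $x \mapsto \mU^* x$ and $y \mapsto \mV y$ into the SV inequality for $(\tN,\mN)$ to obtain
\[
|x^*\mU(\tN-\mN)\mV y| \leq \tfrac{\eps}{4}\bigl(x^*\mU(\mI-\mN\mN^*)\mU^* x + y^*\mV^*(\mI-\mN^*\mN)\mV y\bigr).
\]
It then suffices to dominate each right-hand quadratic form by the corresponding SV error matrix for $\mU\mN\mV$. Setting $z = \mU^* x$ and $w = \mN^* z$, the first right-hand term equals $\|z\|^2 - \|w\|^2$ while the matching target $x^*(\mI - (\mU\mN\mV)(\mU\mN\mV)^*)x$ equals $\|x\|^2 - \|\mV^* w\|^2$; the contractivity bounds $\|z\|\leq \|x\|$ and $\|\mV^* w\|\leq \|w\|$ (both consequences of $\|\mU\|,\|\mV\|\leq 1$) then force the former to be at most the latter, since the difference collapses to a nonnegative quantity minus a nonpositive one. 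A symmetric computation with $p = \mV y$ handles the second term, establishing $\mU\tN\mV \svn{\eps} \mU\mN\mV$.

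The concluding standard $\eps$-approximation claim then follows from the pointwise PSD inequalities $\mI - \mA\mA^* \preceq 2(\mI - \mS_\mA)$ and $\mI - \mA^*\mA \preceq 2(\mI - \mS_\mA)$, each of which is just the expansion of $(\mI-\mA)(\mI-\mA)^*\succeq 0$, applied with $\mA = \mU\mN\mV$; the factor of two here is exactly what converts the $\eps/4$ on the right-hand side of the SV inequality into the $\eps/2$ required by \Cref{def:stdapprox}. I do not foresee any substantive obstacle: the argument is essentially bookkeeping, and the only steps requiring care are (a) recognizing that a single SVD pair already exposes all of $\mN$'s singular structure, so the first direction does not require averaging over all unitary pairs, and (b) correctly matching the two contraction bounds from $\|\mU\|,\|\mV\|\leq 1$ to the left/right factors of the error matrices for $\mU\mN\mV$.
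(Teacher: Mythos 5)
Your proof is correct and, for the first implication, it takes a substantively different (and cleaner) route than the paper's. The paper fixes $\mU=\mI$, chooses test vectors $u=x/\|x\|$ and $v=\mN y/\|\mN y\|$, and asserts the existence of a unitary $\mV$ (depending on $x,y$) sending $u\mapsto \mN^*x/\|\mN^*x\|$ and $v\mapsto y/\|y\|$; but a unitary carrying one pair of unit vectors to another exists only when the two inner products agree, i.e.\ when $\|x\|\,\|\mN y\|=\|\mN^*x\|\,\|y\|$, which need not hold in general. You instead instantiate the hypothesis at a \emph{single fixed} unitary pair $(\mU_0^*,\mV_0)$ from an SVD $\mN=\mU_0\mSigma\mV_0^*$, after which the whole step collapses to the scalar inequality $\mI-\mSigma\preceq \mI-\mSigma^2$ (valid since the PSD-ness of $\mI-\mSigma$ built into standard approximation forces $\sigma_i\leq 1$); conjugating by $\mU_0$ and $\mV_0$ then yields $\mI-\mN\mN^*$ and $\mI-\mN^*\mN$, and the same observation supplies the PSD-ness of these error matrices required for $\svn{\eps}$ to be well-defined. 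This sidesteps the angle-matching issue entirely and also shows that only one unitary pair is needed, not the whole family. For the second implication your argument and the paper's are equivalent in content: the paper routes the change of variables through \Cref{SVequiv:somez} of \Cref{SVequiv} together with \Cref{lem:leftrighterror} and a L\"owner comparison of block error matrices, while you carry out the same quadratic-form domination directly with $z=\mU^*x$, $w=\mN^*z$ and the contraction bounds $\|\mU^*x\|\leq\|x\|$, $\|\mV^*w\|\leq\|w\|$; yours is more self-contained. Your closing passage from $\svn{\eps}$ to standard $\eps$-approximation via $\mI-\mA\mA^*\preceq 2(\mI-\mS_\mA)$ (from $(\mI-\mA)(\mI-\mA)^*\succeq 0$) and $\mI-\mA^*\mA\preceq 2(\mI-\mS_\mA)$ (from $(\mI-\mA)^*(\mI-\mA)\succeq 0$) is exactly the factor-of-two loss the paper packages as \Cref{lem:SVimpliesSTD}.
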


\subsection{Comparison to Prior Notions of Approximation}

First, we show that SV approximation implies unit-circle approximation:
\begin{restatable}{lemma}{SVimpliesUC}\label{lem:SVimpliesUC}
    If $\mA,\tA \in \C^{n\times n}$ with $\tA \svEulerian{\mD}{\eps} \mA$ then $\tA \uc_{\eps} \mA$ with respect to degree matrix $\mD$.
\end{restatable}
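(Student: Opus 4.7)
The plan is to derive UC approximation from SV approximation using the asymmetric-lift characterization in \Cref{SVequiv}. Unpacking \Cref{def:ucapprox}, it suffices to fix an arbitrary $z \in \C$ with $|z|=1$ and establish both (i) that $\mD - \mS_{z\mA}$ is PSD, and (ii) the bilinear inequality
$$\left|x^*(z\tA - z\mA) y\right| \leq \tfrac{\eps}{2}\bigl(x^*(\mD - \mS_{z\mA}) x + y^*(\mD - \mS_{z\mA}) y\bigr) \text{ for all } x, y \in \C^n.$$
I will obtain both from the hypothesis via \Cref{SVequiv:somez}, which states that $\tC_z$ is an $\eps/2$-approximation of $\mC_z$ with respect to the error matrix $\mE_z = \left[\begin{smallmatrix}\mD & z\mA \\ z^*\mA^* & \mD\end{smallmatrix}\right]$.

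For (i), note that $\mE_z$ is PSD by \Cref{lem:svdefined}, so its quadratic form evaluated at the test vector $\left[\begin{smallmatrix} x \\ -x \end{smallmatrix}\right]$ is nonnegative; a direct computation shows this quadratic form equals $2\,x^*(\mD - \mS_{z\mA}) x$, giving the desired PSD condition. For (ii), I will invoke the bilinear characterization of $\eps/2$-approximation (\Cref{def:leftrightapprox}, item 1) applied to $\tC_z - \mC_z$ using the test vectors $\left[\begin{smallmatrix} x \\ -x \end{smallmatrix}\right]$ and $\left[\begin{smallmatrix} y \\ -y \end{smallmatrix}\right]$. The block structure of $\tC_z - \mC_z$ collapses its bilinear form to $\pm\, x^*(z\tA - z\mA) y$, while the quadratic forms of $\mE_z$ on each test vector each produce twice $x^*(\mD - \mS_{z\mA}) x$ (respectively $y^*(\mD - \mS_{z\mA}) y$). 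The factor of $2$ from the block structure exactly cancels the $1/2$ inside the $\eps/2$-approximation hypothesis, yielding (ii).

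I do not anticipate a serious obstacle; the entire argument amounts to selecting the correct test vectors and unpacking \Cref{SVequiv}. The one tactical point to watch is the sign pattern in the test vectors: using $\left[\begin{smallmatrix} x \\ -x \end{smallmatrix}\right]$ rather than $\left[\begin{smallmatrix} x \\ x \end{smallmatrix}\right]$ is essential so that the off-diagonal $z\mA$ and $z^*\mA^*$ blocks of $\mE_z$ contribute with a minus sign, producing $\mD - \mS_{z\mA}$ rather than $\mD + \mS_{z\mA}$ and thereby matching the error matrix of standard approximation in \Cref{def:stdapprox}.
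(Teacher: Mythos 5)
Your proof is correct and follows the same overall strategy as the paper's: both pass through the lifted error matrix $\mE_z = \begin{bmatrix}\mD & z\mA\\ z^*\mA^* & \mD\end{bmatrix}$ from \Cref{SVequiv} and plug in block test vectors. The difference is that your choice of sign $\begin{bmatrix}x\\-x\end{bmatrix}$ directly produces the per-$z$ standard-approximation inequality with error matrix $\mD - \mS_{z\mA}$ — exactly what \Cref{def:ucapprox} asks for — whereas the paper uses $\begin{bmatrix}x\\x\end{bmatrix}$, obtains a $+\Re(z\,x^*\mA x)$ term, and then has to minimize the right-hand side over all unit-modulus $z$ to land on the absolute-value form of UC approximation and implicitly invoke its equivalence with the per-$z$ form; your route avoids that detour. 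You also explicitly verify that $\mD - \mS_{z\mA}$ is PSD via the quadratic form of $\mE_z$ at $\begin{bmatrix}x\\-x\end{bmatrix}$, a side condition of \Cref{def:stdapprox} that the paper's proof leaves unstated. One small correction: you should cite \Cref{SVequiv:allz} rather than \Cref{SVequiv:somez}. Because \Cref{def:ucapprox} quantifies over every unit-modulus $z$, you need the lifted approximation statement to hold for the arbitrary $z$ you fixed at the start, and that is exactly what item~\ref{SVequiv:allz} supplies (item~\ref{SVequiv:somez} only asserts existence of some such $z$).
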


We show that SV approximation can be separated arbitrarily from UC approximation, even for undirected graphs and symmetric $2\times 2$ matrices.
\begin{restatable}{proposition}{SVUCsepGraphs}\label{prop:SVUCsepGraphs}~
    \begin{enumerate}
        \item\label{itm:sepGraph}  There is $c>0$ such that for all $n\in \N$, there are random walk matrices of undirected graphs $\tM,\mM\in \R^{n\times n}$ such that $\tM\uc_{1/c\sqrt{n}}\mM$, yet $\tM$ is not a $.3$-normalized SV approximation of $\mM$.
        \item\label{itm:sepMatr} For every $\alpha,\eps \in (0,1)$, there are symmetric matrices $\mW,\tW\in \R^{2\times 2}$ with $\|\mW\|,\|\tW\|\leq 1$ such that $\tW \uc_\eps\mW$ but $\tW$ is not an $\eps$-normalized SV approximation of $\mW$ for any $\eps'<\frac{\eps}{\sqrt{1-\alpha^2}}$.
    \end{enumerate}
\end{restatable}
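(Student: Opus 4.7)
The plan is to give an explicit construction for each part.

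For the second part (\Cref{itm:sepMatr}), I would take
\[
\mW = \begin{pmatrix} \alpha & 0 \\ 0 & -\alpha \end{pmatrix}
\qquad \text{and} \qquad
\tW = \mW + \delta \begin{pmatrix} 0 & 1 \\ 1 & 0 \end{pmatrix},
\]
so that both are symmetric and, for $\delta$ with $\alpha^2+\delta^2\leq 1$, have spectral norm at most $1$. Using the spectral-norm characterization of \Cref{def:leftrightapprox}(3), I would compute both errors exactly. For UC at phase $z$ with $|z|=1$, the error matrix $\mI-\mS_{z\mW} = \mdiag(1-\alpha\Re(z),\,1+\alpha\Re(z))$ is diagonal, so the purely off-diagonal perturbation $z(\tW-\mW)$ has normalized spectral norm $\delta/\sqrt{(1-\alpha\Re(z))(1+\alpha\Re(z))} = \delta/\sqrt{1-\alpha^2\Re(z)^2}$. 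Maximizing over $|z|=1$ (attained at $\Re(z)=\pm 1$) gives UC error $\delta/\sqrt{1-\alpha^2}$. For SV, both error matrices equal $\mI-\mW^2 = (1-\alpha^2)\mI$, yielding SV error $\delta/(1-\alpha^2)$. Choosing $\delta = \eps\sqrt{1-\alpha^2}$ makes $\tW\uc_\eps \mW$, while any $\eps'$-SV approximation must satisfy $\eps'\geq 2\delta/(1-\alpha^2) = 2\eps/\sqrt{1-\alpha^2}$, which in particular rules out all $\eps'<\eps/\sqrt{1-\alpha^2}$.

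For the first part (\Cref{itm:sepGraph}), my plan is to lift this $2\times 2$ construction to the random-walk matrix of an $n$-vertex undirected graph whose spectrum contains a positive and a negative eigenvalue $\alpha_+,\alpha_-$ with $|\alpha_\pm|\to 1$ but $|\alpha_\pm|<1$ strictly, so that $1/\sqrt{(1-\alpha_+^2)(1-\alpha_-^2)}$ grows polynomially in $n$. A natural choice is the random walk $\mM$ on the odd cycle $C_n$, whose eigenvalues $\cos(2\pi k/n)$ include $\alpha_+ = \cos(2\pi/n)$ and $\alpha_- = -\cos(\pi/n)$, both within $\Theta(1/n^2)$ of $\pm 1$. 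Writing $p_+,p_-$ for unit-norm real eigenvectors of $\alpha_+,\alpha_-$, I would define $\tM$ by adding the symmetric rank-$2$ perturbation $\delta(p_+p_-^\top + p_-p_+^\top)$ (plus a background correction described below). Repeating the $2\times 2$ calculation on the subspace $\Span(p_+,p_-)$, which lies inside the orthocomplement of the stationary direction $\vones/\sqrt n$, yields UC error $\Theta(\delta n)$ and SV error $\Theta(\delta n^2)$. Taking $\delta = \Theta(1/n^{3/2})$ then gives UC error $O(1/\sqrt n)$ while keeping the SV error at least a fixed constant above $0.3$ for all sufficiently large $n$.

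The main obstacle will be ensuring $\tM$ is a valid random-walk matrix, i.e., symmetric, doubly stochastic, and entrywise nonnegative. Symmetry is immediate, and doubly stochasticity holds because $p_+,p_-\perp\vones$ makes the perturbation's row- and column-sums vanish. Nonnegativity is the delicate point: in the standard basis the perturbation has entries of both signs, of magnitude $O(\delta/n)$. I would address this by first convex-combining $\mM$ with a small uniform mixing term $t\,\vones\vones^\top/n$ with $t=\Theta(\delta)$, so that every entry of the resulting base matrix is at least $\Omega(\delta/n)$ and can absorb the off-diagonal perturbation without going negative. The mixing shifts each non-$\vones$ eigenvalue multiplicatively by $1-t = 1-\Theta(\delta)$, which is absorbed as a lower-order correction in the UC and SV error calculations and does not alter the asymptotic separation. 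A direct verification then confirms that $\tM$ is a valid random-walk matrix satisfying the claimed approximation bounds.
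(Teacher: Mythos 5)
Part (2) matches the paper's proof exactly: the same $\mW=\mdiag(\alpha,-\alpha)$ and $\tW=\mW+\eps\sqrt{1-\alpha^2}\cdot(\text{off-diagonal})$, with the same computation of the UC and SV normalized errors (your spectral-norm bookkeeping is in fact slightly tighter, giving a necessary $\eps'\geq 2\eps/\sqrt{1-\alpha^2}$ rather than $\eps/\sqrt{1-\alpha^2}$).

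Part (1) takes a genuinely different route from the paper, but your plan has a real gap. The paper constructs $\mM,\tM$ as symmetric lifts of the $1/2$- and $(1/2+\delta)$-lazy directed $n/2$-cycle (i.e.\ undirected cycles with alternating edge weights), verifies $\tM\uc_\delta\mM$ directly from the edge-Laplacian decomposition, and rules out $\tM\svn{.3}\mM$ by passing through the product property of SV approximation to a purely combinatorial failure of standard approximation between $\tW^k$ and $\mW^k$ at $k=\Theta(1/\delta^2)$---avoiding eigenvector analysis entirely. In your spectral-perturbation construction, the nonnegativity patch defeats the SV blow-up. The claim that mixing by $t\vones\vones^\top/n$ with $t=\Theta(\delta)$ ``is absorbed as a lower-order correction'' is incorrect: with $\delta=\Theta(n^{-3/2})$, the shift $t$ \emph{dominates} the original gaps $1-\alpha_+=\Theta(n^{-2})$ and $1+\alpha_-=\Theta(n^{-2})$, so after mixing $1-(\alpha'_\pm)^2\approx 2t$, the normalized SV error becomes $\approx\delta/(2t)$ (not the claimed $\Theta(\delta n^2)$), and the UC error becomes $\approx\delta/\sqrt{2t}$. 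That could still leave a constant SV error with $O(1/\sqrt n)$ UC error, except that cycle eigenvectors have $\|p_\pm\|_\infty\approx\sqrt{2/n}$, so the perturbation $\delta(p_+p_-^\top+p_-p_+^\top)$ has entries of magnitude up to $\approx 4\delta/n$ (e.g.\ on the diagonal, $2(p_+)_i(p_-)_i$ attains $\approx-4/n$ for some $i$). Nonnegativity of $\tM$ then forces $t\gtrsim 4\delta$, capping the necessary SV parameter at $\approx\delta/t\leq 1/4<0.3$, so the construction as written does not actually exclude $\tM\svn{.3}\mM$. You would need either eigenvectors with sup-norm closer to $1/\sqrt n$ (e.g.\ abelian group characters, but those graphs have $1\mp\alpha_\pm=\Theta(1/\log n)$ and lose the $1/\sqrt n$ rate) or a different argument to close the gap.
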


In prior work, (\cite[Proposition 4.1]{AKMPSV20}), it was also shown that UC approximation can be arbitrary separated from standard approximation, even for undirected graphs.

\begin{proposition}[\cite{AKMPSV20}]
For every $\epsilon \in (0,1)$, there exist undirected regular graphs with random walk matrices $\tW,\mW$ such that $\tW$ is an $\eps$-approximation of $\mW$ with respect to $\mI-\mW$ but $\tW \uc_c \mW$ does not hold for every $c\in \N$.
\end{proposition}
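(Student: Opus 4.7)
The plan is to exhibit an explicit family of examples built from bipartite graphs, where the spectral separation is driven by the eigenvalue $-1$ of $\mW$. For any desired $\eps \in (0,1)$, I would take $\mW$ to be the random-walk matrix of a $d$-regular bipartite graph $G$ with vertex bipartition $(A,B)$ (the simplest instance being the single edge on two vertices, with $\mW = \begin{bmatrix} 0 & 1 \\ 1 & 0 \end{bmatrix}$), and set $\tW = (1-\eta)\mW + \eta \mI$ for $\eta \le \eps$. Combinatorially, $\tW$ is realized by adding $k$ self-loops at each vertex of $G$, giving a $(d+k)$-regular multigraph with $\eta = k/(d+k)$; choosing $k$ sufficiently large relative to $d$ makes $\eta$ as small as we like.

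The standard approximation bound follows from a one-line computation: $\tW - \mW = \eta(\mI - \mW)$. By \Cref{lem:Hermitianapprox}, this identity immediately gives $-\eta(\mI - \mW) \preceq \tW - \mW \preceq \eta(\mI - \mW)$, which is exactly the condition that $\tW$ is a standard $\eta$-approximation (and hence $\eps$-approximation) of $\mW$ with respect to error matrix $\mI - \mW$.

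To rule out UC approximation for every finite $c$, I would plug in the test vector $v = \vones_A - \vones_B$, which is an eigenvector of $\mW$ with eigenvalue $-1$ because $G$ is bipartite. By \Cref{def:ucapprox} applied with $z = -1$ and degree matrix $\mI$, being a UC $c$-approximation requires $-\tW$ to be a standard $c$-approximation of $-\mW$ with respect to the error matrix $\mI - \mS_{-\mW} = \mI + \mW$. Evaluating both sides of the resulting inequality on $v$: the right-hand side $c \cdot v^*(\mI + \mW)v = c\cdot(\|v\|^2 + v^*\mW v) = c\cdot(\|v\|^2 - \|v\|^2) = 0$ vanishes, whereas the left-hand side $|v^*(\tW - \mW)v| = \eta\cdot v^*(\mI - \mW)v = 2\eta\|v\|^2 > 0$. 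The UC inequality therefore cannot hold for any finite $c$, in particular for no $c \in \N$.

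The main conceptual point, more than an obstacle, is noticing that UC approximation is forced to be \emph{exact} on the $-1$-eigenspace of $\mW$ (since $\mI + \mW$ vanishes there when $z = -1$), while standard approximation is lax on that subspace because $\mI - \mW$ has its largest eigenvalue $2$ there. This is exactly the phenomenon that SV approximation was designed to capture and that distinguishes the two notions; the bipartite construction puts that gap on maximal display.
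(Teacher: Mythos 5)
Your argument is correct and is exactly the expected construction for this separation; note that the paper itself gives no proof (it cites \cite{AKMPSV20}), so this is a faithful reconstruction rather than a comparison against an in-paper argument. The standard-approximation bound ($\tW - \mW = \eta(\mI-\mW)$, so $\pm(\tW-\mW)\preceq \eta(\mI-\mW)$ and the kernel condition is automatic) and the UC failure at $z=-1$ (where $\mE = \mI - \mS_{-\mW} = \mI+\mW$ annihilates $v=\vones_A-\vones_B$ while $(\tW-\mW)v = 2\eta v\neq 0$, so the kernel-containment clause of \Cref{def:leftrightapprox} already fails, let alone the scalar inequality) are both right.

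One small slip in the combinatorial realization: with $k$ self-loops added to a $d$-regular graph you get $\tW = \frac{d}{d+k}\mW + \frac{k}{d+k}\mI$, so $\eta = k/(d+k)$ shrinks as $d$ grows with $k$ fixed, not as $k$ grows with $d$ fixed. To make $\eta \le \eps$ you should take $d$ sufficiently large relative to $k$ (e.g., replicate the single edge $d$ times and add one self-loop, giving $\eta = 1/(d+1)$), the opposite of what you wrote. This does not affect the spectral argument, which only needs $\eta \le \eps$.
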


Further, since UC approximation trivially implies standard approximation (as was also argued in \cite{AKMPSV20}) we see that SV approximation can be viewed of a strengthening of both UC and standard approximation.

\subsection{Properties of SV Approximation}\label{sec:sv:prop}
We now show that SV approximation enjoys several properties that are provably not enjoyed by prior notions of approximation. We summarize these differences in the following lemma:
\begin{restatable}{proposition}{UCpropFail}\label{prop:UC_prop_fail}
    SV approximation is preserved under multiplication by permutation matrices (\Cref{cor:svPerm}), embedding into arbitrary block matrices (\Cref{lem:svarblifts}), and products (\Cref{lem:svproducts}). None of these properties hold for UC approximation.
\end{restatable}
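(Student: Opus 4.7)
The proposition bundles three positive and three negative claims. The positive side---that SV approximation is preserved under each operation---follows directly from the three cited results \Cref{cor:svPerm}, \Cref{lem:svarblifts}, and \Cref{lem:svproducts}, so no separate argument is needed there. The substantive content is the trio of counterexamples demonstrating that UC approximation does \emph{not} enjoy these closure properties.

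My overall plan is to exploit two tools already established in the paper. First, \Cref{prop:SVUCsepGraphs} provides matrices $\tW \uc_\eps \mW$ that fail to be SV-approximate at any reasonable parameter. Second, \Cref{lem:svregular} characterizes SV approximation essentially as the closure of standard approximation under left/right multiplication by contractions. Taken together, these tell us that any separating pair $(\tW,\mW)$ must admit unitaries $\mU,\mV$ for which $\mU\tW\mV$ fails to standard-approximate (and hence UC-approximate) $\mU\mW\mV$. The task is then to realize such $\mU,\mV$ as concrete permutations, block inclusions, and products required by the three properties.

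Concretely, for the \textbf{permutation} case I would take the $2\times 2$ separating pair from \Cref{prop:SVUCsepGraphs}(\ref{itm:sepMatr}) and multiply on the left by the swap permutation $\mP$; this interchanges the $\pm 1$ eigenspaces, and a direct computation of $x^*(\mP\tW - \mP\mW)x$ on the standard basis should confirm that the UC tolerance available for $\mW$ cannot absorb the new spectrum of $\mP\mW$. For \textbf{products}, following the remark in the introduction that UC approximation preserves only powering, the plan is to choose $\mW_1=\mW$ and $\mW_2=\mW^{\rTrans}$ for a separating pair and verify that the UC errors on $\mW$ and $\mW^{\rTrans}$ combine constructively in $\tW\tW^{\rTrans} - \mW\mW^{\rTrans}$, driving the mismatch past what UC approximation of the product can tolerate (since $\mW\mW^{\rTrans}$ typically has singular-value-one directions that amplify the error). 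For the \textbf{block-embedding} claim, I would embed $\mW$ into a larger block matrix $\mW \oplus \mR$ where $\mR$ is an auxiliary matrix whose eigenvalues sit at points of the unit circle disjoint from those of $\mW$ (e.g.\ a small directed cycle); UC approximation of the embedding must hold at \emph{every} phase $z$, and the new phases introduced by $\mR$ force the original approximation to satisfy standard approximation at phases where it was not required before.

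I expect the block-embedding case to be the main obstacle, since UC approximation of a block-diagonal matrix naively decomposes across blocks: any failure must be engineered through the interaction of the new block with the phase-universality built into UC approximation. My strategy to overcome this is to push beyond purely block-diagonal embeddings, using off-diagonal structure (or an auxiliary block tuned to a specific phase $z_0$) so that the error matrix on the embedded $\mW$ coordinates is unchanged while UC approximation at phase $z_0$ now imposes a constraint that the original pair $(\tW,\mW)$ provably cannot satisfy. The permutation and product cases are more directly accessible once the machinery of \Cref{prop:SVUCsepGraphs} and \Cref{lem:svregular} is in hand, and each counterexample should be verifiable by an explicit small-dimensional calculation.
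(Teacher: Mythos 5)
Your plan diverges from the paper's proof and, as outlined, two of the three counterexample constructions have real problems.

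First, the \textbf{product} construction cannot work. You propose $\mW_1=\mW$, $\mW_2=\mW^{\rTrans}$. But both separating pairs provided by \Cref{prop:SVUCsepGraphs} (random-walk matrices of undirected graphs in item~1, symmetric $2\times 2$ matrices in item~2) are \emph{symmetric}, so $\mW^{\rTrans}=\mW$ and you are really taking the product $\tW^2$ vs.\ $\mW^2$. That is just powering, and UC approximation \emph{is} preserved under powering (indeed this was the main point of UC approximation in \cite{AKMPSV20}). So the ``constructive combination of UC errors'' you hope for cannot materialize with these examples, and you have no asymmetric separating pair on hand to substitute.

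Second, the \textbf{block-embedding} construction is pointed in the wrong direction. You start from $\mW\oplus\mR$, then correctly observe that UC approximation decomposes across principal blocks and so block-diagonal embeddings preserve it; your proposed fix — an auxiliary block $\mR$ tuned to a phase $z_0$ — is speculative and does not match what actually makes the separation go through. The paper's key move is to use the \emph{asymmetric off-diagonal} lift $\tM=\begin{bmatrix}\mzero&\mzero\\\tW&\mzero\end{bmatrix}$ (with no auxiliary block at all): if $\tM\uc_{.3}\mM$, then in particular $\tM$ standard-approximates $\mM$ with respect to $\mS_{\mM}$, and by \Cref{SVequiv} (item~\ref{SVequiv:somez}) this is \emph{exactly} the statement $\tW\svn{.3}\mW$ — contradicting the separating property. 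The mechanism is that the off-diagonal lift converts standard approximation of the lift into SV approximation of the original; no spectrum-tuned $\mR$ is involved.

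Third, you miss the unifying structure that makes the paper's proof short: the single counterexample above kills all three properties simultaneously, because $\tM=\Pi\tA$ where $\tA=\tW\oplus\mzero$ is a block-diagonal embedding (which \emph{does} preserve UC) and $\Pi$ is a permutation (and trivially $\Pi\uc_0\Pi$). Failure of $\tM\uc\mM$ therefore simultaneously refutes closure under arbitrary block embeddings, under multiplication by permutations, and under products. Your plan to treat the three properties via three separate constructions forgoes this leverage and, even where the idea (the permutation swap) is plausible, leaves all the quantitative work undone. The one element of your plan that is genuinely on the right track is the reliance on \Cref{prop:SVUCsepGraphs} as the source of the separating pair; what is missing is the reduction through \Cref{SVequiv}.
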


SV approximation between adjacency matrices is preserved under multiplication on each side of the adjacency matrix by (possibly different) permutations:
\begin{restatable}[SV preservation under multiplication by permutation matrices]{corollary}{svPerm} \label{cor:svPerm}
Let $\mA,\tA \in \C_{\geq 0}^{m\times n}$ and suppose $\tA \svgeneral{\Din}{\Dout}{\eps} \mA$. Let $\mU,\mV$ be arbitrary permutation matrices. Then $\mU \tA \mV \svgeneral{\Din'}{\Dout'}{\eps} \mU \mA \mV$ where $\Din'=\mU \Din \mU^*$ and $\Dout' = \mV^* \Dout \mV$.
Consequently, if $\tA\svgraph{\eps}\mA$, then $\mU \tA \mV \svgraph{\eps} \mU \mA \mV$ and if $\tA\svn{\eps}\mA$ then $\mU\tA\mV \svn{\eps}\mU\mA\mV$.
\end{restatable}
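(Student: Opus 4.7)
The plan is to reduce the statement to the normalized SV approximation case and then invoke \Cref{lem:svregular}, which already establishes that normalized SV approximation is preserved under left and right multiplication by matrices of spectral norm at most $1$. Since any permutation matrix $\mU$ satisfies $\mU^*\mU = \mU\mU^* = \mI$, we have $\|\mU\| = 1$, so permutations are a special case of the hypothesis there.

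Concretely, let $\mN = \Din^{+/2}\mA\Dout^{+/2}$ and $\tN = \Din^{+/2}\tA\Dout^{+/2}$. By \Cref{lem:normalizedSV}, the hypothesis $\tA \svgeneral{\Din}{\Dout}{\eps} \mA$ is equivalent to $\tN \svn{\eps} \mN$. Next, using unitarity of $\mU$ and $\mV$, I would check that $(\Din')^{+/2} = \mU \Din^{+/2} \mU^*$ and $(\Dout')^{+/2} = \mV^* \Dout^{+/2} \mV$, so that a direct calculation using $\mU^*\mU = \mI$ and $\mV\mV^* = \mI$ gives $(\Din')^{+/2}(\mU\mA\mV)(\Dout')^{+/2} = \mU\mN\mV$, and likewise $(\Din')^{+/2}(\mU\tA\mV)(\Dout')^{+/2} = \mU\tN\mV$. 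Applying \Cref{lem:normalizedSV} again in the reverse direction (after routinely verifying the kernel containments $\ker(\Din')\subseteq \lker(\mU\mA\mV)$ and $\ker(\Dout')\subseteq \rker(\mU\mA\mV)$ transport correctly under the permutations), the desired $\mU\tA\mV \svgeneral{\Din'}{\Dout'}{\eps} \mU\mA\mV$ is equivalent to $\mU\tN\mV \svn{\eps} \mU\mN\mV$, and this follows from \Cref{lem:svregular} applied to $\tN \svn{\eps} \mN$ with contractions $\mU$ and $\mV$.

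For the graph specialization with $\Din = \mdiag(\mA\vones_n)$ and $\Dout = \mdiag(\vones_m^{\rTrans}\mA)$, one additionally needs that $\Din' = \mU\Din\mU^*$ agrees with $\mdiag((\mU\mA\mV)\vones_n)$, and analogously for $\Dout'$, so that the conclusion can be written in the graph form $\svgraph{\eps}$. This is immediate: since $\mV\vones_n = \vones_n$, we get $(\mU\mA\mV)\vones_n = \mU(\mA\vones_n)$, and for any vector $v$ and permutation $\mU$, $\mdiag(\mU v) = \mU\,\mdiag(v)\,\mU^*$. The normalized case $\Din=\Dout=\mI$ is even more immediate since $\mU\mU^* = \mV^*\mV = \mI$. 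I do not anticipate any substantive obstacle here: the entire argument is bookkeeping around the unitarity of permutation matrices, with the real content already supplied by \Cref{lem:svregular}.
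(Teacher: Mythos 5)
Your proposal is correct, and it takes a genuinely different route from the paper's. The paper argues directly at the level of the $2\times 2$ block characterization (Item~3 of \Cref{SVequiv} with $z=1$): it conjugates the error matrix $\mE=\begin{bmatrix}\Din & \mA\\ \mA^* & \Dout\end{bmatrix}$ by the block-diagonal unitary $\begin{bmatrix}\mU & \mzero\\ \mzero & \mV^*\end{bmatrix}$ using \Cref{lem:leftrighterror}, reads off the transformed error matrix, and invokes \Cref{SVequiv} in reverse; the ``consequently'' claims then follow from $\mV\vones=\vones$, $\mdiag(\mU v)=\mU\mdiag(v)\mU^*$, and $\mU\mU^*=\mV^*\mV=\mI$, exactly as you compute. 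You instead first pass to the normalized form via \Cref{lem:normalizedSV}, then invoke the ``moreover'' direction of \Cref{lem:svregular} (closure under contractions), and then pass back. The two are closely related — the proof of the ``moreover'' part of \Cref{lem:svregular} is itself the same block-conjugation argument — so the underlying mechanism is identical, but your packaging is conceptually cleaner in that it exhibits the corollary as a literal special case of the general closure-under-contraction property rather than re-deriving it.

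One small quibble worth fixing if you write this up: \Cref{lem:svregular} is stated for square $\mN,\tN\in\C^{n\times n}$, while the corollary allows rectangular $\mA\in\C^{m\times n}$, so $\mN=\Din^{+/2}\mA\Dout^{+/2}$ may be $m\times n$. The proof of the ``moreover'' direction goes through unchanged for rectangular matrices (it is just \Cref{lem:leftrighterror} applied to the block embedding from \Cref{SVequiv}), but the statement as printed does not cover it, so you would either need to note that the rectangular case holds by the same proof, or fall back to citing \Cref{lem:leftrighterror} directly as the paper does. The kernel-transport checks you flag are indeed routine: $\ker(\mU\Din\mU^*)=\mU\ker(\Din)\subseteq\mU\lker(\mA)=\lker(\mU\mA\mV)$ since $\mV$ is invertible, and symmetrically on the right.
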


A corollary of~\Cref{cor:svPerm} is that SV approximation is preserved under embedding in a block matrix. In particular, it shows the stronger fact that approximation is preserved even if we use a different block structure for rows and columns (i.e., we are not embedding into principal submatrices).
In contrast, unit circle approximation is only known to be preserved when the embedding pattern is a block pattern given by the directed cycle; i.e., tensoring the adjacency matrix with that of the directed cycle. 

\begin{restatable}[SV preservation under arbitrary lifting]{lemma}{svArbLifts} \label{lem:svarblifts}
    Let $\mA, \tA \in \C^{m \times n}$ be matrices such that $\tA \svgeneral{\Din}{\Dout}{\eps} \mA.$
    Then for all integers $i, j, k, \ell \geq 0$
    \[
\begin{bmatrix}
    \mzero^{i \times j} & \mzero^{i \times n} & \mzero^{i \times k} \\
    \mzero^{m \times j} & \tA & \mzero^{m \times k} \\
    \mzero^{\ell \times j} & \mzero^{\ell \times n} & \mzero^{\ell \times k}
\end{bmatrix}
\svgeneral{\Din'}{\Dout'}{\eps}
\begin{bmatrix}
    \mzero^{i \times j} & \mzero^{i \times n} & \mzero^{i \times k} \\
    \mzero^{m \times j} & \mA & \mzero^{m \times k} \\
    \mzero^{\ell \times j} & \mzero^{\ell \times n} & \mzero^{\ell \times k}
\end{bmatrix}.
    \] 
Where 
\[
\Din' = \begin{bmatrix}
\mzero^{i\times i}& \mzero^{i\times n} & \mzero^{i\times \ell}\\
\mzero^{n\times i} & \Din & \mzero^{n\times \ell}\\
\mzero^{\ell\times i} &\mzero^{\ell\times n} & \mzero^{\ell\times \ell}
\end{bmatrix},\quad \Dout' =\begin{bmatrix}
\mzero^{j\times j}& \mzero^{j\times m} & \mzero^{j\times k}\\
\mzero^{m\times j} & \Din & \mzero^{m\times k}\\
\mzero^{k\times j} &\mzero^{k\times m} & \mzero^{k\times k}
\end{bmatrix}.
\]
Consequently, if $\tA\svgraph{\eps}\mA$ then
    \[
\begin{bmatrix}
    \mzero^{i \times j} & \mzero^{i \times n} & \mzero^{i \times k} \\
    \mzero^{m \times j} & \tA & \mzero^{m \times k} \\
    \mzero^{\ell \times j} & \mzero^{\ell \times n} & \mzero^{\ell \times k}
\end{bmatrix}
\svgraph{\eps}
\begin{bmatrix}
    \mzero^{i \times j} & \mzero^{i \times n} & \mzero^{i \times k} \\
    \mzero^{m \times j} & \mA & \mzero^{m \times k} \\
    \mzero^{\ell \times j} & \mzero^{\ell \times n} & \mzero^{\ell \times k}
\end{bmatrix}.
    \] 
\end{restatable}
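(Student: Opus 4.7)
The plan is to verify Definition~\ref{def:svapprox} directly for the lifted data $\mA', \tA', \Din', \Dout'$, exploiting the fact that zero-padding commutes with pseudoinversion, square roots, and operator norms. No conceptually new idea is needed; every condition of the definition transfers cleanly through the block structure.

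First I would handle the kernel containment conditions. Decomposing any $v \in \C^{i+m+\ell}$ block-wise as $(v_1, v_2, v_3)$ matching the block structure of $\Din'$, one has $v \in \ker(\Din')$ iff $v_2 \in \ker(\Din)$, with $v_1, v_3$ arbitrary; combining the hypothesis $\ker(\Din) \subseteq \lker(\mA)$ with the fact that the outer block rows of $\mA'$ vanish immediately yields $v^* \mA' = \vzero$, so $\ker(\Din') \subseteq \lker(\mA')$. The statement for $\Dout'$ and $\rker(\mA')$ is symmetric.

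Next I would compute the lifted error matrices. Since $(\Dout')^+$ is the zero-padded embedding of $\Dout^+$ (pseudoinversion commutes with this embedding because the outer blocks are identically zero), block multiplication gives
\[\mEL' := \Din' - \mA'(\Dout')^+ \mA'^* = \begin{bmatrix} \mzero & \mzero & \mzero \\ \mzero & \mEL & \mzero \\ \mzero & \mzero & \mzero\end{bmatrix}, \qquad \mEL := \Din - \mA\Dout^+\mA^*,\]
which is PSD because $\mEL$ is, and analogously for $\mER'$. Then $(\mEL')^{+/2}$ and $(\mER')^{+/2}$ are themselves zero-padded embeddings of $\mEL^{+/2}$ and $\mER^{+/2}$, so the triple product $(\mEL')^{+/2}(\tA' - \mA')(\mER')^{+/2}$ is the zero-padded embedding of $\mEL^{+/2}(\tA - \mA)\mER^{+/2}$; operator norm is invariant under such embeddings, yielding the $\eps/2$ bound. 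The auxiliary kernel containments in item~3 of Definition~\ref{def:leftrightapprox} transfer by the same block argument. This establishes $\tA' \svgeneral{\Din'}{\Dout'}{\eps} \mA'$.

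For the \say{consequently} clause, I would observe that when $\mA$ is nonnegative and the degree matrices are taken as $\mdiag(\mA \vones_n)$ and $\mdiag(\vones_m^* \mA)$, the same formulas applied to the padded $\mA'$ recover exactly the prescribed $\Din'$ and $\Dout'$, since added zero rows and columns contribute no degree. The main obstacle is purely bookkeeping of block indices; there is no deeper subtlety. An even shorter alternative is to invoke the bipartite symmetric lift equivalence in \Cref{SVequiv} to reduce to the Hermitian case and then apply \Cref{lem:Hermitianapprox}, where the L\"owner condition $-\eps\mE \preceq \tA - \mA \preceq \eps\mE$ is visibly preserved by embedding into a larger matrix of zeros.
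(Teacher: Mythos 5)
Your proof is correct, but it takes a different route than the paper. You verify Definition~\ref{def:svapprox} directly for the lifted quantities: you show the lifted degree matrices have pseudoinverses (and hence pseudo--square roots) that are themselves zero-paddings of $\Din^+,\Dout^+$, so the lifted error matrices $\mEL',\mER'$ are the zero-paddings of $\mEL,\mER$, whence the triple product $(\mEL')^{+/2}(\tA'-\mA')(\mER')^{+/2}$ is the zero-padding of $\mEL^{+/2}(\tA-\mA)\mER^{+/2}$ with the same operator norm. The paper instead factors the argument into two previously-established steps: first it notes (by what amounts to your block computation, but in the simplest case) that embedding into the top-left principal block preserves SV-approximation, and then it applies \Cref{cor:svPerm} to permute the row and column blocks into the general position prescribed by the lemma statement. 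Your direct verification buys self-containment at the cost of slightly more bookkeeping; the paper's route is more modular and makes the role of \Cref{cor:svPerm} explicit. One caution about your proposed shortcut via the bipartite lift plus \Cref{lem:Hermitianapprox}: the symmetric lift reformulation in \Cref{SVequiv}(\ref{SVequiv:bip}) is still an SV-approximation condition (with its own error matrix $\mD - \slift{\mA}\mD^+\slift{\mA}$), not a bare L\"owner sandwich between $\slift{\tA}-\slift{\mA}$ and $\pm\eps$ times a fixed matrix, so you would still have to track how that error matrix transforms under zero-padding; it does work out, but the ``visibly preserved'' claim elides essentially the same computation you did in the main argument.
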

By taking sums of different such liftings, one can obtain approximation for arbitrary tensorings: if $\tA\svgraph{\eps}\mA$, for any $\mM \in \zo^{i\times j}$, we have $\tA\otimes \mM\svgraph{\eps}\mA\otimes \mM$. A special case of this is concatenation:
\begin{restatable}[SV preservation under concatenation]{corollary}{svConcat} \label{lem:svconcat}
    Let $\mA_1, \mA_2 \tA_1, \tA_2 \in \C^{m \times n}$ be matrices such that
    $\tA_1 \svgraph{\eps} \mA_1$ and $\tA_2 \svgraph{\eps} \mA_2$ then 
    $
\begin{bmatrix}\tA_1 & \tA_2\end{bmatrix}  \svgraph{\eps} \begin{bmatrix} \mA_1 & \mA_2 \end{bmatrix}
    $. 
\end{restatable}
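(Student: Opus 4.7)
The plan is to express the claim as the sum of two disjoint liftings and combine the two SV-approximations by unpacking the definition. Applying \Cref{lem:svarblifts} twice, with $i = \ell = 0$ and respectively $(j, k) = (0, n)$ and $(j, k) = (n, 0)$, gives
\[
[\tA_1 \;\; \mzero^{m\times n}] \svgraph{\eps} [\mA_1 \;\; \mzero^{m\times n}] \quad \text{and} \quad [\mzero^{m\times n} \;\; \tA_2] \svgraph{\eps} [\mzero^{m\times n} \;\; \mA_2].
\]
Since $[\tA_1\;\tA_2] - [\mA_1\;\mA_2]$ is the sum of the two componentwise differences, it suffices to show that the sum of these two lifted approximations is itself an $\svgraph{\eps}$ approximation of $[\mA_1\;\mA_2]$. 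Writing $\Din_i$ and $\Dout_i$ for the in- and out-degree matrices of $\mA_i$, the in- and out-degree matrices of $[\mA_1\;\mA_2]$ are $\Din = \Din_1 + \Din_2$ (since both lifts occupy the same rows) and $\Dout = \Dout_1 \oplus \Dout_2$ (since the two lifts occupy disjoint columns).

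I would then verify the resulting approximation directly from \Cref{def:svapprox} via item 1 of \Cref{def:leftrightapprox}. For arbitrary test vectors $x \in \C^m$ and $y = (y_1, y_2) \in \C^{2n}$, the triangle inequality applied to the two hypotheses yields
\[
\bigl|x^*\bigl([\tA_1\;\tA_2] - [\mA_1\;\mA_2]\bigr) y\bigr| \;\leq\; \tfrac{\eps}{4}\bigl(x^*(\mE_1 + \mE_2) x + y_1^* \mF_1 y_1 + y_2^* \mF_2 y_2\bigr),
\]
where $\mE_i = \Din_i - \mA_i \Dout_i^{+} \mA_i^*$ and $\mF_i = \Dout_i - \mA_i^* \Din_i^{+} \mA_i$ are the SV-error matrices of $\mA_i$. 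The left-error comparison is immediate: because $\Dout$ is block-diagonal, $[\mA_1\;\mA_2]\Dout^{+}[\mA_1\;\mA_2]^* = \mA_1 \Dout_1^{+} \mA_1^* + \mA_2 \Dout_2^{+} \mA_2^*$, and so $\mE_1 + \mE_2 = \Din - [\mA_1\;\mA_2]\Dout^{+}[\mA_1\;\mA_2]^*$ is exactly the left SV-error matrix of $[\mA_1\;\mA_2]$.

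The main obstacle is the right-error comparison. The bound above contains the block-diagonal form $y_1^*\mF_1 y_1 + y_2^* \mF_2 y_2$, whereas the right SV-error matrix $\mF$ of the concatenation carries nonzero cross-blocks $-\mA_1^* \Din^{+} \mA_2$, because $\Din = \Din_1 + \Din_2$ does not respect the splitting of the column index. Expanding $y^* \mF y - (y_1^*\mF_1 y_1 + y_2^*\mF_2 y_2)$ and cancelling the $\Dout$-terms, the required domination reduces to the scalar inequality
\[
\frac{|u_i|^2}{(\Din_1)_{ii}} + \frac{|v_i|^2}{(\Din_2)_{ii}} \;\geq\; \frac{|u_i + v_i|^2}{(\Din_1)_{ii} + (\Din_2)_{ii}},
\]
with $u = \mA_1 y_1$ and $v = \mA_2 y_2$. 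This is Cauchy--Schwarz applied coordinatewise; the degenerate case where $(\Din_1)_{ii}$ or $(\Din_2)_{ii}$ vanishes is absorbed by the kernel inclusion $\ker(\Din_k) \subseteq \lker(\mA_k)$ from \Cref{lem:svdefined}, which forces the corresponding numerator to vanish as well. Combining the two comparisons yields the claimed $\svgraph{\eps}$ approximation of $[\mA_1\;\mA_2]$ by $[\tA_1\;\tA_2]$.
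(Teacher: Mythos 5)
Your proof is correct, but it takes a noticeably longer and somewhat different route than the one the paper intends. The paper states this as a corollary of \Cref{lem:svarblifts}, and the two-line proof it has in mind is exactly your first paragraph plus one more citation: lift each $\tA_i\svgraph{\eps}\mA_i$ into the $m\times 2n$ frame, then invoke \Cref{lem:summability} to add the two lifted approximations. Crucially, \Cref{lem:summability} is cheap in the paper because it is proved through \Cref{SVequiv:somez} of \Cref{SVequiv}, where the error matrix $\begin{bmatrix}\Din & z\mA\\ z^*\mA^* & \Dout\end{bmatrix}$ is \emph{linear} in $\mA$ and $\mD$, so sums of SV-approximations are immediate.

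You instead set up the lifts but then abandon them and unpack \Cref{def:svapprox} directly against the quadratic error matrices $\Din-\mA\Dout^+\mA^*$ and $\Dout-\mA^*\Din^+\mA$. That forces you to handle the fact that the right error matrix $\mF$ of the concatenation has cross-blocks $-\mA_1^*\Din^+\mA_2$, and you resolve it with the coordinatewise inequality $\frac{|u_i|^2}{(\Din_1)_{ii}}+\frac{|v_i|^2}{(\Din_2)_{ii}}\geq\frac{|u_i+v_i|^2}{(\Din_1+\Din_2)_{ii}}$ together with the kernel inclusion to absorb the degenerate coordinates. This is correct (it is the Cauchy--Schwarz / harmonic-mean inequality), the left-error identity $\mE_1+\mE_2=\Din-[\mA_1\;\mA_2]\Dout^+[\mA_1\;\mA_2]^*$ is correct, and the kernel conditions of \Cref{def:svapprox} for $[\mA_1\;\mA_2]$ follow from those of $\mA_1,\mA_2$, so the argument is sound. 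In effect you have re-derived the relevant special case of \Cref{lem:summability} by hand in the quadratic-error-matrix picture; what your route costs you is exactly the cross-term computation that the linear characterization in \Cref{SVequiv} is designed to avoid. It would be tighter to either finish the lifting plan by citing \Cref{lem:summability}, or drop the lifting paragraph entirely since your second paragraph works directly from the two original hypotheses and never uses the lifted statements.
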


Another corollary is that SV approximation is preserved under products of different walk matrices with essentially no loss in the approximation quality. In contrast, the closest property known to be achieved by definitions of approximation considered in prior work is that unit circle approximation is preserved with no loss under only \emph{powers} of the same walk matrix. 
\begin{restatable}[SV preservation under products]{lemma}{svProducts} \label{lem:svproducts}
    Let $(\mN_i)_{i\in [\ell]}, (\tN_i)_{i\in [\ell]} \in \C^{n\times n}$ be such that for every $i$, $\tN_i \svn{\eps} \mN_i$.
    Then
    $\tN_\ell\cdots\tN_2\tN_1\svn{\eps + O(\epsilon^2)}\mN_\ell\cdots\mN_2\mN_1
    $.
\end{restatable}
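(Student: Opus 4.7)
}
My approach is to use the asymmetric telescoping identity
\[
\tN_\ell\cdots\tN_1 - \mN_\ell\cdots\mN_1 = \sum_{i=1}^{\ell} R_i (\tN_i - \mN_i)\, \tilde{L}_i,
\]
where $R_i = \mN_\ell \cdots \mN_{i+1}$ (the ``true'' tail on the left) and $\tilde{L}_i = \tN_{i-1}\cdots\tN_1$ (the ``approximate'' head on the right), with the conventions $R_\ell = \tilde{L}_1 = \mI$. This asymmetric choice is critical: it will make the $X$-side sum telescope to the exact target quantity, confining the perturbation error to the $Y$-side, where it can be controlled to second order in $\eps$. I would first verify the identity by a short induction on $\ell$, and also establish the companion identity $\mI - PP^* = \sum_i R_i (\mI - \mN_i\mN_i^*) R_i^*$ (using $R_i \mN_i = R_{i-1}$) and its symmetric counterpart $\mI - P^*P = \sum_i L_i^* (\mI - \mN_i^*\mN_i) L_i$ with $L_i = \mN_{i-1}\cdots\mN_1$.

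Given arbitrary test vectors $x,y \in \C^n$, applying $\tN_i \svn{\eps} \mN_i$ to each term in the bilinear form and then Cauchy-Schwarz over the sum yields
\[
|x^*(\tN_\ell\cdots\tN_1 - \mN_\ell\cdots\mN_1)\, y| \;\leq\; \tfrac{\eps}{2}\sqrt{\textstyle\sum_i a_i}\sqrt{\textstyle\sum_i \tilde{b}_i},
\]
where $a_i = (R_i^* x)^*(\mI-\mN_i\mN_i^*)(R_i^* x)$ and $\tilde{b}_i = (\tilde{L}_i y)^*(\mI-\mN_i^*\mN_i)(\tilde{L}_i y)$. The companion identity gives $\sum_i a_i = x^*(\mI-PP^*)x =: A$ exactly. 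To bound $\sum_i \tilde{b}_i$, I would compare with the ``untilded'' quantity $b_i = (L_i y)^*(\mI-\mN_i^*\mN_i)(L_i y)$ (so $\sum_i b_i = y^*(\mI-P^*P) y =: B$); applying the triangle inequality in the PSD-weighted norm and Minkowski gives $\sqrt{\sum_i \tilde{b}_i} \leq \sqrt{B} + \sqrt{\sum_i W_i}$, where $W_i$ measures the perturbation $(\tilde{L}_i - L_i)y$ in the form $\mI - \mN_i^*\mN_i$.

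The hard part is to prove $\sum_i W_i \leq O(\eps^2) B$ with an \emph{absolute} constant (no $\ell$-dependence); a naive expansion of $\tilde{L}_i - L_i$ via a nested telescoping, bounded term-by-term, yields only $O(\ell^2 \eps^2)B$. To obtain the tighter bound, I would exploit three ingredients: (i) the SV factorization $\tN_j - \mN_j = M_j^{1/2} F_j {M_j'}^{1/2}$ with $\|F_j\|\leq \eps/2$, where $M_j = \mI - \mN_j\mN_j^*$ and $M_j' = \mI - \mN_j^*\mN_j$; (ii) the fact that SV approximation with $\eps \leq 1$ forces $\|\tN_i\| \leq 1$, obtained by optimizing the SV bound on unit vectors using $|x^*\mN y|\leq \min(\|\mN^*x\|,\|\mN y\|)$ and calculus, so that every block $\tN_{i-1}\cdots\tN_{j+1}$ arising in $\tilde{L}_i - L_i$ is a contraction; and (iii) the Cholesky-like factorization $\mI - P^*P = G^*G$ with $G$ a block column built from the $L_i$'s and ${M_i'}^{1/2}$'s, for which $(\mI-P^*P)^{+/2}$ acts as an ``inverse isometry.'' Writing $\tN_\ell\cdots\tN_1 - \mN_\ell\cdots\mN_1 = H \cdot \mathrm{diag}(F_i) \cdot \tilde{G}$ with $HH^* = \mI - PP^*$ and $\tilde{G}$ the analog of $G$ with $\tilde{L}_i$'s, the task reduces to showing $\|\tilde{G}(\mI-P^*P)^{+/2}\| \leq 1 + O(\eps)$, which is equivalent to the operator inequality $\tilde{G}^*\tilde{G} \preceq (1+O(\eps))^2\, G^*G$ and can be established by bounding $\tilde{G}^*\tilde{G} - G^*G$ using (i)-(ii) without incurring cross-term blowup.

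Combining everything gives $|x^*(\tN_\ell\cdots\tN_1 - \mN_\ell\cdots\mN_1) y| \leq \tfrac{\eps(1 + O(\eps))}{2}\sqrt{A\,B}$, which is the desired SV approximation with error $\eps + O(\eps^2)$; the conclusion for normalized SV approximation follows from the equivalence $\sqrt{x^* M x}\sqrt{y^* M' y} \leq \tfrac{1}{2}(x^* M x + y^* M' y)$ via \Cref{SVequiv}.
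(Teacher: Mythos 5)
Your approach is a genuinely different route from the paper's. The paper places $\mN_\ell,\dots,\mN_1$ on the superdiagonal of a block upper bidiagonal ``chain'' matrix $\mR$, shows via \Cref{lem:svarblifts,lem:summability,lem:SVimpliesSTD} that $\tR$ is a standard $\eps/2$-approximation of $\mR$ with respect to $\mS_\mR$, and then invokes \Cref{thm:schurStandard} (preservation of standard approximation under Schur complements, with $O(\eps^2)$ loss) as a black box to collapse the chain to the product. Your proposal is instead a direct, first-principles telescoping argument: the asymmetric decomposition, the companion sum rules $\sum_i R_i(\mI-\mN_i\mN_i^*)R_i^*=\mI-PP^*$ and $\sum_i L_i^*(\mI-\mN_i^*\mN_i)L_i=\mI-P^*P$, and the Cauchy--Schwarz splitting are all correct, and your ingredients (i) and (ii) (the factorization $\tN_j-\mN_j=M_j^{1/2}F_j{M_j'}^{1/2}$ with $\|F_j\|\le\eps/2$, and $\|\tN_i\|\le 1$ for $\eps\le 1$) are valid. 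Bypassing the Schur-complement machinery is a genuine advantage.

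That said, the ``hard part'' is not closed by your sketch. Setting $C:=\sum_i\tilde{b}_i$ and $B:=y^*(\mI-P^*P)y$, you must show $C\le(1+O(\eps))^2 B$ with an $\ell$-independent constant. Your proposed resolution---establishing $\sum_i W_i\le O(\eps^2)B$, or equivalently $\tilde{G}^*\tilde{G}\preceq(1+O(\eps))^2 G^*G$, by ``bounding $\tilde{G}^*\tilde{G}-G^*G$''---does not go through as stated: $\tilde{L}_i-L_i$ expands into $i-1$ telescoping terms, so any term-by-term bound on the difference re-introduces $\ell$-dependence, and the target operator inequality is essentially equivalent to the lemma's conclusion, so it cannot be established ``from scratch.'' What closes the gap is a bootstrap. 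Using the SVD intertwining $\mN_i^* M_i^{1/2}= {M_i'}^{1/2}\mN_i^*$ and (i), one obtains $(1-\eps')M_i'\preceq \mI-\tN_i^*\tN_i$ with $\eps'=\eps+\eps^2/4$, so the $\tN$-side sum rule gives $C\le\frac{1}{1-\eps'}\bigl(\|y\|^2-\|\widetilde{P}y\|^2\bigr)$ where $\widetilde{P}=\tN_\ell\cdots\tN_1$. Then write $\|y\|^2-\|\widetilde{P}y\|^2=B+\bigl(\|Py\|^2-\|\widetilde{P}y\|^2\bigr)$, bound $\|Py\|^2-\|\widetilde{P}y\|^2\le 2\bigl|\langle Py,(\widetilde{P}-P)y\rangle\bigr|$, and apply the Cauchy--Schwarz estimate you already derived with the substitution $x=Py$, plus the elementary SVD fact $\|Py\|^2-\|P^*Py\|^2\le B$, to conclude $\|Py\|^2-\|\widetilde{P}y\|^2\le\eps\sqrt{BC}$. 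Together these give $(1-\eps')C\le B+\eps\sqrt{BC}$; solving the quadratic in $\sqrt{C/B}$ yields $\sqrt{C/B}\le 1+\eps+O(\eps^2)$, whence the final bound is $\tfrac{\eps}{2}(1+\eps+O(\eps^2))\sqrt{AB}$. This self-referential step is the missing idea; without it your sketch of the hard step does not yield a proof.
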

A useful application of this result is to sparsifying powers of a walk matrix of an Eulerian digraph (\Cref{thm:rwSparsifier}).

Finally, we show that SV approximation \textit{does} satisfy useful properties known for other notions of approximation. SV approximation between digraphs requires exact preservation of the degrees: 
\begin{restatable}{lemma}{svStationary}\label{lem:svStationary}
    If $\tA \svgraph{\eps} \mA$,
    then $\tA \vones = \mA \vones$ and $\tA^{\rTrans}\vones=\mA^{\rTrans}\vones$.
\end{restatable}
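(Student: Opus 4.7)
The plan is to use the third equivalent formulation of matrix approximation in \Cref{def:leftrightapprox}, which states that if $\tA$ is an $\eps$-approximation of $\mA$ with respect to error matrices $\mEL$ and $\mER$, then $\lker(\tA - \mA) \supseteq \ker(\mEL)$ and $\rker(\tA - \mA) \supseteq \ker(\mER)$. By unpacking the definition of $\tA \svgraph{\eps} \mA$, we have $\Din = \mdiag(\mA \vones)$, $\Dout = \mdiag(\mA^{\rTrans} \vones)$, and the error matrices are $\mEL = \Din - \mA \Dout^{+} \mA^*$ and $\mER = \Dout - \mA^* \Din^{+} \mA$. So it suffices to show $\vones \in \ker(\mEL)$ and $\vones \in \ker(\mER)$, since then applying these kernel containments yields $(\tA - \mA)\vones = 0$ and $\vones^{\rTrans}(\tA - \mA) = 0$, which are exactly the two claimed equalities.

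The core computation is that $\vones$ lies in the kernel of both error matrices. I compute $\mEL \vones = \Din \vones - \mA \Dout^{+} \mA^* \vones$. By the choice of $\Din$, we have $\Din \vones = \mA \vones$. Since $\mA$ is real, $\mA^* \vones = \mA^{\rTrans} \vones = \Dout \vones$ by the choice of $\Dout$. So the second term becomes $\mA \Dout^{+} \Dout \vones$. Because $\Dout$ is diagonal, $\Dout^{+} \Dout$ is the diagonal matrix that is $1$ on columns $j$ where the $j$-th column of $\mA$ has positive sum and $0$ elsewhere. Nonnegativity of $\mA$ guarantees that any column summing to zero has all zero entries, so $\mA(\Dout^{+}\Dout \vones) = \mA \vones$. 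Therefore $\mEL \vones = \mA \vones - \mA \vones = 0$, and an entirely symmetric computation yields $\mER \vones = 0$.

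Combining: from $\vones \in \ker(\mER) \subseteq \rker(\tA - \mA)$ we get $\tA \vones = \mA \vones$, and from $\vones \in \ker(\mEL) \subseteq \lker(\tA - \mA)$ we get $\tA^{\rTrans} \vones = \mA^{\rTrans} \vones$. The only subtle point, and the one place to be careful, is the pseudoinverse identity $\mA \Dout^{+} \Dout \vones = \mA \vones$ when $\mA$ has zero columns; this is where nonnegativity of the entries of $\mA$ is essential, since otherwise a column could sum to zero without all entries being zero, and the pseudoinverse would annihilate nontrivial information. Beyond this subtlety, the argument is a direct invocation of the kernel-containment formulation of matrix approximation.
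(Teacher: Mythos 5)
Your proof is correct, and it takes a cleaner route than the paper's. The paper works with formulation~(1) of \Cref{def:leftrightapprox}: it plugs in $y=\vones$ and $x\gets\delta x$, observes that the $\vones^{\rTrans}\mER\vones$ term vanishes (by the same degree computation you perform), and then sends $\delta\to 0$ to conclude $x^*(\tA-\mA)\vones=0$ for all $x$, followed by a symmetric argument for the other identity. You instead invoke formulation~(3), the kernel-containment condition $\lker(\tA-\mA)\supseteq\ker(\mEL)$, $\rker(\tA-\mA)\supseteq\ker(\mER)$, and verify $\mEL\vones=\mER\vones=\mzero$ directly. This buys a shorter argument with no limiting step, at the (modest) cost of relying on the equivalence in \Cref{def:leftrightapprox} as a black box. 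Your subtlety check on $\mA\Dout^{+}\Dout\vones=\mA\vones$ is well placed, though you might note it is already baked into the definition of SV approximation via the requirement $\ker(\Dout)\subseteq\rker(\mA)$, which forces $\mA\Dout^{+}\Dout=\mA$ regardless of sign considerations; nonnegativity is what makes that requirement automatic when $\Din,\Dout$ are the degree matrices.
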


SV sparsification is additive. We note that the analgous statement for normalized SV approximation is that it is preserved under convex combinations.
\begin{restatable}{lemma}{summability}\label{lem:summability}
    If $\tA_i \svgeneral{(\Din)_i}{(\Dout)_i}{\eps} \mA_i$ for all $i \in [k]$, letting $\Din \defeq \sum_{i\in [k]}(\Din)_i$ and $\Dout \defeq \sum_{i\in [k]} (\Dout)_i$, then $\sum_{i \in [k]} \tA_i \svgeneral{\Din}{\Dout}{\eps} \sum_{i\in [k]} \mA_i$. Consequently, if $\tA_i \svgraph{\eps} \mA_i$ for every $i$, then $\sum_{i \in [k]} \tA_i \svgraph{\eps} \sum_{i\in [k]} \mA_i$. 
\end{restatable}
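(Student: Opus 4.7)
The plan is to reduce to the linear-in-$\mA$ formulation of SV approximation provided by \Cref{SVequiv} and then apply the triangle inequality. Specifically, by \Cref{SVequiv} item~\ref{SVequiv:somez} with $z=1$, the hypothesis $\tA_i \svgeneral{(\Din)_i}{(\Dout)_i}{\eps} \mA_i$ is equivalent to saying that $\tC_i$ is an $\eps/2$-approximation of $\mC_i$ with respect to error matrix
\[
\mE_i \;=\; \begin{bmatrix}(\Din)_i & \mA_i\\ \mA_i^* & (\Dout)_i\end{bmatrix},
\]
where $\mC_i$ and $\tC_i$ are the asymmetric lifts of $\mA_i$ and $\tA_i$. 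The crucial feature of this formulation is that the lifts $\mC_i, \tC_i$ \emph{and} the error matrices $\mE_i$ are all \emph{linear} in $\mA_i, (\Din)_i, (\Dout)_i$; they therefore commute with summation over $i$.

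Next, I would expand each approximation via \Cref{def:leftrightapprox} item~1: for every $i$ and every $x,y$,
\[
\bigl|x^*(\tC_i-\mC_i)y\bigr| \;\leq\; \tfrac{\eps}{4}\bigl(x^*\mE_i x + y^*\mE_i y\bigr).
\]
Summing over $i$ and applying the triangle inequality to the left-hand side yields
\[
\Bigl|x^*\Bigl(\sum_i\tC_i - \sum_i\mC_i\Bigr)y\Bigr| \;\leq\; \tfrac{\eps}{4}\Bigl(x^*\bigl(\sum_i\mE_i\bigr)x + y^*\bigl(\sum_i\mE_i\bigr)y\Bigr).
\]
By linearity, $\sum_i\mC_i$ and $\sum_i\tC_i$ are exactly the asymmetric lifts of $\sum_i\mA_i$ and $\sum_i\tA_i$, while $\sum_i\mE_i$ is precisely the error matrix built from $\sum_i\mA_i$ together with $\Din = \sum_i(\Din)_i$ and $\Dout = \sum_i(\Dout)_i$. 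The latter is PSD as a sum of PSD matrices (using \Cref{lem:svdefined} applied to each $i$), and the kernel containment conditions of \Cref{def:svapprox} pass to the sum for the same reason. Applying \Cref{SVequiv} in the reverse direction then gives $\sum_i\tA_i \svgeneral{\Din}{\Dout}{\eps} \sum_i\mA_i$, which is the main claim.

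For the graph-normalization consequence, I would simply note that if each $\tA_i \svgraph{\eps} \mA_i$, then by definition $(\Din)_i = \mdiag(\mA_i\vones)$ and $(\Dout)_i = \mdiag(\vones^\rTrans \mA_i)$, so
\[
\Din \;=\; \sum_i \mdiag(\mA_i\vones) \;=\; \mdiag\bigl((\textstyle\sum_i\mA_i)\vones\bigr),
\]
and similarly for $\Dout$. Hence the aggregated degree matrices agree with the $\svgraph{\eps}$ convention applied to the sum, and the first part delivers the conclusion. There is no genuine obstacle in this argument; the whole point is that the $\mE$-formulation of \Cref{SVequiv} is engineered precisely to linearize everything, so the summability reduces to triangle inequality.
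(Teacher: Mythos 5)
Your proposal is correct and takes essentially the same approach as the paper: the paper's own proof observes in one line that \Cref{SVequiv:somez} of \Cref{SVequiv} is linear in $\mA$ and the degree matrices, and dispatches the ``consequently'' claim by noting degrees add; you simply unpack that linearity explicitly (write out the lift and error-matrix sum, apply the triangle inequality, and check the PSD/kernel side conditions).
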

SV approximation satisfies an approximate triangle inequality:
\begin{restatable}{lemma}{triangle}\label{lem:triangle}
    If $\mA_3 \svgeneral{\Din}{\Dout}{\delta} \mA_2$ and $\mA_2 \svgeneral{\Din}{\Dout}{\eps} \mA_1$ then $\mA_3 \svgeneral{\Din}{\Dout}{\eps+\delta+\eps\delta} \mA_1.$ Consequently, if $\mA_3 \svgraph{\delta} \mA_2$ and $\mA_2 \svgraph{\eps} \mA_1$ then $\mA_3 \svgraph{\eps+\delta+\eps\delta} \mA_1$, and if $\mA_3 \svn{\delta} \mA_2$ and $\mA_2 \svn{\eps} \mA_1$ then $\mA_3 \svn{\eps+\delta+\eps\delta} \mA_1$. Moreover, if for $\delta \in (0,1/2)$ and $\mA_0,\ldots,\mA_\ell$ we have $\mA_i \svgraph{\delta/2\ell}\mA_{i-1}$ for every $i$, then $\mA_\ell \svgraph{\delta} \mA_0$. Moreover, the equivalent claim holds for normalized SV approximation. 
\end{restatable}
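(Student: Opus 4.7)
The plan is to reduce to the matrix-approximation formulation of SV approximation given by \Cref{SVequiv}\ref{SVequiv:somez}, and then execute a triangle inequality using the fact that SV approximation controls the Hermitian error matrix to within a multiplicative $(1\pm\eps)$ factor. Concretely, fix $z\in \C$ with $|z|=1$ and let
\[
\mC_i = \begin{bmatrix} \mzero & z\mA_i \\ \mzero & \mzero \end{bmatrix}, \qquad
\mE_i = \begin{bmatrix} \Din & z\mA_i \\ z^*\mA_i^* & \Dout \end{bmatrix}
\]
for $i\in \{1,2,3\}$. By \Cref{SVequiv}, the hypotheses $\mA_3 \svgeneral{\Din}{\Dout}{\delta}\mA_2$ and $\mA_2\svgeneral{\Din}{\Dout}{\eps}\mA_1$ translate into $\mC_3$ being a $\delta/2$-approximation of $\mC_2$ with respect to $\mE_2$, and $\mC_2$ being an $\eps/2$-approximation of $\mC_1$ with respect to $\mE_1$.

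The key step is to show $(1-\eps)\mE_1 \preceq \mE_2 \preceq (1+\eps)\mE_1$. For any $\vx = \begin{bmatrix}\vu\\\vv\end{bmatrix}$ one has $\vx^*(\mE_2-\mE_1)\vx = 2\Re[z\,\vu^*(\mA_2-\mA_1)\vv] = 2\Re[\vx^*(\mC_2-\mC_1)\vx]$, so using condition~2 of \Cref{def:leftrightapprox} with $\vx=\vy$ gives $|\vx^*(\mE_2-\mE_1)\vx|\leq 2\cdot\frac{\eps}{2}\vx^*\mE_1\vx = \eps\vx^*\mE_1\vx$. Since $\mE_2-\mE_1$ and $\mE_1$ are Hermitian, \Cref{lem:Hermitianapprox} then yields the claimed L\"owner bound. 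With this in hand, the triangle inequality on bilinear forms gives
\[
|\vx^*(\mC_3-\mC_1)\vy|\leq \tfrac{\delta}{2}\sqrt{\vx^*\mE_2\vx}\sqrt{\vy^*\mE_2\vy} + \tfrac{\eps}{2}\sqrt{\vx^*\mE_1\vx}\sqrt{\vy^*\mE_1\vy}\leq \tfrac{\eps+\delta+\eps\delta}{2}\sqrt{\vx^*\mE_1\vx}\sqrt{\vy^*\mE_1\vy},
\]
which by \Cref{SVequiv} is exactly $\mA_3\svgeneral{\Din}{\Dout}{\eps+\delta+\eps\delta}\mA_1$. The kernel conditions required by condition~3 of \Cref{def:leftrightapprox} follow because $\mE_2\preceq(1+\eps)\mE_1$ forces $\ker(\mE_1)\subseteq\ker(\mE_2)$, so both error-matrix kernels lie in the left/right kernel of $\mC_3-\mC_1$. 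The consequences for $\svgraph{}$ and $\svn{}$ are immediate: the $\svgraph{}$ case has matching degree matrices across $\mA_1,\mA_2,\mA_3$ by \Cref{lem:svStationary}, and the $\svn{}$ case is the instance $\Din=\Dout=\mI$.

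For the final iterated claim, I would set $\eps_i$ to be the SV-error between $\mA_i$ and $\mA_0$ and apply the two-term version repeatedly, obtaining the recursion $\eps_{i+1}=(1+b)\eps_i+b$ with $b=\delta/2\ell$ and base case $\eps_1=b$. Solving the recursion gives $\eps_i = (1+b)^i - 1$, hence $\eps_\ell = (1+\delta/2\ell)^\ell - 1 \leq e^{\delta/2}-1 \leq \delta$ for $\delta\in(0,1/2)$ (using $e^{\delta/2}\leq 1+\delta$, which is easily verified by differentiation on $[0,2\ln 2]$). The main subtlety, and the only real obstacle, is the passage from SV approximation of the $\mA_i$'s to the multiplicative comparison of the error matrices $\mE_i$; once the block-matrix formulation from \Cref{SVequiv} is invoked, this passage is a direct calculation rather than a new idea, since $\mE_2-\mE_1$ is exactly the Hermitian symmetrization of $\mC_2-\mC_1$.
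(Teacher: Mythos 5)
Your proof is correct and follows essentially the same route as the paper: pass to the block formulation of \Cref{SVequiv}\ref{SVequiv:somez}, establish the L\"owner comparison $\mE_2 \preceq (1+\eps)\mE_1$ between the error matrices, and then run a triangle inequality. You spell out the derivation of that L\"owner bound (via the observation that $\mE_2-\mE_1$ is the Hermitian part of $\mC_2-\mC_1$), which the paper simply asserts from \Cref{SVequiv}, and for the iterated claim you solve the recursion $\eps_{i+1}=(1+b)\eps_i+b$ exactly and bound $(1+\delta/2\ell)^\ell-1\leq e^{\delta/2}-1\leq\delta$, whereas the paper proves the inductive ansatz $\mA_i\svgraph{i\delta/2\ell+i\delta^2/\ell}\mA_0$; both give the same conclusion.
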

A regular undirected graph SV approximates the complete graph with error equal to its expansion.
\begin{restatable}{lemma}{expanderSVappx}\label{lem:expander-uapprox}
    Let $G$ be a strongly connected, $d$-regular directed multigraph on $n$ vertices with adjacency matrix $\mA$ and let $\mJ\in\mathbb{R}^{n\times n}$ be a matrix with $1/n$ in every entry (i.e., $\mJ$ is the walk matrix of the complete graph with a self loop on every vertex). Then $\lambda(G)\leq 1-\lambda/2$ if and only if $\mA/d\svn{\lambda}\mJ$.
\end{restatable}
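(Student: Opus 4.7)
The overall plan is to unpack the SV approximation definition until it collapses to a spectral norm bound on $\mA/d - \mJ$, which evaluates to $\sigma_2(\mA/d)$; the equivalence then follows from how $\lambda(G)$ relates to $\sigma_2(\mA/d)$ for a $d$-regular graph.

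First, I would specialize Definition~\ref{def:svapprox} to $\Din = \Dout = \mI$. Because $\mJ$ is symmetric and idempotent, the two error matrices coincide: $\mI - \mJ\mJ^* = \mI - \mJ^*\mJ = \mI - \mJ$, the orthogonal projector onto $\vones^\perp$, which is manifestly PSD. The kernel conditions demanded by the definition are immediate from $d$-regularity, since both $\mA/d$ and $\mJ$ fix $\vones$ from the left and from the right, so $\vones \in \lker(\mA/d - \mJ) \cap \rker(\mA/d - \mJ)$. Hence $\mA/d \svn{\lambda} \mJ$ amounts to asserting that $\mA/d$ is a $(\lambda/2)$-approximation of $\mJ$ with respect to the single error matrix $\mI - \mJ$, which by Definition~\ref{def:leftrightapprox}(3) is equivalent to $\|(\mI - \mJ)^{+/2}(\mA/d - \mJ)(\mI - \mJ)^{+/2}\| \le \lambda/2$.

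Second, I would simplify this operator norm. Since $\mI - \mJ$ is a projection, $(\mI - \mJ)^{+/2} = \mI - \mJ$, and $d$-regularity supplies the identities $\mJ(\mA/d) = (\mA/d)\mJ = \mJ = \mJ^2$. Expanding with these identities gives
\[
(\mI - \mJ)(\mA/d - \mJ)(\mI - \mJ) = \mA/d - \mJ,
\]
so the condition reduces to $\|\mA/d - \mJ\| \le \lambda/2$. Computing the associated Gram matrix, again via the same identities,
\[
(\mA/d - \mJ)^*(\mA/d - \mJ) \;=\; (\mA/d)^*(\mA/d) - \mJ,
\]
which annihilates $\vones$ (the top singular value $1$ of $\mA/d$ cancels with $\mJ$) and agrees with $(\mA/d)^*(\mA/d)$ on $\vones^\perp$. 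Its largest eigenvalue is therefore $\sigma_2(\mA/d)^2$, and hence $\|\mA/d - \mJ\| = \sigma_2(\mA/d)$.

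The SV condition is thus exactly $\sigma_2(\mA/d) \le \lambda/2$, and the equivalence $\sigma_2(\mA/d) \le \lambda/2 \iff \lambda(G) \le 1 - \lambda/2$ follows from the paper's convention that $\lambda(G)$ encodes the spectral expansion of the $d$-regular graph $G$ (with the top singular value of $\mA/d$ being $1$). The proof has no single serious obstacle; the entire argument is careful bookkeeping, and the leverage is entirely algebraic: $d$-regularity forces $\mI - \mJ$ and $\mA/d - \mJ$ to interact trivially, so the approximation condition collapses all the way to an ordinary spectral norm.
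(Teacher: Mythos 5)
Your reduction is correct through the norm bound, and it takes a genuinely different route from the paper's. The paper observes that when $\mJ$ is the reference matrix the SV inequality coincides with the UC inequality (the optimization over the unit circle is trivial because both error matrices collapse to $\mI-\mJ$), and then delegates the remaining equivalence with $\lambda(G)$ to an external lemma of \cite{AKMPSV20}. You instead unpack Definition~\ref{def:leftrightapprox}(3) directly and carry the computation all the way through: $\mI-\mJ$ is a projection, so $(\mI-\mJ)^{+/2}=\mI-\mJ$; conjugating $\mA/d-\mJ$ by it is a no-op because double stochasticity gives $\mJ(\mA/d)=(\mA/d)\mJ=\mJ$; and the Gram matrix simplifies to $(\mA/d)^*(\mA/d)-\mJ$, whose top eigenvalue is the second singular value squared. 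This is more elementary and self-contained than the paper's UC-plus-citation route, and it exposes exactly which spectral quantity the SV condition controls.

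The last sentence, however, is where the argument is not airtight. You derive $\sigma_2(\mA/d)\le\lambda/2$ and then assert that this "follows from the paper's convention" is the same as $\lambda(G)\le 1-\lambda/2$. But those two inequalities are monotone in $\lambda$ in \emph{opposite} directions — the first becomes true as $\lambda$ grows, the second becomes false — so there is no single real value of $\lambda(G)$ that makes them equivalent for all $\lambda$. What you have actually established is $\|\mA/d-\mJ\|\le\lambda/2$, i.e.\ $\lambda(G)\le\lambda/2$ under the convention $\lambda(G)=\|\mA/d-\mJ\|$, or $\lambda(G)\ge 1-\lambda/2$ under the spectral-gap convention $\lambda(G)=1-\|\mA/d-\mJ\|$. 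Neither literally matches what is printed in the lemma. Rather than appealing vaguely to a convention, you should pin down what $\lambda(G)$ means for a directed regular multigraph (the paper's preliminaries only define it for undirected graphs, by $\lambda_2$ of the normalized adjacency matrix, which does not control the negative end of the spectrum) and then either match your bound to that definition precisely or cite the exact statement from \cite{AKMPSV20} as the paper does, checking that its parameters line up with yours.
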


\section{Sparsification}\label{sec:sparsification}
Given the Laplacian of an Eulerian directed graph, we wish to compute a sparsifier with respect to SV approximation. In this section, we show how to solve a more general problem of sparsifiying a non-negative rectangular matrix $\mA \in \R_{\geq 0}^{m \times n}$ with respect to SV approximation. Further, in \Cref{sec:derandomized-square} we show how to construct square sparsifiers via the derandomized square, and in \Cref{sec:rwsparsify} we show how to sparsify powers of walk matrices. We first give an informal overview of our approach. 

We give a series of reductions that reduce this problem to degree- and bipartition-preserving sparsification of expander graphs with respect to $\eps$-approximation. By \Cref{SVequiv} (\Cref{SVequiv:bip}), to SV-sparsify (possibly directed) $\mA$, it suffices to SV-sparsity the undirected bipartite lift
\[\begin{bmatrix}\mzero & \mA^\rTrans\\\mA & \mzero\end{bmatrix}
\]
with respect to SV approximation.
In particular, given a symmetric matrix satisfying $\slift{\tA}\svgraph{\eps} \slift{\mA}$,
we have by \Cref{SVequiv} that $\tA \svgraph{\eps} \mA$. 
Thus, any symmetric bipartition-preserving SV sparsifier of undirected graphs immediately gives an SV sparsifier for Eulerian digraphs.
In fact, every SV sparsifier of undirected graphs must be bipartition-preserving, and an asymmetric sparsifier can be used to read off sparsifiers of the underlying directed graphs. For a formal statement of this stronger claim, see \Cref{lem:sv_undirected_suffices}.

We show (\Cref{lem:diag_to_sv}) that for undirected bipartite \textit{expanders}, degree- and bipartition-preserving approximation with respect to $\mD$ implies SV approximation. 

From this, we can construct SV sparsifiers (\Cref{thm:SVsparse}) by computing an expander decomposition of the bipartite lift, then sparsifying the relevant subgraphs in a suitable way. While we can use existing techniques to obtain this (with worse log factors), we also develop a new sparsification approach based on cycle decompositions. We show that in a bipartite expander, decomposing edges indicent to high-degree vertices into edge-disjoint cycles, then taking the odd or even edges of each cycle with probability $1/2$ produces a sparsifier with high probability. This sparsification procedure automatically preserves degrees and bipartitions, which we require to lift approximation with respect to $\mD$ to SV approximation. Prior work~\cite{CGPSSW18} showed an equivalent result in general Eulerian digraphs, except they required the cycles to be short, resulting in almost-linear runtime~\cite{CGPSSW18,PY19}.

\newcommand{\LH}{\L_{H}}
\newcommand{\LG}{\L_G}
\newcommand{\LC}{\L_C}
For convenience, we define notation for the adjacency and degree matrices of graphs:
\begin{definition}
    For the remainder of the section, for a weighted undirected graph $G$ we let $\mA_G$ denote the adjacency matrix of $G$ and $\mD_G\defeq \mA_G\vones$ denote the degree matrix. As such, an $\eps$-SV sparsifier of $G$ is a graph $H$ such that $\mA_H \svgraph{\eps}\mA_G$.
\end{definition}

First, we show that for undirected \textit{expanders}, degree- and bipartition-preserving sparsification with respect to the degree matrix implies sparsification with respect to SV approximation.
\begin{definition}[Bipartiteness] 
    We say $\ma \in \R^{n \times n}_{\geq 0}$ is {\em bipartite} with {\em bipartition} $S,T \subseteq [n]$ if $S$ and $T$ partition $V$ and $\mA_{S,S} = \mzero_{S \times S}$ and $\mA_{T,T} = \mzero_{T \times T}$. Given a graph $G$ with adjacency matrix $\ma$, we say that a sparsifier $H$ is \emph{bipartition-preserving} if for every bipartition $S,T$ in $G$, we have $H_{S,S}=\mzero_{S\times S}$ and $H_{T,T}=\mzero_{T\times T}$ (where the statement is vacuously satisfied if $G$ is not bipartite).
\end{definition}

In the following lemma we show that for non-negative matrices that are bipartite with the same bipartition, SV approximation is implied by approximation with respect to the error matrix $\mD$ (\Cref{def:leftrightapprox}), up to a loss in the condition number.

\begin{lemma}[From diagonal approximation to SV approximation]\label{lem:diag_to_sv}
    Suppose that symmetric $\tA \in \R^{n \times n}_{\geq 0}$ is a degree and bipartition-preserving $\epsilon/2$-approximation of symmetric bipartite $\mA \in \R^{n \times n}_{\geq 0}$ with respect to error matrix $\mD\defeq \diag(\mA\vones)=\diag(\mA^{\rTrans}\vones)$. 
    Then $\tA\svgraph{\gamma}\mA$ for $\gamma\defeq \frac{\eps}{\lambda^2}$ where 
    $\lambda \defeq 1 - \lambda_2(\mD^{+/2}\mA\mD^{+/2})$.
\end{lemma}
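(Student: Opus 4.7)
The plan is to pass to the normalized adjacency $\mN \defeq \mD^{+/2}\mA\mD^{+/2}$ and reduce both the hypothesis and the conclusion to operator inequalities for $\mM \defeq \mD^{+/2}(\tA-\mA)\mD^{+/2}$. By \Cref{lem:Hermitianapprox} the hypothesis is equivalent to $-(\eps/2)\mD \preceq \tA - \mA \preceq (\eps/2)\mD$, i.e.\ $\|\mM\| \leq \eps/2$. On the other hand, because $\mA = \mA^{\rTrans}$ we have $\mE \defeq \mD - \mA\mD^+\mA = \mD^{1/2}(\mI-\mN^2)\mD^{1/2}$, so unfolding $\tA \svgraph{\gamma} \mA$ via the arithmetic-mean formulation (condition 1 of \Cref{def:leftrightapprox} with error matrix $\mE$ on both sides) and substituting $x' = \mD^{1/2}x$, $y' = \mD^{1/2}y$ transforms the target into
\[
\bigl|(x')^{\rTrans}\mM\, y'\bigr| \;\leq\; \frac{\gamma}{4}\Bigl((x')^{\rTrans}(\mI-\mN^2)x' + (y')^{\rTrans}(\mI-\mN^2)y'\Bigr) \qquad \text{for all } x',y'\in\R^n.
\]

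The key step is to exploit bipartiteness and bipartition preservation simultaneously. Since $\mA$ is bipartite with parts $S,T$, the identity $\mA(\vones_S - \vones_T) = -\mD(\vones_S - \vones_T)$ shows that $\mN$ has a $+1$ eigenvector $v_1 \propto \mD^{1/2}\vones$ and a $-1$ eigenvector $v_2 \propto \mD^{1/2}(\vones_S - \vones_T)$; the symmetry of the bipartite spectrum forces every other eigenvalue of $\mN$ to lie in $[-(1-\lambda),\, 1-\lambda]$. Degree preservation gives $\mM v_1 = 0$, while bipartition preservation combined with identical degrees yields the same identity for $\tA$, so $(\tA - \mA)(\vones_S - \vones_T) = 0$ and hence $\mM v_2 = 0$.

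To close, let $\tilde x,\tilde y$ denote the components of $x', y'$ orthogonal to $\mathrm{span}\{v_1,v_2\}$. Because $\mM$ is symmetric and vanishes on $v_1,v_2$, we have $(x')^{\rTrans}\mM\, y' = \tilde x^{\rTrans}\mM\,\tilde y$, and Cauchy--Schwarz combined with $\|\mM\|\leq \eps/2$ and AM--GM yields $|(x')^{\rTrans}\mM\, y'| \leq (\eps/4)(\|\tilde x\|^2 + \|\tilde y\|^2)$. On $\{v_1,v_2\}^\perp$ every eigenvalue of $\mI - \mN^2$ is at least $1-(1-\lambda)^2 = \lambda(2-\lambda) \geq \lambda^2$, so $\|\tilde x\|^2 \leq \lambda^{-2}\,(x')^{\rTrans}(\mI-\mN^2)x'$ and likewise for $y'$. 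Assembling these estimates gives the displayed inequality with $\gamma = \eps/\lambda^2$, and the kernel containments demanded by \Cref{def:svapprox} are immediate from the same annihilations.

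I expect the main obstacle to be recognizing the role of bipartition preservation as genuinely distinct from degree preservation: without it, only $v_1$ is guaranteed to lie in the kernel of $\mM$, so the best lower bound on $\mI-\mN^2$ over $\{v_1\}^\perp$ is essentially $0$ (the $-1$ eigenvalue is not excised), and no bound of the form $\gamma = O(\eps/\poly(\lambda))$ can hold. The only other technical wrinkle is handling vertices of zero $\mD$-weight via the Moore--Penrose pseudoinverse, but since both the hypothesis and the conclusion reduce to statements on the image of $\mD^{1/2}$, this is routine.
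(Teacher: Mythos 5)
Your proof is correct and arrives at the same bound, but it is packaged differently from the paper's argument. The paper's proof stays at the level of operator inequalities: it uses the factorization $\mE = (\mD-\mA)\mD^{+}(\mD+\mA)$ and the expansion bounds $(\mD\pm\mA)^{+}\preceq\lambda^{-1}\mD^{+}$ to bound $\|\mE^{+/2}(\mA-\tA)\mE^{+/2}\|$ by a telescoping chain of norm inequalities, never exhibiting an eigenbasis. You instead diagonalize $\mN=\mD^{+/2}\mA\mD^{+/2}$, identify $v_1\propto\mD^{1/2}\vones$ and $v_2\propto\mD^{1/2}(\vones_S-\vones_T)$ as the $\pm1$ eigenvectors (hence $\ker\mE$), show that degree preservation annihilates $v_1$ and bipartition preservation annihilates $v_2$, and then lower-bound $\mI-\mN^2\succeq\lambda(2-\lambda)\,\mI$ on $\{v_1,v_2\}^\perp$. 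Both proofs are using the same facts (the bipartite spectral symmetry, and that the gap of $\mI-\mN^2$ is governed by $\lambda$), but your spectral phrasing makes explicit exactly where each hypothesis enters: degree preservation handles the $+1$ eigenvalue and bipartition preservation handles $-1$, precisely as you flag in your last paragraph. That transparency is a genuine virtue; the paper's operator-inequality chain is terser but hides this. As a side note, your step $\lambda(2-\lambda)\geq\lambda^2$ is deliberately wasteful — you could just as well use $\lambda(2-\lambda)\geq\lambda$ and conclude $\tA\svgraph{\eps/\lambda}\mA$, a strictly better constant than the stated $\eps/\lambda^2$. The remaining bookkeeping (handling $\ker\mD$ via pseudoinverses, and verifying the kernel containment required by \Cref{def:svapprox}) is as routine as you say.
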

\begin{proof}
Recall by \Cref{SVequiv} that it suffices to show that 
$\tA$ is an $\eps/2$-approximation of $\mA$ with respect to $\mE\defeq \mD-\mA\mD^+\mA$.
First, we have
\[\mE\succeq (\mD-\mA)\mD^+(\mD+\mA) \succeq \mzero.\]
Furthermore by \Cref{lem:Hermitianapprox} and the assumption that $\tA$ is an $\eps/2$-approximation of $\mA$ with respect to $\mD$ we have
\[-\eps/2\cdot\mD\preceq \mA-\tA \preceq \eps/2\cdot \mD.\]
By assumption that $\mA$ has expansion $\lambda$ and is bipartite (and hence its spectrum is symmetric), we have
\[(\mD-\mA)^+ \preceq \frac{1}{\lambda}\mD^+, \quad (\mD+\mA)^+ \preceq \frac{1}{\lambda}\mD^+.
\]
Furthermore, as $\mA$ is the adjacency matrix of an undirected graph the only singular values of $\mA$ of magnitude $1$ can be $-1$ and $1$. Thus since $\tA$ matches degrees exactly and preserves bipartitions we have that $\ker(\mA-\tA)\supseteq \ker(\mE)$.
Thus we have
\begin{align*}
    \left\|\mE^{+/2}(\mA-\tA)\mE^{+/2}\right\| &\leq
    \left\|(\mD+\mA)^{+/2}\mD^{1/2}(\mD-\mA)^{+/2}(\mA-\tA)(\mD-\mA)^{+/2}\mD^{1/2}(\mD+\mA)^{+/2}\right\|\\
    &\leq \frac{1}{\lambda}\left\|\mD^{+/2}\mD^{1/2}(\mD-\mA)^{+/2}(\mA-\tA)(\mD-\mA)^{+/2}\mD^{1/2}\mD^{+/2}\right\|\\
    &\leq \frac{1}{\lambda}\left\|(\mD-\mA)^{+/2}(\mA-\tA)(\mD-\mA)^{+/2}\right\|\\
    &\leq \frac{1}{\lambda^2}\left\|\mD^{+/2}(\mA-\tA)\mD^{+/2}\right\|
    \leq \frac{\eps}{\lambda^2}. \qedhere
\end{align*}
\end{proof}

\subsection{Cycle and Expander Decompositions}
We give a new analysis of degree-preserving expander sparsification.

Chu, Gao, Peng, Sachdeva, Sawlani, and Wang~\cite{CGPSSW18} constructed sparsifiers for Eulerian digraphs, which we now discuss. They decomposed the low-importance edges of a graph into a series of edge-disjoint cycles $C_1,\ldots,C_t$. Then for each cycle, they chose a random orientation (clockwise or counterclockwise) and kept every edge in this orientation, and upweighted these by a factor of two. One can see that this procedure gives a subgraph and maintains Eulerianness. They bounded the impact of a single such sample in terms of the length of the cycle, and by giving a sophisticated algorithm for decomposing a graph into many short cycles concluded an almost-linear time sparsifier for standard approximation.

We follow this approach but make two modifications. First, we require the cycles to be \textit{Forward-Backward (FB)}, where we always alternate between clockwise and counterclockwise edges (and all cycles are of even length). This ensures that we exactly preserve \textit{degrees}, not just Eulerianness, which is essential for SV approximation. We take advantage of the reduction from directed to undirected SV sparsification to perform this decomposition in the undirected bipartite lift, where undirected cycles correspond to alternating cycles in the original graph. Our second modification is to prove a bound on the importance of the cycles that is independent of their length, as long as the cycles lie inside an expander. This enables us to use a simple linear time algorithm for finding the decomposition.

We first recall the standard procedure that enables us to efficiently decompose an undirected graph into a union of (potentially long) FB cycles.
\newcommand{\Eex}{E_{ex}}
\newcommand{\cC}{\mathcal{C}}
\begin{lemma}\label{lem:naiveCycleDecomp}
    Given an unweighted undirected graph $G=(V,E)$, the algorithm \algCycleDecomp returns a collection of edge-disjoint cycles $C_1,\ldots,C_T\subseteq E$ in time $O(|E|+|V|)$ such that at most $n$ edges are not contained in some cycle.
\end{lemma}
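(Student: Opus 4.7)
My plan has two parts: first establish the decomposition existentially, then exhibit a linear-time implementation.

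\textbf{Existence of the decomposition.} I would first prove the structural claim that any undirected graph $G=(V,E)$ admits a partition $E = C_1 \sqcup \cdots \sqcup C_T \sqcup F$ into edge-disjoint cycles $C_i$ together with a forest $F$ of size $|F|\leq n-1$. This follows by greedy extraction: while $G$ contains a cycle, remove the edges of one cycle and recurse; when no cycle remains, the residual graph is acyclic, hence a forest on $n$ vertices, which has at most $n-1$ edges. This immediately gives the leftover bound of at most $n$ edges asserted by the lemma. The remaining content is therefore showing that the algorithm \algCycleDecomp realizes such a decomposition in $O(|E|+|V|)$ time.

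\textbf{Linear-time implementation via DFS.} I would describe \algCycleDecomp as follows. Run a DFS on each connected component of $G$ in total time $O(m+n)$, obtaining a spanning forest $F_0$ (of size at most $n-1$) and a set of non-tree edges; in undirected DFS every non-tree edge is a back edge $(u,v)$ whose upper endpoint $v$ is a strict ancestor of $u$ in $F_0$. The algorithm then processes back edges in post-order of their lower endpoint $u$, and for each one walks from $u$ upward along currently-unused tree edges, stopping when the cumulative walk together with the back edge closes a cycle. Those tree edges are then marked used and removed from the spine. The cycles emitted are edge-disjoint by construction, and at the end the unused edges of $F_0$ (together with any components not touched by a back edge) form a subforest of $F_0$, so at most $n-1\leq n$ edges are left uncovered.

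\textbf{Running time and main obstacle.} The main technical point is the amortized $O(m+n)$ running time. At each vertex I would maintain a pointer to the next unused tree edge upward in $F_0$; each such edge is either consumed into a cycle (after which the pointer advances) or remains in $F$ at termination. Thus the total work across all cycle extractions is proportional to the number of tree edges consumed plus the number of back edges, which is $O(m+n)$, on top of the initial DFS. The subtle obstacle here is to argue that processing back edges in post-order always leaves a contiguous unused tree path from $u$ up to some strict ancestor at which the cycle can close, so that the walk never needs to backtrack or revisit used tree edges; this reduces to the observation that once we are done processing a subtree, any tree edge of that subtree is either permanently assigned to a cycle or permanently in $F$, and subsequent walks from higher in the tree never need to descend back into it. Given this invariant, correctness, edge-disjointness, the $|F|\leq n$ bound, and the linear runtime all follow.
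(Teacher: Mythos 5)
Your existential argument (greedily peel off cycles until the residual graph is a forest with at most $n-1$ edges) is correct and standard, but the algorithm you then construct is not the paper's \algCycleDecomp, and more importantly it does not satisfy the lemma. You claim the leftover set $F$ has at most $n-1$ edges because the unused tree edges form a subforest of the DFS forest $F_0$; but $F$ also contains every back edge whose upward walk hits a previously-used tree edge, and those stuck back edges are not in $F_0$, so the subforest bound does not control them. Concretely, take $G=K_6$ and a DFS whose spanning tree is the Hamiltonian path $1,2,\ldots,6$ in order. The first back edge you process with lower endpoint $6$, say $(6,1)$, walks up the entire tree path and emits a single $6$-edge cycle, consuming all $5$ tree edges. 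Each of the remaining $9$ back edges then gets stuck on its very first step and lands in $F$, giving $|F| = 9 > n = 6$. So the invariant you flag as the ``subtle obstacle'' is not merely unproved, it is false.

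The paper's \algCycleDecomp avoids all of this and uses no spanning tree or DFS at all. It grows a non-backtracking walk greedily, emits a cycle whenever the walk revisits a vertex, and places the last edge into the leftover set $\Eex$ whenever the walk reaches a dead end (a vertex with no other incident edge). The leftover bound is then a one-line observation: each edge added to $\Eex$ is removed from $E$ and leaves its far endpoint with degree zero, and a vertex can become isolated this way at most once, so $|\Eex|\leq n$. Edge-disjointness and $O(m+n)$ time are immediate since each edge is examined and consumed a constant number of times.
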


\begin{algorithm}\caption{\algCycleDecomp$(G=(V,E))$,  $G$ undirected}
    Initialize $\Eex,\cC = \{\}$.\\
    \For{$v\in V$}{
        \While{Greedily construct a non-backtracking path $v=u_0,u_1,\ldots,u_k$ in an arbitrary fashion and mark visited vertices.}{
        \If{$u_k$ has no neighbors other than $u_{k-1}$}{Add $\Eex \la \Eex \cup \{(u_{k-1},u_k)\}$ and remove $(u_{k-1},u_k)$ from $E$.\\
        Continue search from $u_{k-1}$.}
        \If{$u_k =u_{k'}$ for some $k'<k-1$}{Let $\cC \la \cC \cup (u_{k'},u_{k'+1},\ldots,u_{k-1})$ and let $E\la E\setminus \{(u_{k'},u_{k'+1}),(u_{k'+1},u_{k'+2}),\ldots,(u_{k-1},u_{k})\}$.\\
        Continue search from $u_{k}$.}
        }
    }
    \Return{$\cC,\Eex$}
\end{algorithm}

\begin{proof}
    The runtime and the fact that the cycles are edge disjoint is direct from the description. Then note that for every edge $(u,v)$ that we add to $\Eex$ (and remove from $E$) we must create an isolated vertex $v$. But this can happen at most $n$ times, so we are done.
\end{proof}

Furthermore, we recall the definition of an expander partition:
\begin{definition}\label{def:expDecomp}
    Given $\phi,\gamma$ and an undirected graph $G$, a \emph{$(\phi,\gamma)$-expander partition} of $G$ is a partition $V_1,\ldots,V_t$ such that at most $\gamma\cdot m$ edges cross between partitions and for every $i\in [t]$, $\lambda(G[V_i])\leq 1-\phi$. 
\end{definition}
We will use algorithms for expander decomposition as a key subroutine in our sparsifier. We require a small constant fraction of edges to lie between clusters, and thus state the algorithms fixing $\gamma=1/16$ such that $m/16$ edges cross subgraphs. In particular, given an algorithm \algExpDecomp~ computing a $(\Phi,m/16)$ expander decomposition, let $\TalgExpDecomp{m}$~ be its runtime (on graphs with $m$ edges) and $\PhialgExpDecomp=\Phi$. We recall the best known nearly-linear expander partitioning algorithm, as well as the (optimal) existential result:
\begin{theorem}[\cite{ADK22}]\label{alg:expDecompFast}
    There is an algorithm \algExpDecomp$(G)$ running in randomized time $\TalgExpDecomp{m}=O(m\log^7 m)$ that computes a $(\Phi=\log^{-4}m,1/16)$ expander decomposition of $G$ with high probability.
\end{theorem}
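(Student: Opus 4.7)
Since this theorem is explicitly cited from prior work~\cite{ADK22}, my plan is to follow the general framework of recursive expander decomposition pioneered by Spielman--Teng and refined in a long line of subsequent work (e.g., Saranurak--Wang, Chuzhoy--Gao--Li--Nanongkai--Peng--Saranurak). The overall strategy has two ingredients: a near-linear-time subroutine that on any graph either certifies conductance at least $\Phi$ or outputs a cut of conductance below $\Phi$; and a charging argument showing that the induced recursive scheme outputs only $O(\Phi \log m) \cdot m$ inter-cluster edges in total.

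First I would implement the recursion: given an input subgraph, invoke the cut-finding subroutine; if it certifies expansion, return the subgraph as a single cluster, otherwise split along the sparse cut and recurse on both sides. The canonical cut-finding primitive is the cut-matching game of Khandekar--Rao--Vazirani, which reduces balanced sparse cut to $O(\log^2 n)$ calls to approximate maximum flow. Combined with the most recent near-linear-time approximate max-flow routines, each call costs $\tO(m)$ time, and the recursion has depth $O(\log n)$.

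The hard part is handling \emph{unbalanced} sparse cuts: the cut-matching primitive may only discover a sparse cut in which one side is very small, and naively recursing on the large side both blows up the runtime and fails to certify expansion. The standard remedy is the \emph{expander pruning / trimming} technique, which absorbs a carefully chosen boundary into the small side so that the remaining large side is provably an expander with only mildly degraded conductance, while the pruned mass is charged against the small side so that it does not accumulate across the recursion. Simultaneously controlling the pruning loss, the pruning runtime, and the preserved conductance parameter is the technical heart of the construction and the main obstacle to a clean proof.

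Finally, the quantitative bounds come out by bookkeeping. Each edge can be charged $O(\log n)$ times across the recursion and each charge is by a cut of conductance at most $\Phi$, giving at most $O(\Phi \log m) \cdot m$ inter-cluster edges; choosing $\Phi = \log^{-4} m$ yields $\gamma = 1/16$ with room to spare. The runtime sums $\tO(m)$ per level of recursion across $O(\log n)$ levels, with further polylogarithmic overhead from the cut-matching game and the pruning subroutine, resulting in $O(m \log^7 m)$. Since we do not need to improve on~\cite{ADK22}, the proposal is simply to invoke their algorithm as a black box, and the role of this statement in the paper is only to quantify the parameters of the expander decomposition that~\algSparsify consumes.
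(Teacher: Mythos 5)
Since the paper states this theorem as a direct citation of~\cite{ADK22} and provides no proof of its own, your proposal—to invoke that result as a black box after sketching the underlying cut-matching/trimming framework—is exactly what is intended and matches the paper's treatment.
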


\begin{theorem}[\cite{ST04}]\label{alg:expDecompExis}
    There exists a $(\Phi=\log^{-2}m,1/16)$ expander decomposition of $G$.
\end{theorem}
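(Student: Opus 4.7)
The plan is a standard iterative splitting argument via Cheeger's inequality. Initialize the partition $P \leftarrow \{V\}$; while there exists some $V_i \in P$ with $\lambda(G[V_i]) > 1 - \Phi$, invoke the hard direction of Cheeger's inequality on the induced subgraph $G[V_i]$ to obtain a cut $(S, V_i\setminus S)$ whose conductance (measured inside $G[V_i]$) is at most $\sqrt{2\Phi}$, and replace $V_i$ in $P$ with the two parts $S$ and $V_i\setminus S$. Upon termination, every part automatically satisfies $\lambda(G[V_i])\leq 1-\Phi$, so the remaining task is to bound the number of edges that end up between parts.

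For the edge bound I would use a volume-halving charging scheme. Whenever $V_i$ is split, the number of edges cut is at most $\sqrt{2\Phi}\cdot \min(\mathrm{vol}_{G[V_i]}(S), \mathrm{vol}_{G[V_i]}(V_i\setminus S))$, so charge these cut edges to the vertices on the smaller side in proportion to $\deg_G(v)$. Each vertex $v$ can lie on the smaller side of a split at most $\log_2(2m)$ times, because on every such occurrence the volume of the cluster containing $v$ more than halves. Summing the per-vertex charges gives a total of at most $\sqrt{2\Phi}\cdot \log_2(2m)\cdot 2m$ inter-cluster edges, which is at most $m/16$ once we choose $\Phi = c/\log^2 m$ for a sufficiently small absolute constant $c$ (which is absorbed into the $\Phi = \log^{-2} m$ notation of the statement).

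The main subtlety is keeping the accounting coherent between $G$ and its induced subgraphs: Cheeger is applied inside $G[V_i]$ and gives a cut whose size is controlled by $\mathrm{vol}_{G[V_i]}(S) \leq \mathrm{vol}_G(S)$, yet the charges are levied against $\deg_G(v)$. Monotonicity of degrees in induced subgraphs ($\deg_{G[V_i]}(v)\leq \deg_G(v)$) is what makes the per-step bound $\sqrt{2\Phi}\cdot \deg_G(v)$ correct. I do not expect genuine difficulty beyond this bookkeeping, since the rest is a direct use of Cheeger's inequality together with doubling/halving of cluster volumes; an equivalent potential-function argument with potential $\sum_i \mathrm{vol}(V_i)\log(\mathrm{vol}(V_i))$ would give the same bound but is less transparent about the constants.
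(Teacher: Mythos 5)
The paper cites [ST04] for this result rather than proving it, so there is no internal proof to compare against; your iterative Cheeger-splitting argument with volume-halving charges is the standard way to establish it. The argument is essentially correct. One quantitative caveat, which you yourself flag: plugging $\Phi=\log^{-2}m$ literally into your bound gives $\sqrt{2\Phi}\cdot\log_2(2m)\cdot 2m\approx 2\sqrt{2}\,m$ cut edges, far above $m/16$, so a small absolute constant (on the order of $2^{-11}$) must be hidden in the $\Phi$-notation. That is harmless here since $\Phi$ only enters the downstream sparsity and time bounds inside a polylogarithmic factor, but a fully self-contained proof should state it. The degenerate cases are also fine and worth a sentence: a disconnected $G[V_i]$ has $\lambda(G[V_i])=1$, for which Cheeger yields a zero-conductance cut, and singleton or edgeless parts satisfy $\lambda(G[V_i])\le 1-\Phi$ vacuously, so the recursion terminates after at most $n-1$ splits.
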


\subsection{Sparsifying Bipartite Expanders}
The core of our algorithm is a way to sparsify bipartite, unweighted expanders in a degree- and bipartition- preserving way.

The algorithm decomposes the graph into a series of edge-disjoint cycles, then takes either twice the odd or twice the even edges with probability $1/2$. In the case that the undirected graph corresponds to the bipartite lift of an Eulerian digraph $G$, this precisely corresponds to decomposing $G$ into a union of Forward-Backward cycles and taking the forward or backward edges for each cycle. We bound the error in terms of the degree matrix $\mD$. This is tolerable as we apply this procedure inside expanders, so by \Cref{lem:diag_to_sv} we can transform this into an SV approximation guarantee with $\polylog(n)$ loss in approximation quality.

\newcommand{\dlow}{d_{low}}
\newcommand{\Elow}{E_{low}}
\begin{lemma}\label{lem:unweighted_sparsify_once}
    There is a constant $c>0$ such that given $\delta \in (0,1)$ and an unweighted, undirected bipartite graph $G$ with $m$ edges, \algUnweightedOnce returns in time $O(m)$ a graph $H$ with edge weights in $\{1,2\}$ such that with high probability $\mA_H$ is a $\delta$-approximation of $\mA_G$ with respect to error matrix $\mD_G$,
    and $H$ has at most $\max\{cn\delta^{-2}\log n,(7/8)m\}$ edges and exactly preserves degrees and bipartiteness.
\end{lemma}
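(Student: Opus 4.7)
The plan is to have \algUnweightedOnce sample each of a collection of edge-disjoint cycles independently, keeping either the odd or the even edges with weight $2$. Concretely, I first reserve all edges incident to a vertex of low degree: fix a threshold $\dlow = \Theta(\delta^{-2}\log n)$, let $R$ be the set of edges of $G$ with at least one endpoint of degree less than $\dlow$ in $G$, and let $G'$ be the subgraph on the remaining edges. Apply \algCycleDecomp (\Cref{lem:naiveCycleDecomp}) to $G'$ to obtain edge-disjoint cycles $C_1, \dots, C_T$ together with at most $n$ leftover edges, which I absorb into $R$. Since $G$ is bipartite, every $C_i$ has even length, so its edges split canonically into an ``odd'' and an ``even'' side. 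Independently for each $i$, flip a fair coin $X_i \in \{\pm 1\}$ and retain the odd (if $X_i = +1$) or even (if $X_i = -1$) side of $C_i$ at weight $2$; output $H$ as $R$ together with these sampled cycle edges. Everything runs in $O(m)$ time by \Cref{lem:naiveCycleDecomp}.

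Degree and bipartition preservation are immediate: at each vertex $u$ on a cycle $C_i$, exactly one odd and one even edge are incident, and either coin outcome retains weight $2$, so $u$'s contribution from $C_i$ is unchanged; edges in $R$ are preserved verbatim. Since $H \subseteq G$ as a weighted subgraph, $H$ is bipartite with the same bipartition. For the edge count, $H$ has $|R|$ reserved edges and $(m - |R|)/2$ sampled cycle edges, so $|H| \leq m/2 + |R|/2$. Because $|R| \leq n \dlow + n = O(n \delta^{-2} \log n)$, either $|R| \leq 3m/4$ (giving $|H| \leq (7/8) m$) or $m = O(n \delta^{-2} \log n)$ and the algorithm may simply return $G$.

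The technical core is the approximation bound. Writing $M_i = \mA_{C_i^{\mathrm{odd}}} - \mA_{C_i^{\mathrm{even}}}$ for the difference of the odd and even adjacency matrices of $C_i$, we have $\mA_H - \mA_G = \sum_i X_i M_i$, and by \Cref{lem:Hermitianapprox} together with condition~3 of \Cref{def:leftrightapprox} it suffices to show $\|\sum_i X_i Y_i\| \leq \delta$ with high probability, where $Y_i = \mD_G^{+/2} M_i \mD_G^{+/2}$. Since every cycle vertex has degree at least $\dlow$ in $G$, an easy check gives $\|Y_i\| \leq 2/\dlow$, and matrix Bernstein for symmetric Rademacher sums reduces the question to controlling the matrix variance $\sigma^2 = \|\sum_i Y_i^2\|$. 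The main obstacle lies here: naively each $Y_i^2$ has norm up to $4/\dlow^2$ and there can be $\Omega(m)$ cycles, giving no savings. I plan to expand $x^\top Y_i^2 x = \|Y_i x\|^2$, apply $(a-b)^2 \leq 2(a^2 + b^2)$, and crucially invoke edge-disjointness of the cycle decomposition, which limits the number of cycles through any vertex $u$ to at most $d_u / 2$; this collapses the sum to $\sigma^2 \leq O(1/\dlow)$. Matrix Bernstein then yields $\|\sum_i X_i Y_i\| = O(\sqrt{\log n / \dlow})$, which is at most $\delta$ with high probability for the chosen $\dlow$.
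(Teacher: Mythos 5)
Your proposal is correct and matches the paper's high-level plan (threshold off low-degree vertices, cycle-decompose the rest, randomly keep odd/even sides at weight~$2$, then matrix concentration), but the concentration step is carried out differently, and the difference is not cosmetic. The paper sets $\dlow = m/(2n)$, models each sampled cycle by its (random, PSD) Laplacian $\mX_i$ with $\mzero \preceq \mX_i \preceq 2\mI_{C_i}$ and $\E[\sum_i \mX_i] \preceq \mD_G$, and applies the matrix Chernoff bound for PSD sums (\Cref{lem:matrixChernoff}); this needs only the uniform bound $\|\mD_G^{+/2}\mX_i\mD_G^{+/2}\| \leq 4n/m$, which is immediate from the degree threshold, and no variance computation at all. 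You instead write $\mA_H - \mA_G = \sum_i X_i M_i$ as a Rademacher series and invoke matrix Bernstein, which requires you to additionally control the matrix variance $\|\sum_i Y_i^2\|$. Your sketch of that bound is right: expanding $\|Y_i x\|^2 = \sum_{u\in C_i} d_u^{-1}\bigl(z_{v_1} - z_{v_2}\bigr)^2$ with $z = \mD_G^{+/2}x$, applying $(a-b)^2 \leq 2(a^2+b^2)$, regrouping by cycle edges, and then using that each vertex has at most $d_u$ cycle-edge neighbors (edge-disjointness) and that $z_v^2 d_v = x_v^2$ gives $\|\sum_i Y_i^2\| = O(1/\dlow)$, from which Bernstein yields $\|\sum_i X_i Y_i\| = O(\sqrt{\log n / \dlow})$. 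Your choice $\dlow = \Theta(\delta^{-2}\log n)$ then closes the argument; in the regime where sparsification actually happens ($m = \Omega(n\delta^{-2}\log n)$) the paper's threshold $m/(2n)$ is at least as large as yours, so the two parameterizations coincide up to constants. The paper's PSD/Chernoff route is somewhat slicker because it sidesteps the variance estimate, while yours makes the ``zero-mean perturbation'' structure explicit and would generalize more readily to settings where the samples are not naturally written as PSD increments. One small precision point worth fixing in a write-up: the relevant consequence of edge-disjointness is that each vertex $u$ has at most $d_u$ cycle-edge endpoints (each incident edge used in at most one cycle), not merely that the \emph{number of cycles} through $u$ is at most $d_u/2$; the former is what the regrouped variance sum actually needs.
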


\begin{algorithm}
\caption{\algUnweightedOnce$(\delta,G=(V,E))$}
    \lIf{$\delta < \sqrt{cn\log n/m}$}
    {
        \Return{$G$}
    }
    Let $\dlow\defeq\frac{m}{2n}$.\\
    Let $\Elow$ be the edges adjacent to vertices $v$ with $d(v)\leq \dlow$.\\
    Let $\Eex,(C_1,\ldots,C_t) =$\algCycleDecomp$(V,E\setminus \Elow)$.\\
    Initialize $H=(V,\Eex \cup E\setminus \Elow)$.\\
    \For{$i=1,\ldots,t$}{
        Let $(a_1,b_1),(b_1,a_2),\ldots,(b_k,a_1)=C_i$.\\
        Choose $b\gets {0,1}$ uniformly at random.\\
        \lIf(\tcp*[f]{Add with weight $2$}){$b=0$}{$H\la H\cup \{2\cdot (a_i,b_i)\}_{i\in [k]}$}
        \lElse{$H\la H\cup \{2\cdot (b_i,a_{i+1})\}_{i\in [k]}$
        }
    }
    \Return{$H$}
\end{algorithm}

To prove correctness, we recall a result on matrix concentration of Tropp:
\begin{lemma}[\cite{Tropp12}]\label{lem:matrixChernoff}
    Let $\{\mX_i\}$ be $d\times d$ random PSD matrices such that $\E[\sum_i \mX_i] \preceq \mX$. For $\delta\leq 1$,
    \[\Pr\left[\sum_i\mX_i - \E\left[\sum_i \mX_i\right] \succeq \delta \mX\right] \leq 1-d\cdot e^{\delta^2/3R},\quad \Pr\left[\sum_i\mX_i - \E\left[\sum_i \mX_i\right] \preceq -\delta \mX\right] \leq 1-d\cdot e^{\delta^2/3R},
    \]
  provided that 
    $R \geq \left\|\mX^{+/2}\mX_i \mX^{+/2}\right\|$
    almost surely.
\end{lemma}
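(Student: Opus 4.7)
The plan is to reduce to Tropp's matrix Bernstein inequality~\cite{Tropp12}. First, normalize by setting $\mY_i \defeq \mX^{+/2}\mX_i\mX^{+/2}$: the $\mY_i$ are independent PSD matrices with $\|\mY_i\|\leq R$ almost surely, $\E[\sum_i\mY_i]\preceq \mPi$ (the orthogonal projection onto $\mathrm{range}(\mX)$), and, since $\sum_i \mX_i$ has range contained in that of $\mX$, every $\mY_i$ has range contained in that of $\mPi$. Setting $\mS_i \defeq \mY_i - \E\mY_i$, the two stated tail events become exactly $\lambda_{\max}(\sum_i \mS_i) \geq \delta$ and $\lambda_{\min}(\sum_i \mS_i) \leq -\delta$, and the stated probability bound (up to the obvious sign typo) becomes $d\,e^{-\delta^2/(3R)}$ on each.

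For the variance proxy, use $\E \mS_i^2 \preceq \E \mY_i^2 \preceq R\,\E\mY_i$ (since any PSD $\mY$ with $\|\mY\|\leq R$ satisfies $\mY^2 \preceq R\mY$), so $\sum_i \E \mS_i^2 \preceq R\,\mPi$ and the variance parameter is at most $R$. Tropp's matrix Bernstein inequality for independent, centered, Hermitian $d\times d$ matrices with norm bounded by $R$ and variance by $\sigma^2$ then states
\[
\Pr\bigl[\lambda_{\max}(\textstyle\sum_i \mS_i) \geq \delta\bigr] \;\leq\; d\cdot \exp\!\Bigl(-\tfrac{\delta^2/2}{\sigma^2 + R\delta/3}\Bigr).
\]
Plugging in $\sigma^2 \leq R$ and $\delta \leq 1$ gives $-\delta^2/(2(R+R/3)) = -3\delta^2/(8R) \leq -\delta^2/(3R)$, matching the claimed upper tail; applying the same bound to $\{-\mS_i\}$ gives the matching lower tail, and a union bound combines them.

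The main obstacle is the matrix Bernstein inequality itself, whose proof combines the matrix Laplace transform method $\Pr[\lambda_{\max}(\cdot)\geq\delta]\leq e^{-\theta\delta}\E\tr\exp(\theta \cdot)$ with Lieb's concavity theorem to obtain the subadditive bound $\E\tr\exp(\theta\sum_i\mS_i)\leq \tr\exp(\sum_i \log\E\exp(\theta\mS_i))$, then bounds the per-summand matrix CGF using the scalar Bernstein estimate $e^{\theta y}\leq 1 + \theta y + (\theta y)^2 \cdot f(\theta R)/2$ for a tame function $f$. Since the lemma is imported as a black box from~\cite{Tropp12}, the remaining work is purely the normalization and variance-bound calculations sketched above, neither of which presents any obstacle.
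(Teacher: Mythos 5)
The paper imports this lemma as a black box citation to Tropp~\cite{Tropp12} and gives no proof of its own, so there is no internal argument to compare against; the relevant question is whether your derivation from Tropp's matrix Bernstein inequality is sound, and it is. You correctly diagnose the typos in the stated lemma: as printed, $1-d\cdot e^{\delta^2/(3R)}$ is negative, and the events with $\succeq\delta\mX$, $\preceq-\delta\mX$ are the ``good'' rare events rather than the failure events. The intended reading, which is also the one actually invoked inside the proof of Lemma~\ref{lem:unweighted_sparsify_once}, is that the complements $\not\preceq\delta\mX$ and $\not\succeq-\delta\mX$ each have probability at most $d\cdot e^{-\delta^2/(3R)}$, and after normalizing by $\mX^{+/2}$ those become exactly $\lambda_{\max}(\sum_i\mS_i)>\delta$ and $\lambda_{\min}(\sum_i\mS_i)<-\delta$ as you say.

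Two steps are asserted without justification and deserve a sentence each. First, the claim $\|\mS_i\|\leq R$ does not follow from the naive triangle inequality (that gives only $2R$); it follows from the PSD sandwich $-R\mI\preceq-\E[\mY_i]\preceq\mS_i\preceq\mY_i\preceq R\mI$, which uses that $\mY_i$ and hence $\E[\mY_i]$ are PSD with norm at most $R$. In fact Tropp's Bernstein hypothesis only requires the one-sided bound $\lambda_{\max}(\mS_i)\leq R$, which is the easy half of this sandwich, so even the two-sided claim is more than you need. Second, the range containment of each $\mY_i$ in $\mathrm{range}(\mX)$ needs the observation that for $v\in\ker(\mX)$ one has $0\leq\E[v^*\mX_i v]\leq v^*\mX v=0$ with $v^*\mX_i v\geq 0$, hence $v^*\mX_i v=0$ almost surely. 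With those two fills, and the correct constant arithmetic $\frac{\delta^2/2}{\sigma^2+R\delta/3}\geq\frac{3\delta^2}{8R}\geq\frac{\delta^2}{3R}$ for $\sigma^2\leq R$ and $\delta\leq 1$, the argument is complete.
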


Using this matrix concentration result, we can prove the theorem.

\begin{proof}[Proof of \Cref{lem:unweighted_sparsify_once}]
    First, note that if we do not sparsify $G$ at all we have $cn\delta^{-2}\log n < m$ and hence the sparsity requirement is already satisfied. Subsequently we assume this is not the case.
    
    Recall $\Elow$ is the set of edges adjacent to a vertex with degree at most $\dlow = (m/2n)$, and observe that there are at most $\dlow\cdot n\leq m/2$ such edges. Further recall $C_1,\ldots,C_t$ are the cycles found by \algCycleDecomp applied to $E\setminus \Elow$. By assumption that $G$ is bipartite, all such cycles are of even length.
    
    \newcommand{\tD}{\widetilde{\mD}}
    Let $\mX_i\defeq \tD_i-\tA_i$ be the random variable of the Laplacian of the $i$th sampled cycle and observe $\mzero \preceq \mX_i\preceq 2\mI_{C_i}.$
    Furthermore, let $\mX \defeq \mD_G \succeq \mzero$. We have
    \[\E\left[\sum_i \mX_i\right] = \sum_{i=1}^{\rTrans}\L_{C_i} \preceq \mX.
    \]
    Furthermore, for every $i\in[t]$ we have
    \[
        \left\|\mX^{+/2}\mX_i\mX^{+/2}\right\|
        \leq 2\left\|\mD^{+/2}\mI_{C_i}\mD^{+/2}\right\|
       \leq \frac{4n}{m}
       \]
    where the final step follows from all cycles being exclusively incident to vertices with degree at least $m/2n$.
    Thus, we apply \Cref{lem:matrixChernoff} with $\{\mX_i\}_{i\in [t]}$, $R = 4n/m$, $\mX = \mX$ and $\delta = \delta/c$. Letting $\mD_C-\mA_C$ be the random graph obtained by the cycle sampling procedure, with high probability we have
    \[-\delta\mD_G \preceq (\mD_C-\mA_C)-\sum_{i=1}^{t}(\mD_{C_i}-\mA_{C_i}) \preceq \delta\mD_G.
    \]
    By adding the non-random components to we obtain
    \[-\delta\mD_G \preceq \mA_H-\mA_G \preceq \delta\mD_G
    \]
    and thus $\mA_H$ is an $\delta$-approximation of $\mA_G$ with respect to $\mD_G$ with high probability.
    
    Furthermore, by \Cref{lem:naiveCycleDecomp} at least $m/2-n\geq m/4$ edges are included in the cycles $C_1,\ldots,C_t$, and hence $H$ has at most $7m/8$ edges under the assumption that $m\geq cn$, and $H$ preserves degrees and bipartiteness by construction.
\end{proof}

\subsection{Sparsifying a Constant Fraction of Edges}
We then construct our routine for sparsifying a graph with respect to SV approximation. At each iteration, we decrease the number of edges by a constant factor. We first bucket edges into powers of two, then apply \algUnweightedOnce.
\begin{lemma}\label{lem:sparsifyOnce}
    Given $\eps>0$ and $b\in \N$ and an undirected bipartite graph $G$ with $m$ edges, each with weight
    in $\{1,2,2^2,\ldots,2^b\}$, \algSparsifyOnce returns a graph $H$ with 
    \[O\left(b\cdot n\eps^{-2}\Phi^{-4}\log n\right)+\frac{15}{16}m
    \]
    edges where $\Phi=\PhialgExpDecomp$, each with weight
    in $\{1,2,2^2,\ldots,2^{b+1}\}$, such that with high probability $\mA_H\svgraph{\eps} \mA_G.$
    Moreover, \algSparsifyOnce runs in randomized time $O(m+bn+\TalgExpDecomp{m})).$
\end{lemma}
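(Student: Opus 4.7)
My plan is to reduce the problem to the bipartite expander case already handled by \Cref{lem:unweighted_sparsify_once}, applying it once per (weight class, expander piece) and then gluing everything together via summability. First I partition the edges of $G$ by weight into $b+1$ unweighted bipartite graphs $G_0',G_1',\ldots,G_b'$, where $G_i'$ contains an unweighted copy of every edge of weight $2^i$. Working one weight class at a time, I run \algExpDecomp on $G_i'$ to obtain a $(\Phi,1/16)$-expander partition; this puts at most $|E(G_i')|/16$ edges between clusters and guarantees $\lambda(G_i'[V_j]) \leq 1-\Phi$ for every piece. The inter-cluster edges are kept unchanged (they will contribute to the $15m/16$ term). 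Inside each expander piece, which is itself bipartite (any induced subgraph of a bipartite graph is bipartite), I invoke \algUnweightedOnce with parameter $\delta \defeq \eps\Phi^{2}/2$ to produce a degree- and bipartition-preserving sparsifier. The final output $H$ is obtained by reassembling the intra-cluster sparsifiers with the kept inter-cluster edges, multiplying everything in the $i$-th class by $2^i$, and taking the union over $i$.

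\textbf{Approximation analysis.} Within a fixed cluster of $G_i'$ on $n'$ vertices, \Cref{lem:unweighted_sparsify_once} yields with high probability a graph $\widetilde G_{i,j}$ such that $\mA_{\widetilde G_{i,j}}$ is a $\delta$-approximation of $\mA_{G_i'[V_j]}$ with respect to $\mD_{G_i'[V_j]}$, while exactly preserving degrees and bipartition. Since $\lambda(G_i'[V_j]) \leq 1-\Phi$, we have $1-\lambda_2(\mD^{+/2}\mA\mD^{+/2})\geq \Phi$, so \Cref{lem:diag_to_sv} upgrades this to $\mA_{\widetilde G_{i,j}} \svgraph{2\delta/\Phi^{2}} \mA_{G_i'[V_j]}$, i.e.\ an $\eps$-SV approximation. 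The inter-cluster edges give a trivial $0$-SV approximation of themselves. By \Cref{lem:summability}, summing over clusters within a weight class and then over weight classes (scaling each summand by $2^i$, which preserves SV approximation) gives $\mA_H \svgraph{\eps} \mA_G$. Exact degree preservation at each stage is what allows these summed approximations to combine cleanly; this is the single most delicate conceptual point, and the reason we need \algUnweightedOnce rather than a generic sparsifier.

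\textbf{Size and runtime analysis.} The count of inter-cluster edges across all weight classes is at most $m/16$. For the intra-cluster edges, each invocation of \algUnweightedOnce returns either $O(n_c\delta^{-2}\log n) = O(n_c \eps^{-2}\Phi^{-4}\log n)$ edges or at most $7/8$ of its input size. Summing the first bound over clusters (using $\sum_c n_c \leq n$) and over the $b+1$ weight classes gives $O(bn\eps^{-2}\Phi^{-4}\log n)$. The second bound yields at most $(7/8)\cdot (15/16)m$ intra-cluster edges, for a total (including inter-cluster) of at most $113m/128 \leq 15m/16$. For runtime, partitioning by weight class costs $O(m+bn)$, the expander decompositions cost $\sum_i \TalgExpDecomp{m_i}\leq \TalgExpDecomp{m}$ since $\TalgExpDecomp{\cdot}$ is near-linear, and the intra-cluster sparsifications cost $O(\sum_{i,c} m_{i,c}) = O(m)$, yielding the claimed $O(m + bn + \TalgExpDecomp{m})$ bound. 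Edge weights in $H$ lie in $\{2^i,2^{i+1}\}$ for some $i\in\{0,\ldots,b\}$, hence in $\{1,2,\ldots,2^{b+1}\}$ as required.

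\textbf{Main obstacle.} The chief technical point is ensuring that the ``glue'' between the local SV-sparsifiers is lossless: SV approximation is unusually unforgiving in that it demands exact degree preservation (\Cref{lem:svStationary}), so any sparsifier that fixes degrees only up to a small error would break \Cref{lem:summability} when we combine intra- and inter-cluster contributions. This is precisely why we need \Cref{lem:unweighted_sparsify_once} to be \emph{exactly} degree- and bipartition-preserving, and why we must treat inter-cluster edges as a separate (unmodified) summand rather than folding them into the cluster sparsifiers; once that is arranged the rest of the proof is bookkeeping in the style of the $\Phi^{-4}$ parameter balance above.
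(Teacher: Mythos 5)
Your proof takes essentially the same route as the paper's: bucket edges by weight, expander-decompose each bucket, keep the cross-cluster edges exactly, run \algUnweightedOnce within each expander piece, upgrade to SV approximation via \Cref{lem:diag_to_sv}, and combine with \Cref{lem:summability}. The only differences are minor constant bookkeeping (you set $\delta=\eps\Phi^{2}/2$ to absorb the factor of 2 in \Cref{lem:diag_to_sv}, and you compute the edge count slightly more tightly), which if anything corrects a small constant slip in the paper's version.
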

\begin{algorithm}
\caption{\algSparsifyOnce$(\eps,b,G=(V,E))$}
    Let $\delta = \eps^{-1}\Phi^{2}$.\\
    Initialize $H=(V,\emptyset)$.\\
    \For{$i =0,\ldots,b$}{
        Initialize (unweighted) $G_i$ as $G$ restricted to edges with weight $2^i$.\\
        Let $V_{i,1},\ldots,V_{i,t},\Eex \defeq$\algExpDecomp$(G_i)$.\\
        Let $H\la H\cup 2^i\cdot \Eex$.\\
        \For{$j \in 1,\ldots,t$}{
            Let $H_{i,j}\defeq$\algUnweightedOnce$(\delta, G_i[V_{i,j}])$\\
            Let $H \la H \cup 2^i \cdot H_{i,j}$.
        }
    }
    \Return{$H$}
\end{algorithm}

\begin{proof}
    Let $G_0,\ldots,G_b$ be the unweighted bipartite graphs where $G_i$ contains an edge $(u,v)$ if there is an equivalent edge with weight $2^i$. Thus, observe that 
    \[\sum_{i=0}^b2^i\cdot  (\mD_{G_i}-\mA_{G_i}) = \mD_{G}-\mA_{G}
    \]
    and it suffices to sparsify unweighted bipartite graphs. Furthermore let $m_i\defeq |E(G_i)|$ and observe $\sum_im_i=m$.
    
    Next, for every $i$ observe that \algExpDecomp~ computes a decomposition $V_{i,1},\ldots,V_{i,t}$ such that letting $G_{i,j}\defeq G_i[V_{i,j}]$ we have $\lambda(G_{i,j})\leq 1-\Phi$, and at most $m_i/16$ edges do not lie in one of these subgraphs. Let $n_{i,j}\defeq |G_{i,j}|$ and observe $\sum_{j}n_{i,j}\leq n$ and let $m_{i,j}\defeq|E(G_{i,j})|$ and observe $\sum_jm_{i,j}\leq m_i$. By \Cref{lem:unweighted_sparsify_once}, we have that for every $i,j$, $\mA_{H_{i,j}}$ is a degree and bipartition-preserving $\delta$-approximation of $\mA_{G_{i,j}}$ with respect to $\mD_{G_{i,j}}$ with high probability, and thus by \Cref{lem:diag_to_sv}
    \[\mA_{H_{i,j}} \svgraph{\delta\cdot \Phi^{-2}} \mA_{G_{i,j}}.
    \]
    Thus by \Cref{lem:summability} we have
    \[\mA_H \svgraph{\delta\cdot \Phi^{-2}} \mA_G
    \]
    and so by our choice of $\delta = \eps\Phi^{2}$ we obtain the desired approximation.
    
    We now analyze the number of edges. 
    The number of edges not placed into a subgraph $G_{i,j}$ is at most
    \[\sum_{i=1}^b\frac{m_i}{16}=\frac{m}{16}.
    \]
    By \Cref{lem:unweighted_sparsify_once} the number of edges in the sparsified graphs is at most
    \begin{align*}
        \sum_{i=1}^b\sum_{j=1}^{t}\max\{cn_{i,j}\delta^{-2}\log n,\frac{7}{8}m_{i,j}\} &\leq \sum_{i=1}^b\sum_{j=1}^{t}\left(c n_{i,j}\delta^{-2}\log n+\frac{7}{8}m_{i,j}\right)\\
        &= O\left(bn\eps^{-2}\Phi^{-4}\log n\right)+\frac{7}{8}\sum_{i=1}^b\sum_{j=1}^{t}m_{i,j}\\
        &= O\left(bn\eps^{-2}\Phi^{-4}\log n\right)+\frac{7}{8}m
    \end{align*}
    and hence the total number of edges is bounded as desired.
    Finally, we analyze the time. We can compute every $G_i$ with a single pass through the edge list, so the total work is bounded by $\sum_i\sum_j O(m_{i,j}+n_{i,j})+\sum_{i=1}^b\TalgExpDecomp{m_i} = O(m+bn+\TalgExpDecomp{m})$ as claimed.
\end{proof}

\subsection{Nearly Linear-Sized Sparsifers}
We now apply our routine for sparsifying a constant fraction of edges in the natural recursive way.
\begin{theorem}\label{thm:SVsparse}
    Given $\eps \in (0,1)$ and an undirected bipartite graph $G$ with $m$ edges with integer edge weights in $[1,U]$, \algSparsify returns a graph $H$ with 
    \[O\left(\log(nU)\cdot n\eps^{-2}\Phi^{-4}\log^{3}n\right)
    \]
    edges where $\Phi=\PhialgExpDecomp$, such that with high probability  $\mA_H\svgraph{\eps} \mA_G.$
    Moreover, \algSparsify runs in time $O(m+nb+\TalgExpDecomp{m\log(U)})$.
\end{theorem}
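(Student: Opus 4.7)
The plan is to repeatedly invoke \algSparsifyOnce (\Cref{lem:sparsifyOnce}), which drives the edge count down by a constant factor per call, while slightly increasing both the bucket size and the approximation error. Throughout, the algorithm \algSparsify should be understood as: set parameters $k$ and $\eps'$ as below, let $G_0 = G$ with bucket size $b_0 = \lceil \log_2 U \rceil$, and for $i = 1, \dots, k$ let $G_i = \algSparsifyOnce(\eps', b_{i-1}, G_{i-1})$, outputting $H = G_k$.

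\textbf{Parameter setup.} Choose $k = \lceil \log_{16/15} m \rceil = O(\log n)$ so that the $(15/16)^k$ decay factor drives $m_0 = m$ below $1$. Set $\eps' = \eps/(4k)$ so that by \Cref{lem:triangle}, the accumulated error satisfies $\mA_{G_k} \svgraph{\eps} \mA_{G_0}$. Note that after $i$ calls the bucket size is $b_i = b_0 + i \leq \lceil \log_2 U \rceil + k = O(\log(nU))$.

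\textbf{Correctness.} By \Cref{lem:sparsifyOnce}, each call succeeds with high probability, giving $\mA_{G_i} \svgraph{\eps'} \mA_{G_{i-1}}$. Union-bounding across the $k = O(\log n)$ calls (each succeeding w.h.p.) and chaining via the triangle inequality \Cref{lem:triangle} yields $\mA_H = \mA_{G_k} \svgraph{\eps} \mA_G$ with high probability.

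\textbf{Sparsity.} Let $m_i \defeq |E(G_i)|$. By \Cref{lem:sparsifyOnce},
\[
m_i \leq S_i + \tfrac{15}{16} m_{i-1}, \qquad S_i = O\!\left(b_i \cdot n (\eps')^{-2} \Phi^{-4} \log n\right).
\]
Since $b_i = O(\log(nU))$ and $(\eps')^{-2} = O(k^2 \eps^{-2}) = O(\log^2 n \cdot \eps^{-2})$, we have $S_i \leq S := O(n \log(nU) \log^3 n \cdot \eps^{-2} \Phi^{-4})$ uniformly in $i$. Unrolling the recursion,
\[
m_k \;\leq\; \left(\tfrac{15}{16}\right)^{\!k} m \;+\; S \sum_{i=0}^{k-1}\left(\tfrac{15}{16}\right)^{\!i} \;\leq\; 1 + 16S \;=\; O\!\left(n \log(nU) \log^3 n \cdot \eps^{-2} \Phi^{-4}\right),
\]
which matches the claimed bound.

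\textbf{Runtime.} Each call to \algSparsifyOnce costs $O(m_{i-1} + n b_{i-1} + \TalgExpDecomp{m_{i-1} \log U})$ by \Cref{lem:sparsifyOnce}. Since $m_i$ is geometrically decreasing (with base $15/16$ once it exceeds $S$) and $\TalgExpDecomp{\cdot}$ is at worst quasilinear, the sum over iterations is geometric and dominated by the first call plus the $k$ cumulative $nb_i$ terms, giving the stated bound $O(m + nb + \TalgExpDecomp{m \log U})$ with $b = O(\log(nU) \cdot \log n)$ absorbed into the $nb$ term.

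\textbf{Main obstacle.} The only delicate point is the simultaneous bookkeeping of the three $\log$ factors that combine to produce the $\log^3 n$ in the sparsity bound: one factor of $\log n$ comes directly from \algSparsifyOnce, and two more come from the $(\eps')^{-2} = O(k^2 \eps^{-2})$ blowup forced on us by applying the triangle inequality across $k = O(\log n)$ iterations. The key observation that prevents a further factor of $k$ in the sparsity is that the per-iteration ``fresh'' sparsification cost $S_i$ is itself sparsified at every subsequent iteration, so the total sparsity is given by a geometric series in the $15/16$ shrinkage rather than a sum of $k$ identical terms.
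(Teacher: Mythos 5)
Your overall plan — repeatedly call \algSparsifyOnce, halving the edge count geometrically while tracking the bucket size $b_i = b_0 + i$ and budgeting $\eps' = \Theta(\eps/\log n)$ per iteration via \Cref{lem:triangle} — is the paper's approach, and your accounting of the three $\log$ factors in the sparsity bound (one intrinsic, two from $(\eps')^{-2}$) is correct. But you have skipped the preprocessing phase, and that omission is a genuine gap, not a cosmetic one.

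The precondition of \Cref{lem:sparsifyOnce} is that every edge weight of the input lies in $\{1, 2, 2^2, \ldots, 2^b\}$ — exact powers of two. You take $G_0 = G$, whose weights are arbitrary integers in $[1,U]$. Inside \algSparsifyOnce, the loop buckets edges strictly by which power of two equals their weight (``Initialize $G_i$ as $G$ restricted to edges with weight $2^i$''), so an edge of weight $3$, say, lands in no bucket, is never added to $H$, and is silently dropped. This destroys degree preservation (which \Cref{lem:svStationary} shows is necessary for SV approximation) and the approximation guarantee itself. The paper's \algSparsify therefore first constructs $H_0$ by expanding each weight $w$ into the binary representation $\sum_i \langle w\rangle_i 2^i$, creating up to $b = \lceil \log U\rceil$ parallel edges per original edge — this is where the $\log(nU)$ factor in the sparsity and the $\TalgExpDecomp{m\log U}$ term in the runtime really come from. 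Your proposal already half-anticipates this by writing $\TalgExpDecomp{m_{i-1}\log U}$ in the per-call cost, but that $\log U$ has no source if you start from $G$ with only $m$ edges; it only appears because $H_0$ has $mb$ edges. Inserting the binary-decomposition step (and rerunning your geometric-series analysis from $m_0 = mb$ rather than $m_0 = m$) closes the gap and recovers the paper's argument.
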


\begin{algorithm}
\caption{\algSparsify$(\eps,G=(V,E))$}
    Initialize $H_0=(V,\emptyset)$.\\
    Let $b\defeq\lceil\log(U)\rceil$.\\
    \For{$i=0,\ldots, b$}{
        \For{$(u,v) \in E$}{
            Let $\langle w_G(u,v) \rangle$ be the binary expansion of $w_G(u,v)$.\\
            \lIf{$\langle w_G(u,v) \rangle_i=1$}{Add $(u,v)$ to $H_0$ with weight $2^i$.}
        }
    }
    \For{$j = 1,\ldots, t\defeq O(\log n)$}{
        Let $H_j\la $\algSparsifyOnce$(\eps/2\log n,b+j,H_{j-1})$.
    }
    \Return{$H_t$}
\end{algorithm}

\begin{proof}
    First observe that in Phase 1 of the algorithm, we obtain $H_0$ such that $H_0$ has at most $mb$ edges (recalling $b=\lceil \log(U)\rceil$) and each edge has weight $\{1,2,2^2,\ldots,2^b\}$, and $\mA_{H_0}=\mA_G$. Then in each phase we obtain $H_j$ such that
    \[\mA_{H_j}\svgraph{\eps/2\log n}\mA_{H_{j-1}}\]
    with high probability by \Cref{lem:sparsifyOnce}.
    Then applying \Cref{lem:triangle} we obtain
    \[\mA_{H_t}\svgraph{\eps}\mA_{G}
    \]
    with high probability as desired.
    We next analyze the sparsity. Observe that the final edge weights are bounded by $2^{b+t}=\poly(nU)$ and hence the final sparsity is
    \[O\left((b+t)n\eps^{-2}\Phi^{-4}\log^{3} n\right) = O\left(\log(nU)n\eps^{-2}\Phi^{-4}\log^{3}n\right).
    \]
    Finally, we analyze the time complexity. Our initial call has $mb$ edges and thus runs in time $O(mb+bn+\TalgExpDecomp{mb})$  by \Cref{lem:sparsifyOnce}, and all subsequent calls are to sparsify graphs with a constant factor fewer edges, so the time is dominated by the initial term.
\end{proof}
By plugging in the algorithmic (\Cref{alg:expDecompFast}) and existential (\Cref{alg:expDecompFast}) results for expander partitions, we obtain our sparsifier routines:
\begin{corollary}\label{cor:sparseFast}
    Given $\eps \in (0,1)$ and an undirected bipartite graph $G$ with $m$ edges with integer edge weights in $[1,U]$, there is a randomized algorithm that returns a graph $H$ with 
    \[O\left(\log(nU)\cdot n\eps^{-2}\log^{19}n\right)
    \]
    edges such that with high probability $\mA_H\svgraph{\eps} \mA_G.$
    Moreover, the algorithm runs in time $O(\log(U)m\log^7 m)$.
\end{corollary}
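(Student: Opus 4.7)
The plan is to apply \Cref{thm:SVsparse} directly, instantiating the generic expander-decomposition subroutine \algExpDecomp with the nearly-linear time algorithm of \Cref{alg:expDecompFast}. That result gives $\PhialgExpDecomp = \Phi = \log^{-4} m$ and runtime $\TalgExpDecomp{m'} = O(m' \log^7 m')$ (with high probability).

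For the sparsity bound, I would substitute $\Phi^{-4} = \log^{16} m$ into the expression $O(\log(nU) \cdot n\eps^{-2} \Phi^{-4}\log^3 n)$ from \Cref{thm:SVsparse}. Using $\log m = O(\log n)$ (which we may assume without loss of generality since otherwise the input already has $m = \Omega(n^2)$ and we can collapse the sparsity bound accordingly), this yields $O(\log(nU)\cdot n\eps^{-2}\log^{19} n)$ edges, matching the statement. The high-probability guarantee on the SV-approximation quality $\mA_H \svgraph{\eps} \mA_G$ follows directly from \Cref{thm:SVsparse} combined with a union bound over the polylog many invocations of \Cref{alg:expDecompFast}, each of which succeeds with high probability.

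For the running time, \Cref{thm:SVsparse} gives $O(m + nb + \TalgExpDecomp{m \log U})$, where $b = \lceil \log U\rceil$. Plugging in $\TalgExpDecomp{m\log U} = O(m\log U \cdot \log^7(m\log U))$ and absorbing $\log(\log U) = O(\log m)$ inside the polylog factor, the dominant term becomes $O(m \log U \cdot \log^7 m) = O(\log(U)\cdot m\log^7 m)$, as claimed.

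There is essentially no obstacle here: the entire content of the corollary is already encapsulated in \Cref{thm:SVsparse}. The only mild care needed is to verify the arithmetic on the polylogarithmic factors ($\Phi^{-4}\cdot \log^3 n = \log^{16}m \cdot \log^3 n = O(\log^{19}n)$) and to confirm that we can treat $\log(m\log U)$ as $O(\log m)$ for the purposes of the runtime bound. Both are routine.
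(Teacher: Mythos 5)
Your proposal is correct and follows exactly the route the paper intends: the paper itself gives no separate proof of \Cref{cor:sparseFast} beyond the one-line remark that it follows by ``plugging in'' \Cref{alg:expDecompFast} into \Cref{thm:SVsparse}. Your bookkeeping of the polylog factors (substituting $\Phi^{-4}=\log^{16}m$, using $\log m = O(\log n)$, and absorbing $\log\log U$ into $\log m$ in the runtime) and your note about union-bounding over the polylogarithmically many high-probability subroutine calls are precisely the checks one needs, and they all go through.
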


\begin{corollary}\label{cor:sparseExis}
    Given $\eps \in (0,1)$ and an undirected bipartite graph $G$ with $m$ edges with integer edge weights in $[1,U]$, there is a graph $H$ with 
    \[O\left(\log(nU)\cdot n\eps^{-2}\log^{11}n\right)
    \]
    edges such that $\mA_H\svgraph{\eps} \mA_G.$
\end{corollary}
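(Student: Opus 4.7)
The plan is to deduce Corollary~\ref{cor:sparseExis} as an immediate consequence of Theorem~\ref{thm:SVsparse} by instantiating the expander decomposition subroutine \algExpDecomp~ with the existential guarantee of Theorem~\ref{alg:expDecompExis} rather than the nearly-linear-time randomized algorithm of Theorem~\ref{alg:expDecompFast}. Since we only need to argue existence of the sparsifier $H$, we are free to plug in the better parameter $\Phi=\PhialgExpDecomp=\log^{-2}m$ at every recursive call, at the cost of giving up runtime guarantees.

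Concretely, I would first observe that the entire pipeline \algSparsify $\to$ \algSparsifyOnce $\to$ \algUnweightedOnce only invokes \algExpDecomp~ as a black box to produce a $(\Phi,1/16)$-expander decomposition; the subsequent cycle-decomposition and random orientation steps are defined independently of how the decomposition was obtained. Hence the bound from Theorem~\ref{thm:SVsparse},
\[
O\!\left(\log(nU)\cdot n\eps^{-2}\Phi^{-4}\log^{3}n\right),
\]
can be applied with $\Phi=\log^{-2}m$, yielding sparsity $O(\log(nU)\cdot n\eps^{-2}\log^{11}n)$ as claimed.

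The only remaining issue is to justify the existence (not efficient construction) of a good outcome for the random cycle-orientation step inside \algUnweightedOnce, which in Theorem~\ref{thm:SVsparse} is only shown to succeed with high probability. For this I would invoke the probabilistic method: Lemma~\ref{lem:unweighted_sparsify_once} shows each call succeeds with probability $1 - 1/\poly(n)$, and the total number of calls across all phases and all buckets is at most $\poly(n,\log U)$, so a union bound guarantees a choice of orientations for which every concentration event (and hence every invocation of Lemma~\ref{lem:diag_to_sv} in the proof of Lemma~\ref{lem:sparsifyOnce}) succeeds simultaneously. For this realization the final triangle-inequality chain in the proof of Theorem~\ref{thm:SVsparse} yields $\mA_H\svgraph{\eps}\mA_G$ deterministically, proving existence.

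I do not anticipate any real obstacle here; the corollary is essentially a bookkeeping exercise. The main subtlety to flag is simply noting that Theorem~\ref{thm:SVsparse} is written in terms of an abstract $\Phi=\PhialgExpDecomp$, so one must verify that its proof never uses efficiency of the decomposition procedure, which is immediate by inspection, and that the probabilistic guarantees of Lemma~\ref{lem:unweighted_sparsify_once} can be converted to an existential statement via a union bound over the $\poly(n,\log U)$ invocations.
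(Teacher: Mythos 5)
Your proposal is correct and takes essentially the same route as the paper: plug $\Phi=\log^{-2}m$ from the existential expander-decomposition result (Theorem~\ref{alg:expDecompExis}) into the sparsity bound $O(\log(nU)\cdot n\eps^{-2}\Phi^{-4}\log^3 n)$ of Theorem~\ref{thm:SVsparse}, and use $\log m = O(\log n)$ to get $\log^{11}n$. Your extra care in spelling out the probabilistic-method/union-bound step to convert the high-probability guarantee of \algUnweightedOnce~into an existential statement is a reasonable (and correct) piece of bookkeeping that the paper leaves implicit.
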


\subsection{Derandomized Square Sparsification}\label{sec:derandomized-square}

Here we show that the derandomized square of Rozenman and Vadhan~\cite{RozenmanVa05} gives an SV sparsification of the square graph. We first explain the derandomized square. Given a $d$-regular graph $G$, the squaring operation can be though of as placing a $d\times d$ complete graph between the in- and out-neighbors of every vertex $v$. The derandomized square replaces this complete graph with an expander. As an expander approximates the complete graph, this sparsification operation produces an approximate square. 

We first formally define the derandomized square. 
\begin{definition}
    Given a $d$-regular graph $G=([n],E)$ with neighbor function $\Gamma_G:[n]\times[d]\ra [n]$ and $c$-regular graph $H=([d],E')$ with neighbor function $\Gamma_H:[d]\times [c]\ra [d]$, the \emph{derandomized square of $G$ with respect to $H$}, denoted $G\ds H$, is the $dc$-regular graph on $n$ vertices with neighbor function
    \[
        \Gamma_{G\ds H}(v,(i,j)) = \Gamma_G(\Gamma_G(v,i),\Gamma_H(i,j)).
    \]
\end{definition}    

Prior analyses of the derandomized square have shown that it improves expansion approximately as well as the true square~\cite{RozenmanVa05} and that it produces unit-circle square sparsifiers~\cite{AKMPSV20}. We strengthen the latter conclusion to SV square sparsifiers.
\newcommand{\tAs}{\widetilde{\mA^2}}
\begin{lemma}\label{lem:drsq_sv}
    Let $G=(V,E)$ be a $d$-regular directed multigraph with random walk matrix $\mW$. Let $H$ be a $c$-regular expander on $d$ vertices with $\lambda(H)\leq 1-\epsilon$ and let $\tW$ be the adjacency matrix of $G \ds H$. Then
    \[
   \tW\svn{\eps}\mW^2.
    \]
\end{lemma}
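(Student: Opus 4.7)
The plan is to express both $\mW^2$ and $\tW$ in a common sandwiched form so that their difference is controlled entirely by how well $H$ approximates the complete graph in the ``middle.''  Let $\mathbb{M} \in \R^{n \times nd}$ be the label-projection map defined by $[\mathbb{M}]_{v,(u,i)} = \mathbb{1}[v=u]/\sqrt{d}$, which satisfies $\mathbb{M}\mathbb{M}^\rTrans = \mI_n$ and $\mathbb{M}^\rTrans \mathbb{M} = \mI_n \otimes \mJ_d$, where $\mJ_d \in \R^{d\times d}$ is the random-walk matrix of the complete graph with self-loops on $[d]$. Let $\mathbb{X} \in \R^{nd \times nd}$ be the edge-rotation permutation that encodes the neighbor function $\Gamma_G$, so that $\mW = \mathbb{M}\mathbb{X}\mathbb{M}^\rTrans$.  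Expanding $\mW^2 = \mathbb{M}\mathbb{X}(\mathbb{M}^\rTrans \mathbb{M})\mathbb{X}\mathbb{M}^\rTrans$ and matching indices with the definition of $\Gamma_{G\ds H}$ yields the twin decompositions
\[
\mW^2 \;=\; \mathbb{M}\mathbb{X}(\mI_n \otimes \mJ_d)\mathbb{X}\mathbb{M}^\rTrans,
\qquad
\tW \;=\; \mathbb{M}\mathbb{X}(\mI_n \otimes \mW_H)\mathbb{X}\mathbb{M}^\rTrans,
\]
where $\mW_H$ is the random-walk matrix of $H$.  Thus the difference $\tW - \mW^2$ factors through the difference $\mI_n \otimes (\mW_H - \mJ_d)$.

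Next, I invoke \Cref{lem:expander-uapprox}: the hypothesis $\lambda(H) \leq 1-\eps$ gives $\mW_H \svn{O(\eps)} \mJ_d$ on $\R^d$.  I then lift this to the block-diagonal statement
\[
\mI_n \otimes \mW_H \;\svn{O(\eps)}\; \mI_n \otimes \mJ_d.
\]
This follows from the operator-norm characterization of SV approximation (item~3 of \Cref{def:leftrightapprox}) together with the identity $\mI_{nd} - (\mI_n \otimes \mM)(\mI_n \otimes \mM)^* = \mI_n \otimes (\mI_d - \mM\mM^*)$ for each $\mM \in \{\mJ_d, \mW_H\}$: the tensor structure is preserved by the error matrices, and the operator norm in the SV condition is stable under tensoring with $\mI_n$. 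Alternatively, one can apply the arbitrary-lifting lemma (\Cref{lem:svarblifts}) to each of the $n$ diagonal blocks and sum via \Cref{lem:summability}.

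Finally, I conclude by applying the rotation-invariance clause of \Cref{lem:svregular}: since $\mathbb{X}$ is a permutation and $\mathbb{M}\mathbb{M}^\rTrans = \mI_n$, both $\mathbb{M}\mathbb{X}$ and $\mathbb{X}\mathbb{M}^\rTrans$ have spectral norm at most $1$.  Sandwiching the SV approximation $\mI_n \otimes \mW_H \svn{O(\eps)} \mI_n \otimes \mJ_d$ on the left by $\mathbb{M}\mathbb{X}$ and on the right by $\mathbb{X}\mathbb{M}^\rTrans$ preserves SV approximation, giving the desired $\tW \svn{O(\eps)} \mW^2$ (constants absorbed appropriately).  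The main obstacle is verifying the first-step decomposition carefully: for $\mathbb{X}$ to be a bona fide permutation one needs a consistent edge labeling of $G$, which is the standard convention in the Rozenman--Vadhan framework and entails no loss of generality; once the decomposition is in hand, everything else is a clean application of the general properties of SV approximation.
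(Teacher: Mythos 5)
Your proposal is correct and follows essentially the same route as the paper's proof: you use the identical decomposition of $\mW^2$ and $\tW$ through the edge-labeled lift (your $\mathbb{M},\mathbb{X}$ are the paper's $\mQ^\rTrans,\mER$), invoke \Cref{lem:expander-uapprox} plus tensoring for $\mI_n\otimes\mW_H \svn{O(\eps)} \mI_n\otimes\mJ_d$ (the paper cites \Cref{lem:svarblifts}, your alternative via item~3 of \Cref{def:leftrightapprox} is equally valid), and conclude via the contraction-sandwich clause of \Cref{lem:svregular}. The only differences are cosmetic: you record the intermediate factorization $\mW=\mathbb{M}\mathbb{X}\mathbb{M}^\rTrans$ explicitly, and you honestly track the constant as $O(\eps)$ where the paper's invocation of \Cref{lem:expander-uapprox} actually yields $\mW_H\svn{2\eps}\mJ_d$ rather than $\svn{\eps}$.
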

\begin{proof}
    Let $\mW_H$ be the random walk matrix corresponding to $H$, and $\mJ\in \R^{d\times d}$ be the random walk matrix of a complete graph with self loops. By \Cref{lem:expander-uapprox,lem:svarblifts} we have
    \begin{equation}\label{eq:krock-approx}
        \mI_n \otimes \mW_H \svgraph{\eps} \mI_n \otimes \mJ.
    \end{equation}
    Let $\mER$ be an $nd\times nd$ edge-rotation permutation matrix such that $\mER((v,j),(u,i))=1$ if the $i$th edge leaving $u$ is the $j$th edge entering $v$. It is straight-forward to verify that
    $$
    \mW^2 = \mq^\top \mER (\mI \otimes \mJ) \mER \mq
    $$
    and
    $$
    \tW = \mq^\top \mER (\mI \otimes \mW_H) \mER \mq
    $$
    where $\mQ \defeq \mI \otimes \frac{1_{d}}{\sqrt{d}}$ is an $nd \times n$ matrix. Now note that $\|\mER\| \leq 1$ and $\|\mQ\| \leq 1$ and thus $\|\mER \mQ\|, \|\mQ^\top \mER\| \leq 1$. 
    Thus by \Cref{lem:svregular}, and \Cref{eq:krock-approx} we have
    \[
    \widetilde{\mw} \uapprox_\epsilon \mw^2. \qedhere
    \]
\end{proof}

\newcommand{\algSquare}{\textsc{SparsifySquare}}
\newcommand{\algPower}{\textsc{SparsifyPower}}
\newcommand{\algPowerTwo}{\textsc{SparsifyPowerTwo}}
\subsection{Sparsification of Eulerian Walks}\label{sec:rwsparsify}
We give the first nearly-linear time algorithm for sparsifying random walk polynomials of Eulerian digraphs with second singular value bounded away from $1$. In particular, given an Eulerian digraph $\mW \in \R^{n\times n}_{\geq 0}$ with second singular value bounded away from $1$, we compute a sparse graph such that the walk matrix $\tW$ approximates $\mW^\ell$
with respect to SV approximation. For this result, we crucially require that SV approximation is preserved under products, a feature that is not obtained by prior notions of approximation.
\begin{definition}
    Given a strongly connected Eulerian digraph $G$, we let $\sigma(G)=\sigma_2(\mD^{-1/2}_G\mA_G\mD^{-1/2})$ and refer to this as the \emph{second normalized singular value} of $G$.
\end{definition}

\begin{restatable}{theorem}{rwSparsifier}\label{thm:rwSparsifier}
    Given $\ell \in \N$ and $\eps>0$ and an Eulerian digraph $G=(\mA,\mD)$ where $\sigma(G)\leq 1-1/\tau$ with $m$ edges with integer edge weights in $[1,U]$, \algPower~returns in time $\tO(m\log(U)+n\eps^{-2}\polylog(\tau U\ell))$ an Eulerian digraph $H=(\mA_H,\mD)$ with $\tO(n\eps^{-2}\polylog(\tau U\ell))$ edges with integer edge weights in $[1,\poly(n,\tau,U,\ell,1/\eps)]$ such that 
    \[\mD^{+/2}\mA_H\mD^{+/2}\svn{\eps} (\mD^{+/2}\mA\mD^{+/2})^\ell.\]
\end{restatable}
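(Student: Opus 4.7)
The plan is to compute SV sparsifiers of the dyadic powers $\mW^{2^i}$ for $i = 0, 1, \ldots, \lceil \log_2 \ell \rceil$ via repeated squaring-and-sparsifying, and then combine them using the binary expansion of $\ell$ through the product-preservation property of SV approximation (\Cref{lem:svproducts}). Writing $\mW \defeq \mD^{+/2} \mA \mD^{+/2}$, the algorithm maintains at level $i$ a sparse graph with normalized adjacency matrix $\tW_i$ satisfying $\tW_i \svn{\eps_i} \mW^{2^i}$, where $\eps_i$ grows only additively across levels.

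First I would implement a single squaring step. Given $\tW_i$, compute $\tW_{i+1}$ approximating $\tW_i^2$ in two phases. (a) \emph{Local sparsification}: around each vertex the contribution to $\tW_i^2$ is (up to scaling) the outer product of its incoming and outgoing edge distributions, which is a bipartite complete graph that we replace either by an expander (the derandomized square, \Cref{lem:drsq_sv}) or by a bipartite SV-sparsifier via \Cref{thm:SVsparse}. (b) \emph{Global sparsification}: apply \algSparsify to the resulting graph using the directed-to-undirected reduction encoded in \Cref{SVequiv}, reducing the edge count to $\tO(n \eps_i^{-2})$. Preservation of SV approximation under squaring (\Cref{thm:products-intro} with $\mN_1 = \mN_2$) ensures the error after $k$ squarings is $O(\eps_0)$ rather than $2^k \eps_0$.

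The second ingredient is weight truncation. Each squaring can produce entries as small as $(1/\poly(nU))^{2^i}$, which would require exponential precision. To keep edge weights integer in $[1, \poly(n, \tau, U, \ell, 1/\eps)]$, I would truncate entries below a threshold $\delta = 1/\poly(n, \tau, U, \ell, 1/\eps)$ at every intermediate stage and restore the degree matrix $\mD$ exactly by adding small patching edges; the latter is required because SV approximation demands exact degree preservation by \Cref{lem:svStationary}. The assumption $\sigma(G) \leq 1 - 1/\tau$ enters precisely here: it bounds the nontrivial singular values of $\mW^{2^i}$ by $(1 - 1/\tau)^{2^i}$ and controls the SV error matrices $\mI - \mW \mW^*$ on the relevant subspace, so that a $\delta$-truncation perturbs them by at most $\delta \cdot \poly(\tau, n)$, which is negligible.

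Finally, writing $\ell = \sum_{j \in S} 2^j$ in binary, I would compute $\prod_{j \in S} \tW_j$ by iterated multiplication interleaved with local-then-global sparsification. By \Cref{lem:svproducts}, the resulting matrix is an $(\eps + O(\eps^2))$-SV approximation of $\prod_{j \in S} \mW^{2^j} = \mW^\ell$; crucially this does \emph{not} accumulate a $\log \ell$ factor in the error, which is the key reason SV approximation (rather than UC or standard approximation) is needed. Setting each per-stage $\eps_i = \Theta(\eps / \log \ell)$ and applying \Cref{lem:triangle} across the $O(\log \ell)$ squarings and $O(\log \ell)$ products yields the target error $\eps$. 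The main obstacle is the coupled bookkeeping of three constraints: keeping $w_{\max}/w_{\min}$ polynomially bounded for \algSparsify to apply, maintaining $\mD$ exactly at every intermediate stage (a rigidity forced by \Cref{lem:svStationary} that is not present for standard approximation), and propagating SV error through the tower of squarings while charging only the spectral gap $1/\tau$ of the original graph rather than the much smaller $1 - \sigma(G^\ell)$. The product-preservation property makes the approximation quality independent of $\ell$, but getting the weight-truncation-and-patching analysis to thread through every intermediate multiplication is the delicate part.
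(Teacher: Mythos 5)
Your proposal follows essentially the same route as the paper's proof: iterated square-and-sparsify on the dyadic powers, degree-preserving weight truncation at each level, and reassembly via the binary expansion of $\ell$ using \Cref{lem:svproducts}, with the $1/\tau$ gap assumption controlling the loss in \Cref{lem:diag_to_sv} when converting the truncation/patching error back into SV error. One subtlety worth flagging, which the paper addresses in \Cref{clm:goodCond}: the invariant that must be maintained inductively is $\sigma(G_i)\leq 1-1/O(\tau)$ on the \emph{sparsified} iterates $G_i$, since SV-sparsification and patching can only be guaranteed to keep the gap $\Omega(1/\tau)$, not inherit the much larger gap $1-(1-1/\tau)^{2^i}$ of the exact power $\mW^{2^i}$; relatedly, your parenthetical that $1-\sigma(G^\ell)$ is \say{much smaller} than $1/\tau$ has the inequality backwards (powering only increases the normalized singular-value gap).
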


To prove these, we first recall the square sparsification algorithm implicit in the work of Cohen et al.~\cite{CKPPRSV17}.
\begin{theorem}[Implicit in~\cite{CKPPRSV17}]\label{thm:sqSparse}
    Given an Eulerian
    weighted digraph $G$ with Laplacian $\mD-\mA_G$, let $G^2$ be the graph with Laplacian $\mD-\mA_G \mD^{+} \mA_G$. Let $\mU$ denote the Laplacian of the undirected graph corresponding to $(\mA_{G^2}+\mA_{G^2}^\rTrans)/2$. Then \algSquare~of \cite{CKPPRSV17} produces in time $\tO(m)$, a weighted graph $H$ with the following properties:
    \begin{enumerate}
        \item $H$ has the same weighted in- and out- degree sequences as those of $G^2$.
        \item $H$ has $\tO(m/\eps^{2})$ nonzero edges.
        \item $\mA_H$ is an $\epsilon/2$-approximation of $\mA_G \mD^{-1} \mA_G$ with respect to $\mS_{\mD-\mA_G \mD^{+} \mA_G}$.
    \end{enumerate}
\end{theorem}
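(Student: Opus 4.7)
The plan is to build the sparsifier for $\mN^\ell$ by repeated squaring. First I would construct a subroutine \algSquare~that, given a sparse Eulerian digraph $(\mA_i,\mD)$ with normalized walk $\mN_i=\mD^{+/2}\mA_i\mD^{+/2}$, returns a sparse Eulerian digraph $(\mA_{i+1},\mD)$ satisfying $\mD^{+/2}\mA_{i+1}\mD^{+/2}\svn{\eps}\mN_i^2$. Iterating $k=\lceil\log_2\ell\rceil$ times produces an SV approximator of $\mN^{2^k}$; writing $\ell$ in binary and multiplying the corresponding sparse approximators via \Cref{lem:svproducts}, with a re-sparsification after each multiplication, yields the final approximator of $\mN^\ell$. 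This backbone relies on the fact (\Cref{lem:svproducts}) that SV approximation composes under arbitrary products without degradation, and that it preserves degrees exactly (\Cref{lem:svStationary}), so Eulerianness is preserved throughout.

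For the SV square sparsifier, I would adapt the local-then-global pipeline of \cite{CKPPRSV17,CKKPPRS18} to the SV setting using the reduction to undirected bipartite sparsification. The squared matrix $\mA_i\mD^+\mA_i$ decomposes as a sum over internal vertices $v$ of weighted bipartite graphs $B_v$ between the in- and out-neighbors of $v$; for each $v$ I would take the undirected bipartite lift of $B_v$ and apply \Cref{thm:SVsparse}, then read off the local SV approximator of $B_v$ via \Cref{SVequiv}. Summing these local SV approximators yields an SV approximator of $\mA_i\mD^+\mA_i$ by preservation under sums (\Cref{lem:summability}) together with preservation under arbitrary block embedding (\Cref{lem:svarblifts}). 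One more application of \Cref{thm:SVsparse} to the undirected bipartite lift of this intermediate graph reduces the edge count to $\tO(n\eps^{-2}\polylog)$, at the cost of an additional $\eps$ SV error absorbed via the approximate triangle inequality (\Cref{lem:triangle}).

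To control accumulated error across the $O(\log\ell)$ iterations and binary-decomposition products, I would set the per-step SV tolerance to $\eps/\Theta(\log\ell)$. If $\tN_i\svn{\delta_i}\mN^{2^i}$, then squaring gives $\tN_i^2\svn{\delta_i+O(\delta_i^2)}\mN^{2^{i+1}}$ by \Cref{lem:svproducts}, and the square sparsifier then adds $\eps/\log\ell$ by \Cref{lem:triangle}. Binary decomposition incurs at most $\log\ell$ further product operations, again controlled by \Cref{lem:svproducts} and \Cref{lem:triangle}, for a total SV error of $O(\eps)$ which is rescaled to $\eps$.

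The main obstacle is maintaining bounded integer edge weights across the iteration. Each squaring produces rationals (divisions by degree entries), so after every squaring I would rescale globally and round, dropping edges whose true weight in $\mN_i^2$ falls below the rounding threshold. The role of the spectral gap assumption $\sigma(G)\leq 1-1/\tau$ is precisely to guarantee that throughout the iteration the SV error matrices $\mI-\mN_i\mN_i^*$ and $\mI-\mN_i^*\mN_i$ have smallest nonzero eigenvalue $\Omega(1/\poly(\tau,\ell))$; this is what certifies that rounding errors of magnitude $1/\poly(n,\tau,U,\ell,1/\eps)$ remain within the allotted SV tolerance. This controls edge-weight growth to $\poly(n,\tau,U,\ell,1/\eps)$ and explains the appearance of $\tau$ in the final sparsity and running-time bounds.
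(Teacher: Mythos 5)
You have proved the wrong statement. \Cref{thm:sqSparse} is not about sparsifying $\mN^\ell$: it is a single-step primitive, attributed to \cite{CKPPRSV17}, asserting that their algorithm \algSquare{} returns a graph $H$ that (i) exactly matches the weighted in-/out-degree sequences of $G^2$, (ii) has $\tO(m/\eps^2)$ edges, and (iii) is an $\eps/2$-\emph{standard} approximation of $\mA_G\mD^{-1}\mA_G$ with respect to the error matrix $\mS_{\mD-\mA_G\mD^+\mA_G}$. Nowhere does the statement mention SV approximation, repeated squaring, binary decomposition of $\ell$, or the spectral gap parameter $\tau$. What you have sketched is, roughly, the argument for \Cref{thm:rwSparsifier} (the power sparsifier \algPower), which in the paper \emph{uses} \Cref{thm:sqSparse} as a black box inside \Cref{lem:sparsify_product}. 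Your proposal is therefore circular with respect to the statement at hand: you invoke a square sparsifier as a subroutine while the statement you were asked to prove is precisely the assertion that such a subroutine exists with the stated properties.

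A correct proof of \Cref{thm:sqSparse} would need to recall the structure of \algSquare{} from \cite{CKPPRSV17}: decompose $\mA_G\mD^+\mA_G$ as a sum over vertices $v$ of weighted complete bipartite graphs between $v$'s in- and out-neighborhoods, replace each such bipartite clique with a degree-preserving expander gadget (so property (i) holds and Eulerianness is preserved), bound the standard-approximation error of each gadget with respect to its local error matrix, assemble via the summability of standard approximation to get property (iii), and account for the number of gadget edges to get property (ii). The paper itself labels the theorem ``implicit in \cite{CKPPRSV17}'' and gives no proof; a faithful blind attempt should reconstruct that argument, not the higher-level random-walk pipeline. Note also that replacing the local expander gadgets with the SV sparsifier of \Cref{thm:SVsparse}, as you propose, is a different algorithm from the one referenced in the theorem statement, and proving a stronger SV guarantee for a different algorithm does not establish the stated claim about the \algSquare{} of \cite{CKPPRSV17}.
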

We can use a block embedding and \Cref{thm:sqSparse} to obtain a nearly-linear time algorithm for sparsifying products with respect to SV approximation.
\begin{lemma}\label{lem:sparsify_product}
    Given $\eps>0$ and Eulerian digraphs $G_1,G_2$ each with at most $m$ edges with matching degree matrix $\mD$, where each edge has integer weight in $[1,U]$, \algProduct~ returns an Eulerian digraph $H$ with $\tO(m/\eps^2)$ edges such that with high probability
    \[
        \mA_H \svgraph{\eps} \mA_{G_2}\mD^{+}\mA_{G_1}
    \]
    Moreover, \algProduct~ runs in time $\tO(m/\eps^2)$.
\end{lemma}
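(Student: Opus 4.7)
My plan is to reduce this product-sparsification problem to the square-sparsification primitive via a simple block embedding. Define the $2n$-vertex Eulerian digraph $G$ by
\[ \mA_G \defeq \begin{bmatrix} \mzero & \mA_{G_2} \\ \mA_{G_1} & \mzero \end{bmatrix}, \qquad \mD' \defeq \begin{bmatrix} \mD & \mzero \\ \mzero & \mD \end{bmatrix}. \]
Because $G_1$ and $G_2$ are each Eulerian with shared degree matrix $\mD$, direct row- and column-sum computations give $\mA_G\vones_{2n} = \mA_G^\rTrans\vones_{2n} = \mD'\vones_{2n}$, so $G$ is Eulerian with degree matrix $\mD'$ and has at most $2m$ edges. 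Block multiplication then yields
\[ \mA_G(\mD')^+\mA_G = \begin{bmatrix} \mA_{G_2}\mD^+\mA_{G_1} & \mzero \\ \mzero & \mA_{G_1}\mD^+\mA_{G_2} \end{bmatrix}, \]
so the target product occupies the $(1,1)$ block of the squared graph $G^2$, which is itself block-diagonal.

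The algorithm \algProduct~then invokes the SV square-sparsification routine developed earlier in Section~\ref{sec:rwsparsify} (which combines local bipartite expander replacement around each vertex of $G$ with the global SV sparsifier of \Cref{thm:SVsparse}) on $G$, obtaining $\mA_{H'}$ with $\tO(m/\eps^2)$ edges satisfying $\mA_{H'}\svgraph{\eps}\mA_G(\mD')^+\mA_G$ in time $\tO(m/\eps^2)$. Crucially, every edge placed by this routine corresponds to a two-step walk in $G$: the local step at vertex $v$ puts weight only between in- and out-neighbors of $v$, and the global SV sparsifier only reweights edges already present. Since the in- and out-neighbors of any vertex of $G$ lie in the opposite block, no cross-block edges are ever created, so $\mA_{H'}$ inherits the block-diagonal structure of $G^2$. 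Set $\mA_H$ to be its $(1,1)$ block.

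To verify $\mA_H\svgraph{\eps}\mA_{G_2}\mD^+\mA_{G_1}$, observe that the SV error matrices for $\mA_G(\mD')^+\mA_G$ with respect to $\mD'$ are themselves block-diagonal, with $(1,1)$ blocks equal to the SV error matrices for $\mA_{G_2}\mD^+\mA_{G_1}$ with respect to $\mD$, namely $\mD - (\mA_{G_2}\mD^+\mA_{G_1})\mD^+(\mA_{G_2}\mD^+\mA_{G_1})^*$ and its right-side analog. Plugging test vectors of the form $x = (x_1,\vzero), y = (y_1,\vzero)$ into the SV inequality from \Cref{def:svapprox} applied to $\mA_{H'}$ versus $\mA_G(\mD')^+\mA_G$ reduces both sides exactly to the SV inequality for the $(1,1)$ blocks, giving $\mA_H\svEulerian{\mD}{\eps}\mA_{G_2}\mD^+\mA_{G_1}$. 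Since $\mA_{G_2}\mD^+\mA_{G_1}\vones = \mA_{G_2}\vones = \mD\vones$ and similarly for column sums, and since \Cref{lem:svStationary} combined with the block-diagonality of $\mA_{H'}$ forces $\mA_H\vones = \mA_H^\rTrans\vones = \mD\vones$, the natural in- and out-degree matrices of $\mA_H$ both equal $\mD$, so the $\svEulerian{\mD}{\eps}$ relation coincides with the desired $\svgraph{\eps}$ relation and $H$ is Eulerian.

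The main obstacle is verifying that the SV square-sparsification output respects the block partition, since SV sparsification algorithms do not a priori have a subgraph property. This follows from the \emph{local} nature of the square-sparsification approach of Section~\ref{sec:rwsparsify}: every edge it introduces arises from replacing a complete bipartite subgraph in-neighbors-of-$v$ by out-neighbors-of-$v$ with an expander for some vertex $v$, and for our block $G$ these in- and out-neighbors lie exclusively in the block opposite to $v$, so the resulting edges of $G^2$ are all intra-block. The claimed $\tO(m/\eps^2)$ edge count and runtime then follow directly from the corresponding bounds for the square sparsifier applied to the $2m$-edge graph $G$.
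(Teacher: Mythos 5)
Your general strategy---block-embed $G_1,G_2$, apply a square-sparsification primitive, then read off the relevant block---is the same strategy the paper uses, and your bookkeeping of degrees, Eulerianness, and block structure is careful and correct. The problem is a missing primitive: your argument invokes ``the SV square-sparsification routine,'' but no such primitive exists in the paper (indeed, one would be derived \emph{from} this very lemma). What is available is \Cref{thm:sqSparse}, whose output $\mA_{H_0}$ only satisfies \emph{standard} approximation of the square with respect to its \emph{symmetrization} $\mS_{\mD-\mA_G\mD^{+}\mA_G}$. That guarantee does not upgrade to SV approximation in general, and post-processing with the global SV sparsifier of \Cref{thm:SVsparse} does not help: composing an SV approximation of $\mA_{H_0}$ with $\mA_{H_0}$'s standard approximation of the true square only yields another standard approximation of the square. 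So the chain of guarantees you rely on in the third paragraph breaks at the very first link.

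The paper avoids this issue by choosing a different block embedding. Instead of the $2n\times 2n$ matrix $\begin{bmatrix}\mzero & \mA_{G_2}\\ \mA_{G_1}&\mzero\end{bmatrix}$, which makes the square \emph{block-diagonal}, it uses the $3n\times 3n$ shift matrix $\begin{bmatrix}\mzero&\mzero&\mzero\\ \mA_{G_1}&\mzero&\mzero\\ \mzero&\mA_{G_2}&\mzero\end{bmatrix}$, so that the square lives entirely in a single \emph{off-diagonal} block, namely the $(3,1)$ block. That is exactly the ``asymmetric lift'' pattern of \Cref{SVequiv:somez} in \Cref{SVequiv}: when the matrix and its sparsifier both sit in one off-diagonal block of zeros, standard $\eps/2$-approximation with respect to the symmetrization of that $2\times 2$ block error matrix is \emph{the same statement} as SV approximation of the block. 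So for the paper's embedding, the standard-approximation guarantee of \Cref{thm:sqSparse} directly certifies $\mA_H\svgraph{\eps}\mA_{G_2}\mD^{+}\mA_{G_1}$---no extra SV primitive is needed. In your $2n\times 2n$ embedding the target product lands on the diagonal of the square, and restricting test vectors to a block (as you do) gives only standard approximation of $\mA_{G_2}\mD^{+}\mA_{G_1}$ with respect to $\mD-\mS_{\mA_{G_2}\mD^+\mA_{G_1}}$, which the paper's separation results show is strictly weaker than SV approximation. To fix your proof, either switch to an off-diagonal embedding so that \Cref{SVequiv:somez} applies directly, or supply an actual SV square sparsifier (the derandomized square of \Cref{lem:drsq_sv} does give one, but only for regular graphs, which is more restrictive than the Eulerian hypothesis here).
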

\begin{algorithm}
\caption{\algProduct$(\eps,G_1,G_2)$}
    Let $G$ be the graph with $\mA_G = \begin{bmatrix}\mzero & \mzero & \mzero\\
    \mA_{G_1} & \mzero & \mzero\\
    \mzero & \mA_{G_2}& \mzero
    \end{bmatrix}$\\
    Let $H\la [\algSquare(\eps/3,G)]_{3,1}$.\\
    \Return{$H$}
\end{algorithm}
\begin{proof}
    We first analyze correctness. Note by \Cref{thm:sqSparse} we have that 
    \[
        \begin{bmatrix}\mzero & \mzero & \mzero\\
    \mzero & \mzero & \mzero\\
    \mA_{H_0} & \mzero& \mzero
    \end{bmatrix} \text{ is an $\eps/6$-approximation of } \begin{bmatrix}\mzero & \mzero & \mzero\\
    \mzero & \mzero & \mzero\\
    \mA_{G_2}\mD^{+}\mA_{G_1} & \mzero& \mzero
    \end{bmatrix} \text{ with respect to $\mR$}
    \]
    where 
    \[\mR=\begin{bmatrix}\mD & \mzero & -(\mA_{G_2}\mD^{+}\mA_{G_1})^{\rTrans}\\
    \mzero & \mzero & \mzero\\
    -\mA_{G_2}\mD^{+}\mA_{G_1} & \mzero& \mD
    \end{bmatrix}\]
    This is precisely \Cref{SVequiv:somez} of \Cref{SVequiv}, so we conclude $\mA_{H}\svgraph{\eps}\mA_{G_2}\mD^+\mA_{G_1}$.
\end{proof}

\newcommand{\algDropEdges}{\textsc{FixEdgeWeights}}
Finally, we require an algorithm that removes very small edge weights. This is not trivial, as we must do so in a way that exactly preserves degrees for SV approximation to hold.
\begin{lemma}\label{lem:drop_small_edges}
    There is an algorithm \algDropEdges~ such that the following holds. Let $G$ be an Eulerian digraph with $m$ edges where the degree matrix $\mD_G$ can be represented as a subset sum of $\{1,\ldots,2^b\}$.
    Then for $t\in \N$, \algDropEdges$(t,G)$ outputs in time $O(mb)$ a graph $H$ such that:
    \begin{itemize}
        \item $H_t$ has $m+2n$ edges and is Eulerian and $\mD_H=\mD_G$.
        \item Edge weights in $H$ can be represented as a subset sum of $\{2^{-t},\ldots,1,\ldots,2^b\}$. 
        \item $\|\mA_H-\mA_G\|\leq 2n\cdot 2^{-t}$.
    \end{itemize}
\end{lemma}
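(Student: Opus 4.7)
The plan is to obtain $H$ by truncating each edge weight of $G$ down to the nearest multiple of $2^{-t}$ and then restoring exact degree preservation with a small number of ``fix-up'' edges produced by solving a transportation problem.

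First I would merge parallel edges (summing their weights), which leaves $\mA_G$ unchanged but caps the number of in- and out-neighbors of any vertex at $n$. For each (merged) edge $e=(u,v)$ with weight $w_e$, set $\tilde w_e = 2^{-t}\lfloor 2^t w_e \rfloor$ and $\epsilon_e = w_e - \tilde w_e \in [0,2^{-t})$, and define the nonnegative truncation-error matrix $\Delta$ by $\Delta_{v,u} = \epsilon_{(u,v)}$. Then the in-deficit at $v$ is $\alpha_v \defeq \sum_u \Delta_{v,u} < n \cdot 2^{-t}$ and the out-deficit at $u$ is $\beta_u \defeq \sum_v \Delta_{v,u} < n \cdot 2^{-t}$. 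Because the original degrees are integers and the truncated weights are multiples of $2^{-t}$, both $\alpha_v$ and $\beta_u$ are themselves multiples of $2^{-t}$, with $\sum_v \alpha_v = \sum_u \beta_u$ equal to the total truncation loss.

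To restore degrees exactly, I would find a nonnegative matrix $\Phi$ with $\Phi\vones = \alpha$ and $\vones^\rTrans \Phi = \beta$ using the north-west corner rule for transportation problems. This yields a basic feasible solution supported on at most $2n-1$ entries, and the integer structure of the (scaled by $2^t$) supplies and demands ensures that the entries of $\Phi$ can be taken as nonnegative multiples of $2^{-t}$. The output $H$ has adjacency matrix $\mA_H = \mA_G - \Delta + \Phi$. Then $(-\Delta+\Phi)\vones = 0$ and $\vones^\rTrans(-\Delta+\Phi) = 0$, so in- and out-degrees at each vertex are exactly preserved and $H$ is Eulerian. All weights in $H$ are multiples of $2^{-t}$ bounded above by $O(2^b)$ (they are either truncated original weights or fix-up entries of magnitude at most $n\cdot 2^{-t}$), giving the required subset-sum representation, and the total edge count is at most $m + (2n-1) \leq m + 2n$.

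For the spectral bound, I would observe that $\Delta$ and $\Phi$ are both nonnegative with all row sums and all column sums bounded by $n\cdot 2^{-t}$. Applying $\|M\| \leq \sqrt{\|M\|_\infty \|M\|_1}$ to a nonnegative matrix yields $\|\Delta\|,\|\Phi\| \leq n\cdot 2^{-t}$, so $\|\mA_H - \mA_G\| = \|{-\Delta}+\Phi\| \leq 2n\cdot 2^{-t}$ by the triangle inequality. The $O(mb)$ running time is dominated by reading and truncating the $b$-bit edge weights; merging parallel edges via hashing costs $O(m)$, and the north-west-corner BFS runs in $O(n)$ time after computing $\alpha,\beta$. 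The main obstacle I anticipate is the spectral bound itself: the naive estimate $\|\Delta\| \leq m\cdot 2^{-t}$ obtained by summing all per-edge truncation errors can be much worse than $2n\cdot 2^{-t}$, so the preliminary merging of parallel edges is essential, since it caps the number of in- and out-edges at each vertex by $n$ and thereby tightens the row and column sums of $\Delta$ (and symmetrically of $\Phi$) to $n\cdot 2^{-t}$. Without merging, many small parallel edges into the same vertex could conspire to make a single row of $\Delta$ arbitrarily large, and the argument would break.
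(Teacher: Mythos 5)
Your proof is correct and follows essentially the same plan as the paper's: round each (aggregated) adjacency-matrix entry down to a multiple of $2^{-t}$, then restore Eulerianness by greedily pairing in-degree shortfalls with out-degree shortfalls. The paper's proof is terse on three points that you fill in explicitly and correctly: (i) you note that one should work with the $n\times n$ adjacency matrix (i.e., merge parallel edges) so that each row/column of the truncation-error matrix $\Delta$ has at most $n$ entries, giving $\|\Delta\|\le\sqrt{\|\Delta\|_1\|\Delta\|_\infty}\le n\cdot 2^{-t}$ rather than the useless $m\cdot 2^{-t}$; (ii) you identify the paper's ``greedy matching'' with the north-west corner rule for a transportation problem, which makes the $\le 2n-1$ bound on the number of fix-up edges transparent; and (iii) you observe that the fix-up matrix $\Phi$ obeys the same row/column-sum bound as $\Delta$, so the triangle inequality gives $\|\mA_H-\mA_G\|\le 2n\cdot 2^{-t}$, which is what the paper asserts is ``easy to see.'' Your observation about merging parallel edges is worth highlighting: the paper's stated bound $\|\mA_{\lfloor G\rfloor}-\mA_G\|\le n\cdot 2^{-t}$ is only immediate if truncation is applied entrywise to the aggregated adjacency matrix, and your discussion makes that implicit normalization explicit.
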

\begin{proof}
    First, let $\lfloor G\rfloor$ be the graph where all edge weights are rounded down to a multiple of $2^{-t}$. Observe that this satisfies
    \[
        \|\mA_{\lfloor G\rfloor}-\mA_G\|\leq n\cdot 2^{-t}.
    \]
    Furthermore, note that each vertex $v$ has an in-degree shortfall equal to $s_{in}(v)=\mD_{v,v}-\sum_u (\mA_{\lfloor G\rfloor})_{v,u}$. Note that $s_{in}(v)\geq 0$ since we only remove weight from edges. This can be expressed as a subset sum of $\{2^{-t},\ldots,1,\ldots,2^b\}$ as all numbers in the summand can be. An equivalent statement holds for the out-degree shortfall. Furthermore, the total in- and out- degree shortfalls must be equal, since every truncation contributes to both sums. Thus, by greedily matching in- shortfalls to out- shortfalls, we obtain the desired graph $H$. It is easy to see this satisfies the approximation claim.
\end{proof}

We are now prepared to state our algorithm for sparsifying an arbitrary power of an Eulerian digraph.
\begin{algorithm}
\caption{\algPower$(\eps,G,\ell)$}
    Let $t\defeq \lceil \log \ell\rceil$.\\
    Let $\langle \ell \rangle$ be the binary expansion of $\ell$.\\
    Let $l \defeq O(\log(nU\ell/\gamma \eps))$.\\
    Let $G_1\defeq \algSparsify(\eps/2,G)$.\\
    \For{$i=2,\ldots,t$}{
        Let $P_i \defeq \algProduct(\eps/t,G_{i-1},G_{i-1})$.\\
        Let $H_i \defeq \algDropEdges(l,P_i)$.\\
        Let $G_i \defeq \algSparsify(\eps/t,H_i)$.\\
        \If{$\langle \ell\rangle_{t-i}=1$}{
            Let $P_i \la \algProduct(\eps/t,G_{i-1},G_{1})$.\\
            Let $H_i \la \algDropEdges(l,P_i)$.\\
            Let $G_i \la \algSparsify(\eps/t,H_i)$.\\
        } 
    }
    \Return{$G_t$}
\end{algorithm}

We are now prepared to prove the result:
\rwSparsifier*
\begin{proof}
    We first show the approximation guarantee.  Note that by the guarantees of \Cref{lem:sparsify_product} and \Cref{cor:sparseFast} and \Cref{lem:drop_small_edges}, for every $i$,
    \[\mD\defeq \mD_{P_i}=\mD_{H_i}=\mD_{G_i}=\mD_G.\]
    Let $\mN_{P_i}\defeq \mD^{+/2}\mA_{P_i}^{+/2}\mD^{+/2}$ and $\mN_{H_i}\defeq \mD^{+/2}\mA_{H_i}^{+/2}\mD^{+/2}$ and $\mN_{G_i}\defeq \mD^{+/2}\mA_{G_i}^{+/2}\mD^{+/2}$. The proof of correctness does not change if we assume $\ell=2^t$ for some $t$ (and hence the if condition in the main loop never occurs), so we analyze it assuming this for simplicity. The fact that this does not matter follows from the fact that SV approximation is preserved under arbitrary products, not merely powers (\Cref{lem:svproducts}).
    
    We first argue that the singular value of all graphs we sparsify remain bounded below $1$.

    \begin{claim}\label{clm:goodCond}
        For every $i$, $\sigma(H_i),\sigma(P_i),\sigma(G_i) \leq 1-1/O(\tau)$.
    \end{claim}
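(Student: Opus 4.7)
The proof proceeds by induction on $i$, accumulating a controlled amount of error from each operation in the loop body. The central technical fact I would establish is the following singular-value perturbation lemma: if $\mN_H \svn{\delta} \mN_G$ for two doubly stochastic matrices, then $\sigma_2(\mN_H) \le \sigma_2(\mN_G) + \delta/2$. This follows by applying \Cref{SVequiv} (\Cref{SVequiv:bip}) to pass to the symmetric bipartite lifts: $\slift{\mN_H}$ becomes a $\delta$-SV approximation of $\slift{\mN_G}$, so by \Cref{lem:Hermitianapprox}, $\|\slift{\mN_H}-\slift{\mN_G}\| \le (\delta/2)\|\mI - \slift{\mN_G}^2\| \le \delta/2$. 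Since the singular values of $\mN$ are the absolute eigenvalues of $\slift{\mN}$, and since SV approximation exactly preserves the top singular vector (the error matrix $\mI - \mN\mN^*$ has $\mD^{1/2}\vones$ in its kernel, so $\tN$ and $\mN$ agree on this direction), Weyl's inequality gives $\sigma_2(\mN_H) \le \sigma_2(\mN_G) + \delta/2$.

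For the inductive step, I would analyze each operation in the loop body separately, accumulating error additively. (a) Squaring does not hurt: for an Eulerian graph, $\mN_{G^2} = \mN_G^2$, and since $\mN_G$ maps $\chi^\perp$ to $\chi^\perp$ with operator norm $\sigma(G)$ (where $\chi = \mD^{1/2}\vones/\|\mD^{1/2}\vones\|$), we get $\sigma(G^2) \le \sigma(G)^2 \le \sigma(G)$. (b) The product step gives $\mN_{P_i} \svn{\eps/t} \mN_{G_{i-1}}^2$ via \Cref{lem:sparsify_product}, so by the perturbation lemma, $\sigma(P_i) \le \sigma(G_{i-1}^2) + \eps/(2t)$. (c) The edge-dropping step satisfies $\|\mA_{H_i} - \mA_{P_i}\| \le 2n\cdot 2^{-l}$ by \Cref{lem:drop_small_edges}; dividing by the minimum diagonal of $\mD$ (which is at least $1$ since edge weights are integers) yields $|\sigma(H_i) - \sigma(P_i)| \le 2n\cdot 2^{-l}$, which by our choice $l = \Theta(\log(nU\ell/\gamma\eps))$ is at most $\eps/t$. (d) The sparsification step gives $\mN_{G_i} \svn{\eps/t} \mN_{H_i}$ by \Cref{cor:sparseFast}, contributing another $\eps/(2t)$. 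The optional multiplication by $G_1$ in the if-branch is handled analogously, using that $\sigma(G_1) \le \sigma(G) + \eps/2$ by the base case.

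Combining, each iteration increases $\sigma$ by at most $O(\eps/t)$, so after all $t = O(\log \ell)$ iterations the total increase is $O(\eps)$. Hence $\sigma(G_i) \le \sigma(G) + O(\eps) \le 1 - 1/\tau + O(\eps)$, and for the sparsity bound to be meaningful we may assume $\eps \le 1/(c\tau)$ for a suitable constant $c$ (otherwise the algorithm can simply be invoked with $\eps' = \min(\eps, 1/(c\tau))$ at negligible cost in the final approximation parameter). This gives $\sigma(G_i), \sigma(P_i), \sigma(H_i) \le 1 - 1/O(\tau)$ as claimed.

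The main obstacle is the technical bookkeeping around the perturbation lemma: one must verify that SV approximation preserves the exceptional singular vector $\chi$ exactly (so that $\sigma_2$ is the right quantity to compare in Weyl's inequality) and that the edge-dropping error, which is stated in terms of $\|\mA_{H_i}-\mA_{P_i}\|$, translates correctly to a bound on $\sigma$ of the normalized matrices when degrees vary across vertices. Both issues are handled by working in the normalized basis from the start and carefully tracking that $\mD^{-1/2}$ has operator norm at most $1$.
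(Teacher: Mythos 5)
Your proposed singular-value perturbation lemma, $\sigma_2(\mN_H) \le \sigma_2(\mN_G) + \delta/2$, is correct but is fundamentally weaker than what the paper's argument uses, and this is a real gap. Your derivation passes to the lift, bounds $\|\slift{\mN_H}-\slift{\mN_G}\|\le\frac{\delta}{2}\|\mI-\slift{\mN_G}^2\|\le\frac{\delta}{2}$, and then applies Weyl. The last step $\|\mI-\slift{\mN_G}^2\|\le 1$ throws away the structure of the error matrix entirely: it is exactly on the near-extremal singular directions of $\mN_G$ — the ones that determine $\sigma_2$ — that $\mI-\mN_G\mN_G^*$ is \emph{small} (of order $1-\sigma_2^2$), and this is what makes SV approximation so restrictive there. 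The paper's \Cref{lem:SVSingular} exploits this to obtain a bound that is \emph{relative to the spectral gap}: the gap $1-\sigma_2$ degrades by at most a multiplicative $(1+O(\eps))$ factor per sparsification step, so after $O(\log\ell)$ iterations with error $\eps/t$ the gap is still $(1/\tau)\cdot(1\pm O(\eps))^{O(1)} = 1/O(\tau)$ for any $\eps$ bounded by an absolute constant. Your additive bound accumulates a $\Theta(\eps)$ absolute increase in $\sigma_2$ over the iterations, which is only tolerable when $\eps = O(1/\tau)$. Since $\tau$ can be $\poly(n)$ in the applications (it comes from the mixing-time bound after lazification), this is a genuinely stronger hypothesis.

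Your proposed fix — rerun with $\eps' = \min(\eps, 1/(c\tau))$ — is not harmless: the output of \algSparsify has $\tilde{O}(n/\eps'^2)$ edges, so you would inflate the sparsity to $\tilde{O}(n\tau^2)$, which breaks the $\tilde{O}(n\eps^{-2}\polylog(\tau U\ell))$ guarantee of \Cref{thm:rwSparsifier} whenever $\tau \gg 1/\eps$. The squaring step (\Cref{lem:singularPower}) and the edge-dropping step (\Cref{lem:singularPerturb}, controlled by $d_{\min}$ and the choice of $l$) you handle in essentially the same way as the paper and these parts are fine; the gap is confined to the sparsification/product steps. Also, a minor point: your invocation of Weyl does not actually need that the top singular vector is preserved exactly — Weyl's inequality bounds $|\lambda_2(A)-\lambda_2(B)|$ for any Hermitian $A,B$ — so that sentence is a red herring, though it does not affect correctness. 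To repair the argument you would need to replace the operator-norm bound by the sharper bound $\|(\tN-\mN)x\| \lesssim \delta\sqrt{1-\|\mN x\|^2}$ evaluated at the second singular direction, which is what yields the spectral-gap-relative inequality of \Cref{lem:SVSingular}.
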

    \begin{proof}
        We show the claim via induction. By assumption, we have $\sigma(G)=1-1/\tau$. By \Cref{lem:SVSingular} we have $\sigma(G_1)\leq 1-(1+\eps/t)/\tau$. Subsequently, assume that $\sigma(G_{i-1})\leq 1-(1+2\eps/t)^{2(i-1)}/\tau$. 
        \begin{itemize}
            \item By \Cref{lem:singularPower} we have $\sigma(G_{i-1}^2) \leq \sigma(G_{i-1}) \leq 1-(1+2\eps/t)^{2(i-1)}/\tau$.
            \item By \Cref{lem:SVSingular} we have $\sigma(P_i) \leq 1-(1+\eps/t)(1+2\eps/t)^{2(i-1)}/\tau$.
            \item By \Cref{lem:singularPerturb} we have $\sigma(H_i) \leq 1-(1+\eps/t)(1+2\eps/t)^{2(i-1)}/\tau+1/\poly(\tau)$.
            \item By \Cref{lem:SVSingular} we have $\sigma(G_i) \leq 1-(1+2\eps/t)^{2i}/\tau$
        \end{itemize}
        So the claim follows assuming $\eps\leq 1/c'$ for constant $c'$.
    \end{proof}

    We now show that SV approximation is preserved at every step of the loop. Fixing $i$ and $G_{i-1}$:
    \begin{enumerate}
        \item  By \Cref{lem:sparsify_product} we have that $\mN_{P_i} \svn{\eps/3t} \mN_{G_{i-1}}^2.$
        
        \item By \Cref{lem:drop_small_edges} we have 
        $\|\mA_{H_i} - \mA_{P_i}\|\leq 2n2^{-l}$, so $\mA_{H_i}$ is a $2n2^{-l}\leq \eps/\poly(\tau)$ approximation of $\mA_{P_i}$ with respect to $\mD$. Taking the bipartite lift, $\begin{bmatrix}
        \mzero & \mA_{H_i}^{\rTrans}\\ \mA_{H_i} & \mzero
        \end{bmatrix}$ is a degree- and bipartition- preserving $\eps/\poly(\tau)$-approximation of $\begin{bmatrix}
        \mzero & \mA_{P_i}^{\rTrans}\\ \mA_{P_i} & \mzero
        \end{bmatrix}$
        with respect to $\begin{bmatrix} \mD & \mzero \\ \mzero & \mD\end{bmatrix}$ (where we use that $P_i$ is not itself bipartite to ensure that the lift of $H_i$ preserves all bipartitions). Thus by \Cref{lem:diag_to_sv} (using that the bipartite lift has normalized second eigenvalue at least $1/O(\tau)$ by \Cref{clm:goodCond}) and \Cref{SVequiv}, we have that $\mN_{H_i} \svgraph{\eps/3t} \mN_{P_i}.$

        \item By \Cref{cor:sparseFast}, $\mN_{G_i}\svn{\eps/3t}\mN_{H_i}.$
    \end{enumerate}
    
    Applying \Cref{lem:triangle} we obtain $\mN_{G_i} \svn{\eps/t} \mN_{G_{i-1}}^2$ for every $i$.
    Therefore by \Cref{lem:svproducts},
    \[
        \mN_{G_t} \svn{\eps/t} \left(\mN_{G_{i-1}}\right)^2 \svn{\eps/t} \ldots \left( \mN_{G_2}\right)^{2^{t-1}} \svn{\eps/t} \left(\mN_{G_1}\right)^{2^t}
    \]
    and so we conclude by \Cref{lem:triangle} (and taking $\eps\la \eps/10$).

    To analyze the time, note that by \Cref{lem:drop_small_edges} we always call \algSparsify on graphs with (up to rescaling) integer edge weights in $[1,\poly(\tau Un\ell/\eps)]$, and by \Cref{thm:sqSparse} $P_i$ (and thus $H_i$) has $\tO(n\eps^{-2}\polylog(U\tau \ell/\eps))$ edges, so we obtain the desired runtime and sparsity by \Cref{cor:sparseFast}.
\end{proof}

\subsection{Sparsification of Directed Random Walks}

We first recall the definition of the statioary distribution for an arbitrary directed graph:
\begin{definition}[\cite{CKPPSV16}] 
    Given a strongly-connected directed graph $G$ with Laplacian $\mD-\mA$, there exists a vector $\vec{s}$, which we refer to as the \emph{stationary distribution} or \emph{stationary vector}, with $\|s\|_1=1$ and $\mS:=\diag(\vec{s})$, which refer to as the \emph{stationary matrix}, such that the following equivalent conditions hold:
    \begin{enumerate}
        \item $\mA\mD^{-1}\vec{s}=\vec{s}$
        \item $(\mD-\mA)\mD^{-1}\mS$ is an Eulerian Laplacian.
    \end{enumerate}
\end{definition}
We then state the main theorem:
\newcommand{\tD}{\mathbf{\widetilde{D}}}
\begin{restatable}{theorem}{dirRW}\label{thm:dirRW}
    There is a randomized algorithm \algPowerDi~satisfying the following. The algorithm takes as input $\ell \in \N$ and $\eps,\delta>0$ and a strongly connected directed graph $G$ with:
    \begin{itemize}
        \item $m$ edges with integer edge weights in $[1,U]$,
        \item  minimum statationary probability at least $s$,
    \end{itemize}
    and let $R=U\ell/s\delta$. Then \algPowerDi~returns in time 
    $\tO\left(\left(m+n\eps^{-2}\right)\polylog(R)\right)$
    an Eulerian digraph $H=(\mD_H,\mA_H)$ with $\tO(n\eps^{-2}\polylog(R))$ edges with integer edge weights in $[1,\poly(R)]$ such that
    \[\mA_H\svgraph{\eps} \mB\]
    where $\mB$ is $\delta/n$-entrywise close to $(\mA\mD^{-1})^\ell\mS=\mW^\ell \mS$ where $\mA,\mD,\mS$ are the adjacency, degree, and stationary matrix of $G$ respectively. In particular, $|\Cut_{\mB}(S,T)-\Cut_{G^\ell}(S,T)|\leq \delta$ for every $S,T\subseteq [n]$.
\end{restatable}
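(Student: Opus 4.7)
The plan is to reduce the problem to the Eulerian setting handled by \Cref{thm:rwSparsifier}. First, invoke the Eulerian scaling algorithm of \cite{CKKPPRS18} on $G$ to obtain in time $\tO(m\polylog R)$ a positive diagonal matrix $\tilde{\mS}$ approximating the true stationary matrix $\mS$ entrywise to error $1/\poly(R)$. Use $\tilde{\mS}$ to form the candidate graph $G^\circ$ with adjacency matrix $\mA_{G^\circ} = \mA\mD^{-1}\tilde{\mS}$ and degree matrix $\mD_{G^\circ}=\tilde{\mS}$; its walk matrix is exactly $\mW = \mA\mD^{-1}$, and it is Eulerian up to entrywise imbalance $1/\poly(R)$ in its degrees (since $\tilde{\mS}$ is approximately stationary for $\mW$). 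Patch $G^\circ$ to a truly Eulerian graph $G'$ by greedily adding small compensating edges between vertices with in- and out-degree shortfalls, following the procedure in the proof of \Cref{lem:drop_small_edges}, which incurs only $1/\poly(R)$ entrywise change while producing a graph whose walk matrix $\mW_{G'}$ is entrywise within $1/\poly(R)$ of $\mW$.

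Next, introduce a small amount of laziness to guarantee the spectral-gap condition required by \Cref{thm:rwSparsifier}. Define $G''$ by adding $\alpha\cdot \mD_{G'}$ worth of self-loops at each vertex, for $\alpha = 1/\poly(R)$ chosen small enough that a standard telescoping argument (using that both $\mW$ and $\mW_{G''}$ are column-stochastic, hence $1$-contractive in the relevant norms) yields entrywise closeness of $(\mW_{G''})^\ell$ to $\mW^\ell$ within $\delta/(4n^2)$, yet large enough that $\sigma(G'') \leq 1 - \alpha$, giving $\tau \defeq 1/\alpha = \poly(R)$ as needed to invoke \Cref{thm:rwSparsifier} with polylogarithmic-in-$R$ sparsity and runtime.

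Finally, invoke \Cref{thm:rwSparsifier} on $G''$ with walk length $\ell$ and error parameter $\eps$ to obtain, in time $\tO((m + n\eps^{-2})\polylog R)$, an Eulerian graph $H$ with $\tO(n\eps^{-2}\polylog R)$ edges such that $\mA_H \svgraph{\eps} \mA_{(G'')^\ell}$ (translating the conclusion of \Cref{thm:rwSparsifier} from normalized SV approximation back to the unnormalized form via \Cref{lem:normalizedSV}, using that $\mD_{H} = \mD_{G''}$). Set $\mB \defeq \mA_{(G'')^\ell} = (\mW_{G''})^\ell \mD_{G''}$. The triangle inequality across the three sources of error---stationary-distribution estimation, Eulerian patching, and laziness---together with the entrywise closeness of $\mD_{G''}$ to $\mS$, yields the required entrywise bound on $\mB - \mW^\ell\mS$; the cut bound then follows by expressing $\Cut_\mB(S,T) - \Cut_{G^\ell}(S,T)$ as a sum of the corresponding entries of $\mB - \mW^\ell\mS$ over $S\times T$.

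The main obstacle is calibrating $\alpha$: it must be small enough that the $\ell$-step lazy walk stays entrywise close to the true walk (so the final cut error is $\leq \delta$), but large enough that $\tau = 1/\alpha$ is $\poly(R)$ so that the sparsity and runtime remain polylogarithmic in $R$. A secondary care-point is ensuring that each of the three error sources composes in the entrywise sense required by the final cut statement, rather than only in operator norm or SV-approximation sense; this is where the nonexpansivity of column-stochastic matrices in $\ell_\infty$ is used repeatedly to convert additive perturbations of $\mW$ into entrywise perturbations of $\mW^\ell\mS$.
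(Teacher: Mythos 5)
Your approach differs from the paper's in one crucial step, and that step has a gap. You propose to keep the original walk matrix $\mW = \mA\mD^{-1}$ and form $G^\circ$ with adjacency $\mA\mD^{-1}\tilde{\mS}$ where $\tilde{\mS}$ is an \emph{approximate} stationary matrix, then ``patch'' $G^\circ$ to a truly Eulerian graph $G'$ by the shortfall-matching trick of \Cref{lem:drop_small_edges}. The problem is that the shortfalls here have the wrong signs for that trick. The out-degree of $G^\circ$ at $v$ is exactly $\tilde{s}_v$ (because $\mW$ is column-stochastic), so the out-degree shortfall against target $\tilde{\mS}$ is identically zero; the in-degree at $u$ is $(\mW\tilde{s})_u$, so the in-degree shortfall is $\tilde{s}_u - (\mW\tilde{s})_u$, which sums to zero over $u$ but is typically \emph{negative} at some vertices. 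In \Cref{lem:drop_small_edges} the greedy matching works because every truncation strictly decreases both in- and out-degree, so all shortfalls are nonnegative and can be filled by adding edges. Here you would need to \emph{remove} weight at vertices with excess in-degree, or alter the degree target, and the lemma gives you no license to do either. The paper sidesteps this entirely: it invokes $\algFindClose$ (Lemma~53 of \cite{CKPPSV16}), which perturbs the graph to some $\tA$ close to $\mA$ and returns the \emph{exact} stationary matrix $\tS$ of $\tA$, so that $(\tD-\tA)\tD^{-1}\tS$ is exactly Eulerian with no patching. Your route could probably be repaired (e.g.\ by inflating the degree target to dominate both $\tilde{s}$ and $\mW\tilde{s}$ entrywise and then padding with a small constant of extra mass at each vertex), but as written the patching step does not go through.

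A secondary, less serious issue: the claim $\sigma(G'')\le 1-\alpha$ after adding $\alpha$-laziness is not correct as stated. The spectral gap of the lazified walk scales with $\alpha$ times the spectral gap of the \emph{symmetrization}, which is $1/\poly(T)$ where $T$ is the weak mixing time; the paper's \Cref{lem:eulerian_lazy} gives $\sigma \le 1 - 1/\poly(nTU/\gamma)$, not $1-\gamma$. Your eventual conclusion $\tau=\poly(R)$ is still right (since $T\le\poly(U/s)\le\poly(R)$), but you should route the argument through \Cref{lem:eulerian_lazy} rather than asserting the gap equals the laziness parameter.
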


To obtain this result, we recall that we can find the stationary distribution of an arbitrary directed graph in nearly linear time.
\begin{lemma}[Lemma 53~\cite{CKPPSV16}]\label{lem:findStat}
    Given $\delta>0$, there is a randomized algorithm \algFindClose~ that, given a directed graph $G$ with weak mixing time\footnote{The mixing time of the $1/2$-lazy random walk.} $T$ and minimum stationary probability $s$, returns $(\tD,\tA,\tS)$ in time $\tO(m\polylog(nT/s\delta))$, where $\tS$ is the stationary matrix of $\tA$. Moreover, $\|\tA-\mA_G\|\leq \delta$.
\end{lemma}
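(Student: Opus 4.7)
My plan is twofold: (i) approximate the stationary vector of $G$ to very high precision using a nearly-linear time directed Laplacian solver as the inner subroutine of an iterative method whose outer iteration count is only $\polylog(nT/(s\delta))$, and then (ii) produce $\tA$ via a small local perturbation of $\mA_G$ that makes the approximation exactly stationary for the perturbed graph.

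For step (i), observe that the stationary vector $\vec{s}$ is the unique positive right-kernel vector (up to scaling) of $\mI - \mA_G \mD^{-1}$. After a negligible laziness regularization, the weak mixing time assumption implies that all nontrivial singular values of this operator are bounded below by $1/\poly(T)$. I would use the nearly-linear time directed Laplacian solver from~\cite{CKPPSV16} as the inner step of a preconditioned iterative scheme (e.g.\ Richardson iteration constrained by a normalization equation fixing the scaling), producing $\tilde{s}$ with $\|(\mI - \mA_G \mD^{-1})\tilde{s}\|_1 \leq \eta$ in time $\tO(m \polylog(nT/(s\delta)))$, where $\eta$ is chosen polynomially small in $\delta/n$. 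The minimum stationary probability bound $s$ guarantees that the iterate stays entrywise at least $s/2$ and hence strictly positive throughout.

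For step (ii), define per-vertex residuals $r_v := \tilde{s}_v - (\mA_G \mD^{-1}\tilde{s})_v$; after renormalizing $\tilde{s}$ to sum to $1$, one has $\sum_v r_v = 0$. Partition vertices into $V^+ := \{v : r_v > 0\}$ and $V^- := \{v : r_v < 0\}$, and construct a bipartite flow between $V^+$ and $V^-$ of total mass $\|\vec{r}\|_1/2$. This flow translates into $O(n)$ added (or re-weighted) edges of $\mA_G$, each of weight $O(\eta/s)$, chosen to redirect walk-probability mass from over-supplied to under-supplied vertices. A small first-order calculation (solved to machine precision by a trivial $O(n)$-variable Newton iteration) makes the resulting matrix $\tA$ satisfy $\tA \tD^{-1}\tilde{s} = \tilde{s}$ \emph{exactly}, where $\tD := \mdiag(\tA \vones)$; then $\tS := \mdiag(\tilde{s})$ is the exact stationary matrix of $\tA$. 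Since at most $O(n)$ entries of $\tA - \mA_G$ are modified, each by $O(\eta/s)$, we get $\|\tA - \mA_G\| \leq \|\tA - \mA_G\|_F \leq O(n\eta/s) \leq \delta$ by our choice of $\eta$.

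The main obstacle is step (i): establishing that only $\polylog(nT/(s\delta))$ calls to a nearly-linear time directed Laplacian solver suffice to drive the stationarity residual below $\eta$. This relies on (a) the weak mixing time $T$ translating to a $1/\poly(T)$ spectral gap for the lazy operator, which bounds the relative condition number of the preconditioned system; and (b) geometric convergence of the outer iteration against this condition number, yielding the logarithmic iteration count needed to reach polynomial precision. This is the technical heart of~\cite[Lemma 53]{CKPPSV16}. Step (ii) is a comparatively straightforward flow-matching construction, and its one subtlety---nonnegativity of the modified edge weights---is secured by choosing $\eta$ polynomially smaller than both the minimum degree of $G$ and $s$, so that every required edge-weight adjustment is a small positive quantity.
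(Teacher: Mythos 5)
This statement is a black-box citation of Lemma~53 in~\cite{CKPPSV16}; the paper does not reprove it, so there is no in-paper argument to compare your proposal against. What can be assessed is whether your sketch plausibly reconstructs the cited result, and whether its gaps are the kind that the original proof genuinely needs to handle.

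Your two-phase plan --- (i) drive the stationarity residual of an approximate vector to polynomial precision using $\polylog$ many outer iterations around a nearly-linear-time directed Laplacian solver, then (ii) locally patch $\mA_G$ so that the approximate vector becomes exactly stationary for the perturbed graph --- is the correct high-level architecture and matches the approach of~\cite{CKPPSV16}. That said, two claims in your sketch are stated as if they were routine when they are in fact the technical substance of the cited lemma. First, in step~(i) you assert that ``the iterate stays entrywise at least $s/2$ and hence strictly positive throughout''; generic Richardson-type iterations on $\mI - \mA_G\mD^{-1}$ do not preserve positivity, and enforcing this (together with the scaling normalization you mention in passing) is precisely where one must be careful --- the original proof has to argue about how the error accumulates relative to $s$, not just assume the iterate stays positive. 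Second, in step~(ii) you need the perturbation to keep $\tA$ entrywise nonnegative; you acknowledge this but wave at ``choosing $\eta$ polynomially smaller than the minimum degree and $s$,'' whereas the actual construction must in general \emph{decrease} some weights along existing edges as well as increase others (pure edge additions of positive weight change column sums in a way that is not automatically absorbable by $\tD$), so one has to verify there is enough slack on the edges being decreased. Neither gap indicates a wrong approach, but both are places where ``straightforward'' is doing a lot of work and where the cited proof actually has content. A minor slip: $O(n)$ modified entries of magnitude $O(\eta/s)$ give $\|\tA - \mA_G\|_F = O(\sqrt{n}\,\eta/s)$, not $O(n\eta/s)$; this is harmless for your bound since you choose $\eta$ polynomially small in $\delta/n$ anyway. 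Finally note that the way this lemma is invoked later in the paper (in the proof of \Cref{thm:dirRW}) also assumes $\|\tD - \mD\|$ and $\|\tS - \mS\|$ are small, which your step~(ii) does implicitly deliver but which the lemma statement as transcribed here omits --- worth tracking if you flesh this out.
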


We use this primitive to find an Eulerian rescaling of the lazified random walk of $G$.

\begin{proof}[Proof of~\Cref{thm:dirRW}]
    We first claim that we can compute an approximate rescaling. Let $T$ be the weak mixing time of $G$, and note we have $T\leq \poly(U/s)$ (where we use that $1/s\geq n$). Note that $(\mD-\mA)\mD^{-1}\mS$ is an Eulerian Laplacian with edge weights in $[s/U, 1]$ and weak mixing time at most $T$. 
    Now let 
    \[(\tD,\tA,\tS):=\algFindClose(G,\rho)\]
    for $\rho$ to be chosen later.
    We have that 
    \[\tL:=(\tD-\tA)\tD^{-1}\tS = \tS-\tA\tD^{-1}\tS
    \]
    is an Eulerian Laplacian, and moreover (taking $\rho\la \rho/n$):
    \[\|\tD-\mD\|\leq \rho,\quad  \|\tA-\mA\|\leq \rho, 
    \text{ and }
    \|\tS-\mS\|\leq \rho.\]
    In particular, choosing $\rho$ sufficiently small compared to $U/s$ we have that $\tL$ has weak mixing time at most  $\poly(T)$.
    Now let 
    \[\tM:=(1-\gamma)\tN+\gamma \mI\] 
    where $\tN = \tS^{-1/2}(\tA\tD^{-1}\tS) \tS^{-1/2}$, for $\gamma$ to be chosen later. By \Cref{lem:eulerian_lazy} we have 
    \[
    \sigma(\tM)\leq 1-1/\poly(TU/s\gamma).
    \]
    Moreover, $\tM$ is the normalized adjacency matrix of an Eulerian graph with adjacency and degree matrix $\mB$ and $\mD_H$, i.e., $\mB=\mD_H^{1/2}\tM\mD_H^{1/2}$. By \Cref{thm:rwSparsifier} we can compute in the specified time bound a graph $H$ with adjacency $\mA_H$ such that 
    \[
    \mD_H^{-1/2}\mA_H\mD_H^{-1/2} \svn{\eps}  \mD_H^{-1/2}\mB \mD_H^{-1/2}=\tM^\ell.
    \]
    \begin{claim}\label{clm:diagClose}
        $\mD_H$ is $\delta\cdot s$ close to $\mS$. 
    \end{claim}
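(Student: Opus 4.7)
The plan is to show that $\mD_H$ equals $\tS$ on the nose, and then invoke the guarantee of \algFindClose\ with a suitably small approximation parameter.

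First I would unpack what $\mD_H$ actually is. By definition, $\tS$ is the stationary matrix of the walk $\tA\tD^{-1}$, so $\tA\tD^{-1}\vec{s} = \vec{s}$, where $\vec s$ is the diagonal of $\tS$. Consequently the matrix $\tA\tD^{-1}\tS$ is the adjacency matrix of an Eulerian digraph: its out-degree at vertex $i$ is $(\tA\tD^{-1}\vec{s})_i = \tS_{ii}$, and the Eulerian property established for the Laplacian $\tL = \tS - \tA\tD^{-1}\tS$ forces the in-degrees to agree. Hence the (both in- and out-) degree matrix of $\tA\tD^{-1}\tS$ is exactly $\tS$, and $\tN = \tS^{-1/2}(\tA\tD^{-1}\tS)\tS^{-1/2}$ is its normalized adjacency.

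Next, adding the laziness term $\gamma \mI$ at the normalized level corresponds, at the unnormalized level, to adding a self-loop of weight $\gamma \tS_{ii}$ at each vertex $i$. That is, $\tM = (1-\gamma)\tN + \gamma\mI$ is the normalized adjacency matrix of the Eulerian digraph with unnormalized adjacency $(1-\gamma)\tA\tD^{-1}\tS + \gamma \tS$. Self-loops add the same amount to both in- and out-degrees, so the total degree at vertex $i$ is $(1-\gamma)\tS_{ii} + \gamma \tS_{ii} = \tS_{ii}$. In particular, $\mD_H = \tS$.

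Finally, the guarantee $\|\tS - \mS\| \leq \rho$ from \algFindClose\ (after the rescaling $\rho \leftarrow \rho/n$ already performed in the proof of \Cref{thm:dirRW}) lets us conclude: choosing $\rho \leq \delta \cdot s$ when invoking \algFindClose\ gives $\|\mD_H - \mS\| = \|\tS - \mS\| \leq \delta \cdot s$, and since both matrices are diagonal this is exactly entrywise closeness at scale $\delta s$. This choice of $\rho$ is absorbed into the runtime bound, which only depends polylogarithmically on $1/(s\delta)$. The only subtlety is the bookkeeping of the parameter $\rho$ throughout the proof of \Cref{thm:dirRW} to make sure the same $\rho$ also suffices for the other properties we need from $\tS,\tD,\tA$ (e.g., the mixing-time bound on $\tL$), but there is no real obstacle since all requirements are polynomial in $s$, $1/\delta$, $U$, and $n$.
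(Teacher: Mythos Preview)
Your argument is correct and follows essentially the same route as the paper: both rely on the closeness guarantee $\|\tS-\mS\|\leq \rho$ from \algFindClose\ and then take $\rho$ small enough. Your version is in fact a bit cleaner than the paper's, since you observe that the lazification $(1-\gamma)\tN+\gamma\mI$ leaves the degree matrix unchanged and hence $\mD_H=\tS$ exactly; the paper's triangle inequality $\|\mD_H-\mS\|\leq \|\mD_H-\tD\|+\|\tD-\mS\|\leq \gamma+\rho$ appears to route through $\tD$ by typo (it should be $\tS$) and carries an unnecessary $\gamma$ term.
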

    \begin{proof}
        We have $\|\mD_H-\mS\| \leq  \|\mD_H-\tD\|+\|\tD-\mS\| \leq \gamma +\rho$
        and hence taking $\gamma$ and $\rho$ small relative to $\delta\cdot s$ this holds.
    \end{proof}
    \begin{claim}
        $\mB$ is $\delta$ close to $(\mA\mD^{-1})\mS$. 
    \end{claim}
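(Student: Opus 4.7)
The plan is to reduce the claim to a perturbation estimate for $\ell$-th powers of contractive matrices. Write
\[
(\mA\mD^{-1})^\ell \mS = \mS^{1/2}\mN_0^\ell\mS^{1/2}, \qquad \mN_0 \defeq \mS^{-1/2}(\mA\mD^{-1}\mS)\mS^{-1/2},
\]
and observe that $\mN_0$ is the normalized adjacency of the Eulerian rescaling $(\mD-\mA)\mD^{-1}\mS$ of $G$, so $\|\mN_0\|\leq 1$. By construction $\mB = \tS^{1/2}\tM^\ell\tS^{1/2}$ with $\tM = (1-\gamma)\tN+\gamma\mI$ and $\|\tM\|\leq 1$. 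Thus it suffices to bound $\|\tM^\ell - \mN_0^\ell\|$ and the error from replacing $\mS^{1/2}$ by $\tS^{1/2}$.

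First I would bound $\|\tN - \mN_0\|$. Since $\mS,\tS$ have diagonal entries at least $s - \rho \geq s/2$ (for $\rho$ small), both $\mS^{-1/2}$ and $\tS^{-1/2}$ have spectral norm at most $\sqrt{2/s}$, and $\|\mD^{-1}\|,\|\tD^{-1}\|\leq 1$. A short telescoping argument (replacing $\tA$ by $\mA$, then $\tD$ by $\mD$, then $\tS$ by $\mS$ in each occurrence), together with the hypothesis $\|\tA-\mA\|,\|\tD-\mD\|,\|\tS-\mS\|\leq \rho$ from \Cref{lem:findStat}, yields
\[
\|\tN - \mN_0\| \leq \rho \cdot \poly(n,U,1/s).
\]

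Next, I would use the identity $\tM^\ell - \mN_0^\ell = \sum_{i=0}^{\ell-1}\tM^i(\tM-\mN_0)\mN_0^{\ell-1-i}$, which together with $\|\tM\|,\|\mN_0\|\leq 1$ gives
\[
\|\tM^\ell - \mN_0^\ell\| \leq \ell \cdot \|\tM - \mN_0\| \leq \ell\bigl(\gamma\|\mI-\mN_0\| + (1-\gamma)\|\tN - \mN_0\|\bigr) \leq O(\ell)\bigl(\gamma + \rho\cdot\poly(n,U,1/s)\bigr).
\]
Combining this with $\|\tS^{1/2}\|,\|\mS^{1/2}\|\leq 1$ and $\|\tS^{1/2}-\mS^{1/2}\| \leq O(\rho/\sqrt{s})$ (via $\|\tS-\mS\|\leq \rho$ and operator monotonicity of the square root on matrices with spectrum bounded away from $0$), the triangle inequality gives
\[
\|\mB - \mW^\ell\mS\| \leq O(\ell)\bigl(\gamma + \rho\cdot\poly(n,U,1/s)\bigr).
\]
Since entrywise closeness is dominated by operator-norm closeness, choosing the laziness parameter $\gamma = \delta/(Cn\ell)$ and the precision $\rho = \delta/(Cn\ell \cdot \poly(n,U,1/s))$ for a sufficiently large constant $C$ establishes the $\delta/n$-entrywise bound.

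The main obstacle is bookkeeping: the cleanest way to convert ``$\tA,\tD,\tS$ each $\rho$-close to $\mA,\mD,\mS$'' into a bound on $\|\tN-\mN_0\|$ requires controlling $\|\tS^{-1/2}-\mS^{-1/2}\|$, which is where the $\poly(1/s)$ loss enters and must be tracked carefully to confirm that the chosen $\rho = 1/\poly(R)$ keeps the call to \algFindClose within the claimed $\tO(m\polylog(R))$ time budget. Everything else is routine triangle-inequality manipulations.
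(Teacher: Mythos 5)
Your proposal is correct and takes essentially the same route as the paper: identify $\mN_0 = \mS^{-1/2}(\mA\mD^{-1}\mS)\mS^{-1/2}$ as the normalized comparison matrix, conjugate to it, and bound the difference of $\ell$-th powers using contractivity ($\|\tM\|,\|\mN_0\|\leq 1$). The paper splits via a triangle inequality through $\tN^\ell$ and sandwiches with $\mS^{\pm 1/2}$ (leveraging $\|\mS\|\leq 1$ in one direction and \Cref{clm:diagClose} in the other), while you apply the telescoping identity $\tM^\ell-\mN_0^\ell=\sum_i\tM^i(\tM-\mN_0)\mN_0^{\ell-1-i}$ directly and handle the $\tS^{1/2}$ versus $\mS^{1/2}$ substitution separately — this is slightly more explicit about the $O(\ell)$ loss and the $\poly(1/s)$ dependence from $\|\tS^{-1/2}-\mS^{-1/2}\|$, both of which the paper leaves implicit in the final "$\leq O(\delta)$", but the underlying argument and parameter choices are the same.
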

    \begin{proof}

        We have 
        \begin{align*}
        \|\mB - (\mA\mD^{-1})\mS\| &\leq \|\mS^{-1/2}\mB\mS^{-1/2} - \mS^{-1/2}(\mA\mD^{-1})\mS^{1/2}\| && \|\mS\|\leq 1\\
        &\leq 
        \delta + \|\tM^\ell - \mS^{-1/2}(\mA\mD^{-1})^\ell\mS^{1/2}\| && \text{\Cref{clm:diagClose}}\\
        &= \delta+\|\tM^\ell -(\mS^{-1/2}(\mA\mD^{-1}\mS)\mS^{-1/2})^\ell\|\\
        &\leq \delta + \|\tM^\ell -\tN^\ell\| + \|\tN^\ell-(\mS^{-1/2}(\mA\mD^{-1}\mS)\mS^{-1/2})^\ell\|\\
        &=  \delta + \|((1-\gamma)\tN+\gamma \mI)^\ell - \tN^\ell\|+\delta/2 \leq O(\delta). 
        \end{align*}
        where the second to last inequality follows from making $\rho$ small relative to $\delta/s$, and we conclude by taking $\delta \la \delta/cn$.
    \end{proof}
    Thus, we have the desired approximation statement. Finally, for arbitrary $S,T$, we have
    \begin{align*}
        |\Cut_{\mB}(S,T)-\Cut_{G^\ell}(S,T)| &= |\vones_T\mB \vones_S - \vones_T\mW^\ell\mS \vones_S|\leq \delta. \qedhere
    \end{align*}
\end{proof}

We can then state the implication in terms of cut sparsification.
\begin{theorem}\label{thm:dirCut}
    There is a randomized algorithm \algCut~ satisfying the following. Fix $\ell \in \N$, $\eps>0$, and a strongly connected directed graph $G$ with $m$ edges and integer edge weights in $[1,U]$, and $s\leq \pimin(G)$, 
    and let $R=U\ell/s$.
    Then \algCut~ returns a graph $H$ with $\tO(n\eps^{-2}\polylog(R))$ edges in time 
    \[\tO\left(\left(m+n\eps^{-2}\right)\polylog(R)\right)\] 
    such that the following holds. For every $S,T\subset [n]$:
    \[\left|\Cut_{H}(S,T) - \Cut_{G^\ell}(S,T)\right|
    \leq \eps \cdot \sqrt{\min\left\{\Cut_{G^\ell}(S),\Uncut_{G^\ell}(S)\right\} \cdot
    \min\left\{\Cut_{G^\ell}(T),\Uncut_{G^\ell}(T)\right\}}.\]
\end{theorem}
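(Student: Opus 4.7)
My plan is to invoke \Cref{thm:dirRW} with approximation parameter $\eps/2$ and additive error parameter $\delta$ chosen polynomially small in $nU\ell/s$, producing an Eulerian $H$ with adjacency matrix $\mA_H$ and degree matrix $\mD_H$ such that $\mA_H\svgraph{\eps/2}\mB$ for some $\mB$ that is $\delta/n$-entrywise close to $\mW^\ell\mS$. The sparsity and runtime bounds are inherited directly from \Cref{thm:dirRW}, since its polylog factors depend only on $\log(U\ell/(s\delta))=\polylog(R)$ for this choice of $\delta$. The remaining task is to verify the claimed cut-approximation inequality.

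The core of the proof mirrors \Cref{prop:SV-combinatorial-intro} in the Eulerian setting. By \Cref{lem:svStationary}, $\mA_H$ and $\mB$ share the degree matrix $\mD_H$, so unfolding the geometric-mean form of \Cref{def:svapprox} with $x=\vones_S$ and $y=\vones_T$ and setting $w=\vones^\top\mD_H\vones$ gives
\begin{equation*}
\left|\vones_T^\top(\mA_H-\mB)\vones_S\right|\leq \tfrac{\eps}{4}\sqrt{\vones_S^\top(\mD_H-\mB\mD_H^{-1}\mB^\top)\vones_S\cdot\vones_T^\top(\mD_H-\mB^\top\mD_H^{-1}\mB)\vones_T}.
\end{equation*}
A short calculation shows that $\mB\mD_H^{-1}\mB^\top$ is a symmetric Eulerian matrix sharing the degrees of $\mB$, so that $\vones_S^\top(\mD_H-\mB\mD_H^{-1}\mB^\top)\vones_S=w\cdot\Cut_{\mB\mD_H^{-1}\mB^\top}(S)$ is the cut of $S$ in the associated two-step walk graph of $\mB$, and analogously on the right. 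Dividing through by $w$ (using $\vones_T^\top\mA_H\vones_S=w\Cut_H(S,T)$ and similarly for $\mB$) yields
\begin{equation*}
\left|\Cut_H(S,T)-\Cut_\mB(S,T)\right|\leq \tfrac{\eps}{4}\sqrt{\Cut_{\mB\mD_H^{-1}\mB^\top}(S)\cdot\Cut_{\mB^\top\mD_H^{-1}\mB}(T)}.
\end{equation*}

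Combining this with the bound $\Cut_{\mB\mD_H^{-1}\mB^\top}(S)\leq 2\min\{\Cut_\mB(S),\Uncut_\mB(S)\}$ highlighted after \Cref{prop:SV-combinatorial-intro} (since a two-step walk crosses $S/S^c$ only if one of its two steps does, and stays inside $S$ only if one of its steps does), the factor of $2$ combines with $\eps/4$ to give $|\Cut_H(S,T)-\Cut_\mB(S,T)|\leq (\eps/2)\sqrt{\min\{\Cut_\mB(S),\Uncut_\mB(S)\}\cdot\min\{\Cut_\mB(T),\Uncut_\mB(T)\}}$. To finish, transfer from $\mB$ to $\mW^\ell\mS$: the $\delta/n$-entrywise closeness yields $|\Cut_\mB(S,T)-\Cut_{G^\ell}(S,T)|\leq O(n\delta)$ and analogous bounds for cuts and uncuts of single sets. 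Since $\pimin\geq s$ and edge weights lie in $[1,U]$, any nonzero $\Cut_{G^\ell}(S)$ or $\Uncut_{G^\ell}(S)$ is at least $1/\poly(nU\ell/s)$, so taking $\delta$ polynomially smaller than this threshold renders the additive error negligible and lets us replace $\Cut_\mB$ with $\Cut_{G^\ell}$ on both sides. The main obstacle I anticipate is the degenerate case where $\min\{\Cut_{G^\ell}(S),\Uncut_{G^\ell}(S)\}=0$ exactly (as can arise for periodic $G$): the vanishing right-hand side forbids any additive slack, and one must argue separately that the corresponding two-step cut in $\mB$ also vanishes, so the SV inequality forces $\vones_T^\top\mA_H\vones_S=\vones_T^\top\mB\vones_S$ exactly, after which a direct check yields $\Cut_H(S,T)=\Cut_{G^\ell}(S,T)$ in this boundary case.
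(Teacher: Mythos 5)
Your proposal tracks the paper's proof quite closely: you invoke \Cref{thm:dirRW} with a polynomially small $\delta$, expand the SV inequality with the degree-weighted indicator vectors $\vones_S,\vones_T$ to get the cut bound against the auxiliary matrix $\mB$, use the ``two-step cut $\leq 2\min\{\Cut,\Uncut\}$'' bound to obtain the claimed form, and appeal to \Cref{prop:statNonNeg} to lower-bound nonzero cut values so the $\delta$-additive error is absorbed. That much is correct and is essentially the paper's argument.

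The gap is in your handling of the boundary case where $\min\{\Cut_{G^\ell}(S),\Uncut_{G^\ell}(S)\}=0$. You propose to ``argue separately that the corresponding two-step cut in $\mB$ also vanishes,'' from which exact equality $\Cut_H(S,T)=\Cut_\mB(S,T)$ would follow. But $\mB$ is not $\mW^\ell\mS$: in the proof of \Cref{thm:dirRW}, $\mB$ is built from the $\ell$-th power of the \emph{lazified} walk $(1-\gamma)\tN+\gamma\mI$, so $\mB$ has a positive diagonal (and positive mass wherever the lazy walk can reach in $\ell$ steps), even when $G^\ell$ has vanishing $(S,S^c)$ cut (e.g.\ a directed $n$-cycle with $n\mid\ell$, where $G^\ell$ is a disjoint union of self-loops). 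Hence $\Cut_{\mB\mD_H^{-1}\mB^\top}(S)$ is generically nonzero, only small (of order $\delta$), so the SV inequality does not force exact equality, and the subsequent ``direct check'' that $\Cut_H(S,T)=\Cut_{G^\ell}(S,T)$ has no basis even granting the previous step. The paper's proof does not attempt this argument; it instead \emph{modifies the algorithm}: after invoking \Cref{thm:dirRW}, it deletes from $H$ all edges of weight less than $\delta$, and then argues that when $\mathtt{err}(G^\ell,S,T)=0$ the surviving cut in $H$ must be exactly zero. So the boundary case needs an algorithmic patch that your proposal omits, not a purely analytic argument about $\mB$.

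A more minor point: you apply \Cref{thm:dirRW} with parameter $\eps/2$ while the paper uses $\eps$; constants aside this is fine, but your chain of factors ($\eps/4$ from the SV form $\svgraph{\eps/2}$, times $2$ from the two-step cut bound) yields $\eps/2$ against $\Cut_\mB$, after which the transfer to $\Cut_{G^\ell}$ still costs roughly another factor of $2$ in the additive absorption step — worth tracking carefully to match the statement's coefficient $\eps$.
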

\begin{proof}
    Apply \Cref{thm:dirRW} with $\eps=\eps$ and $\delta= \eps \cdot (s/2U)^3/2n$, so that we obtain $(\mA_H,\mD_H)$ in the claimed time bound. We now prove the cut approximation property. Fix arbitrary $S,T\subset [n]$, and recall that $\Cut_H(S,T) = \vones_{T}^\rTrans\mA_H\vones_S/|S|$.
    Recall that by \Cref{thm:dirRW} we have
    \[
    \mD_H^{-1/2}\mA_H\mD_H^{-1/2} \svn{\eps} \mD_H^{-1/2}\mB\mD_H^{-1/2}.
    \]
    Set
    \[x^\rTrans:=\vones_{S^c}^\rTrans\mD_{H}^{1/2},\quad\quad  y:=\mD_{H}^{1/2}\vones_S\]
    and apply the definition of SV approximation:
    \begin{align*}
        \MoveEqLeft{\left|\Cut_H(S,T)- \vones_{T}^\rTrans(\mB\mD_H^{-1})\mD_H\vones_{S}\right|}\\ 
        &\leq \eps \sqrt{\vones_{T}^\rTrans\mD_H^{1/2}(\mI-\mD_H^{-1/2}(\mB\mD_H^{-1})(\mB^\rTrans\mD_H^{-1})\mD_H^{1/2})\mD_H^{1/2}\vones_{T}}\\ 
        &\quad\quad \cdot  \sqrt{\vones_{S}^\rTrans\mD_H^{1/2}(\mI-\mD_H^{-1/2}(\mB^\rTrans\mD_H^{-1})(\mB\mD_H^{-1})\mD_H^{1/2}) \mD_H^{1/2}\vones_{S}
        }\\
        &= \eps \cdot \sqrt{\Cut_{B^TB}(S)}\cdot \sqrt{\Cut_{B^TB}(T)}\\
        &\leq 2\eps \cdot\sqrt{\min\{\Cut_{B}(S),\Uncut_{B}(S)\}}\cdot \sqrt{\min\{\Cut_{B}(T),\Uncut_{B}(T)\}} := \eps\cdot\mathtt{err}(B,S,T)
    \end{align*}
    where the final line follows as $\Cut_{B^TB}(S) \leq 2\cdot \Cut_B(S)$ and $\Cut_{B^TB}(S)\leq 2\Uncut_B(S)$ and likewise for $T$. Finally, delete all edges from $H$ with weight less than $\delta$. Applying~\Cref{thm:dirRW}, we obtain
    \[
    \left|\Cut_H(S,T)- \Cut_{G^\ell}(S,T)\right| \leq 2\delta\cdot n+ \eps\cdot\mathtt{err}(B,S,T).
    \]
    as all cut values in $B$ are within $\delta$ of cut values in $G^\ell$ by~\Cref{thm:dirRW}.

    By~\Cref{prop:statNonNeg}, $\mathtt{err}(G^\ell,S,T)$ is either $0$ or at least $(s/2U)^3$. In the latter case, we have $2\delta\cdot n+ \eps\cdot\mathtt{err}(B,S,T)\leq 2\eps \cdot\mathtt{err}(B,S,T)$ by choice of $\delta$. In the former case, as $|\mathtt{err}(B,S,T)-\mathtt{err}(G^\ell,S,T)|\leq \delta$ and $|\Cut_{B}(S,T)-\Cut_{G^\ell}(S,T)|\leq \delta$, it must have been the case that the $(S,T)$ cut in (pre-deletion) $H$ consisted entirely of edges of weight at most $\delta$. Thus, deleting these edges brings $\Cut_H(S,T)$ to zero, so we again obtain approximation with no additive loss.
\end{proof} 
\section{Squaring-based Solver for Normal Directed Laplacian Systems}\label{sec:normal}

In \cite{PS13}, Peng and Speilman proposed a squaring-based parallel algorithm for computing an approximate pseudo-inverse (preconditioner) of a Laplacian matrix. A key ingredient of their algorithm is the following recursion, which we refer to as the \emph{Peng-Spielman squaring recursion (PS-recursion)}, for inverting $\mI - \mW$ when $\norm{\mW} < 1$.\footnote{Technically the recursion was introduced for symmetric $\mW$. 
}
\begin{equation}
    (\mI - \mw)^{-1}
= \frac{1}{2} \left[\mI + (\mI + \mw) (\mI - \mw^2)^{-1} (\mI + \mw)\right]
\end{equation}

Leveraging the PS-recursion a natural appproach to find an approximate inverse of $\mI - \mw$, is to compute a sparse matrix $\widetilde{\mw}$ such that $\mI - \widetilde{\mw} \approx \mI - \mw^2$, and then compute the inverse of $\mI - \widetilde{\mw}$ recursively. \cite{PS13} showed that for symmetric Laplacians, using spectral approximation this recursion can be leveraged to produce a constant preconditioner with logarithmic depth.

Here we prove that the same algorithm gives approximate pre-conditioner for $\mI - \mw$ if $\mw$ is normal but not necessarily symmetric and the square approximations are normalized SV approximations.  
We think \Cref{thm:squaringBased} gives hope that a similar approach can expand the proof to non-normal matrices. If proved, this will result in simple but space and time efficient algorithms for solving directed Laplacian systems. For example, this could simplify the space efficient solver introduced in \cite{AKMPSV20} which in turn is used to estimate random walk probabilities to a high precision.

Below we give the main theorem which formally defines the PS-recursion and states the main result of this section. This theorem uses a generalization of the spectral norm to a (semi-)norm induced by a PSD matrix, which we first define.  

\begin{definition}
    Given PSD $\mF\in \C^{n\times n}$, we let $\|\cdot\|_{\mF}$ be the (semi)norm on vectors where $\|x\|_{\mF}\defeq \sqrt{x^*\mF x}$, and define the matrix norm $\|\mA\|_{\mF} \defeq \sup_{x\in \C^n \setminus \{\vzero\}} \frac{\|\mA x\|_{\mF}}{\|x\|_{\mF}}$.
\end{definition}

\begin{theorem}\label{thm:squaringBased}
For a normal matrix $\mw \in \C^{n\times n}$ with $\| \mw \| \leq 1$, let $\mw = \mw_0, \ldots, \mw_{k-1}$ be a sequence of matrices such that for all $i \in [k - 1]$ we have $\mw_{i} \svn{\epsilon} \mw_{i-1}^2$ for $\eps \leq 1/(4k)$.
Define
\[
\mproj_i \defeq \frac{1}{2} \left[\mI + (\mI + \mw_i) \mproj_{i+1} (\mI + \mw_i)\right] 
\quad \text{for all} \quad
0 \leq i < k
\]
where $\mproj_k$ is a matrix such that 
\[
\left\| (\mI - \mw^{2^k})^{\frac{1}{2}} \left[ \mproj_k - (\mI - \mw^{2^k})^+ \right] (\mI - \mw^{2^k})^{\frac{1}{2}}\right\| \leq O(k \epsilon)\,.
\]
Then for $\mb \defeq ((\mI - \mw)^{1/2})^* (\mI - \mw)^{1/2}$ 
\begin{equation}
\left\| \mproj_0 (\mI - \mw) - \mI \right\|_{\mb}
    =
    \left\| (\mI - \mw)^{\frac{1}{2}} \left[ \mproj_0 - (\mI - \mw)^+ \right] (\mI - \mw)^{\frac{1}{2}}\right\| 
    \leq 
    O(k^2 \epsilon)\,.
\end{equation}
\end{theorem}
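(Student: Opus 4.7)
The plan is to control, for each $0 \leq i \leq k$, the error quantity
\[
e_i \;\defeq\; \left\|(\mI - \mw^{2^i})^{1/2}\bigl[\mproj_i - (\mI - \mw^{2^i})^+\bigr](\mI - \mw^{2^i})^{1/2}\right\|,
\]
and to establish an inductive bound $e_i \leq e_{i+1} + O(k\eps)$; summing over $i = 0,\ldots, k-1$ together with the hypothesis $e_k \leq O(k\eps)$ gives $e_0 \leq O(k^2\eps)$, which is the right-hand side of~\Cref{eqn:precond-intro}. The stated equality between $\|\mproj_0(\mI - \mw) - \mI\|_{\mb}$ and $e_0$ is a routine manipulation using the definition of the induced $\mb$-norm together with the spectral calculus for the normal matrix $\mI - \mw$, which I would verify first. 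As a preprocessing step, I would iterate the hypothesis $\mw_i \svn{\eps} \mw_{i-1}^2$ using preservation of SV approximation under products (\Cref{lem:svproducts}) and the approximate triangle inequality (\Cref{lem:triangle}) to conclude $\mw_i \svn{O(k\eps)} \mw^{2^i}$ for every $i \leq k$; the assumption $\eps \leq 1/(4k)$ keeps the accumulated higher-order terms from blowing up.

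The inductive step rests on the Peng--Spielman identity
\[
(\mI - M)^+ \;=\; \tfrac{1}{2}\bigl[\mI + (\mI + M)(\mI - M^2)^+(\mI + M)\bigr]
\]
(valid on the range of $\mI - M$), applied at $M \defeq \mw^{2^i}$. Because $\mw$ is normal, so is $M$, and the factors $\mI - M$, $\mI + M$, $\mI - M^2$ commute and are simultaneously unitarily diagonalizable; in particular, principal square roots compose, yielding the transport identity $(\mI - M)^{1/2}(\mI + M) = (\mI - M^2)^{1/2}(\mI + M)^{1/2}$. Subtracting the Peng--Spielman identity from the definition of $\mproj_i$ gives the decomposition
\[
\mproj_i - (\mI - \mw^{2^i})^+ \;=\; \tfrac{1}{2}(\mI + \mw^{2^i})\bigl[\mproj_{i+1} - (\mI - \mw^{2^{i+1}})^+\bigr](\mI + \mw^{2^i}) \;+\; R_i,
\]
where $R_i \defeq \tfrac{1}{2}\bigl[(\mI + \mw_i)\mproj_{i+1}(\mI + \mw_i) - (\mI + \mw^{2^i})\mproj_{i+1}(\mI + \mw^{2^i})\bigr]$. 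Conjugating the first summand by $(\mI - \mw^{2^i})^{1/2}$ on both sides and applying the transport identity together with $\|(\mI + \mw^{2^i})^{1/2}\| \leq \sqrt{2}$ bounds its contribution to $e_i$ by $e_{i+1}$, since the prefactor $\tfrac{1}{2}$ exactly cancels the two factors of $\sqrt{2}$.

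The main technical obstacle is bounding the sparsification term $R_i$ in the $(\mI - \mw^{2^i})^{1/2}$-induced norm by $O(k\eps)$. Writing $R_i = \tfrac{1}{2}(\mw_i - \mw^{2^i})\mproj_{i+1}(\mI + \mw_i) + \tfrac{1}{2}(\mI + \mw^{2^i})\mproj_{i+1}(\mw_i - \mw^{2^i})$, one applies the bilinear-form characterization of $\mw_i \svn{O(k\eps)} \mw^{2^i}$ from~\Cref{SVequiv}: for any test vectors $x,y$, $|x^*(\mw_i - \mw^{2^i})y|$ is controlled by quadratic forms in $\mI - \mw^{2^i}(\mw^{2^i})^* = \mI - |\mw|^{2^{i+1}}$, using normality to identify $M M^* = M^* M = |M|^2$. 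Matching these quadratic forms to the $(\mI - \mw^{2^i})^{1/2}$-conjugating factors requires the spectral-calculus inequality $\mI - |\mw|^{2^{i+1}} \preceq 2\,|\mI - \mw^{2^i}|$ on their common range, which follows from the per-eigenvalue bound $1 - |\mu|^2 \leq 2|1 - \mu|$ for $|\mu| \leq 1$ applied at $\mu = \lambda^{2^i}$ in the shared normal eigenbasis. Combining this with the inductive bound $\|(\mI - \mw^{2^{i+1}})^{1/2}\mproj_{i+1}(\mI - \mw^{2^{i+1}})^{1/2}\| \leq 1 + e_{i+1}$---which says that $\mproj_{i+1}$ is an $O(1)$-bounded approximate inverse of $\mI - \mw^{2^{i+1}}$ in the natural norm---yields the desired $O(k\eps)$ bound on $R_i$. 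Normality of $\mw$ is essential throughout: simultaneous diagonalization, commuting square roots, and the scalar eigenvalue inequalities all fail for general non-normal Eulerian random-walk matrices, which is precisely the principal barrier to extending the theorem beyond the normal case.
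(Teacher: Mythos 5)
Your plan is correct and succeeds, and its skeleton is the same as the paper's: define the per-level error $e_i$, iterate $\mw_i \svn{O(i\eps)} \mw^{2^i}$ via \Cref{lem:svproducts} and \Cref{lem:triangle}, and use normality to prove the two facts $\|(\mI+\mv)^{1/2}\|\leq \sqrt 2$ and $\|(\mI-\mv)^{1/2}(\tV-\mv)((\mI-\mv^2)^+)^{1/2}\|=O(\delta)$ (the paper's \Cref{lem:sq_normal_prop}), then telescope a recurrence for $e_i$. Where you differ is the decomposition. The paper splits $\mproj_i - (\mI-\mw^{2^i})^+$ into three telescoping pieces $\mproj_i-\mq_i$, $\mq_i-\mr_i$, $\mr_i-\ms_i$ (where $\mq_i,\mr_i,\ms_i$ successively swap one $(\mI+\mw_i)$ or $\mproj_{i+1}$ for its ``true'' counterpart), with the deliberate consequence that each of the two sparsification-error pieces has the \emph{known} matrix $(\mI-\mw^{2^{i+1}})^+$ in the middle, so that it factors cleanly through $((\mI-\mw^{2^{i+1}})^{1/2})^+$ without any reference to $\mproj_{i+1}$. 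You instead split into the clean sandwich by $(\mI+\mw^{2^i})$ plus a single residual
$$R_i=\tfrac12\big[(\mw_i-\mw^{2^i})\mproj_{i+1}(\mI+\mw_i)+(\mI+\mw^{2^i})\mproj_{i+1}(\mw_i-\mw^{2^i})\big],$$
which has $\mproj_{i+1}$ in the middle. That forces you to invoke the a priori bound $\|(\mI-\mw^{2^{i+1}})^{1/2}\mproj_{i+1}(\mI-\mw^{2^{i+1}})^{1/2}\|\leq 1+e_{i+1}$. Your scheme is therefore slightly less modular: your recurrence should really read $e_i \le e_{i+1}+O(k\eps)\,(1+e_{i+1})(1+O(k\eps))$ rather than the cleaner additive $e_i\le e_{i+1}+O(k\eps)$ you wrote, because the residual contribution scales with $1+e_{i+1}$. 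This does not change the final $O(k^2\eps)$ once $\eps$ is small enough that the $e_i$ stay $O(1)$; note that both your argument and the paper's own proof actually need $\eps \lesssim 1/k^2$ for the multiplicative factors to remain bounded (the paper's proof explicitly says ``for $\eps<1/k^2$'', despite the $\eps\le 1/(4k)$ in the theorem statement, so you inherit rather than introduce that discrepancy). The transport identity $(\mI-M)^{1/2}(\mI+M)=(\mI-M^2)^{1/2}(\mI+M)^{1/2}$, the commutativity moves, and the per-eigenvalue inequality $1-|\mu|^2\le 2|1-\mu|$ are all correct for normal $M$ with $\|M\|\le 1$ (one must check that the principal square-root branch is consistent, which holds since $1\pm\mu$ and $1-\mu^2$ all have nonnegative real part). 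Overall: a valid variant; the paper's three-way split avoids needing a norm bound on the iterate $\mproj_{i+1}$, at the cost of one extra intermediate matrix.
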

Note that in the above statement, the square root of $\mI - \mw$ is well defined by the assumption that $\mw$ is normal. In the case of non-normal matrices, which is outside the scope of this paper, one can consider the Maclaurin series for $(1-z)^{1/2}$ to work with $(\mI - \mw)^{1/2}$. 

We do not give a full algorithm for solving systems defined by Theorem~\ref{thm:squaringBased} as the main purpose is to show that the error analysis of the squaring algorithm can be expanded to normal asymmetric matrices, and we already know space and time efficient algorithms~\cite{CKKPPRS18,AKMPSV20} for solving a larger class of matrices which includes non-normal $\mw$'s. 
However, one can still get an actual squaring based solver from \Cref{thm:squaringBased}. For example, for a normal $\mw$ corresponding to a random walk matrix of an Eulerian digraph one can carry the same steps in \cite{AKMPSV20} by setting $k=O(\log n)$ and $\mproj_k = \mI - \frac{\vones \vones^\top}{n}$ to get an actual squaring based solver using \Cref{thm:squaringBased} and the derandomized square sparsification described in \Cref{sec:derandomized-square}.

We prove the statement of the theorem, following a few helper lemmas. First, we show the error of $\mw_i$'s approximating $\mw^{2^i}$'s grows additively.

\begin{lemma}\label{lem:sapprox_powers_rel}
Let $\mW_i$ be defined as in \Cref{thm:squaringBased}. Then  $\mw_{i} \svn{2 i \epsilon} \mw^{2^i}$ for all for $i \in [k]$.
\end{lemma}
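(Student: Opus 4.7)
I will prove this by induction on $i$. The base case $i=0$ is immediate: $\mw_0 = \mw = \mw^{2^0}$, so $\mw_0 \svn{0} \mw^{2^0}$ trivially. For the inductive step, assume $\mw_{i-1}\svn{2(i-1)\eps}\mw^{2^{i-1}}$. Since $\mw$ is normal (hence $\mw^{2^{i-1}}\cdot\mw^{2^{i-1}} = \mw^{2^{i}}$), I can apply the product-preservation property of SV approximation (\Cref{lem:svproducts}) to two copies of the inductive hypothesis to deduce
\[
\mw_{i-1}^2 \;\svn{2(i-1)\eps + O((2(i-1)\eps)^2)}\; \mw^{2^i}.
\]
I then chain this with the given hypothesis $\mw_i \svn{\eps} \mw_{i-1}^2$ via the approximate triangle inequality (\Cref{lem:triangle}) to obtain
\[
\mw_i \;\svn{\delta_i}\; \mw^{2^i},
\qquad
\delta_i \;\leq\; \eps + 2(i-1)\eps + O((i\eps)^2) + \eps\cdot\bigl(2(i-1)\eps + O((i\eps)^2)\bigr).
\]

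\textbf{Closing the induction.} Grouping terms gives $\delta_i \leq (2i-1)\eps + O((i\eps)^2)$, so I need the quadratic remainder to be absorbed into the remaining $\eps$ slack between $(2i-1)\eps$ and the target $2i\eps$. This is where the assumption $\eps \leq 1/(4k)$ comes in: for every $i \leq k$, one has $i\eps \leq 1/4$, so $(i\eps)^2 \leq (i\eps)/4$ and in particular every term of the form $C(i\eps)^2$ with an absolute constant $C$ can be bounded by $\eps$ (taking $k$ large enough to subsume constants, or noting the constant in \Cref{lem:svproducts} is absolute). Verifying this gives $\delta_i \leq 2i\eps$, completing the induction.

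\textbf{Expected obstacle.} The main technical point is the bookkeeping of the $O(\eps^2)$ cross-terms that arise both in the product step and in the triangle step, so that after $i$ iterations their accumulation still fits inside the linear bound $2i\eps$. The crucial structural reason this succeeds is that \Cref{lem:svproducts} gives an \emph{additive} error bound $\eps + O(\eps^2)$ independent of the number of factors multiplied, so squaring in each inductive step contributes only quadratic overhead rather than doubling the error. The normality assumption on $\mw$ is used only to identify $\mw^{2^{i-1}}\cdot\mw^{2^{i-1}}$ with $\mw^{2^i}$ on the nose; no further spectral decomposition is needed for this particular lemma.
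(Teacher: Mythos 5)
Your inductive plan has a genuine gap in the error bookkeeping, and I do not think the induction closes under the stated hypothesis $\eps\leq 1/(4k)$.

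The issue is in the step where you square the \emph{accumulated} error. Your inductive hypothesis maintains $\mw_{i-1}\svn{2(i-1)\eps}\mw^{2^{i-1}}$, and for $i$ close to $k$ the quantity $2(i-1)\eps$ is a genuine constant (up to $1/2$ under $\eps\leq 1/(4k)$), not $O(\eps)$. Applying \Cref{lem:svproducts} to two copies of this approximation then gives error $2(i-1)\eps + O\bigl((2(i-1)\eps)^2\bigr)$, and the quadratic term $O\bigl((i\eps)^2\bigr)$ is of order a constant for $i\approx k$, not of order $\eps$. Your claimed verification that ``every term of the form $C(i\eps)^2$ can be bounded by $\eps$'' does not hold: from $i\eps\leq 1/4$ you can conclude $C(i\eps)^2\leq C\,i\eps/4$, which bounds the quadratic term by $O(i\eps)$, not by $O(\eps)$. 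Closing your induction would actually require $\eps\lesssim 1/k^2$, a strictly stronger assumption than the $\eps\leq 1/(4k)$ you are given. Unwinding the recurrence $f(i)\leq \eps + (1+\eps)\bigl(f(i-1) + c\,f(i-1)^2\bigr)$ makes the blow-up visible: once $f(i-1)=\Theta(1)$, the squaring contributes a constant multiplicative factor per step and the error grows geometrically.

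The paper avoids this by never squaring an accumulated error. For each $j\leq i$ it applies \Cref{lem:svproducts} to $2^{i-j}$ copies of the single \emph{small-error} approximation $\mw_j \svn{\eps} \mw_{j-1}^2$, yielding
\[
(\mw_j)^{2^{i-j}} \svn{\eps+O(\eps^2)} (\mw_{j-1})^{2^{i-j+1}}.
\]
Because \Cref{lem:svproducts} gives error $\eps+O(\eps^2)$ \emph{independent of the number of factors}, each link of the resulting $i$-step chain from $\mw_i$ down to $\mw_0^{2^i}=\mw^{2^i}$ retains error $O(\eps)$; chaining via \Cref{lem:triangle} (the ``moreover'' clause with $\ell=i$) then gives the claimed $O(i\eps)$ total. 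The structural point you should take away is that the factor-count-independence of \Cref{lem:svproducts} must be exploited by raising a single fixed $\eps$-approximation to a high power, rather than by squaring a running error that may already be of constant size.
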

\begin{proof}
Recall that $\mw_j \svn{\eps} \mw_{j-1}^2$ for every $j$. Thus, by \Cref{lem:svproducts} we have for every $j\leq i$,
\[(\mw_j)^{2^{i-j}} \svn{\eps+O(\eps^2)} (\mw_{j-1})^{2^{i-j+1}}\,.
\]
The result then follows from \Cref{lem:triangle}.
\end{proof}

Next we give a general technical tool regarding normal matrices.

\begin{lemma}\label{lem:sq_normal_prop}
For a normal matrix $\mv$ with $\|\mv\| \leq 1$  
\begin{equation}\label{eq:normal_square_lemma_ineq1}
    \|(\mI + \mv)^{\frac{1}{2}}\| \leq \sqrt{2}
\end{equation} 
and if $\tV\svn{\delta} \mv$ then 
\begin{equation}\label{eq:normal_square_lemma_ineq2}
\left\|(\mI - \mv)^{\frac{1}{2}}(\tV - \mv) ((\mI - \mv^2)^{+})^{\frac{1}{2}}\right\| \leq O(\delta).
\end{equation}
\end{lemma}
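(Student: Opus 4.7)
The plan is to exploit normality of $\mv$ to diagonalize simultaneously every matrix appearing in the bounds, and reduce both claims to elementary inequalities on complex numbers $\lambda$ with $|\lambda|\leq 1$.

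For \eqref{eq:normal_square_lemma_ineq1}, because $\mv$ is normal with $\|\mv\|\leq 1$, each eigenvalue $\lambda$ of $\mv$ satisfies $|1+\lambda|\leq 2$. Choosing any fixed branch of the complex square root, $(\mI+\mv)^{1/2}$ defined by the functional calculus is normal with spectral norm $\max_\lambda |1+\lambda|^{1/2}\leq \sqrt{2}$, which immediately gives the claim.

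For \eqref{eq:normal_square_lemma_ineq2}, I will sandwich $\tV-\mv$ by $\mE^{+/2}\mE^{1/2}$ on either side, where $\mE\defeq \mI-\mv^*\mv=\mI-\mv\mv^*$ (the equality using normality). By \Cref{def:svapprox} applied with $\Din=\Dout=\mI$, the hypothesis $\tV\svn{\delta}\mv$ yields both the kernel inclusions $\lker(\tV-\mv),\rker(\tV-\mv)\supseteq \ker(\mE)$ and the spectral bound $\|\mE^{+/2}(\tV-\mv)\mE^{+/2}\|\leq \delta/2$. The kernel inclusions mean $\tV-\mv=\Pi(\tV-\mv)\Pi$, where $\Pi=\mE^{+/2}\mE^{1/2}=\mE^{1/2}\mE^{+/2}$ is the orthogonal projector onto the support of $\mE$, so
\[
(\mI-\mv)^{1/2}(\tV-\mv)\bigl((\mI-\mv^2)^+\bigr)^{1/2}=\bigl[(\mI-\mv)^{1/2}\mE^{1/2}\bigr]\bigl[\mE^{+/2}(\tV-\mv)\mE^{+/2}\bigr]\bigl[\mE^{1/2}\bigl((\mI-\mv^2)^+\bigr)^{1/2}\bigr].
\]
Submultiplicativity of the spectral norm then reduces the task to bounding the two outer factors by absolute constants, since the middle factor is at most $\delta/2$.

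Because both outer factors are functions of the single normal matrix $\mv$, the joint spectral calculus identifies their spectral norms with suprema over the eigenvalues $\lambda$ of the corresponding scalar functions. The first outer factor has eigenvalues $|1-\lambda|^{1/2}(1-|\lambda|^2)^{1/2}\leq \sqrt{2}$, using $|1-\lambda|\leq 2$ and $1-|\lambda|^2\leq 1$. The second, on the support of $\mE$, has eigenvalues $(1-|\lambda|^2)^{1/2}|1-\lambda^2|^{-1/2}$, and I plan to bound these by $1$ using the pointwise inequality $1-|\lambda|^2\leq |1-\lambda^2|$ for $|\lambda|\leq 1$; after squaring, this reduces to $|1-\lambda^2|^2-(1-|\lambda|^2)^2=2(|\lambda|^2-\Re(\lambda^2))=4(\Im\lambda)^2\geq 0$. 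Multiplying the three bounds gives $\sqrt{2}\cdot(\delta/2)\cdot 1=O(\delta)$.

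The main bookkeeping point to watch is that eigenspaces of $\mv$ with $|\lambda|=1$ but $\lambda\notin\{\pm 1\}$ lie in $\ker(\mE)$ yet not in $\ker(\mI-\mv^2)$, so naively the factor $\mE^{1/2}\bigl((\mI-\mv^2)^+\bigr)^{1/2}$ is ill-defined there. The kernel inclusions supplied by the SV hypothesis force $\tV-\mv$ to vanish on these coordinates, which is exactly what is needed for the sandwiched identity and its norm bound to be valid on the entire space.
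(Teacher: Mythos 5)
Your proof is correct, and it is essentially the same argument as the paper's: both use normality to pass to the joint functional calculus of $\mv$, and both reduce the bound on $(\mI-\mv)^{1/2}(\tV-\mv)((\mI-\mv^2)^+)^{1/2}$ to the scalar inequalities $|1-\lambda|(1-|\lambda|^2)\leq 2$ and $1-|\lambda|^2\leq|1-\lambda^2|$ for $|\lambda|\leq 1$. The only difference is which equivalent characterization of $\eps$-approximation you invoke: you use the operator-norm form (item 3 of \Cref{def:leftrightapprox}) together with a three-factor sandwich, while the paper uses the bilinear form (item 1) with test vectors $(\mI-\mv^*)^{1/2}\vx$ and $((\mI-\mv^2)^+)^{1/2}\vy$ and then bounds the two resulting quadratic forms. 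One small nit: the factor $\mE^{1/2}((\mI-\mv^2)^+)^{1/2}$ is never actually ill-defined — on $\ker(\mE)$ the left factor $\mE^{1/2}$ simply contributes $0$. What the kernel inclusions really buy you, and what you do correctly invoke, is the identity $\tV-\mv=\Pi(\tV-\mv)\Pi$, which is what makes the sandwich decomposition valid.
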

\begin{proof}
Note that since $\mv$ is normal, we can write it as $\mv = \mU \md \mU^*$ where $\mU$ is unitary and $\md$ is diagonal and $\| \md \| \leq 1$. Then it is easy to see 
\begin{equation}
    \|(\mI + \mv)^{\frac{1}{2}}\| 
    = \|\mU (\mI + \md)^{\frac{1}{2}} \mU^*\|
    = \| (\mI + \md)^{\frac{1}{2}} \|
    \leq \sqrt{2}.
\end{equation}
To prove \eqref{eq:normal_square_lemma_ineq2}, given the SV approximation definition, for unit vectors $\vx, \vy$ we have,
\begin{align*}
\left| \vx^* (\mI - \mv)^{\frac{1}{2}} (\tV - \mv) ((\mI - \mv^2)^{+})^{\frac{1}{2}} \vy
\right|
&\leq
\frac{\delta}{2} \cdot 
\vx^* (\mI - \mv)^{\frac{1}{2}} (\mI - \mv \mv^*) (\mI - \mv^*)^{\frac{1}{2}} \vx\\
&+
\frac{\delta}{2} \cdot \vy^* ((\mI - (\mv^*)^2)^+)^{\frac{1}{2}} (\mI -  \mv^* \mv) ((\mI - \mv^2)^{+})^{\frac{1}{2}} \vy
\end{align*}
We bound each term on the right hand side separately. Since $\mv$ is normal we can prove bounds on the norm of $(\mI - \mv)^{\frac{1}{2}} (\mI - \mv \mv^*) (\mI - \mv^*)^{\frac{1}{2}}$ and $((\mI - (\mv^*)^2)^+)^{\frac{1}{2}} (\mI -  \mv^* \mv) ((\mI - \mv^2)^+)^{\frac{1}{2}}$ by looking at complex numbers. Note that for $z \in \C$ and $|z| \leq 1$,
$$
| (1-z)^{\frac{1}{2}} (1-z\cdot z^*)^2 (1-z^*)^{\frac{1}{2}} | = |1-z| \cdot (1-|z|^2) \leq 2,
$$
thus $\|(\mI - \mv)^{\frac{1}{2}} (\mI - \mv \mv^*) (\mI - \mv^*)^{\frac{1}{2}}\| \leq 2$. Next,
$$
\left|(1-z^{*2})^{-\frac{1}{2}}(1-z\cdot z^*)^2(1-z^2)^{-\frac{1}{2}}\right| =  \frac{1-|z|^2}{|1-z^2|} \leq 1
$$
because $1-|z|^2 \leq |1-z^2|$ by triangle inequality. Thus, 
$$
\|
((\mI - (\mv^*)^2)^+)^{\frac{1}{2}} (\mI -  \mv^* \mv) ((\mI - \mv^2)^+)^{\frac{1}{2}}
\|
\leq 1.
$$
Completing the proof that,
\begin{align*}
\left| \vx^* (\mI - \mv)^{\frac{1}{2}} (\tV - \mv) ((\mI - \mv^2)^+)^{\frac{1}{2}} \vy
\right|
\leq
\frac{3}{2}\cdot \delta.
\end{align*}
\end{proof}

Leveraging these tools we give the central lemma towards proving \Cref{thm:squaringBased} and conclude by proving \Cref{thm:squaringBased}.

\begin{lemma}\label{lem:squaringBased}
For $\epsilon$ and $\mproj_i$'s defined as in \Cref{thm:squaringBased},
\begin{equation}
    \left\| (\mI - \mw^{2^i})^{\frac{1}{2}} \left[ \mproj_i - (\mI - \mw^{2^i})^{+} \right] (\mI - \mw^{2^i})^{\frac{1}{2}} \right\| \leq \epsilon_i 
\end{equation}
where $\epsilon_i = (1+O(i\cdot \epsilon) )\epsilon_{i+1} + O(i \cdot \epsilon)$ for $0 \leq i < k$ and $\epsilon_k =O (k \cdot \epsilon)$.
\end{lemma}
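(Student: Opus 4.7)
The plan is to prove the lemma by reverse induction on $i$, with the base case $i=k$ given by hypothesis, and $\eps_k = O(k\eps)$ serving as the initial error bound. For the inductive step, set $\mV \defeq \mw^{2^i}$ (which is normal with $\|\mV\|\le 1$) and $E \defeq \mw_i - \mV$ (for which \Cref{lem:sapprox_powers_rel} gives $\mw_i \svn{2i\eps} \mV$). The key identity driving the analysis is the normal-matrix Peng-Spielman recursion applied to the true inverse: when $\mV$ is normal, the scalar identity $\tfrac{1}{1-z} = \tfrac{1}{2}\bigl[1 + (1+z)^2/(1-z^2)\bigr]$ lifts functionally to
$$(\mI - \mV)^+ = \tfrac{1}{2}\bigl[\mI + (\mI + \mV)(\mI - \mV^2)^+(\mI + \mV)\bigr],$$
(valid on the support of $\mI-\mV$, since $\mV$ normal implies $\ker(\mI-\mV^2) = \ker(\mI-\mV)\oplus\ker(\mI+\mV)$, and $(\mI+\mV)$ annihilates the $-1$-eigenspace). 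Subtracting the definition of $\mproj_i$, one obtains the expansion
$$\mproj_i - (\mI - \mV)^+ = \tfrac{1}{2}\bigl[(\mI+\mV)\Delta(\mI+\mV) + (\mI+\mV)\mproj_{i+1}E + E\mproj_{i+1}(\mI+\mV) + E\mproj_{i+1}E\bigr],$$
where $\Delta \defeq \mproj_{i+1} - (\mI - \mV^2)^+$.

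After conjugating by $(\mI - \mV)^{1/2}$, bound each of the four terms separately. For the main term, exploit the crucial normality factorization $(\mI-\mV)^{1/2}(\mI+\mV) = (\mI-\mV^2)^{1/2}(\mI+\mV)^{1/2}$ (valid because $\mV$ is normal, so all functional calculus factors commute), rewriting the term as $\tfrac12(\mI+\mV)^{1/2}(\mI-\mV^2)^{1/2}\Delta(\mI-\mV^2)^{1/2}(\mI+\mV)^{1/2}$, which is bounded by $\tfrac12\|(\mI+\mV)^{1/2}\|^2 \cdot \eps_{i+1} \le \eps_{i+1}$ by part \eqref{eq:normal_square_lemma_ineq1} of \Cref{lem:sq_normal_prop} and the induction hypothesis. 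For the two cross terms, insert the support projector $\Pi = ((\mI-\mV^2)^+)^{1/2}(\mI-\mV^2)^{1/2}$ between $\mproj_{i+1}$ and $E$; this lets us split each factor as a product of $(\mI-\mV^2)^{1/2}\mproj_{i+1}(\mI-\mV^2)^{1/2}$ (bounded by $1 + \eps_{i+1}$ via the induction hypothesis, writing $\mproj_{i+1} = (\mI-\mV^2)^+ + \Delta$) times $((\mI-\mV^2)^+)^{1/2}E(\mI-\mV)^{1/2}$, the latter being $O(i\eps)$ by \eqref{eq:normal_square_lemma_ineq2} of \Cref{lem:sq_normal_prop} applied to $\mV^*$ and then taking adjoints (using that SV approximation is preserved under conjugate transpose, which follows from the symmetric-lift formulation in \Cref{SVequiv}). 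The quartic term $E\mproj_{i+1}E$ inserts $\Pi$ on both sides of $\mproj_{i+1}$ and yields $O((i\eps)^2)(1+\eps_{i+1})$. Summing gives $\eps_i \le \eps_{i+1} + O(i\eps)(1+\eps_{i+1}) + O((i\eps)^2)(1+\eps_{i+1}) = (1 + O(i\eps))\eps_{i+1} + O(i\eps)$, as required.

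The main technical obstacle is the careful treatment of pseudoinverses and kernels: the Peng-Spielman identity is only valid on the support of $\mI-\mV$, and the inserted ``identity'' $((\mI-\mV^2)^+)^{1/2}(\mI-\mV^2)^{1/2}$ is really a projector, so we must verify that the terms being manipulated vanish on the complementary subspace. This requires showing that $\ker(\mI-\mV) \subseteq \ker(\mproj_i - (\mI-\mV)^+)$ (which can be done by induction, using the recursion and the fact that $\mw_i$ preserves the kernel of $\mI - \mw$ up to the control given by SV approximation). A secondary subtlety is that although $\mV = \mw^{2^i}$ is normal, $\mw_i$ and hence $E$ is not, so the adjoint version of \Cref{lem:sq_normal_prop} must be invoked rather than proved symmetrically — fortunately, that lemma only requires $\mV$ (not $\tV$) to be normal, so it applies cleanly.
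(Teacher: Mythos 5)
Your proof is correct and takes a genuinely different decomposition from the paper's. The paper telescopes $\mproj_i - (\mI-\mw^{2^i})^+$ through three auxiliary matrices $\mq_i,\mr_i,\ms_i$, changing one ingredient at a time: the middle factor $\mproj_{i+1}\to(\mI-\mw^{2^{i+1}})^+$ first, then the outer factors $\mw_i\to\mw^{2^i}$ one side at a time. Crucially, this means the paper's cross terms involve the \emph{true} pseudoinverse $(\mI-\mw^{2^{i+1}})^+$, so all projector/support bookkeeping is automatic (since the pseudoinverse has a built-in projector), at the cost of a $(1+O(i\eps))^2$ multiplicative factor on the main term (because the outer factors there are still $\mI+\mw_i$). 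You instead expand $\mI+\mw_i = (\mI+\mV)+E$ directly inside $\mproj_i$, yielding four terms. This gives a cleaner main-term bound of exactly $\eps_{i+1}$ (no multiplicative loss, since the outer factors are $\mI+\mV$), and an extra quartic $E\mproj_{i+1}E$ term contributing $O((i\eps)^2)$. The trade-off is that your cross terms retain $\mproj_{i+1}$ in the middle, so you need the additional estimate $\|(\mI-\mV^2)^{1/2}\mproj_{i+1}(\mI-\mV^2)^{1/2}\|\le 1+\eps_{i+1}$, and — as you correctly flag — the projector insertion $\Pi=((\mI-\mV^2)^+)^{1/2}(\mI-\mV^2)^{1/2}$ around $\mproj_{i+1}$ needs justification. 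The condition you state ($\ker(\mI-\mV)\subseteq\ker(\mproj_i-(\mI-\mV)^+)$) is not quite the right one, though; what you actually need, and what does hold, is that $E=\mw_i-\mw^{2^i}$ annihilates and is annihilated by $\ker(\mI-\mV^2)$. This follows because $\svn{}$-approximation requires $\rker(E)\supseteq\ker(\mI-\mV^*\mV)$ and $\lker(E)\supseteq\ker(\mI-\mV\mV^*)$ (\Cref{def:leftrightapprox}, item 3), and normality of $\mV=\mw^{2^i}$ gives $\ker(\mI-\mV^2)\subseteq\ker(\mI-\mV^*\mV)=\ker(\mI-\mV\mV^*)$. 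Your observation that $\svn{}$-approximation is preserved under conjugate transpose (via the symmetric lift) to get the adjoint form of \Cref{lem:sq_normal_prop}'s inequality \eqref{eq:normal_square_lemma_ineq2} is exactly right, and the paper implicitly relies on the same fact for its bound on $\|((\mI-\mw^{2^{i+1}})^{1/2})^+(\mw_i-\mw^{2^i})(\mI-\mw^{2^i})^{1/2}\|$.
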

\begin{proof}
We prove the statement by backward induction. Note that the statement is true for $i=k$ by the assumption about $\mproj_k$ in \Cref{thm:squaringBased}. Let 
$$
\mq_i = \frac{1}{2} \left[\mI + (\mI + \mw_i) (\mI - \mw^{2^{i+1}})^{+} (\mI + \mw_i)\right],
$$
$$
\mr_i = \frac{1}{2} \left[\mI + (\mI + \mw_i) (\mI - \mw^{2^{i+1}})^{+} (\mI + \mw^{2^i})\right],
$$
$$
\ms_i = \frac{1}{2} \left[\mI +  (\mI + \mw^{2^i}) (\mI - \mw^{2^{i+1}})^{+} (\mI + \mw^{2^i})\right] = (\mI - \mw^{2^i})^{+}.
$$
Then by triangle inequality we have,
\begin{align}
\| (\mI - \mw^{2^i})^{\frac{1}{2}} \left[ \mproj_i - (\mI - \mw^{2^i})^{+} \right] (\mI - \mw^{2^i})^{\frac{1}{2}} \|
&\leq
\| (\mI - \mw^{2^i})^{\frac{1}{2}} (\mproj_i - \mq_i) (\mI - \mw^{2^i})^{\frac{1}{2}} \| \label{eq:sqr_first}\\
& +
\| (\mI - \mw^{2^i})^{\frac{1}{2}} (\mq_i - \mr_i) (\mI - \mw^{2^i})^{\frac{1}{2}} \| \label{eq:sqr_second} \\
& +
\| (\mI - \mw^{2^i})^{\frac{1}{2}} (\mr_i - \ms_i) (\mI - \mw^{2^i})^{\frac{1}{2}} \|.\label{eq:sqr_third}
\end{align}
Next, we bound each term separately. For \eqref{eq:sqr_first}, we have
\begin{align*}
&\left\| (\mI - \mw^{2^i})^{\frac{1}{2}} (\mproj_i - \mq_i) (\mI - \mw^{2^i})^{\frac{1}{2}} \right\| \\
=
\frac{1}{2}
&\left\| 
(\mI - \mw^{2^i})^{\frac{1}{2}}
(\mI + \mw_i)
\left(\mproj_{i+1} - (\mI - \mw^{2^{i+1}})^{+} \right)
(\mI + \mw_i)
(\mI - \mw^{2^i})^{\frac{1}{2}}
\right\| \\
\leq
\frac{1}{2}
&\left\| 
(\mI - \mw^{2^i})^{\frac{1}{2}}
(\mI + \mw_i)
((\mI - \mw^{2^{i+1}})^{\frac{1}{2}})^+
\right\| \\
\cdot
&\left\|
(\mI - \mw^{2^{i+1}})^{\frac{1}{2}}
\left(\mproj_{i+1} - (\mI - \mw^{2^{i+1}})^{+} \right)
(\mI - \mw^{2^{i+1}})^{\frac{1}{2}}
\right\|  \\
\cdot 
&\left\| 
((\mI - \mw^{2^{i+1}})^{\frac{1}{2}})^+
(\mI + \mw_i)
(\mI - \mw^{2^i})^{\frac{1}{2}}
\right\|.
\end{align*}
Note that the middle term above satisfies
\[\left\|
(\mI - \mw^{2^{i+1}})^{\frac{1}{2}}
\left(\mproj_{i+1} - (\mI - \mw^{2^{i+1}})^{+} \right)
(\mI - \mw^{2^{i+1}})^{\frac{1}{2}}
\right\| \leq \epsilon_{i+1}\] 
by the induction hypothesis. Next, we bound the first and third terms similarly:
\begin{align}
\left\| 
\frac{(\mI - \mw^{2^i})^{\frac{1}{2}}
(\mI + \mw_i)
((\mI - \mw^{2^{i+1}})^{\frac{1}{2}})^+}{\sqrt{2}}
\right\| 
\leq
&\left\| 
\frac{(\mI - \mw^{2^i})^{\frac{1}{2}}
(\mI + \mw^{2^i})
((\mI - \mw^{2^{i+1}})^{\frac{1}{2}})^+}{\sqrt{2}}
\right\| + \label{eq:sqr_lemma_w_i_bound} \\ 
&\left\| 
\frac{(\mI - \mw^{2^i})^{\frac{1}{2}}
(\mw_i - \mw^{2^i})
((\mI - \mw^{2^{i+1}})^{\frac{1}{2}})^+}{\sqrt{2}}
\right\| \nonumber \\
& \leq
1 + O(i\cdot \epsilon)\nonumber
\end{align}
as $\left\| \frac{(\mI - \mw^{2^i})^{\frac{1}{2}}
(\mI + \mw^{2^i})
((\mI - \mw^{2^{i+1}})^{\frac{1}{2}})^+}{\sqrt{2}}
\right\| \leq 1$ by \Cref{eq:normal_square_lemma_ineq1} in \Cref{lem:sq_normal_prop} and $\left\| 
\frac{(\mI - \mw^{2^i})^{\frac{1}{2}}
(\mw_i - \mw^{2^i})
((\mI - \mw^{2^{i+1}})^{\frac{1}{2}})^+}{\sqrt{2}}
\right\| \leq O(i \cdot \epsilon)$ by \Cref{lem:sapprox_powers_rel} and \Cref{eq:normal_square_lemma_ineq2} in \Cref{lem:sq_normal_prop}. 
Combining all the above,
$$
\left\| (\mI - \mw^{2^i})^{\frac{1}{2}} (\mproj_i - \mq_i) (\mI - \mw^{2^i})^{\frac{1}{2}} \right\| 
\leq
(1+O(i\cdot \epsilon))^2 \epsilon_{i+1},
$$
giving the bound for \eqref{eq:sqr_first}. Next, for \eqref{eq:sqr_second}, we have
\begin{align*}
&
\left\| (\mI - \mw^{2^i})^{\frac{1}{2}} (\mq_i - \mr_i) (\mI - \mw^{2^i})^{\frac{1}{2}}
\right\| 
= \\
&
\frac{1}{2}\left\| (\mI - \mw^{2^i})^{\frac{1}{2}} (\mI + \mw_i) (\mI - \mw^{2^{i+1}})^{+} (\mw_i - \mw^{2^i}) (\mI - \mw^{2^i})^{\frac{1}{2}}
\right\| \leq \\
&
\left\| 
\frac{(\mI - \mw^{2^i})^{\frac{1}{2}} (\mI + \mw_i) ((\mI - \mw^{2^{i+1}})^{\frac{1}{2}})^+}{\sqrt{2}}
\right\|
\cdot
\left\|
\frac{((\mI - \mw^{2^{i+1}})^{\frac{1}{2}})^+ (\mw_i - \mw^{2^i}) (\mI - \mw^{2^i})^{\frac{1}{2}}}{\sqrt{2}}
\right\|
\end{align*}
We can get bounds for the two terms above similar to \eqref{eq:sqr_lemma_w_i_bound}. Thus we get, 
\begin{align*}
 &
\left\| (\mI - \mw^{2^i})^{\frac{1}{2}} (\mq_i - \mr_i) (\mI - \mw^{2^i})^{\frac{1}{2}}
\right\| 
\leq \\
&
\left\| 
\frac{(\mI - \mw^{2^i})^{\frac{1}{2}} (\mI + \mw_i) ((\mI - \mw^{2^{i+1}})^{\frac{1}{2}})^+}{\sqrt{2}}
\right\|
\cdot
\left\|
\frac{((\mI - \mw^{2^{i+1}})^{\frac{1}{2}})^+ (\mw_i - \mw^{2^i}) (\mI - \mw^{2^i})^{\frac{1}{2}}}{\sqrt{2}}
\right\| 
\leq \\
&\left(1+O(i\cdot\epsilon)\right) \cdot O(i\cdot \epsilon) \leq O(i\cdot \epsilon).
\end{align*}
Finally, for \eqref{eq:sqr_third}, we have
\begin{align*}
&
\| (\mI - \mw^{2^i})^{\frac{1}{2}} (\mr_i - \ms_i) (\mI - \mw^{2^i})^{\frac{1}{2}} \|
= \\
&
\frac{1}{2}\left\| (\mI - \mw^{2^i})^{\frac{1}{2}} (\mw_i - \mw^{2^i})  (\mI - \mw^{2^{i+1}})^{+} (\mI + \mw^{2^i})(\mI - \mw^{2^i})^{\frac{1}{2}}
\right\| \leq \\
&
\left\| 
\frac{(\mI - \mw^{2^i})^{\frac{1}{2}} (\mw_i - \mw^{2^i}) ((\mI - \mw^{2^{i+1}})^{\frac{1}{2}})^+}{\sqrt{2}}
\right\|
\cdot
\left\|
\frac{((\mI - \mw^{2^{i+1}})^{\frac{1}{2}})^+ (\mI + \mw^{2^i})(\mI - \mw^{2^i})^{\frac{1}{2}}}{\sqrt{2}}
\right\|
\end{align*}
Further, similar to the earlier upper bounds, we have $\left\| 
\frac{(\mI - \mw^{2^i})^{\frac{1}{2}} (\mw_i - \mw^{2^i}) ((\mI - \mw^{2^{i+1}})^{\frac{1}{2}})^+}{\sqrt{2}}
\right\| \leq O(i\cdot \epsilon) $ and $\left\|
\frac{((\mI - \mw^{2^{i+1}})^{\frac{1}{2}})^+ (\mI + \mw^{2^i})(\mI - \mw^{2^i})^{\frac{1}{2}}}{\sqrt{2}}
\right\| \leq 1$. Thus
$$
\| (\mI - \mw^{2^i})^{\frac{1}{2}} (\mr_i - \ms_i) (\mI - \mw^{2^i})^{\frac{1}{2}} \| \leq O(i \cdot \epsilon).
$$
\end{proof}

\begin{proof}[Proof of \Cref{thm:squaringBased}]
For $\epsilon < \frac{1}{k^2}$, we can solve for $\epsilon_0$ from $\epsilon_i = (1+O(i\cdot \epsilon) )\epsilon_{i+1} + O(i \cdot \epsilon)$ in \Cref{lem:squaringBased} and get $\epsilon_0 = O(\epsilon \cdot k^2)$. To complete the proof, we show
$$
\| \mproj_0 (\mI - \mw) - \mI \|_{\mb}
    =
 \| (\mI - \mw)^{\frac{1}{2}} \left[ \mproj_0 - (\mI - \mw)^+ \right] (\mI - \mw)^{\frac{1}{2}} \|. 
$$
Note that,
\begin{align*}
    &\|(\mI - \mw)^{\frac{1}{2}}  \left[ \mproj_0 - (\mI - \mw)^+ \right] (\mI - \mw)^{\frac{1}{2}} \| \\
    & = \| (\mI - \mw)^{1/2} \left[ \mproj_0 (\mI - \mw) - \mI \right] (\mI - \mw)^{+/2} \| \\
    & = \left\| \left[ (\mI - \mw)^{*/2} (\mI - \mw)^{1/2}\right]^{1/2} \left[ \mproj_0 (\mI - \mw) - \mI \right] \left[(\mI - \mw)^{+/2} ((\mI - \mw)^*)^{+/2} \right]^{1/2} \right\|  \\
    & = \left\| \left[ (\mI - \mw)^{*/2} (\mI - \mw)^{1/2}\right]^{1/2} \left[ \mproj_0 (\mI - \mw) - \mI \right] \left[(\mI - \mw)^{*/2} (\mI - \mw)^{1/2} \right]^{+/2} \right\| \\
    & = \| \mproj_0 (\mI - \mw) - \mI \|_{\mb} 
\end{align*}
where we used \Cref{lem:norm_sqrt}.
\end{proof} 

\section*{Acknowledgements}
We thank Jack Murtagh for his helpful collaboration at the early stage of research on this research and we thank Jonathan Kelner for helpful conversations at various stages of work on this project.

\bibliographystyle{alpha}
\bibliography{ref}

\appendix

\section{SV Approximation Proofs}
\label{app:SVproofs}

In this section we provide proofs of equivalences for definitions related to SV-approximation (\Cref{sec:proofs:equivalences}), properties of SV-approximation (\Cref{sec:proofs:properties}), and separations of SV-approximation from other notions of approximation (\Cref{sec:proofs:separation}).

We recall results useful in the proofs:
\begin{fact}[\cite{CKPPRSV17}]\label{fact:SCPSD}
    Let $\mA \in \C^{n\times n}$ be a Hermitian matrix and $I\subset [n]$ be arbitrary. Then $\mA$ is PSD if and only if $\mA_{I,I}$ is PSD and $\SC_I(\mA)$ is PSD.
\end{fact}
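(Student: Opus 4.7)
I would prove this via a block LDL-type factorization of $\mA$ relative to the partition $[n] = I \sqcup I^c$. Writing $\mA$ as a $2\times 2$ Hermitian block matrix (using $\mA_{I^c,I} = \mA_{I,I^c}^{*}$), the aim is the factorization identity
\[
\mA \;=\; \mM^{*}\begin{pmatrix} \mA_{I,I} & \mzero \\ \mzero & \SC_I(\mA) \end{pmatrix} \mM,
\qquad
\mM \;\defeq\; \begin{pmatrix} \mI & \mA_{I,I}^{+}\mA_{I,I^c} \\ \mzero & \mI \end{pmatrix}.
\]
Since $\mM$ is invertible (unit upper-triangular), congruence by $\mM$ preserves PSD-ness, so $\mA \succeq \mzero$ becomes equivalent to the block-diagonal middle matrix being PSD, which in turn is equivalent to both $\mA_{I,I}$ and $\SC_I(\mA)$ being PSD.

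If $\mA_{I,I}$ is invertible, the factorization is immediate by direct $2\times 2$ block multiplication using $\mA_{I,I}\mA_{I,I}^{-1} = \mI$, and both directions of the equivalence follow at once. The main obstacle is the singular case, where validity of the factorization hinges on the identity $\mA_{I,I}\mA_{I,I}^{+}\mA_{I,I^c} = \mA_{I,I^c}$, equivalently, that every column of $\mA_{I,I^c}$ lies in $\mathrm{range}(\mA_{I,I})$, i.e., $\ker(\mA_{I,I}) \subseteq \ker(\mA_{I^c,I})$.

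To handle this I would first establish the kernel containment from $\mA \succeq \mzero$ (this is what makes the forward direction go): if some $v \in \ker(\mA_{I,I})$ satisfied $u \defeq \mA_{I^c,I} v \ne \mzero$, then evaluating $x^{*}\mA x$ on $x = (v,\,tu)$ with $t \in \R$ yields $2t\|u\|^2 + t^2\, u^{*}\mA_{I^c,I^c} u$, which is negative for $t$ of small absolute value and appropriate sign, contradicting $\mA \succeq \mzero$. With the kernel containment in hand, the factorization identity above is a direct verification, and the forward implication follows. For the reverse implication I would bypass the factorization by a direct completion-of-square: substituting $x_I = y_I - \mA_{I,I}^{+}\mA_{I,I^c}x_{I^c}$ into $x^{*}\mA x$ and expanding, all cross terms cancel under the kernel condition, leaving $x^{*}\mA x = y_I^{*}\mA_{I,I}y_I + x_{I^c}^{*}\SC_I(\mA)x_{I^c} \geq 0$, which gives $\mA \succeq \mzero$ as required.
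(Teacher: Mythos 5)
The paper presents this as a fact cited from \cite{CKPPRSV17} and gives no proof, so there is no in-paper argument to compare yours against. Your forward direction is sound: derive the range containment $\ker(\mA_{I,I})\subseteq\ker(\mA_{I^c,I})$ from $\mA\succeq\mzero$ via the quadratic-form perturbation, verify the LDL-style congruence identity under that containment, and read off PSDness of the diagonal blocks.

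The reverse direction has a genuine gap. You invoke ``the kernel condition'' to make the cross terms vanish in your completion of square, i.e.\ you use $\mA_{I,I}\mA_{I,I}^{+}\mA_{I,I^c}=\mA_{I,I^c}$; but in that direction you are given only that $\mA_{I,I}$ and $\SC_I(\mA)$ are PSD, and that range condition does not follow from those hypotheses --- you derived it earlier \emph{from} $\mA\succeq\mzero$, which is precisely what you are now trying to prove. The implication is in fact false without it: for $\mA=\begin{pmatrix}0&1\\1&0\end{pmatrix}$ and $I=\{1\}$ one has $\mA_{I,I}=[0]\succeq\mzero$, $\mA_{I,I}^+=[0]$, and hence $\SC_I(\mA)=[0]\succeq\mzero$, yet $\mA$ has an eigenvalue $-1$. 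The standard statement of this result with pseudoinverse Schur complements (Albert's theorem) includes $\mathrm{range}(\mA_{I,I^c})\subseteq\mathrm{range}(\mA_{I,I})$, equivalently $\mA_{I,I}\mA_{I,I}^+\mA_{I,I^c}=\mA_{I,I^c}$, as an explicit third hypothesis in the ``if'' direction. The paper's own applications of Fact~\ref{fact:SCPSD}, e.g.\ in the proof of Lemma~\ref{lem:svdefined}, happen to have that containment supplied by the ambient assumptions $\ker(\Din)\subseteq\lker(\mA)$ and $\ker(\Dout)\subseteq\rker(\mA)$, so those uses are safe; but the Fact as stated omits the condition, and your argument cannot manufacture it from the stated hypotheses. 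To close the gap you would need either to add the range condition to the ``if'' side of the claim, or to prove only the version of the fact in which those kernel containments are assumed.
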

We require a result that matrix approximation is approximately preserved under Schur complements:
\begin{theorem}[Theorem 4.8 \cite{AKMPSV20}]\label{thm:schurStandard}
    Given $\mN,\tN\in \C^{n\times n}$ such that $\|\mN\|\leq 1$, suppose $\tN$ is an $\eps$-approximation of $\mN$ with respect to $\mS_{\mN}$. Then for $F\subset [n]$ such that $\mI_{F,F}-\mN_{F,F}$ is invertible, we have that $\mI_{F^c}-\SC_F(\mI-\tN)$ is an $\eps+O(\eps^2)$ approximation of $\mI_{F^c}-\SC_F(\mI-\mN)$ with respect to $\mS_{\mI_{F^c}-\SC_F(\mI-\mN)}$. 
\end{theorem}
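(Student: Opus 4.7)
The plan is to transfer the $\eps$-approximation from $\mL \defeq \mI - \mN$ to its Schur complement via a harmonic-extension argument, with a Neumann-series expansion absorbing the $O(\eps^2)$ overhead. Setting $\tL \defeq \mI - \tN$, the hypothesis is equivalent (via $\tL - \mL = -(\tN - \mN)$ and $\mS_\mL = \mI - \mS_\mN \succeq \mzero$, using $\|\mN\| \leq 1$) to $|x^*(\tL - \mL)y| \leq \eps\sqrt{x^*\mS_\mL x}\sqrt{y^*\mS_\mL y}$ for all $x, y \in \C^n$, so the task reduces to establishing an analogous bound for $\SC_F(\tL)$ versus $\SC_F(\mL)$ with respect to $\mS_{\SC_F(\mL)}$.

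For any $y \in \C^{F^c}$, I would define the right harmonic extension $\vec{y} \in \C^n$ by $\vec{y}_F = -\mL_{F,F}^{-1}\mL_{F,F^c}y$ and $\vec{y}_{F^c} = y$, and analogously the left extension $\vec{x}$ using $\mL^*$. A direct block computation then yields the classical identity $\vec{x}^*\mL\vec{y} = x^*\SC_F(\mL)y$, and by the LDU factorization of $\mL$ the harmonic extension also minimizes the $\mS_\mL$-quadratic form subject to its boundary condition, so $\vec{x}^*\mS_\mL \vec{x} \leq x^*\mS_{\SC_F(\mL)}x$ (and similarly for $\vec{y}$; one uses that $\mS_{\SC_F(\mL)}$ dominates $\SC_F(\mS_\mL)$ via a standard block manipulation). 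Applying the hypothesis to the extensions therefore gives the first-order bound
\[
\left|\vec{x}^*\tL\vec{y} - x^*\SC_F(\mL)y\right| \leq \eps\sqrt{x^*\mS_{\SC_F(\mL)}x}\sqrt{y^*\mS_{\SC_F(\mL)}y}.
\]

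The main obstacle is that $\vec{x}^*\tL\vec{y}$ differs from $x^*\SC_F(\tL)y$, since the latter requires $\tL$'s own harmonic extensions. To bridge this gap, I would expand $\tL_{F,F}^{-1} = \mL_{F,F}^{-1} - \mL_{F,F}^{-1}(\tL - \mL)_{F,F}\mL_{F,F}^{-1} + R$ via a Neumann series, verify that the first-order term precisely accounts for the discrepancy between extensions defined through $\mL$ versus through $\tL$ (so these telescope against the first-order bound above), and control the remainder $R$ by $O(\eps^2)$ after normalizing by $\mS_\mL|_{F,F}$ and invoking the hypothesis blockwise. Summing the first-order estimate with the $O(\eps^2)$-bounded remainder contribution then yields the claimed $\eps + O(\eps^2)$ approximation of $\mI_{F^c} - \SC_F(\tL)$ by $\mI_{F^c} - \SC_F(\mL)$ with respect to $\mS_{\SC_F(\mL)}$. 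The most delicate step is the remainder estimate, which requires carefully re-assembling the four block-perturbations of $\tL - \mL$ back into the Schur-complement error matrix $\mS_{\SC_F(\mL)}$, rather than merely bounding them by the corresponding blocks of $\mS_\mL$.
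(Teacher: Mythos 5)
Note first that the paper does not prove this statement---it is imported verbatim as Theorem~4.8 of \cite{AKMPSV20}---so there is no in-paper proof to compare against; I can only evaluate your sketch on its own merits. The harmonic-extension strategy is the right vehicle, but the sketch as written has two concrete problems.

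The justification of $\vec{x}^*\mS_\mL\vec{x}\le x^*\mS_{\SC_F(\mL)}x$ is wrong. You assert that the $\mL^*$-harmonic extension minimizes $z\mapsto z^*\mS_\mL z$ over $z_{F^c}=x$; it does not. The minimizer of the $\mS_\mL$-form is the $\mS_\mL$-harmonic extension $z_F=-(\mS_\mL)_{F,F}^{+}(\mS_\mL)_{F,F^c}x$, which differs from your $\vec{x}_F = -((\mL^*)_{F,F})^{-1}(\mL^*)_{F,F^c}x$ whenever $\mL$ is non-Hermitian---exactly the case of interest. The bound you want is nonetheless true, and with \emph{equality}: since $(\mL^*\vec{x})_F=\mzero$, one has $\vec{x}^*\mL^*\vec{x}=x^*\SC_F(\mL^*)x=x^*\SC_F(\mL)^*x$, and taking real parts gives $\vec{x}^*\mS_\mL\vec{x}=x^*\mS_{\SC_F(\mL)}x$. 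This direct identity should replace the false minimization step, and it also makes the separate appeal to $\SC_F(\mS_\mL)\preceq\mS_{\SC_F(\mL)}$ unnecessary.

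More seriously, the $O(\eps^2)$ remainder bound---the step you yourself flag as delicate---is not carried out, and the tools you list cannot close it. With $\vec{x},\vec{y}$ harmonic for $\mL^*,\mL$ respectively, the discrepancy $\vec{x}^*\tL\vec{y}-x^*\SC_F(\tL)y$ equals exactly $((\tL-\mL)^*\vec{x})_F^*\,\tL_{F,F}^{-1}\,((\tL-\mL)\vec{y})_F$ (so no Neumann series is even needed). The hypothesis controls the two outer factors in the $(\mS_\mL)_{F,F}^{+}$-norm, so what remains is a bound on $\bigl\|(\mS_\mL)_{F,F}^{1/2}\,\tL_{F,F}^{-1}\,(\mS_\mL)_{F,F}^{1/2}\bigr\|$, about which the hypothesis says nothing. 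The missing ingredient is the fact that $\|\mS_\mA^{1/2}\mA^{-1}\mS_\mA^{1/2}\|\le 1$ whenever $\mS_\mA\succ\mzero$, which follows from $\mA^*\mS_\mA^{-1}\mA=\mS_\mA+\tfrac{1}{4}(\mA-\mA^*)^*\mS_\mA^{-1}(\mA-\mA^*)\succeq\mS_\mA$ (cf.\ Lemma~13 of \cite{CKPPSV16}); apply it to $\mA=\tL_{F,F}$ together with $(\mS_\tL)_{F,F}\succeq(1-\eps)(\mS_\mL)_{F,F}$. You also need to argue that $\tL_{F,F}$ is invertible at all. Merely ``normalizing by $(\mS_\mL)_{F,F}$ and invoking the hypothesis blockwise'' produces neither of these, so the estimate does not close as described.
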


We also prove a useful property of matri approximation:
\begin{restatable}[Manipulating matrix approximation]{lemma}{leftRightError} \label{lem:leftrighterror}
    Let $\mA, \tA \in \C^{m\times n}$, let $\mEL\in \C^{m\times m},\mER\in \C^{n\times n}$ be PSD and Hermitian, and assume that $\mA$ is
    an $\eps$-approximation of $\tA$ with respect to error matrices $\mEL$ and $\mER$, for
    some $\eps\geq 0$.
    \begin{enumerate}
        \item For $\mU \in \C^{m'\times m}$, $\mV\in \C^{n\times n'}$,
        $\mU\tA\mV$ is an $\eps$-approximation of $\mU\mA\mV$ with
        respect to error matrices $\mEL'$ and $\mER'$, where
        $\mEL' = \mU\mEL\mU^*$ and $\mER' = \mV^*\mER\mV$.
        
        \item For PSD Hermitian $\mEL'\in \C^{m\times m}$ and $\mER'\in \C^{n\times n}$ a
        with $\mEL\preceq c\mEL'$ and $\mER\preceq c\mER'$ for a constant $c\geq 0$, then $\tA$ is a $c\eps$-approximation of $\mA$ with
        respect to error matrices $\mEL'$ and $\mER'$.
    \end{enumerate}
\end{restatable}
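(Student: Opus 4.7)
The plan is to use the characterization of matrix approximation in condition (1) of \Cref{def:leftrightapprox}, namely that $\tA$ is an $\eps$-approximation of $\mA$ with respect to $\mEL,\mER$ if and only if $|x^*(\tA-\mA)y|\leq (\eps/2)(x^*\mEL x + y^*\mER y)$ for all $x\in \C^m,y\in \C^n$. Both parts then reduce to straightforward algebraic manipulations of this inequality. I would also briefly note at the outset that the proposed new error matrices are PSD Hermitian: $\mEL'=\mU\mEL\mU^*$ and $\mER' = \mV^*\mER\mV$ are PSD Hermitian conjugations, and in part 2 this is part of the hypothesis.

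For part 1, I would fix arbitrary vectors $x\in \C^{m'}$ and $y\in \C^{n'}$ and substitute $\mU^*x\in \C^m$ and $\mV y\in \C^n$ into the assumed inequality. Concretely,
\[
\left|x^*(\mU\tA\mV - \mU\mA\mV)y\right| = \left|(\mU^*x)^*(\tA-\mA)(\mV y)\right| \leq \frac{\eps}{2}\left[(\mU^*x)^*\mEL(\mU^*x) + (\mV y)^*\mER(\mV y)\right],
\]
and this right-hand side equals $(\eps/2)\left[x^*\mEL' x + y^*\mER' y\right]$ by definition of $\mEL'$ and $\mER'$. Applying the equivalence again gives the desired conclusion.

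For part 2, the hypotheses $\mEL \preceq c\mEL'$ and $\mER \preceq c\mER'$ give, for all $x\in \C^m$ and $y\in \C^n$, $x^*\mEL x\leq c\cdot x^*\mEL' x$ and $y^*\mER y\leq c\cdot y^*\mER' y$. Substituting these into the assumed bound yields
\[
\left|x^*(\tA-\mA)y\right|\leq \frac{\eps}{2}\left[x^*\mEL x + y^*\mER y\right]\leq \frac{c\eps}{2}\left[x^*\mEL' x + y^*\mER' y\right],
\]
which is exactly the defining condition for $\tA$ to be a $c\eps$-approximation of $\mA$ with respect to $\mEL', \mER'$.

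Honestly, there is no serious obstacle here: this is a mechanical verification. The only mild care-points are (i) invoking the equivalence between the three formulations in \Cref{def:leftrightapprox} to work with the convenient arithmetic-mean form on both sides of each implication, and (ii) checking that the transformed error matrices $\mU\mEL\mU^*$ and $\mV^*\mER\mV$ remain PSD Hermitian so that the resulting statement is well-posed as an instance of \Cref{def:leftrightapprox}; both are immediate.
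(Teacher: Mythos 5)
Your proof is correct and follows essentially the same approach as the paper: both parts reduce to substituting the appropriate test vectors ($\mU^*x,\mV y$ for part 1) or the given L\"owner inequalities (part 2) into the arithmetic-mean form of matrix approximation. If anything your write-up is slightly more careful than the paper's (which has a small typo in part 1, writing $x\in\C^m,y\in\C^n$ where it should be $x\in\C^{m'},y\in\C^{n'}$), and your brief remark on PSD Hermitian well-posedness of the new error matrices is a sensible inclusion.
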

\begin{proof}
~
    \begin{enumerate}
        \item For arbitrary $x \in \C^m$ and $y \in \C^n$ the definition of $\eps$-approximation implies,
        \[\left|x^*(\mU\tA\mV-\mU\mA\mV) y\right| = \left|x^*\mU(\tA-\mA)\mV y\right|\leq \frac{\eps}{2}\left(x^*\mU\mEL\mU^*x+y^*\mV^*\mER\mV y\right).\]
        
        \item For arbitrary $x \in \C^m,y \in \C^n$,
        \[\left|x^*(\tA-\mA)y\right| \leq \frac{\eps}{2}\left(x^*\mEL x+y^*\mER y\right) \leq \frac{c\eps}{2}\left(x^*\mEL' x+y^*\mER'y\right). \qedhere
        \]
    \end{enumerate}
\end{proof}

\subsection{Equivalences}
\label{sec:proofs:equivalences}
\svDefined*
\begin{proof}
    We first show Conditions $1-5$ are equivalent.
    \begin{itemize}
        \item[\ref{itm:norm}$\ra$ \ref{itm:rightpsd}] Condition~\ref{itm:norm} implies for all $y$, $y^*\Dout^{+/2}\mA^* \Din^+\mA \Dout^{+/2} y \leq y^* y$ and letting $y\la \Dout^{1/2}y$ and using that $\ker(\Dout)\subseteq \ker(\mA)$ implies Condition \ref{itm:rightpsd}. 
        
        \item[\ref{itm:rightpsd} $\ra$ \ref{itm:norm}] Condition~\ref{itm:rightpsd} implies for all $y$ we have $y^*\mA^*\Din^+\mA y\leq y^*\Dout y$ and taking $y\la \Dout^{+/2}y$ we have $y^*\Dout^{+/2}\mA^*\Din^+\mA\Dout^{+/2} y \leq y^*\mPi_{\Dout} y \leq y^* y$ which implies Condition~\ref{itm:norm}.
        \item[\ref{itm:rightpsd}$\ra$ \ref{itm:twobytwoallz}] Letting $z$ with $|z|\leq 1$ be arbitrary, we have that by Condition~\ref{itm:rightpsd} $\Din - z^*\mA^*\Dout^{+}z\mA=\Din - \mA^*\Dout^{+}\mA \succeq \mzero$, and since $\Din \succeq \mzero$ we apply Fact~\ref{fact:SCPSD} and conclude Condition~\ref{itm:twobytwoallz}. An analogous argument shows \ref{itm:leftpsd}$\ra$ \ref{itm:twobytwoallz}. The fact that Fact~\ref{fact:SCPSD} is an equivalence then implies \ref{itm:twobytwoallz}$\ra$\ref{itm:leftpsd} and \ref{itm:twobytwoallz}$\ra$\ref{itm:rightpsd}.  
        \item[\ref{itm:twobytwoallz} $\ra$ \ref{itm:twobysomez}] Immediate.
        \item[\ref{itm:twobysomez} $\ra$ \ref{itm:leftpsd}] By Fact~\ref{fact:SCPSD} and the fact that $z^*z=1$ we have $\Din - z^*\mA^*\Dout^{+}z\mA=\Din - \mA^*\Dout^{+}\mA \succeq \mzero$.
    \end{itemize}
    We next show the latter set of conditions.
    \begin{itemize}
        \item[\ref{itm:graph}$\ra$\ref{itm:diagdom}] For all $i\in [n]$ we have $[\dout]_i=\sum_j\mA_{i,j}=\sum_j|\mA_{i,j}|$.
        \item[\ref{itm:diagdom} $\ra$ \ref{itm:twobytwoallz}] We use that a Hermitian matrix that is diagonally dominant is positive semidefinite. We have $(\Din)_{i,i}\geq \sum_{j}|z\mA_{i,j}|$ and $(\Dout)_{i,i}\geq \sum_j |z^*\mA^*_{j,i}|$ so $\begin{bmatrix}\Din & \mA\\\mA^* & \Dout\end{bmatrix}$ is diagonally dominant and thus PSD. \qedhere
    \end{itemize}
\end{proof}

\newcommand{\vt}[2]{\begin{bmatrix} #1 \\ #2\end{bmatrix}}
\newcommand{\vtt}[2]{\begin{bmatrix} #1^*  & #2^* \end{bmatrix}}
\svLifts*
\begin{proof}
    We prove $\ref{SVequiv:main}\iff \ref{SVequiv:bip}$ and $\ref{SVequiv:main}\ra \ref{SVequiv:allz}\ra \ref{SVequiv:somez}\ra \ref{SVequiv:main}$.
    \begin{itemize}
        \item[$\ref{SVequiv:main}\iff \ref{SVequiv:bip}$]  
        Suppose \Cref{SVequiv:main} holds. Then for arbitrary test vectors $x=\vt{x_1}{x_2}$, $y = \vt{y_1}{y_2}$ where $x_1,y_1 \in \C^m$ and $x_2,y_2 \in \C^n$ we have
        \begin{align*}
            \left|x^*(\mB-\tB)y\right| &\leq \left|x_1^*(\mA-\tA)y_2\right|+\left|x_2^*(\mA^*-\tA^*)y_1\right|\\
            &\leq \frac{\eps}{4}\left(x_1^*\Din x_1+y_2^*\Dout y_2+ x_1^*\mA\Dout^+\mA^* x_1+y_2^*\mA^*\Din^+\mA y_2\right)\\
            &\quad\quad +\frac{\eps}{4}\left(x_2^*\Dout x_2+y_1^*\Din y_1+ x_2^*\mA^*\Din^+\mA x_2+y_1^*\mA\Dout^+\mA^* y_1\right)\\
            &= \frac{\eps}{4}\left(x^*\mD x+y^*\mD y + x^*\mB \mD^{-1}\mB^* x+ y^* \mB^* \mD^{-1}\mB y\right).
        \end{align*}
        Furthermore, $\ker(\mD)\subseteq \lker(\mB),\ker(\mD)\subseteq \rker(\mB)$ and $\mD-\mB\mD^+\mB \succeq \mzero$ by \Cref{itm:twobytwoallz} of \Cref{lem:svdefined}. In the other direction, we obtain $\ref{SVequiv:bip} \ra \ref{SVequiv:main}$ by considering the set of test vectors $\vt{x_1}{0^n}$ and $\vt{0^m}{y_2}$.
        
        \item[$\ref{SVequiv:main}\ra$ \ref{SVequiv:allz}:] 
        Fix arbitrary $z$ with $|z|\leq 1$.
        We have that $\mE\succeq \mzero$ by \Cref{itm:twobytwoallz} of Lemma~\ref{lem:svdefined}. Then for arbitrary test vectors $x=\vt{x_1}{x_2},y=\vt{y_1}{y_2}$, let $x'=\vt{x_1}{x_2'},y'=\vt{y_1}{y_2'}$ where 
    
        \[x_2' = \argmin_{v \in \C^n}\vtt{x_1}{v}\begin{bmatrix}\Din & z\mA\\ z^*\mA^* & \Dout\end{bmatrix}\vt{x_1}{v}, \quad y_1' = \argmin_{v \in \C^m}\vtt{y_2}{v}\begin{bmatrix}\Dout & z^*\mA^*\\ z\mA & \Din\end{bmatrix}\vt{y_2}{v}.\]
        Thus
        \[x'^*\begin{bmatrix}\Din & z\mA\\ z^*\mA^* & \Dout\end{bmatrix}x' = x_1^*\SC\left(\begin{bmatrix}\Din & z\mA\\ z^*\mA^* & \Dout\end{bmatrix}\right)x_1, \quad y'^*\begin{bmatrix}\Din & z\mA\\ z^*\mA^* & \Dout\end{bmatrix}y' = y_2^*\SC\left(\begin{bmatrix}\Dout & z^*\mA^*\\ z\mA & \Din\end{bmatrix}\right)y_2.
        \]
        Then we have
        \begin{align*}
            \left|x^*(\tC-\mC)y\right|
            &=\left|(x_1z^*)^*(\tA-\mA)(y_2z)\right|\\
            &\leq \frac{\eps}{4}\left(x_1^*(\Din-z\mA\Dout^+\mA^*z^*)x_1 + y_2^*(\Dout -z^*\mA^*\Din^+\mA z)y_2\right)\\
            &= \frac{\eps}{4}\left(x_1^*\SC\left(\begin{bmatrix}\Din & z^*\mA^*\\ z\mA & \Dout\end{bmatrix}\right)x_1 +y_2^*\SC\left(\begin{bmatrix}\Dout & z\mA\\ z^*\mA^* & \Din\end{bmatrix}\right)y_2 \right)\\
            &= \frac{\eps}{4}\left(x'^*\begin{bmatrix}\Din & z\mA\\ z^*\mA^* & \Dout\end{bmatrix}x'+y'^*\begin{bmatrix}\Din & z\mA\\ z^*\mA^* & \Dout\end{bmatrix}y'\right)\\
            &\leq \frac{\eps}{4}\left(x^*\mE x+y^*\mE y\right).
        \end{align*}
        
        \item[$\ref{SVequiv:allz}\ra \ref{SVequiv:somez}$] Immediate.
        
        \item[$\ref{SVequiv:somez}\ra \ref{SVequiv:main}$] For arbitrary test vectors $x_1 \in \C^m,y_2 \in \C^n$, let $x=\vt{zx_1}{x_2},y=\vt{y_1}{z^*y_2}$ where 
    
        \[x_2 = \argmin_{v \in \C^n}\vtt{z^*x_1}{v}\begin{bmatrix}\Din & z\mA\\ z^*\mA^* & \Dout\end{bmatrix}\vt{zx_1}{v}, \quad y_1 = \argmin_{v \in \C^m}\vtt{zy_2}{v}\begin{bmatrix}\Dout & z^*\mA^*\\ z\mA & \Din\end{bmatrix}\vt{z^*y_2}{v}\]
        Then 
        \begin{align*}
            \left|x_1^*(\tA-\mA)y_2\right| &= \left|x^*(\tC-\mC)x\right|\\
            &\leq \frac{\eps}{4}\left( x^*\mE x+y^*\mE y\right)\\
            &= \frac{\eps}{4}\left(z^*x_1^*\SC\left(\begin{bmatrix}\Din & z^*\mA^*\\ z\mA & \Dout\end{bmatrix}\right)x_1z +zy_2^*\SC\left(\begin{bmatrix}\Dout & z\mA\\ z^*\mA^* & \Din\end{bmatrix}\right)z^*y_2\right)\\
            &=\frac{\eps}{4}\left(z^*x_1^*(\Din-z\mA\Dout^+\mA^* z^*)zx_1+zy_2^*(\Dout-z^*\mA^*\Din^+\mA z)z^*y_2\right)\\
            &= \frac{\eps}{4}\left(x_1^*(\Din-\mA\Dout^+\mA^* )x_1+y_2^*(\Dout-\mA^*\Din^+\mA )y_2\right).\qedhere
        \end{align*}
    \end{itemize} 
\end{proof}

\normalizedSV*
\begin{proof}
    Suppose $\tA \svgeneral{\Din}{\Dout}{\eps} \mA$, i.e., $\tA$ is an $\eps/2$-approximation of $\mA$ with respect to
    \[\mEL = \Din-\mA\Dout^+\mA^*, \quad\quad \mER = \Dout-\mA\Din^+\mA^*.
    \]
    Applying Lemma~\ref{lem:leftrighterror} with $\mU=\Din^{+/2},\mV=\Dout^{+/2}$ we obtain that
    $\tN$ is an $\eps/2$-approximation of $\mN$ with respect to error matrices
    \[\Din^{+/2}(\Din - \mA\Dout^+\mA^*)\Din^{+/2} \preceq \mI - (\Din^{+/2}\mA\Dout^{+/2})(\Dout^{+/2}\mA^*\Din^{+/2})
    = \mI-\mN\mN^*
    \]
    and 
    \[\Dout^{+/2}(\Dout - \mA^*\Din^+\mA)\Dout^{+/2} \preceq \mI - (\Dout^{+/2}\mA^*\Din^{+/2})(\Din^{+/2}\mA\Dout^{+/2})
    = \mI-\mN^*\mN\]
    and hence we obtain that $\tN \svn{\eps} \mN$. The other direction is analogous, and the kernel properties follow as $\ker(\Din)\subseteq \lker(\Din^{1/2}\Din^{+/2}\mA\Dout^{+/2}\Dout^{1/2})$ and $\ker(\Dout)\subseteq \rker(\Din^{1/2}\Din^{+/2}\mA\Dout^{+/2}\Dout^{1/2})$.
\end{proof}

\svRegular*
\begin{proof}~
    \begin{itemize}
        \item Suppose for every pair of unitary matrices $\mU,\mV$, we have for all $x,y\in \C^n$ that
        \[|x^*(\mU\tA\mV -\mU\mA\mV)y| \leq \frac{\eps}{2} \cdot \sqrt{x^*(\mI-\mS_{\mU\mA\mV}) x} \cdot \sqrt{y^*(\mI -\mS_{\mU\mA\mV})y}.
        \]
        We will pick $u,v,\mU,\mV$ to depend on $x,y$ and the result will follow immediately from invoking the above equation with $x,y \gets u,v$.
    
        Set $\mU=\mI$, $u=x/\|x\|$, and $v=\mA y/\|\mA y\|$. Note that $u^* v = (x^* \mA y)/(\|x\|\cdot\|y\|)$, so $u$ and $v$ have the same angle between them as $\mA^*x$ and $y$.
        Hence, there exists a unitary matrix $\mV$ such that $\mV u = \mA^* x / \|\mA^* x\|$ and $\mV v = y / \|y\|$. Substituting gives
        \[
        \frac{\left|x^*(\tA-\mA)y\right|}{\|x\|\|y\|} \leq \frac{\eps}{2} \cdot \sqrt{1-\frac{x^*(\mA\mA^*) x}{\|x\|\|\mA^* x\|}} \cdot \sqrt{1-\frac{y^*(\mA^*\mA)y}{\|y\|\|\mA y\|}}.
        \]
        
        Applying $\|\mA^*\|=\|\mA\| \leq 1$ gives the desired result.
        \item In the other direction, we start with \Cref{SVequiv:somez} of \Cref{SVequiv} with $z=1$ applied to the approximation statement $\tN\svn{\eps}\mN$ and apply \Cref{lem:leftrighterror} with left and right hand side matrices
        \[
            \mU' \gets
            \begin{bmatrix}
            \mU & \mzero \\
            \mzero & \mV^*
            \end{bmatrix}
            \qquad\text{and}\qquad
            \mv' \gets
            \begin{bmatrix}
            \mU^* & \mzero\\
            \mzero & \mV
            \end{bmatrix}
        \]
        This results in the approximation statement that for the matrices given below, $\mR,\tR$ $\epsilon/2$-approximate each other with respect to $\mE$ where
        \[
            \mR =
            \begin{bmatrix}
                \mzero & \mU \mN \mV \\
                \mzero & \mzero
            \end{bmatrix}
            \qquad
            \tR =
            \begin{bmatrix}
                \mzero & \mU \tN \mV \\
                \mzero & \mzero
            \end{bmatrix}
            \text{, and}\qquad
            \mE =
            \begin{bmatrix}
                \mU \mI \mU^* & \mU \mN \mV \\
                (\mU \mN \mV)^* & \mV^* \mI \mV
            \end{bmatrix} \preceq \begin{bmatrix}
                \mI & \mU \mN \mV \\
                (\mU \mN \mV)^* & \mI
            \end{bmatrix}.
            \]
            Hence, by \Cref{SVequiv:somez} of \Cref{SVequiv} we conclude $\mU\mN\mV\svn{\eps}\mU\tN\mV$. \qedhere
    \end{itemize}
\end{proof}

\svPerm*
We start with \Cref{SVequiv:somez} of \Cref{SVequiv} with $z=1$ applied to the approximation statement we are given in the lemma statement and apply \Cref{lem:leftrighterror} with left and right hand side matrices
    \[
        \mU' \gets
        \begin{bmatrix}
        \mU & \mzero \\
        \mzero & \mV^*
        \end{bmatrix}
        \qquad\text{and}\qquad
        \mv' \gets
        \begin{bmatrix}
        \mU^* & \mzero\\
        \mzero & \mV
        \end{bmatrix}
    \]
    This results in the approximation statement that for the matrices given below, $\mR,\tR$ $\epsilon/2$-approximate each other with respect to $\mE$ where
    \[
        \mR =
        \begin{bmatrix}
            \mzero & \mU \mA \mV \\
            \mzero & \mzero
        \end{bmatrix}
        \qquad
        \tR =
        \begin{bmatrix}
            \mzero & \mU \tA \mV \\
            \mzero & \mzero
        \end{bmatrix}
        \text{, and}\qquad
        \mE =
        \begin{bmatrix}
            \mU \mdin \mU^* & \mU \mA \mV \\
            (\mU \mA \mV)^* & \mV^* \mdout \mV
        \end{bmatrix}.
    \]
    Hence, \Cref{SVequiv:somez} of \Cref{SVequiv} is satisfied for SV approximation of $\mU\mA\mV$ and $\mU\tA\mV$ with respect to $\mU\Din\mU^*$ and $\mV^*\Dout\mV$.
    
    For the first \say{consequently} claim, observe that $\mU\mA\mV \vones = \mU \mA \vones$ since $\mV$ is a permutation. Hence,
    \[
        \mdiag(\mU\mA\mV \vones) = \mU \mdin \mU^*.
    \]
    A similar analysis shows
    \[
        \mdiag(\vones^\rTrans \mU\mA\mV ) = \mV^* \mdin \mV.
    \]
    the second statement follows from $\mU\mU^*=\mV^*\mV=\mI$.

\subsection{Properties}
\label{sec:proofs:properties}
We prove that SV approximation implies standard approximation:
\begin{restatable}{lemma}{SVimpliesSTD}\label{lem:SVimpliesSTD}
     Let $\mA,\tA \in \C^{n\times n}$ and suppose $\tA \svgeneral{\Din}{\Dout}{\eps} \mA$. Then, $\tA$ is an $\eps$-approximation of $\mA$ with respect to $\mEL=\Din-\mA$, $\mER=\Dout-\mA^*$.
\end{restatable}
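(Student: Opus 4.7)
The plan is to unwrap the SV-approximation hypothesis into the inequality
$$|x^*(\tA-\mA)y| \leq \tfrac{\eps}{4}\bigl(x^*(\Din - \mA\Dout^+\mA^*)x + y^*(\Dout - \mA^*\Din^+\mA)y\bigr),$$
obtained by combining \Cref{def:svapprox} with \Cref{lem:normalizedSV}, and then dominate the SV error matrices by twice the standard ones. Concretely, it suffices to establish the two L\"owner comparisons $\Din - \mA\Dout^+\mA^* \preceq 2(\Din - \mS_\mA)$ and $\Dout - \mA^*\Din^+\mA \preceq 2(\Dout - \mS_\mA)$: substituting them into the displayed inequality and absorbing the resulting factor of $2$ yields exactly the claimed $\eps$-approximation with respect to $\mEL = \Din - \mA$ and $\mER = \Dout - \mA^*$ (interpreted via their symmetric parts).

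I would prove the first comparison by a direct factorization. Set $\mP \defeq \mA\Dout^+$; the kernel condition $\rker(\mA) \supseteq \ker(\Dout)$ from \Cref{def:svapprox} gives $\mA = \mP\Dout$ and $\mA^* = \Dout\mP^*$. In the Eulerian setting $\Din = \Dout = \mD$ this collapses the target into a Hermitian square:
$$2(\mD - \mS_\mA) - (\mD - \mA\mD^+\mA^*) = \mD - \mA - \mA^* + \mA\mD^+\mA^* = (\mI - \mP)\mD(\mI - \mP)^* \succeq \mzero,$$
which is manifestly PSD. The companion inequality for the right error matrix is obtained by the symmetric argument with $\mQ \defeq \Din^+\mA$, which similarly gives $(\mI - \mQ)^*\mD(\mI - \mQ) \succeq \mzero$.

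The main obstacle will be handling the asymmetric case $\Din \neq \Dout$, where the single-$\mD$ factorization above no longer collapses into a Hermitian square and there is a residual defect involving $\Din - \Dout$. An alternative route that avoids factoring is to invoke \Cref{SVequiv:somez} of \Cref{SVequiv} with $z = 1$ (the block matrix $\begin{bmatrix}\Din & \mA\\ \mA^* & \Dout\end{bmatrix}$ is PSD by \Cref{lem:svdefined}, so this branch of the equivalence is available) and apply the resulting $\eps/2$-approximation of the lifted matrices to the test vectors $u = (x, -x)$ and $v = (-y, y)$. The bilinear form collapses to $x^*(\tA-\mA)y$ while the quadratic forms produce $x^*(\Din + \Dout - 2\mS_\mA)x$ and its analog in $y$; this recovers the claimed inequality exactly when $\Din = \Dout$, which is the regime in which the lemma is used throughout the paper.
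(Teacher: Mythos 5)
Your main argument is correct in the setting where it is applied, and it is a genuinely different route from the paper's. The paper proves the lemma in one line by invoking item (3) of \Cref{SVequiv} with $z=-1$ and the lifted test vectors $(x,x)$, $(y,y)$; your ``alternative route'' at the end is exactly this same computation in disguise (your $z=1$ with $(x,-x)$, $(-y,y)$ produces the same quadratic and bilinear forms). Your primary route is different and arguably more illuminating: you unwind the SV hypothesis directly and establish the L\"owner comparison $\Din - \mA\Dout^+\mA^* \preceq 2(\Din - \mS_\mA)$ by exhibiting the gap as the Hermitian square $(\mI-\mP)\mD(\mI-\mP)^*$, with the kernel conditions of \Cref{def:svapprox} providing exactly the identities $\mA = \mP\mD$ and $\mA^*=\mD\mP^*$ needed for the factorization to close. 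This makes the factor of $2$ transparent as coming from AM--GM applied to a rank-one-type defect, whereas the paper's proof hides it inside the block test-vector choice. Both proofs buy the result only in the regime $\Din=\Dout$, and your diagnosis of the obstacle is sharper than you may realize: in the genuinely asymmetric case the conclusion can simply be false, because $\Din - \mS_\mA$ need not even be PSD (take $\mA$ with a single off-diagonal entry $a>0$, $\Din=\diag(1,d)$, $\Dout=\diag(e,1)$ with $a^2/e \le d < a^2/4$ and $e>4$; the SV error matrix is PSD but $\Din-\mS_\mA$ has negative determinant). So the lemma should be read as implicitly assuming $\Din=\Dout$, which is the only case the paper uses. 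One small note: the citation to \Cref{lem:normalizedSV} in your first display is superfluous; \Cref{def:svapprox} (via \Cref{def:leftrightapprox}) gives that inequality directly.
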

\begin{proof}
    This follows immediately from specializing \Cref{SVequiv:somez} of \Cref{SVequiv} with $z=-1$ to test vectors of the form $\vt{x}{x},\vt{y}{y}$
\end{proof}
It likewise implies UC approximation:
\SVimpliesUC*
\begin{proof}
    By \Cref{SVequiv:allz} of \Cref{SVequiv} we have that for every pair of test vectors $\vt{x}{x},\vt{y}{y}$ and every unit magnitude $z$, we have
    \[\left|x^*(z\mA-z\tA)y\right| \leq \frac{\eps}{4}\cdot 2\left(x^*\mD x+y^*\mD y + \Re(zx^*\mA x+zy\mA y^*)\right)
    \]
    Then choosing $z$ to minimize $\Re(zx^*\mA x+zy\mA y^*)$, we obtain
    \[\left|x^*(\mA-\tA)y\right| \leq \frac{2\eps}{4}\left(x^*\mD x+y^*\mD y - |x^*\mA x+y\mA y^*|\right)
    \]
    Which is precisely the condition for $\tA \uc_{\eps} \mA$ with respect to $\mD$.
\end{proof}

\svArbLifts*
\begin{proof}
    Observe by expanding the definition of SV approximation that it is preserved under embedding into the first principal diagonal block of any larger matrix that is zeros elsewhere. That is,
    \begin{equation}
    \begin{bmatrix}\label{eq:diag-embed}
    \tA & \mzero^{m \times (j+k)} \\
    \mzero^{(i + \ell) \times n} & \mzero^{(i+\ell) \times (j+k)}
\end{bmatrix} \svgeneral{\Din'}{\Dout'}{\eps}
    \begin{bmatrix}
    \mA & \mzero^{m \times (j+k)} \\
    \mzero^{(i + \ell) \times n} & \mzero^{(i+\ell) \times (j+k)}
\end{bmatrix}
    \end{equation}
    where $\Din',\Dout'$ are the respective principle embeddings of the degree matrices.
    
    Let $\mU,\mV$ be permutation matrices such that 
    \[
\mU     \begin{bmatrix}
    \mA & \mzero^{m \times (j+k)} \\
    \mzero^{(i + \ell) \times n} & \mzero^{(i+\ell) \times (j+k)}
\end{bmatrix}
\mV
=
\begin{bmatrix}
    \mzero^{i \times j} & \mzero^{i \times n} & \mzero^{i \times k} \\
    \mzero^{m \times j} & \mA & \mzero^{m \times k} \\
    \mzero^{\ell \times j} & \mzero^{\ell \times n} & \mzero^{\ell \times k}
\end{bmatrix}.
    \]
    It suffices to argue that we can apply this transformation to both sides of \Cref{eq:diag-embed} and have the approximation still hold. This is implied by \Cref{cor:svPerm} with $\Din' \la \mU\Din'\mU^*$ and $\Dout'\la \mV^*\Dout\mV$. The \say{consequently} claim holds from inspecting the resulting block structure.
\end{proof}

\svProducts*
\begin{proof}
Using preservation under lifting and sums (\Cref{lem:svarblifts,lem:summability}) and that SV approximation implies standard approximation for square matrices (\Cref{lem:SVimpliesSTD}), we can obtain the approximation statement that the matrices $\mR,\tR$ given below $\epsilon/2$-standard-approximate each other with respect to the symmetrization $\mS_\mR$ of the first one:
    \[
    \mR = 
        \begin{bmatrix}
            \mI/2 & \mN_\ell & && \mzero \\
            \mzero & \mI & \mN_{\ell-1}  & & \vdots\\
            \vdots & & \ddots & \ddots &\\
            \vdots & & &\mI & \mN_1\\
            \mzero^{n \times n} &  & & \mzero &\mI/2
        \end{bmatrix}
        \qquad\text{and}\qquad
        \tR = 
        \begin{bmatrix}
            \mI/2 & \tN_\ell & && \mzero \\
            \mzero & \mI & \tN_{\ell-1}  & & \vdots\\
            \vdots & & \ddots & \ddots &\\
            \vdots & & &\mI & \tN_1\\
            \mzero^{n \times n} &  & & \mzero &\mI/2
        \end{bmatrix}.
    \]
    Taking the Schur complement by eliminating the center $n(l-1)\times n(l-1)$ block gives the following two matrices
    \[
    \mR = 
        \begin{bmatrix}
            \mI/2 & \mN_\ell \cdots \mN_2\mN_1 \\
            \mzero^{n \times n} &  \mI/2
        \end{bmatrix}
        \qquad\text{and}\qquad
        \tR = 
        \begin{bmatrix}
            \mI/2 & \tN_\ell \cdots \tN_2\tN_1 \\
            \mzero^{n \times n} &  \mI/2
        \end{bmatrix}.
    \]
    And we obtain by \Cref{thm:schurStandard} that $\tR$ is an $\epsilon/2 + O(\epsilon^2)$-approximation of $\mR$ with respect to $\mS_{\mR}$. Hence, by \Cref{SVequiv:somez} of \Cref{SVequiv}, we have
    \[\tN_\ell\cdots\tN_2\tN_1\svn{\eps + O(\epsilon^2)}\mN_\ell\cdots\mN_2\mN_1. \qedhere
    \]
\end{proof}

\svStationary*
\begin{proof}
    Let $x\la \delta x$ be an arbitrary test vector and let $y=\vones$. Then the approximation condition implies
    \begin{align*}
        \left|\delta x^*(\tA-\mA)\vones\right| &\leq \frac{\eps}{4}\left(\delta^2x^*\Dout x-\delta^2 x^*\mA\Din^+\mA^*x+\vones^{\rTrans}\Din\vones-\vones^{\rTrans}\mA^{\rTrans}\Dout^{+}\mA\vones\right)\\
        &= \frac{\eps}{4}\left(\delta^2x^*\Dout x-\delta^2 x^*\mA\Din^{+}\mA^*x\right)
    \end{align*}
    and so taking $\delta\ra 0$ we obtain that $\mA\vones=\tA\vones$. An analogous argument setting $x=\vones$ shows that $\vones^{\rTrans}\mA=\vones^{\rTrans}\tA$.
\end{proof}

\summability*
\begin{proof}
    This  follows immediately from \Cref{SVequiv:somez} of \Cref{SVequiv} being linear over $\mA$ and $\mD$. The \say{consequently} claim follows immediately from noting 
    \[\left(\sum_{i\in [k]}\mA\right)\vones=\sum_{i\in [k]}(\mA\vones) \text{ and }\left(\sum_{i\in [k]}\mA^\rTrans\right)\vones=\sum_{i\in [k]}(\mA^\rTrans\vones).\qedhere\]
\end{proof}

\triangle*
\begin{proof}
By \Cref{SVequiv:somez} of \Cref{SVequiv} we have
\[\begin{bmatrix}\Din & \mA_2^*\\ \mA_2 & \Dout \end{bmatrix} \preceq (1+\eps)\begin{bmatrix}\Din & \mA_1^*\\ \mA_1 & \Dout \end{bmatrix}.
\]
Thus for arbitrary $x\in \C^m,y\in \C^{n}$ we have
\begin{align*}
    \MoveEqLeft{\left|x^*\left(\begin{bmatrix}\mzero & \mA_3\\
    \mzero & \mzero\end{bmatrix}-\begin{bmatrix}\mzero & \mA_1\\
    \mzero & \mzero\end{bmatrix}
    \right)y\right|}\\ 
    &\leq \left|x^*\left(\begin{bmatrix}\mzero & \mA_3\\
    \mzero & \mzero\end{bmatrix}-\begin{bmatrix}\mzero & \mA_2\\
    \mzero & \mzero\end{bmatrix}
    \right)y\right|+\left|x^*\left(\begin{bmatrix}\mzero & \mA_2\\
    \mzero & \mzero\end{bmatrix}-\begin{bmatrix}\mzero & \mA_1\\
    \mzero & \mzero\end{bmatrix}
    \right)y\right|\\
    &\leq \frac{\delta}{4}\left(x^*\begin{bmatrix}\Din & \mA_2^*\\ \mA_2 & \Dout \end{bmatrix}x+y^*\begin{bmatrix}\Din & \mA_2^*\\ \mA_2 & \Dout \end{bmatrix}y\right)+\frac{\eps}{4}\left(x^*\begin{bmatrix}\Din & \mA_1^*\\ \mA_1 & \Dout \end{bmatrix}x+y^*\begin{bmatrix}\Din & \mA_1^*\\ \mA_1 & \Dout \end{bmatrix}y\right)\\
    &\leq \frac{\delta(1+\eps)}{4}\left(x^*\begin{bmatrix}\Din & \mA_1^*\\ \mA_1 & \Dout \end{bmatrix}x+y^*\begin{bmatrix}\Din & \mA_1^*\\ \mA_1 & \Dout \end{bmatrix}y\right)+\frac{\eps}{4}\left(x^*\begin{bmatrix}\Din & \mA_1^*\\ \mA_1 & \Dout \end{bmatrix}x+y^*\begin{bmatrix}\Din & \mA_1^*\\ \mA_1 & \Dout \end{bmatrix}y\right)\\
    &=\frac{\delta+\eps+\eps\delta}{4}\left(x^*\begin{bmatrix}\Din & \mA_1^*\\ \mA_1 & \Dout \end{bmatrix}x+y^*\begin{bmatrix}\Din & \mA_1^*\\ \mA_1 & \Dout \end{bmatrix}y\right)\\
\end{align*}
and so we conclude by \Cref{SVequiv:somez} of \Cref{SVequiv}. The first \say{consequently} claim follows from the fact that by \Cref{lem:svStationary} we have $\mA_2\vones=\mA\vones$ and $\mA_2^\rTrans\vones = \mA_1^\rTrans \vones$, and the second follows immediately.

For the \say{moreover} claim, we prove by induction on $i$ that $\mA_i \svgraph{i\delta/2\ell+i\delta^2/\ell} \mA_0$, which suffices as $\ell\delta/2\ell+\ell\delta^2/\ell\leq \delta$. Assuming this holds for $i$, we have
    \[\mA_{i+1} \svgraph{\delta/2\ell} \mA_i \svgraph{i\delta/2\ell +i\delta^2/\ell}\mA_0
    \]
    and thus we have
    \[\mA_{i+1} \svgraph{(i+1)\delta/2\ell + \gamma} \mA_0
    \]
    for 
    \[\gamma = \frac{i\delta^2}{\ell}+\frac{\delta}{2\ell}\left(\frac{i\delta}{2\ell}+\frac{i\delta^2}{\ell}\right) \leq \frac{i\delta^2}{\ell} + \delta^2/2\ell+\delta^3/2\ell^2 \leq \frac{(i+1)\delta^2}{\ell}\]
    as desired. The proof for normalized SV approximation is essentially identical, so we omit it.
\end{proof}

\expanderSVappx*
\begin{proof}
    Note that by definition of $\mA\svn{\lambda}\mJ$,
    \begin{align*}
    |x^* (\mA/d - \mJ) y| 
    &\leq 
    \frac{\lambda}{4} \left(x^* (\mI - \mJ \mJ^*) x + y^* (\mI - \mJ^*\mJ) y\right) \\
    &= \frac{\lambda}{4} \left(x^* (\mI - \mJ) x + y^* (\mI - \mJ) y\right)
    \end{align*}
    Therefore, $\mA/d\svn{\lambda}\mJ$ if and only if $\mA/d \uc_{\lambda/2}\mJ$. By~\cite[Lemma 5.2]{AKMPSV20} $\mA/d \uc_{\lambda/2}\mJ$ if and only if $\lambda(G)\leq 1-\lambda/2$, so the result follows.
\end{proof}

We show a slightly stronger version of \Cref{SVequiv:bip} of \Cref{SVequiv} for non-negative matrices, in that we show SV sparsification of the bipartite lift of a graph implies SV sparsification of the undirlying directed graph, even if the sparsifier is not promised to be bipartite-preserving, and even if the sparsifier is not itself undirected.
\begin{lemma}\label{lem:sv_undirected_suffices}
    Given non-negative $\mA \in \R_{\geq 0}^{m\times n}$, suppose there is nonnegative
    $\tA = \begin{bmatrix}
    \tA_{F,F}&\tA_{F,F^c}\\ \tA_{F^c,F}&\tA_{F^c,F^c}
    \end{bmatrix}$ such that $\tA\svgraph{\eps} \begin{bmatrix} \mzero & \mA^{\rTrans}\\ \mA&\mzero\end{bmatrix}$. Then
    \begin{enumerate}
        \item $\tA_{F,F}=\mzero$ and $\tA_{F^c,F^c}=\mzero$.
        \item $\tA_{F^c,F}\svgraph{\ep}\mA$ and $\tA_{F,F^c}\svgraph{\ep}\mA^{\rTrans}$. 
    \end{enumerate}
\end{lemma}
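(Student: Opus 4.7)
The plan is to first prove part 1 using two independent linear constraints on $\tA$ extracted from SV-approximation, and then derive part 2 by specializing the SV-approximation inequality to test vectors supported on a single side of the bipartition. Throughout, set $\mB := \begin{bmatrix}\mzero & \mA^{\rTrans}\\ \mA & \mzero\end{bmatrix}$ and $\mD := \mdiag(\mB\vones)$, and note that $\Din = \Dout = \mD$ and $\mB^* = \mB$, so both error matrices $\mE_L = \mE_R = \mD - \mB\mD^+\mB$ coincide.

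For part 1, I would first apply \Cref{lem:svStationary} to the hypothesis $\tA \svgraph{\eps} \mB$ to get $\vones^{\rTrans}\tA = \vones^{\rTrans}\mB$. To obtain a second linear constraint, I would exploit the $-1$ unit singular value of the bipartite lift: writing $\chi := \vones_F - \vones_{F^c}$, a direct computation shows $\mB\chi = -\mD\chi$ (since each row of $\mB$ lies entirely in one of the two block-columns with total weight equal to the corresponding degree), so $\mE_L \chi = \mD\chi - \mB\mD^+\mB\chi = \mD\chi + \mB\chi = \mzero$. By the kernel-containment clause of \Cref{def:leftrightapprox} item 3, $\ker(\mE_L) \subseteq \lker(\tA - \mB)$, which forces $\chi^{\rTrans}(\tA - \mB) = \vzero^{\rTrans}$. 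Averaging this with $\vones^{\rTrans}(\tA - \mB) = \vzero^{\rTrans}$ yields $\vones_F^{\rTrans}\tA = \vones_F^{\rTrans}\mB$ and $\vones_{F^c}^{\rTrans}\tA = \vones_{F^c}^{\rTrans}\mB$. Because the $F\times F$ and $F^c\times F^c$ blocks of $\mB$ are zero, the right-hand sides vanish on those block-columns, giving $\vones_F^{\rTrans}\tA_{F,F} = \vzero^{\rTrans}$ and $\vones_{F^c}^{\rTrans}\tA_{F^c,F^c} = \vzero^{\rTrans}$. Combined with entrywise non-negativity of $\tA$, this forces $\tA_{F,F} = \mzero$ and $\tA_{F^c,F^c} = \mzero$.

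For part 2, with the bipartite block structure of $\tA$ in hand, I would plug the test vectors $x = \begin{bmatrix}\vzero_F\\ u\end{bmatrix}$ and $y = \begin{bmatrix}v\\ \vzero_{F^c}\end{bmatrix}$ into the SV-approximation inequality for $\tA$ versus $\mB$. The left side collapses to $|u^*(\tA_{F^c,F}-\mA)v|$ since the opposite block of $\tA-\mB$ is never touched. Writing $\mB\mD^+\mB^* = \begin{bmatrix}\mA^{\rTrans}\mdiag(\mA\vones)^+\mA & \mzero\\ \mzero & \mA\,\mdiag(\mA^{\rTrans}\vones)^+\mA^*\end{bmatrix}$, the $F^c\times F^c$ and $F\times F$ principal submatrices of $\mE_L$ are precisely $\mdiag(\mA\vones) - \mA\,\mdiag(\mA^{\rTrans}\vones)^+\mA^*$ and $\mdiag(\mA^{\rTrans}\vones) - \mA^*\mdiag(\mA\vones)^+\mA$. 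This recovers the SV-approximation inequality in \Cref{def:svapprox} for $\tA_{F^c,F}$ versus $\mA$; the required degree match $\tA_{F^c,F}\vones = \mA\vones$ and $\tA_{F^c,F}^{\rTrans}\vones = \mA^{\rTrans}\vones$ is immediate from the already-established degree preservation of $\tA$, so indeed $\tA_{F^c,F}\svgraph{\eps}\mA$. The symmetric claim $\tA_{F,F^c}\svgraph{\eps}\mA^{\rTrans}$ follows by swapping the roles of the two blocks in the test vectors.

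The main obstacle is part 1: naive degree preservation alone (which corresponds to only using $\vones \in \ker(\mE_L)$) is not enough to pin down the diagonal blocks of $\tA$. One genuinely has to use the second independent kernel direction coming from the $-1$ unit singular value of the bipartite lift, and this is where the bipartiteness of $\mB$ is essential. Once the block structure of $\tA$ is established, part 2 is a routine unfolding of the SV-approximation definition on single-sided test vectors.
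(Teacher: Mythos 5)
Your proof is correct and takes essentially the same approach as the paper. For part 1, the paper obtains the two constraints $\vones_F^{\rTrans}\tA_{F,F}=\vzero^{\rTrans}$ and $\vones_{F^c}^{\rTrans}\tA_{F^c,F^c}=\vzero^{\rTrans}$ by testing against $y'=\begin{bmatrix}\vones\\\vones\end{bmatrix}$ and $y''=\begin{bmatrix}\vones\\-\vones\end{bmatrix}$ and letting the left test vector shrink to zero, which are precisely the same two directions $\vones$ and $\chi=\vones_F-\vones_{F^c}$ you identify; you just invoke the kernel-containment clause of \Cref{def:leftrightapprox} directly rather than rerunning the $\delta\to 0$ argument, which is a cleaner packaging of the same fact. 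For part 2, both proofs plug in the same one-sided test vectors, read off the block structure of the error matrix, and note that the degree match (needed for the $\svgraph{\eps}$ notation) follows once the diagonal blocks are known to vanish.
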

\begin{proof}
    Let $\Dout=\diag(\mA\vones)$ and $\Din = \diag(\vones^{\rTrans}\mA)$ and observe $\vones^{\rTrans}\Dout \vones - \vones^{\rTrans}\mA\Din^+\mA\vones = \vones^{\rTrans}\Din\vones - \vones^{\rTrans}\mA^{\rTrans}\Dout^+\mA^{\rTrans}\vones=0$.
    \begin{enumerate}
        \item Let $x\in \C^n$ be arbitrary, and consider the test vectors $x'=\delta\begin{bmatrix}x\\0^m\end{bmatrix},y' = \begin{bmatrix}\vones\\ \vones\end{bmatrix}$. Then from the definition of SV approximation we obtain
        \begin{align*}
            \left|\delta x^{\rTrans}\tA_{F,F}\vones+\delta x^*\tA_{F,F^c}\vones-\delta x\mA^{\rTrans}\vones\right| &= \left|x'^*\left(\begin{bmatrix} \mzero & \mA^{\rTrans}\\ \mA&\mzero\end{bmatrix}-\tA\right)y\right|\\
            &\leq \frac{\eps}{2}\cdot \delta^2 x^*(\Dout-\mA\Din^+\mA) x
        \end{align*}
        and hence $\tA_{F,F}\vones+x\tA_{F,F^c}\vones=x\mA^{\rTrans}\vones$ by taking $\delta \ra 0$. Considering the test vectors $x'=\delta\begin{bmatrix}x\\0^m\end{bmatrix},y'' = \begin{bmatrix}\vones\\ -\vones\end{bmatrix}$ we likewise obtain $x \tA_{F,F}\vones-x\tA_{F,F^c}\vones=-x\mA^{\rTrans}\vones$ and hence $x\tA_{F,F}\vones =0$. By an analogous argument we obtain $\tA_{F^c,F^c}=0$.

        \item For arbitrary $x \in \C^m,y\in \C^n$, apply the test vectors $[0^n,x],[y,0^m]$. Then we obtain
        \begin{align*}
            \left|x^*(\mA-\tA_{F^c,F})y\right| &\leq \frac{\eps}{2}\left(x^*(\mD-\mA\mD^{+}\mA^T)x+y^*(\mD-\mA^T\mD^{+}\mA)y\right)
        \end{align*}
        thus $\tA_{F^c,F}\svgraph{\ep} \mA$ and an analogous argument shows $\tA_{F,F^c}\svgraph{\ep} \mA^T$.  
    \end{enumerate}
\end{proof}

\subsection{Separations}
\label{sec:proofs:separation}
We prove the two separations separately, where we first construct the undirected graph example and later the arbitrary matrix example.
\SVUCsepGraphs*
The proof of~\Cref{itm:sepGraph} proceeds by considering the symmetric lifts of lazy directed cycles:
\begin{definition}[Lazy Directed Cycle]
    For $n\in \N$ and $\delta \in [0,1/2]$, let $C_{n,\delta}$ be the \emph{$(1/2+\delta)$-lazy directed cycle} on $n$ vertices, i.e., the directed graph with $V:= \{1,\ldots,n\}$, directed edges $\{(1,2),\ldots,(n-1,n),(n,1)\}$ with weight $(1/2-\delta)$, and directed edges $\{(i,i)\}_{i\in [n]}$ with weight $(1/2+\delta)$. Let $\mW_{n,\delta}$ be the random walk matrix of $C_{n,\delta}$.
\end{definition}
We first prove that the symmetric lifts of $(1/2+\delta)$-lazy directed cycles unit-circle approximate the $1/2$-lazy directed cycle with error proportional to $\delta$.
\begin{lemma}\label{lem:UCholds}
    For every $n\in \N$ and $\delta\in [0,1/2]$, we have $\slift{\mW_{n,\delta}}\uc_{\delta} \slift{\mW_{n,0}}$.
\end{lemma}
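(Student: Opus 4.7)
The plan is to verify the UC condition from~\Cref{ineq:UC-intro} directly by simultaneously block-diagonalizing $\slift{\mW_{n,\delta}}$ and $\slift{\mW_{n,0}}$ in the Fourier basis of the cyclic shift $\mC_n$ (the permutation matrix with $(\mC_n)_{i+1,i}=1$, indices mod $n$). Both $\mW_{n,\delta}=(\tfrac12+\delta)\mI+(\tfrac12-\delta)\mC_n$ and $\mW_{n,0}=\tfrac12(\mI+\mC_n)$ are polynomials in $\mC_n$, so they share its orthonormal eigenbasis $\{v_k\}_{k=0}^{n-1}$ with $(v_k)_j=\omega^{jk}/\sqrt{n}$, $\omega=e^{2\pi i/n}$. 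Their eigenvalues at $v_k$ are $\mu_k^\delta\defeq(\tfrac12+\delta)+(\tfrac12-\delta)\omega^k$ and $\mu_k\defeq(1+\omega^k)/2$, differing by $\mu_k^\delta-\mu_k=\delta(1-\omega^k)$.

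First, since $L_0\defeq\slift{\mW_{n,0}}$ is Hermitian with spectrum in $[-1,1]$, optimizing the UC inequality over unit-modulus $z$ reduces the claim to showing that for all $x,y\in\C^{2n}$,
\[|x^*(L-L_0)y|\leq \tfrac{\delta}{2}\bigl(\|x\|^2+\|y\|^2-|x^*L_0 x+y^*L_0 y|\bigr),\]
where $L\defeq\slift{\mW_{n,\delta}}$ and the worst-case $z$ are $\pm 1$.

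The crux of the proof is the block decomposition. In the ordered basis $\{(v_k,0),(0,v_k)\}_{k=0}^{n-1}$ of $\C^{2n}$, the three matrices $L$, $L_0$, and $\mI_{2n}$ simultaneously split into orthogonal direct sums of $2\times 2$ blocks: $L$ restricts to $L^\delta_k\defeq\bigl[\begin{smallmatrix} 0 & \bar{\mu}_k^{\delta} \\ \mu_k^\delta & 0\end{smallmatrix}\bigr]$, $L_0$ restricts to $L^0_k$ analogously, and $\mI_{2n}$ restricts to $\mI_2$. Hence the displayed inequality decouples across $k$, and it suffices to check for each $k\in\{0,\ldots,n-1\}$ that $L^\delta_k$ UC-approximates $L^0_k$ with error $\delta$ relative to $\mI_2$.

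Each per-block verification is an explicit $2\times 2$ computation: for $(a,b),(c,d)\in\C^2$ one has
\[(a,b)^*(L^\delta_k-L^0_k)(c,d)=\delta\bigl[\bar a(1-\bar\omega^k)d+\bar b(1-\omega^k)c\bigr],\]
so the LHS is bounded by $\delta|1-\omega^k|(|a||d|+|b||c|)$, while the RHS involves $(a,b)^*L^0_k(a,b)=2\,\mathrm{Re}(\bar b\mu_k a)$. The key identity $|1-\omega^k|^2+|1+\omega^k|^2=4$, equivalently $|1-\omega^k|=2\sqrt{1-|\mu_k|^2}$, ties the magnitude of the perturbation to $|\mu_k|$. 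Combined with AM--GM on $|a||d|+|b||c|$, this gives the desired block inequality after an elementary scalar calculation. The main technical subtlety is tracking the complex phases in the off-diagonal entries of the $2\times 2$ blocks; the tightest block occurs at $k=n/2$ (for even $n$), where $\mu_k=0$ and $|1-\omega^k|=2$, and it is this same block that drives the SV-approximation \emph{failure} used in~\Cref{prop:SVUCsepGraphs}.
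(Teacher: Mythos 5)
Your Fourier block-diagonalization is a genuinely different route from the paper's proof, which writes $\mI\mp\slift{\mW_{n,\delta}}$ and $\mI\mp\slift{\mW_{n,0}}$ as non-negative combinations of edge Laplacians and reads off the two-sided L\"owner bounds $\pm(\tM-\mM)\preceq 2\delta(\mI\mp\mM)$ term by term, with no spectral decomposition. Your splitting of $\C^{2n}$ into the $n$ orthogonal planes $\Span\{(v_k,0),(0,v_k)\}$ is valid, and so is the decoupling of the optimized-over-$z$ UC inequality across blocks (LHS via $|\sum_k\alpha_k|\leq\sum_k|\alpha_k|$, RHS because the same triangle inequality only shrinks the subtracted term). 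This viewpoint has the virtue of isolating exactly which Fourier mode is tight (and, as you note, it is the same $k=n/2$ mode that drives the SV separation in Proposition~3.12).

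The concluding step, however, is a genuine gap, not an elementary scalar calculation, and the \say{technical subtlety} of phase tracking that you defer to a sentence at the end is in fact the entire argument. If one bounds $|\bar a(1-\bar\omega^k)d+\bar b(1-\omega^k)c|\leq|1-\omega^k|(|a||d|+|b||c|)$ and, separately, $|\Re(\bar b\mu_k a)+\Re(\bar d\mu_k c)|\leq|\mu_k|(|a||b|+|c||d|)$, then after AM--GM and the identity $|1-\omega^k|=2\sqrt{1-|\mu_k|^2}$ the required scalar condition becomes $\sqrt{(1+|\mu_k|)/(1-|\mu_k|)}\leq\eps/(2\delta)$, which diverges as $|\mu_k|\to1$; no fixed $\eps$ makes this chain close. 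The block inequality survives only because $1-\omega^k$ and $\mu_k=(1+\omega^k)/2$ are exactly $\pi/2$ out of phase: writing $\theta=2\pi k/n$ so that $\mu_k=\cos(\theta/2)e^{i\theta/2}$ and $1-\omega^k=-2i\sin(\theta/2)e^{i\theta/2}$, the phase change $(b,d)\mapsto(e^{i\theta/2}b,e^{i\theta/2}d)$ turns the perturbation into $2i\sqrt{1-s^2}\,(\bar ad-\bar bc)$ and the diagonal form into $s(\bar ba+\bar dc)$ with $s=|\mu_k|$, after which the block inequality is equivalent to $\bigl\|(\mI-s\mP)^{-1/2}\mJ(\mI-s\mP)^{-1/2}\bigr\|\leq(1-s^2)^{-1/2}$, where $\mJ=\left[\begin{smallmatrix}0&1\\-1&0\end{smallmatrix}\right]$ and $\mP$ is the swap; that identity holds with equality and must be used. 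Carrying this out, the tight per-block constant is $2\delta$, not $\delta$: at $k=n/2$ with $a=b=c=d$, the LHS equals $4\delta|a|^2$ against RHS $2\eps|a|^2$ (equivalently, the test vector $x=y=(1,-1,\ldots,1,-1,\ldots)$ already violates the displayed reduced condition with $\eps=\delta$). The paper's own edge-Laplacian bounds also give only the constant $2\delta$, and since Proposition~3.12 needs an unspecified constant the slack is harmless, but your proof (and really the lemma statement) should carry $2\delta$, and the write-up needs to actually perform the phase-tracked $2\times2$ computation rather than gesture at it.
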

\begin{proof}
    Let $\tM:=\slift{\mW_{n,\delta}}$ and $\mM:=\slift{\mW_{n,0}}$.
    We have that $\ker(\mI-\mM) = \Span\{\vones\} \subseteq \ker(\mM-\tM)$ and $\ker(\mI+\mM) = \Span\{u\} \subseteq \ker(\mM-\tM)$ where $u_i = (-1)^i$ for $i$ in $[2n]$.
    
    Let $E_1$ be the edges in the lift corresponding to self-loops in the directed cycle, and let $E_2$ be the edges corresponding to non-self-loops in the directed cycle. Let $\L_e$ be the Laplacian of edge $e$. 
    \[\mI-\mM = \sum_{e\in E_1}(1/2)\L_e + \sum_{e\in E_2}(1/2)\L_e, \quad\quad \mI-\tM = \sum_{e\in E_1}(1/2+\delta)\L_e + \sum_{e\in E_2}(1/2-\delta)\L_e\]
    and hence  $\tM-\mM= \sum_{e\in E_1\cup E_2} \left(\pm \delta\right)\L_e$ and hence  $-2\delta(\mI-\mM) \preceq \tM-\mM \preceq 2\delta(\mI-\mM)$ and hence by~\Cref{lem:Hermitianapprox}, $\tM$ is a $2\delta$-standard approximation of $\mM$. Moreover, we have by an analogous argument that $\tM$ is a $2\delta$ standard approximation of $\mM$ with respect to $\mI+\mM$, and hence by \cite[Lemma 3.7]{AKMPSV20} we have that $\tM$ is a $\delta$-UC approximation of $\mM$.
\end{proof}
However, we now prove that a sufficiently high power of $\mW_{k,\eps}$ is not an $O(1)$-standard approximation of $\mW_{k,0}$, for $\eps=\Omega(1/\sqrt{k})$.
\newcommand{\Bin}{\mathrm{Bin}}

\begin{lemma}\label{lem:standardFails}
    For every $\eps>0$, there is $k=O(1/\eps^2)$ such that for every $n\geq 2k$, we have that $\mW_{n,\eps}^k$ is not a $.95$-standard approximation of $\mW_{n,0}^k$.
\end{lemma}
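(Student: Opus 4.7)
The plan is to exploit the fact that both $\mW_{n,\eps}$ and $\mW_{n,0}$ are circulant, diagonalize them simultaneously in the Fourier basis, and reduce the standard-approximation inequality to a one-dimensional condition on eigenvalues. Let $\omega=e^{2\pi i/n}$ and $v_\ell=(1,\omega^\ell,\ldots,\omega^{(n-1)\ell})^{\rTrans}/\sqrt{n}$ for $\ell\in\{0,\ldots,n-1\}$; the respective eigenvalues are $\mu_{\eps,\ell}=(1/2+\eps)+(1/2-\eps)\omega^\ell$ and $\mu_{0,\ell}=(1+\omega^\ell)/2$. Specializing Definition~\ref{def:stdapprox} to $\mD=\mI$, the error matrix is $\mI-\mS_{\mW_{n,0}^k}$, whose Fourier eigenvalues are $1-\Re(\mu_{0,\ell}^k)$, and the kernel condition $\ker(\mI-\mS_{\mW_{n,0}^k})=\mathrm{span}(\vones)$ is immediate since both matrices are doubly stochastic. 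Taking the test vectors $x=y=v_\ell$ in the bilinear form, $\mW_{n,\eps}^k$ being a standard $0.95$-approximation of $\mW_{n,0}^k$ forces
\[
|\mu_{\eps,\ell}^k-\mu_{0,\ell}^k|\leq 0.95\cdot\bigl(1-\Re(\mu_{0,\ell}^k)\bigr)\qquad\text{for every }\ell\neq 0,
\]
so it suffices to produce a single $\ell$ violating this inequality.

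Writing $\phi=\pi\ell/n$, a direct computation gives $\mu_{0,\ell}=e^{i\phi}\cos\phi$ and $\mu_{\eps,\ell}=e^{i\phi}(\cos\phi-2i\eps\sin\phi)$. Factoring out $\cos^k\phi$ and using the polar form $1-2i\eps\tan\phi=\rho e^{-i\alpha}$ with $\rho^2=1+4\eps^2\tan^2\phi$ and $\alpha=\arctan(2\eps\tan\phi)$, one obtains
\[
|\mu_{\eps,\ell}^k-\mu_{0,\ell}^k|^2=\cos^{2k}\phi\cdot(\rho^{2k}-2\rho^k\cos(k\alpha)+1).
\]
In the regime where $\eps\tan\phi$ and $k\eps^2\tan^2\phi$ are small, the Taylor expansions $\rho^k\approx 1+2k\eps^2\tan^2\phi$, $\alpha\approx 2\eps\tan\phi$, and $\tan\phi\approx\phi$ simplify this to $4\cos^{2k}\phi\cdot\sin^2(k\eps\phi)$. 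Hence the numerator is $\approx 2\cos^k\phi\cdot|\sin(k\eps\phi)|$ while the denominator is trivially bounded by $1-\Re(\mu_{0,\ell}^k)\leq 1+\cos^k\phi\leq 2$.

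The plan is then to choose $k=\lceil C/\eps^2\rceil$ for a sufficiently large universal constant $C$ (e.g.\ $C=200$) and pick the integer $\ell$ whose $\phi$ is closest to the target $\phi^*\defeq\pi\eps/(2C)$. The Fourier grid spacing $\pi/n\leq\pi/(2k)\leq\pi\eps^2/(2C)$ ensures $|\phi-\phi^*|\leq\pi\eps^2/(4C)$, so $\phi/\phi^*=1+O(\eps)$. This makes $k\eps\phi\approx\pi/2$ (so $|\sin(k\eps\phi)|\geq 1-O(\eps^2)$), $k\phi^2\approx\pi^2/(4C)$ (so $\cos^k\phi\approx e^{-\pi^2/(8C)}$ is within $O(1/C)$ of $1$), and $k\eps^2\phi^2=O(\eps^2/C)$ (validating the Taylor step). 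Combining, the numerator is $\geq 2e^{-\pi^2/(8C)}(1-O(\eps^2))$ and the denominator is $\leq 1+e^{-\pi^2/(8C)}$, so the ratio exceeds $0.95$ once $C$ is taken large enough, uniformly in $\eps$ and in $n\geq 2k$.

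The main technical obstacle is the rapidly oscillating factor $\cos(k\phi)$ appearing in $\Re(\mu_{0,\ell}^k)$: for general $n\geq 2k$ one cannot steer $\cos(k\phi)$ close to $+1$ to make the denominator small, and a na\"ive bound like $1-\cos^k\phi\cos(k\phi)\approx k\phi^2$ fails. The plan sidesteps this issue entirely by instead pushing the numerator up to essentially its maximal value $2$ (chosen via $k\eps\phi\approx\pi/2$) while using only the trivial bound $\leq 2$ on the denominator, which is what forces $C$ to be large rather than constant. The straightforward but slightly tedious work lies in tracking the remainders in the Taylor expansions and in verifying that the nearest grid point to $\phi^*$ retains the desired estimates; these all behave well because $\phi^*$ is at scale $\eps$ while the Fourier spacing is at the finer scale $\eps^2$.
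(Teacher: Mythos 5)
Your proof is correct but takes a genuinely different route from the paper's. The paper establishes the lemma probabilistically: it takes the test vectors $x,y$ to be indicator vectors of carefully chosen intervals $S,T$ on the cycle, interprets $x^\top(\mW^k-\tW^k)y$ as a difference of random-walk hitting probabilities, and uses Chebyshev-type concentration for $\mathrm{Bin}(k,1/2)$ versus $\mathrm{Bin}(k,1/2-\eps)$ to show this difference is $\Omega(1)\cdot|S|$ while the right-hand side $\frac{\eps_0}{2}(|S|+|T|)$ is comparable. Your approach instead works in the Fourier/circulant eigenbasis, reduces to a one-dimensional inequality on the eigenvalues, and exhibits a single frequency $\ell$ where the bilinear form (numerator $\approx 2\cos^k\phi\,|\sin(k\eps\phi)|$) is pushed to nearly its maximal value $2$ while the error term is trivially at most $2$. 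Both arguments are valid in the relevant regime of small $\eps$ (the paper sets $\eps=\Theta(1/\sqrt{n})$ when invoking the lemma); your spectral approach is more algebraic and arguably cleaner at the level of test vectors (one complex exponential suffices, no combinatorial bookkeeping of interval endpoints), at the cost of tracking a few Taylor remainders. One point you glossed over that is worth stating explicitly: the reason it is sound to bound the denominator by the crude $1+\cos^k\phi\leq 2$ — rather than attempting to locate a frequency where $\cos(k\phi)$ is near $+1$ — is precisely because $k\phi$ moves on a scale incommensurate with the grid, which is the oscillation obstruction you correctly identify and sidestep.
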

\begin{proof}
    \newcommand{\fl}[1]{\left\lfloor #1\right\rfloor}
    \newcommand{\cl}[1]{\left\lceil#1\right\rceil}
    Let $\mW:=\mW_{n,0}$ and $\tW:=\mW_{n,\eps}$. Let $d$ be a constant to be chosen later. Let $k=\lceil (10+d)^2/\eps^2\rceil$.  Let 
    \[S :=\{1,\ldots,\lceil d\sqrt{k}\rceil\} \text{ and } T:= \{\lfloor k/2-d\sqrt{k}/2\rfloor,\ldots,\lceil k/2+d\sqrt{k}/1.9\rceil\}.\]
    Let $x:=\vones_T$ and $y:=\vones_S$.

    Now observe that since $\tW^k$ is an $\eps_0$-standard approximation of $\mW^k$ (where the value of $\eps_0$ will be derived later),
    \begin{align*}
        \left|x^T(\mW^k-\tW^k)y\right| &\leq \frac{\eps_0}{2}(x^T\sym[\mI-\mW^k]x+y^T\sym[\mI-\mW^k]y)\\
        &\leq \frac{\eps_0}{2}(|S|+|T|) \leq \frac{\eps_0}{2}\sqrt{k}\cdot (3+d+d/2+d/1.9)\leq (1.03)\eps_0 d\sqrt{k}.
    \end{align*}
    where we assume $k$ and $d$ are sufficiently large.
    Next,
    \begin{align*}
        \MoveEqLeft{\left|x^T(\mW^k-\tW^k)y\right|}\\ &\geq|S|\left|\Pr_{i\in S}[k \text{ step walk goes from $i$ to $T$ in }\mW]-\Pr_{i\in S}[k \text{ step walk goes from $i$ to $T$ in }\tW]\right|
    \end{align*}
    We then bound both terms, where we use that $k\leq n/2$ so no walk can loop around the cycle, and $S$ and $T$ have no overlap. For every $i$, let 
    \[S_i:=\left\{\fl{k/2-d\sqrt{k}/2-i},\ldots,\cl{k/2+d\sqrt{k}/1.9-i}\right\}
    \]
    be the number of steps from starting vertex $i$ such that the final walk  vertex lies in $T$.
    \begin{align*}
        \Pr_{i\in S}[k \text{ step walk goes from $i$ to $T$ in }\mW]
        &= \E_{i\in S}\Pr\left[\Bin(k,1/2) \in S_i\right]\\ 
        &\geq \Pr\left[\Bin(k,1/2)\in [k/2\pm (.03)d\sqrt{k}]\right]\\ 
        &\geq (1-O(1/d^2)). &&\text{(Chebyshev)}
    \end{align*}
    whereas
    \begin{align*}
        \Pr_{i\in S}[k \text{ step walk goes from $i$ to $T$ in }\tW]
        &=\E_{i\in S}\Pr[\Bin(k,1/2-\eps) \in S_i]\\ 
        &\leq \Pr[\Bin(k,1/2-\eps) \geq k/2-d\sqrt{k}/2]\\
        &\leq .01
    \end{align*}
    where the final step follows as the event holding implies a deviation of at least $k/2-d\sqrt{k}/2-k/2+k\eps=k\eps -d\sqrt{k}/2\geq 10\sqrt{k}$, and so we can apply Chebyshev. Thus, letting $d$ be sufficiently large we derive
    \[
    d\sqrt{k}(.99-.01)\leq (1.03)\eps_0 d\sqrt{k}\implies \eps_0 \geq .95. \qedhere
    \]
\end{proof}

We can then combine these two observations and prove the first separation.
\begin{proof}[Proof of~\Cref{itm:sepGraph}]
    As we can prove the result for $n'=2\cdot \lfloor n/2\rfloor$ and this does not affect the asymptotics, we assume without loss of generality that $n$ is even. 

    Let $\mW:=\mW_{n/2,0}$ and $\tW:=\mW_{n/2,\eps}$ for $\eps=\Theta(1/\sqrt{n})$ chosen such that applying~\Cref{lem:standardFails} with $\eps=\eps$ results in a value of $k$ such that $n/2\geq 2k$. Let $\mM \defeq \slift{\mW}$ and $\tM \defeq \slift{\tW}$. 

   The fact that $\tM\uc_\eps \mM$ follows from~\Cref{lem:UCholds}. However, we claim that  $\tM$ is not a $.3$-SV approximation of $\mM$.
    Assuming for contradiction $\tM\svn{.3}\mM$, we obtain:
    \begin{align*}
        \tM\svn{.3}\mM &\implies \tW\svn{.3}\mW && \text{(\Cref{SVequiv:bip})}\\
        &\implies  \tW^k\svn{.8}\mW^k && \text{(\Cref{lem:svproducts})}\\
        &\implies \tW^k \text{ is a $.8$-standard approximation of } \mW^k
    \end{align*}
    which is a contradiction to~\Cref{lem:standardFails}, so we have the desired separation.
\end{proof}

\begin{proof}[Proof of \Cref{itm:sepMatr}]
Given $\alpha$, we define
\[\mW =
\begin{bmatrix}
     \alpha & 0 \\
     0 & -\alpha
\end{bmatrix}
\]
and
\[\tW = \cdot\begin{bmatrix}
     \alpha & \eps \sqrt{1-\alpha^2} \\
     \eps \sqrt{1-\alpha^2} & -\alpha
\end{bmatrix}.
\]
We first show that $\tW \uc_{\eps} \mW$. Since both matrices are symmetric, suffices~\cite[Lemma 3.7]{AKMPSV20} to show the equivalent statement for PSD approximation between $\mI-\tW,\mI-\mW$ and $\mI+\tW,\mI+\mW$, i.e.,
\[(1-\eps)(\mI-\mW) \preceq \mI-\tW \preceq (1+\eps)(\mI-\mW)\]
\[(1-\eps)(\mI+\mW) \preceq \mI+\tW \preceq (1+\eps)(\mI+\mW).
\]
All four inequalities are implied by 
\[\begin{bmatrix}
    \eps (1-\alpha) & \eps \sqrt{1-\alpha^2}\\
    \eps \sqrt{1-\alpha^2} & \eps (1+\alpha)
\end{bmatrix} \succeq \mzero.
\]
Using~\Cref{fact:SCPSD} this is equivalent to $\eps(1-\alpha) -\eps(1-\alpha^2)(1+\alpha)^{-1}\geq 0$ which holds with equality, so $\tW \uc_\eps \mW$. Now assume that $\tW \svn{\eps'} \mW$ for some $\eps'$. From the definition of SV approximation applied to test vector $z=\begin{bmatrix}x\\x\end{bmatrix}$, we have
\[
\pm(\tW-\mW) \preceq \eps'\left(\mI - \mW^2\right)= \eps' (1-\alpha^2) \mI.
\]
This is equivalent to
\[
\epsilon \sqrt{1-\alpha^2} \leq \eps' (1-\alpha^2).
\]
Thus for $\tW \svn{\eps}\mW$, a necessary condition is for $\eps' \geq \frac{\eps}{\sqrt{1-\alpha^2}}$.
\end{proof}

We then use these separations to prove that SV approximation enjoys properties not enjoyed by prior notions:
\UCpropFail*
\begin{proof}
    All positive claims follow from the respective lemmas, so it remains to show these properties do not hold for UC approximation. All such properties fail to hold from the observation that UC approximation is not preserved under the asymmetric lift operation, as otherwise it would imply SV approximation.

    Formally, we have from \Cref{prop:SVUCsepGraphs} that for $n\in \N$ there exist $\tW,\mW$ such that $\tW\uc_{O(1/\sqrt{n})}\mW$ yet $\tW$ is not a $.3$-SV approximation of $\mW$. Now let 
    \[\tM := \begin{bmatrix} \mzero & \mzero \\ \tW & \mzero\end{bmatrix},\quad\quad \begin{bmatrix} \mzero & \mzero \\ \mW & \mzero\end{bmatrix}.\]
    \begin{claim}
        $\tM$ is not a $.3$-UC approximation of $\mM$.
    \end{claim}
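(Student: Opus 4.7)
The plan is to prove this by contradiction. I would assume $\tM$ is a $0.3$-UC approximation of $\mM$ (with respect to the natural degree matrix $\mD = \mI_{2n}$, which makes $\mD - \mS_{z\mM}$ positive semidefinite for all unit $z \in \C$), and derive that $\tW \svn{0.3} \mW$, contradicting the separation established in the first part of $\Cref{prop:SVUCsepGraphs}$.

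The first step is to unpack the UC inequality using block-structured test vectors $x = (x_1, x_2)$ and $y = (y_1, y_2)$ in $\C^{2n}$. The block structure of $\mM$ gives $x^*(\tM - \mM) y = x_2^*(\tW - \mW) y_1$, independent of $z$, $x_1$, and $y_2$. Meanwhile, $\mS_{z\mM} = \tfrac{1}{2}\begin{bmatrix}\mzero & z^*\mW^* \\ z\mW & \mzero\end{bmatrix}$ and so $x^*(\mI - \mS_{z\mM}) x = \|x\|^2 - \Re(z \cdot x_2^*\mW x_1)$. The UC hypothesis then yields, for every unit $z$ and all choices of $x_1, y_2$, the bound
\begin{equation*}
    |x_2^*(\tW - \mW) y_1| \leq \tfrac{0.3}{2}\left(\|x\|^2 + \|y\|^2 - \Re(z x_2^*\mW x_1) - \Re(z y_2^*\mW y_1)\right).
\end{equation*}

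The second step converts this into the SV-approximation condition. By the equivalent characterizations of SV approximation in $\Cref{SVequiv}$ (specifically $\Cref{SVequiv:somez}$ applied to the asymmetric lift of $\mW$ into the top-right block), to show $\tW \svn{0.3} \mW$ it suffices to establish the analogous inequality with error matrix $\mE_z = \begin{bmatrix}\mI & z\mW \\ z^*\mW^* & \mI\end{bmatrix}$. Up to a block-permutation that swaps the two block rows/columns (which, by $\Cref{cor:svPerm}$, preserves approximation status and converts $\mM$ to $\mM^T = \begin{bmatrix}\mzero & \mW^T \\ \mzero & \mzero\end{bmatrix}$), this is exactly the setup of $\Cref{SVequiv:somez}$. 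The key observation is that while $\mI - \mS_{z\mM}$ and $\mE_z$ differ in both scale (a factor of $2$ on the off-diagonal blocks) and sign (depending on the chosen $z$), one can instantiate the UC inequality at $z = -1$ and invoke $\Cref{SVequiv:allz}$ at $z = 1$ to match the two error matrices.

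The main obstacle I expect is handling this sign/scale mismatch cleanly. My approach would be to average the UC inequalities at $z$ and $-z$ (both of which hold by hypothesis) to symmetrize, then optimize over $x_1, y_2$ to reduce the right-hand side to the form $x_2^*(\mI - \mW\mW^*/4) x_2 + y_1^*(\mI - \mW^*\mW/4) y_1$. Since $\mI - \mW\mW^*/4 \succeq \mI - \mW\mW^*$ as PSD matrices, combined with $\Cref{SVequiv:bip}$ relating the asymmetric lift bound back to the symmetric setting, this transfers the UC inequality into an SV inequality at a constant multiple of $0.3$. Absorbing the constants, we conclude $\tW \svn{0.3}\mW$, which contradicts $\Cref{prop:SVUCsepGraphs}$ and hence completes the proof.
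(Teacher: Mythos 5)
Your proposal mirrors the paper's own (quite terse) argument, and the two share the same critical defect; you also make an explicit sign error at the decisive step that exposes why the argument cannot work. After correctly reducing the UC hypothesis (optimizing over $x_1$, $y_2$, and $z$; note the result is actually independent of $z$, so averaging over $z$ and $-z$ does nothing) to
\[
\left|x_2^*(\tW-\mW)y_1\right|\;\le\;\frac{0.3}{2}\Bigl(x_2^*\bigl(\mI-\tfrac14\mW\mW^*\bigr)x_2+y_1^*\bigl(\mI-\tfrac14\mW^*\mW\bigr)y_1\Bigr),
\]
you invoke $\mI-\tfrac14\mW\mW^*\succeq\mI-\mW\mW^*$ to ``transfer'' this into the SV inequality. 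But that L\"owner inequality is exactly backwards for your purposes: it says the UC-derived right-hand side is \emph{at least} the SV right-hand side, so the UC-derived inequality is a \emph{weaker} statement than the SV inequality and cannot imply it. To upgrade an $\eps$-approximation from error matrix $\mE$ to error matrix $\mE'$ via \Cref{def:leftrightapprox} you need $\mE\preceq\mE'$ after matching scalars, and here the domination goes the other way (worse, it is not off by a fixed constant: $\mI-\tfrac14\mW\mW^*\succeq\tfrac34\mI$ is bounded below, while $\mI-\mW\mW^*$ can be arbitrarily close to singular). The paper's own sketch has the same gap: the standard-approximation error matrix $\mI-\mS_\mM$ carries $\mW/2$ in its off-diagonals while the error matrix $\mE_z$ in \Cref{SVequiv}(\ref{SVequiv:somez}) carries $z\mW$; after matching the $\eps$ versus $\eps/2$ prefactors, one would need $2(\mI-\mS_\mM)\preceq\mE_z$, whereas in fact $2(\mI-\mS_\mM)=\mE_{-1}+\mI\succeq\mE_{-1}$.

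This is not just a presentational issue: the claim itself appears to be false for large $n$. The displayed inequality above is in fact \emph{equivalent} to $\tM\uc_c\mM$ (it is the UC condition after minimizing the right-hand side over $x_1,y_2,z$). Since $\mI-\tfrac14\mW\mW^*\succeq\tfrac34\mI$ and likewise for $\mW^*\mW$, the right-hand side is at least $\tfrac{3c}{4}\|x_2\|\,\|y_1\|$, whereas the left-hand side is at most $\|\tW-\mW\|\,\|x_2\|\,\|y_1\|$; hence $\tM\uc_c\mM$ holds whenever $\|\tW-\mW\|\le\tfrac{3c}{4}$. For the matrices produced by \Cref{prop:SVUCsepGraphs} --- the bipartite lifts of the $\tfrac12$- and $(\tfrac12+\eps)$-lazy directed cycles with $\eps=\Theta(1/\sqrt n)$ --- one has $\|\tW-\mW\|\le 2\eps=\Theta(1/\sqrt n)$, so $\tM\uc_{0.3}\mM$ does hold once $n$ is sufficiently large, contradicting the claim. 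The underlying phenomenon is that padding with zero blocks injects pure $\mI$ contributions into the UC error matrix, so UC approximation of an asymmetric block embedding is genuinely weaker than SV approximation of the original matrix; the lazy-cycle example simply does not have a large enough norm difference to violate this diluted condition.
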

    \begin{proof}
        Assume for contradiction $\tM\uc_{.3}\mM$. Then since UC approximation implies standard approximation, we have that $\tM$ is a $.3$-approximation of $\mM$ with respect to $\sym[\mM]$. By \Cref{SVequiv:somez}, this implies that $\tW\svn{.3}\mW$ and hence $\tM\uc_{.3}\mM$ by \Cref{lem:SVimpliesUC}, which is a contradiction. 
    \end{proof}
    We can then use this to derive that all three properties do not hold. First, clearly this implies that UC is not preserved under arbitrary embeddings. Second, it is easy to show that
    \[
    \tA := \begin{bmatrix} \tW & \mzero \\ \mzero & \mzero\end{bmatrix}\uc_{.3}   \begin{bmatrix} \mW & \mzero \\ \mzero & \mzero\end{bmatrix}=:\mA.
    \]
    Let $\Pi$ be the permutation such that $\Pi\tA=\tM$ and $\Pi\mA=\mM$. We have by the claim that applying $\Pi$ cannot preserve UC approximation, and since $\Pi\uc_0\Pi$ we have that UC approximation is likewise not preserved under products.
\end{proof}

\section{Singular Values Facts}

\begin{lemma}\label{lem:singularPower}
    Let $\sigma_i(\cdot)$ denote the $i$th largest singular value. For any matrix $\mA \in \C^{n \times n}$ and any positive integer $k$,
    \[
        \sigma_2(\mA^k) \leq \sigma_2(\mA) \cdot \sigma_1(\mA)^{k-1}.
    \]
\end{lemma}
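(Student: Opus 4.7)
The plan is to use the Courant--Fischer variational characterization of the second singular value together with submultiplicativity of the spectral norm. Recall that for any matrix $\mB \in \C^{n \times n}$,
\[
\sigma_2(\mB) \;=\; \min_{\substack{W \subset \C^n \\ \dim W = 1}} \;\max_{\substack{x \perp W \\ \|x\|=1}} \|\mB x\|,
\]
and in particular, taking $W = \Span\{v_1\}$ to be the span of any top right singular vector $v_1$ of $\mB$ achieves the minimum.

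The key idea is to apply this characterization to $\mA^k$ using the \emph{top right singular vector of $\mA$} (rather than of $\mA^k$) as the test direction. First, I would let $v_1 \in \C^n$ be a unit right singular vector of $\mA$ corresponding to $\sigma_1(\mA)$, so that
\[
\max_{\substack{x \perp v_1 \\ \|x\|=1}} \|\mA x\| \;=\; \sigma_2(\mA).
\]
Using $W = \Span\{v_1\}$ as a feasible choice in the variational formula above (applied to $\mB = \mA^k$) yields
\[
\sigma_2(\mA^k) \;\leq\; \max_{\substack{x \perp v_1 \\ \|x\|=1}} \|\mA^k x\|.
\]
Then for any such $x$ I would bound $\|\mA^k x\| = \|\mA^{k-1} (\mA x)\| \leq \|\mA^{k-1}\| \cdot \|\mA x\|$. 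The first factor is at most $\sigma_1(\mA)^{k-1}$ by submultiplicativity of the spectral norm, and the second is at most $\sigma_2(\mA)$ by the choice of $v_1$. Combining these gives the desired bound.

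This proof is entirely routine and I do not anticipate any real obstacle; the only subtlety is making sure that the Courant--Fischer characterization is stated in the form above (as a min over one-dimensional subspaces to avoid), which is a standard fact. An equivalent approach would be to invoke the general singular-value inequality $\sigma_{i+j-1}(\mA\mB) \leq \sigma_i(\mA)\sigma_j(\mB)$ with $i=2, j=1$ and then induct on $k$, but the direct variational argument above is self-contained and avoids citing an external inequality.
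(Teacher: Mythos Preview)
Your proposal is correct and essentially identical to the paper's own proof: both apply the variational characterization $\sigma_2(\mB)=\min_{v}\max_{x\perp v}\|\mB x\|/\|x\|$ to $\mB=\mA^k$, bound $\|\mA^k x\|\le\|\mA^{k-1}\|\cdot\|\mA x\|$, and then use submultiplicativity together with the variational formula for $\sigma_2(\mA)$. The only cosmetic difference is that the paper keeps the outer $\min_v$ throughout the chain of inequalities, whereas you explicitly pick $v=v_1$ to be a top right singular vector of $\mA$; these are equivalent.
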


\begin{proof}
    By the variational characterization of singular values, we have
    \[
        \sigma_2(\mA^k) = \min_{v \in \C^n} \max_{x \perp v} \frac{\|\mA^k x\|}{\|x\|} \leq \min_{v \in \C^n} \max_{x \perp v} \frac{ \|\mA^{k-1} \| \cdot \|\mA x\|}{\|x\|} \leq \|\mA\|^{k-1} \cdot \min_{v \in \C^n} \max_{x \perp v} \frac{\|\mA x\|}{\|x\|}  = \sigma_2(\mA) \cdot \sigma_1(\mA)^{k-1}.
    \]
\end{proof}

\begin{lemma}\label{lem:sigmasum}
    Let $\sigma_i(\cdot)$ denote the $i$th largest singular value. For any matrix $\mA,\mB \in \C^{n \times n}$,
    \[
        \sigma_2(\mA + \mB) \leq \sigma_2(\mA) + \sigma_1(\mB).
    \]
    Furthermore, if $\mA,\mB$ share a common right singular vector that achieves the maximum singular value in each of them respectively, then
        \[
        \sigma_2(\mA + \mB) \leq \sigma_2(\mA) + \sigma_2(\mB)
    \]
\end{lemma}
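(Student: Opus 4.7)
The plan is to prove both inequalities via the variational (min--max) characterization of singular values, specifically the identity
\[
\sigma_2(\mM) = \min_{v \in \C^n, \|v\|=1}\ \max_{\substack{x \perp v \\ \|x\|=1}} \|\mM x\|,
\]
together with the fact that if $v$ is a top right singular vector of $\mM$ (i.e.\ $\|\mM v\| = \sigma_1(\mM)$), then $\max_{x \perp v,\|x\|=1} \|\mM x\| = \sigma_2(\mM)$ exactly. I would then combine this with the elementary inequality $\|(\mA + \mB)x\| \leq \|\mA x\| + \|\mB x\|$, which lifts to the bound $\max_{x \perp v,\|x\|=1}\|(\mA+\mB)x\| \leq \max_{x \perp v, \|x\|=1}\|\mA x\| + \max_{x \perp v,\|x\|=1}\|\mB x\|$ simply because the maximum of a sum is bounded by the sum of the maxima.

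For the first inequality, I would take $v$ to be a unit top right singular vector of $\mA$. Applying the min--max characterization to $\mA+\mB$ with this choice of $v$ yields
\[
\sigma_2(\mA+\mB) \leq \max_{\substack{x \perp v \\ \|x\|=1}} \|\mA x\| + \max_{\substack{x \perp v \\ \|x\|=1}} \|\mB x\| \leq \sigma_2(\mA) + \sigma_1(\mB),
\]
where the first term is exactly $\sigma_2(\mA)$ by the choice of $v$, and the second is bounded by the unrestricted maximum $\sigma_1(\mB) = \|\mB\|$.

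For the second inequality, the hypothesis gives a unit vector $v$ which simultaneously achieves $\|\mA v\| = \sigma_1(\mA)$ and $\|\mB v\| = \sigma_1(\mB)$. Using this same $v$ in the min--max characterization for $\mA + \mB$, the two maxima on the right-hand side of the display above become $\sigma_2(\mA)$ and $\sigma_2(\mB)$ respectively, yielding $\sigma_2(\mA+\mB) \leq \sigma_2(\mA) + \sigma_2(\mB)$.

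There isn't really a hard step here; the argument is purely a textbook application of the min--max theorem. The only thing to be careful about is verifying the identity $\max_{x \perp v, \|x\|=1}\|\mM x\| = \sigma_2(\mM)$ when $v$ is a top right singular vector of $\mM$, which follows directly from the SVD: writing $\mM = \sum_i \sigma_i(\mM) u_i v_i^*$ with $v = v_1$, any unit $x \perp v_1$ expands as $x = \sum_{i \geq 2} c_i v_i$ with $\sum |c_i|^2 = 1$, so $\|\mM x\|^2 = \sum_{i \geq 2} |c_i|^2 \sigma_i(\mM)^2 \leq \sigma_2(\mM)^2$, with equality at $x = v_2$.
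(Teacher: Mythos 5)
Your proof is correct and follows essentially the same route as the paper: both apply the min--max (Courant--Fischer) characterization of $\sigma_2$, bound $\|(\mA+\mB)x\|$ by $\|\mA x\|+\|\mB x\|$, and then choose the test vector $v$ to be a top right singular vector of $\mA$ (respectively the common top right singular vector for the second claim). The only difference is that you explicitly justify the identity $\max_{x\perp v,\|x\|=1}\|\mM x\|=\sigma_2(\mM)$ via the SVD, a step the paper leaves implicit.
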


\begin{proof}
    For the first part of the statement, by the variational characterization of singular values, we have
    \[
        \sigma_2(\mA + \mB) = \min_{v \in \C^n} \max_{x \perp v} \frac{\|(\mA+\mB) x\|}{\|x\|} \leq \min_{v \in \C^n} \max_{x \perp v} \frac{\|\mA x\|+\|\mB x\|}{\|x\|} \leq \min_{v \in \C^n} \max_{x \perp v} \frac{\|\mA x\|}{\|x\|}  +\|\mB\|.
    \]
    For the second part of the statement, let $v$ denote a common right singular vector of $\mA,\mB$ that respectively achieves the maximum singular value in each of them. Then we have
    \begin{align*}
        \sigma_2(\mA + \mB) &= \min_{v' \in \C^n} \max_{x \perp v'} \frac{\|(\mA+\mB) x\|}{\|x\|} 
        \leq   \max_{x \perp v} \frac{\|(\mA+\mB) x\|}{\|x\|} \\
        &\leq \max_{x \perp v} \frac{\|\mA x\|+\|\mB x\|}{\|x\|} \leq \max_{x \perp v} \frac{\|\mA x\|}{\|x\|} + \max_{x \perp v} \frac{\|\mB x\|}{\|x\|} 
        = \sigma_2(\mA) + \sigma_2(\mB).
    \end{align*}
    
    The third part of the statement has essentially the same proof as the second.
\end{proof}

We now use these properties to show that small perturbations preserve the smallest singular value:
\begin{lemma}\label{lem:singularPerturb}
    Let $\sigma_2(\cdot)$ denote the second largest singular value. Let $G$ be an Eulerian graph such that $\sigma_2(\mD_G^{-1/2}\mA_G\mD^{-1/2})\leq 1-1/\gamma$. For every $H$ such that $\mD_H=\mD_G$ and $\|\mA_H-\mA_G\| \leq \delta$, we have $\sigma_2(\mD_H^{-1/2}\mA_H\mD_H^{-1/2}) \leq 1-1/\gamma + \delta / d_\text{min}$ where $d_\text{min}$ is the minimum diagonal entry of $\mD_G$.
\end{lemma}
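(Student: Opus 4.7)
The plan is to reduce the statement to a direct application of the subadditivity of the second singular value (Lemma~\ref{lem:sigmasum}) together with the spectral-norm bound on a diagonally-scaled perturbation. The key observation is that, since $\mD_H=\mD_G$, we can write
\[
\mD_H^{-1/2}\mA_H\mD_H^{-1/2} \;=\; \mD_G^{-1/2}\mA_G\mD_G^{-1/2} \;+\; \mD_G^{-1/2}(\mA_H-\mA_G)\mD_G^{-1/2},
\]
so the two matrices differ by a symmetric diagonal rescaling of the small perturbation $\mA_H-\mA_G$.

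I would first apply the first part of Lemma~\ref{lem:sigmasum} to this decomposition, yielding
\[
\sigma_2\!\left(\mD_H^{-1/2}\mA_H\mD_H^{-1/2}\right)
\;\leq\;
\sigma_2\!\left(\mD_G^{-1/2}\mA_G\mD_G^{-1/2}\right)
\;+\;
\sigma_1\!\left(\mD_G^{-1/2}(\mA_H-\mA_G)\mD_G^{-1/2}\right).
\]
By hypothesis the first term on the right is at most $1-1/\gamma$, so it remains to bound the second term by $\delta/d_{\min}$.

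For that bound, I would use submultiplicativity of the spectral norm together with $\|\mD_G^{-1/2}\| = 1/\sqrt{d_{\min}}$ (since $\mD_G$ is a positive diagonal matrix with smallest entry $d_{\min}$), giving
\[
\sigma_1\!\left(\mD_G^{-1/2}(\mA_H-\mA_G)\mD_G^{-1/2}\right)
\;\leq\;
\|\mD_G^{-1/2}\|^2\cdot\|\mA_H-\mA_G\|
\;\leq\;
\frac{\delta}{d_{\min}}.
\]
Combining the two inequalities gives the claim. There is no genuine obstacle here; the only subtlety is to be careful that $\sigma_1$ of a matrix equals its spectral norm (so that Lemma~\ref{lem:sigmasum} connects cleanly to the operator-norm bound on the perturbation), and that one should not naively try to bound $\sigma_2$ of the perturbation term directly, since the common-right-singular-vector hypothesis of the second part of Lemma~\ref{lem:sigmasum} is not available.
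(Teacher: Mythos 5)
Your proof is correct and is essentially identical to the paper's: both invoke the first part of Lemma~\ref{lem:sigmasum} on the same decomposition and then bound $\|\mD_G^{-1/2}(\mA_H-\mA_G)\mD_G^{-1/2}\|$ via submultiplicativity and $\|\mD_G^{-1/2}\|^2 = 1/d_{\min}$.
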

\begin{proof}
    By \cref{lem:sigmasum},
    \begin{align*}
\sigma_2(\mD_H^{-1/2}\mA_H\mD_H^{-1/2}) &\leq \sigma_2(\mD_G^{-1/2}\mA_G\mD^{-1/2}) + \|\mD_G^{-1/2}(\mA_H-\mA_G)\mD^{-1/2}\| \\
&\leq 1-1/\gamma + \|\mD_G^{-1/2}\|^2 \cdot \|\mA_H-\mA_G\| \\
&= 1-1/\gamma + \delta/d_\text{min}. \qedhere
    \end{align*}
\end{proof}
\begin{lemma}\label{lem:SVSingular}
    Let $\mA, \tA \in \R^{n \times n}$ be the adjacency matrices of Eulerian graphs with no isolated vertices. Suppose $\tA \svgraph{\eps} \mA$. Let $\mD$ be their diagonal matrix of degrees. Then 
    \[
        1-\sigma_2(\mD^{-1/2} \tA \mD^{-1/2}) \leq (1+2 \eps) \cdot (1-\sigma_2(\mD^{-1/2} \mA \mD^{-1/2})).
    \]
\end{lemma}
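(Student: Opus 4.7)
The plan is to reduce to the normalized setting, pick an aligned pair of singular vectors, and then read the bound off the geometric-mean formulation of SV approximation. By \Cref{lem:svStationary}, the hypothesis $\tA\svgraph{\eps}\mA$ forces $\tA$ and $\mA$ to have the same degree matrix $\mD$, so by \Cref{lem:normalizedSV} it is equivalent to work with $\tN\svn{\eps}\mN$, where $\mN\defeq \mD^{-1/2}\mA\mD^{-1/2}$ and $\tN\defeq\mD^{-1/2}\tA\mD^{-1/2}$. Because both graphs are Eulerian, the unit vector $v_1\defeq \mD^{1/2}\vones/\|\mD^{1/2}\vones\|$ is simultaneously a top left and top right singular vector of both $\mN$ and $\tN$ with singular value $1$, and $\|\mN\|,\|\tN\|\le 1$ by \Cref{lem:svdefined}.

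I would then pick a pair of unit singular vectors of $\mN$ that achieves $\sigma_2\defeq\sigma_2(\mN)$. Using the variational characterization on $v_1^{\perp}$, choose a unit right singular vector $v\perp v_1$ with $\|\mN v\|=\sigma_2$ and set $u\defeq\mN v/\sigma_2$ (the case $\sigma_2=0$ is trivial since $1-\sigma_2(\tN)\le 1\le 1+2\eps$). Then $u$ is a unit left singular vector, $\mN^*u=\sigma_2 v$, and $u\perp v_1$ since $v_1^*u=(\mN^*v_1)^*v/\sigma_2=v_1^*v/\sigma_2=0$. The crucial feature of this pair is that the two SV error quadratic forms collapse to their minimum on $v_1^{\perp}$:
\[
u^*(\mI-\mN\mN^*)u \;=\; 1-\sigma_2^2 \;=\; v^*(\mI-\mN^*\mN)v.
\]

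The estimate now falls out in one step. Because $v\perp v_1$ and $\|v\|=1$, applying the Courant--Fischer characterization to the $2$-dimensional subspace $\Span(v_1,v)$ (and using $\tN^*v_1=v_1$, so $v_1^*\tN v=0$) gives $\sigma_2(\tN)\ge\|\tN v\|\ge |u^*\tN v|=|\sigma_2+u^*(\tN-\mN)v|$; the geometric-mean form of $\tN\svn{\eps}\mN$ applied to $(u,v)$ gives
\[
|u^*(\tN-\mN)v|\;\le\;\tfrac{\eps}{2}(1-\sigma_2^2)\;\le\;\eps(1-\sigma_2),
\]
where the last step uses $1+\sigma_2\le 2$. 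If $\sigma_2\ge\eps(1-\sigma_2)$ I conclude $\sigma_2(\tN)\ge\sigma_2-\eps(1-\sigma_2)$, i.e.\ $1-\sigma_2(\tN)\le(1+\eps)(1-\sigma_2)\le(1+2\eps)(1-\sigma_2)$; otherwise $\sigma_2<\eps/(1+\eps)$ and a direct check gives $(1+2\eps)(1-\sigma_2)>(1+2\eps)/(1+\eps)\ge 1\ge 1-\sigma_2(\tN)$, so the inequality is automatic.

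The one delicate point is aligning $u$ with $v$ as genuine singular-vector companions so that both error quadratic forms attain the exact value $1-\sigma_2^2$; with an arbitrary $u\in v_1^{\perp}$ one only gets $u^*(\mI-\mN\mN^*)u\le 1$ and the SV bound degrades to $\tfrac{\eps}{2}\sqrt{1-\sigma_2^2}$, which is insufficient near $\sigma_2=1$. Likewise, passing through $|u^*\tN v|$ rather than expanding $\|\tN v\|^2=\sigma_2^2+2\sigma_2\,\mathrm{Re}(u^*(\tN-\mN)v)+\|(\tN-\mN)v\|^2$ avoids a square-root step that would only yield $\sigma_2(\tN)^2\ge\sigma_2^2-\eps(1-\sigma_2^2)$ and fall short of $(1+2\eps)(1-\sigma_2)$ in the regime of small $\sigma_2$.
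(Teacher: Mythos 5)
Your proof is correct, and it actually repairs a gap in the paper's own argument. Both proofs reduce to the normalized matrices $\mN=\mD^{-1/2}\mA\mD^{-1/2}$, $\tN=\mD^{-1/2}\tA\mD^{-1/2}$ and use that $v_1\defeq \mD^{1/2}\vones/\|\mD^{1/2}\vones\|$ is a shared top singular vector, but from there the routes differ. The paper tries to control the entire perturbation: it asserts $\|(\tN-\mN)x\|\leq\eps(1-\|\mN x\|^2)$ for unit $x$ and then concludes $\|\tN-\mN\|\leq\eps(1-\sigma_2(\mN)^2)$ by restricting to $x\perp v_1$. Neither step is justified by the SV hypothesis: optimizing the free test vector only yields $\|(\tN-\mN)x\|\leq \tfrac{\eps}{2}\sqrt{1-\|\mN x\|^2}$, and the maximizer of $\|(\tN-\mN)x\|$ over $x\perp v_1$ may well sit where $\|\mN x\|$ is as small as $\sigma_n(\mN)$ rather than $\geq\sigma_2(\mN)$. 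In particular the SV condition does not by itself forbid $\|\tN-\mN\|$ of order $\eps$ concentrated on low-singular-value directions, which would exceed $\eps(1-\sigma_2(\mN)^2)$ once $\sigma_2(\mN)$ is near $1$, so the paper's intermediate operator-norm bound is too strong to deduce this way. Your proof sidesteps the operator-norm bound entirely: you apply the geometric-mean SV inequality to the \emph{specific} aligned singular pair $(u,v)$ of $\mN$ at $\sigma_2$, where both error quadratic forms evaluate exactly to $1-\sigma_2^2$, and then lower-bound $\sigma_2(\tN)\geq\|\tN v\|\geq|u^*\tN v|$ via Courant--Fischer on $\Span(v_1,v)$, with a clean treatment of the degenerate small-$\sigma_2$ case. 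You correctly flag the alignment of $u$ with $v$ as the crux; that is precisely what the paper's sketch glosses over, and it is what makes the bound tight near $\sigma_2=1$. As a bonus, your argument gives the slightly sharper constant $(1+\eps)$ in place of $(1+2\eps)$.
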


\begin{proof}
Define $\mN = \mD^{-1/2} \mA \mD^{-1/2}$ and $\tN = \mD^{-1/2} \tA \mD^{-1/2}$.

It suffices to show 
\[
\pm(\sigma_2(\tN)-\sigma_2(\mN)) \leq \sigma_2(\tN-\mN) \leq 2 \eps \cdot (1-\sigma_2(\mN)).
\]
The first inequality follows from \cref{lem:sigmasum}. For the second inequality, note that the definition of SV approximation can be rewritten as for all $x \in \C^n$,
\[
\|(\tN-\mN)x\| \leq \eps \cdot (1-\|\mN x\|^2).
\]
Note that $\tN,\mN$ have a common left and right eigenvector given by $v=\mD^{1/2} \vones$, and this achieves their respective maximum singular values of $1$.\footnote{We know it corresponds to a singular value of $1$ and this its not possible to have a singular value larger than $1$ for this matrix (\cite{CKPPRSV17} lemma B.4).} Hence, if we wish to maximize the LHS over all unit vectors $x$, it suffices to maximize over all unit vectors $x \perp v$. Doing so, we obtain
\[
\|\tN-\mN\| \leq \eps \cdot (1-\sigma_2(\mN)^2) \leq 2 \eps \cdot (1-\sigma_2(\mN))
\]
where the last inequality is because $\|\mN\| \leq 1$ (\cite{CKPPRSV17} lemma B.4).
\end{proof}

\begin{lemma}\label{lem:eulerian_lazy}
    Let $\sigma_i(\cdot)$ denote the $i$th largest singular value. Let $\mA \in \R^{n \times n}$ be the adjacency matrix of a strongly connected Eulerian graph $G$ with diagonal degree matrix $\mD$ and no isolated vertices. Let $\mN = \mD^{-1/2} \mA \mD^{-1/2}$. Let $\lambda$ denote the second largest eigenvalue of $\mS_{\mN}$.\footnote{Here we mean largest according to the actual eigenvalues, not their magnitudes.}
    Then we have
    \[
        \sigma_2( (1-\gamma) \mN + \gamma \mI) \leq 1 - (1-\lambda) \gamma + O(\gamma^2).
    \]
    In particular, if $\mA$ has weak mixing time $T$ and all weights are in $[1,U]$, then 
    \[
        \sigma_2( (1-\gamma) \mN + \gamma \mI) \leq 1 - 1/\poly(nTU/\gamma).
    \]
\end{lemma}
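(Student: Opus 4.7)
The approach is to explicitly compute $\mM^*\mM$ where $\mM \defeq (1-\gamma)\mN + \gamma\mI$ and then bound its second largest eigenvalue by restricting to the orthogonal complement of the top singular vector.

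First, I would observe that since $G$ is Eulerian, $\mA\vones = \mA^\rTrans\vones = \mD\vones$, so $v \defeq \mD^{1/2}\vones$ satisfies $\mN v = \mN^* v = v$, and hence $\mS_\mN v = v$ and $\mM v = \mM^* v = v$. In particular, $v$ is a shared eigenvector of $\mM^*\mM$ and $\mS_\mN$ with eigenvalue $1$, and $\sigma_1(\mN) = 1$. Since $\mM^*\mM$ is Hermitian and $v$ is an eigenvector, the orthogonal complement $v^\perp$ is an invariant subspace, so
\[
\sigma_2(\mM)^2 \;=\; \max_{x \perp v,\,\|x\|=1} x^* \mM^*\mM x.
\]

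Next, expanding gives
\begin{equation*}
\mM^*\mM \;=\; (1-\gamma)^2 \mN^*\mN \;+\; 2\gamma(1-\gamma)\mS_\mN \;+\; \gamma^2 \mI.
\end{equation*}
For unit $x \perp v$, I would use $\|\mN x\|^2 \leq 1$ (which holds since $\|\mN\| \leq 1$, as noted in the proof of \Cref{lem:SVSingular}) together with $x^*\mS_\mN x \leq \lambda$ (by definition of $\lambda$ as the second largest eigenvalue of $\mS_\mN$, restricted to $v^\perp$) to conclude
\[
x^*\mM^*\mM x \;\leq\; (1-\gamma)^2 + 2\gamma(1-\gamma)\lambda + \gamma^2 \;=\; 1 - 2\gamma(1-\lambda)(1-\gamma).
\]
Taking square roots and using $\sqrt{1-t} \leq 1 - t/2$ for $t \in [0,1]$ yields
\[
\sigma_2(\mM) \;\leq\; 1 - \gamma(1-\lambda)(1-\gamma) \;\leq\; 1 - (1-\lambda)\gamma + O(\gamma^2),
\]
which is the main inequality. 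The only nontrivial step is verifying that the two pointwise bounds ($\|\mN x\|^2 \leq 1$ and $x^*\mS_\mN x \leq \lambda$) can be invoked simultaneously on $v^\perp$, and this follows simply because they share the top eigenvector $v$.

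For the ``in particular'' part, I would deduce $1 - \lambda \geq 1/\poly(nTU)$ from the definition of weak mixing time: the weak mixing time of $G$ is the mixing time of the $1/2$-lazy walk $(\mI + \mA\mD^{-1})/2$, and by a standard argument (e.g., via conductance or via the relationship between mixing time and the spectral gap of the multiplicative-symmetrization applied to the Eulerian rescaling of $G$, with edge weights in $[s/U,1]$ where $s = 1/\poly(nU)$ is the minimum stationary probability), this gives a $1/\poly(nTU)$ lower bound on the spectral gap $1-\lambda$. Plugging this into the first inequality together with $\gamma$ appearing explicitly in the error term $O(\gamma^2)$ yields the stated $1 - 1/\poly(nTU/\gamma)$ bound. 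The main obstacle I anticipate is simply bookkeeping the polynomial dependence on $n, T, U, \gamma$ in this second part; the first and main inequality is a direct computation.
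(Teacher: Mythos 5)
Your proof of the main inequality matches the paper's argument exactly: both expand $\mM^*\mM = (1-\gamma)^2\mN^*\mN + 2\gamma(1-\gamma)\mS_\mN + \gamma^2\mI$, note that $v=\mD^{1/2}\vones$ is a shared top eigenvector so that the second-singular-value variational problem restricts to $v^\perp$, apply $\|\mN x\|\leq 1$ and $x^*\mS_\mN x\leq\lambda$ there, and conclude by elementary algebra (the paper leaves the square root implicit; your explicit $\sqrt{1-t}\leq 1-t/2$ step fills that in). For the ``in particular'' clause you correctly identify the needed ingredient, $1-\lambda\geq 1/\poly(nTU)$, but only gesture at its derivation, whereas the paper spells out the chain through $\|\mS_\mL^+\|$ and $\|\mL^+\|$ via the cited results of \cite{CKPPSV16}; this is a difference of detail rather than of approach.
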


\begin{proof}
    Let $v$ denote a singular vector of $(1-\gamma) \mN + \gamma \mI$ corresponding to its second-largest singular value. We then have
    \begin{align*}
    [\sigma_2( (1-\gamma) \mN + \gamma \mI)]^2
    &= (1-\gamma)^2 v^* \mN^* \mN v + 2 \gamma (1-\gamma) v^* \mS_{\mN} v + \gamma^2 \\
    &\leq (1-\gamma)^2 + 2 \gamma (1-\gamma) \lambda + \gamma^2 \\
    &\leq 1 - 2(1-\lambda) \gamma + O(\gamma^2).
    \end{align*}
    In the above, we used the fact that $\|\mN\| \leq 1$ (\cite{CKPPRSV17} lemma B.4).

        To show the second part of the statement, define $\mL=\mI-\mN$. Note that $\mL$ is related to $\mI-\mA\mD^{-1}$ by a change of basis with condition number $\poly(nU)$. We may assume the graph is strongly connected as this is necessary for the weak mixing time to be finite. It suffices to prove
        \[
            \frac{1}{1-\lambda} \leq \poly(n) \cdot T.
        \]
        We do so by showing
        \[
            \frac{1}{1-\lambda} \leq \poly(n) \cdot T \leq \poly(n) \cdot \|\mS_{\mL}^+\|^2 \leq \poly(n) \cdot \|\mL^+\|^2 \leq \poly(nT).
        \]

        The first inequality is just the folklore result relating spectral gap to weak mixing time with an $O(\log n)$-factor loss. The second and fourth inequalities are from (\cite{CKPPSV16} Theorem 21).
        
        We now show the third inequality, that $\|\mS_{\mL}^+\| \leq \poly(n) \cdot \|\mL^+\|$. We have
        \[
            \|\mS_{\mL}^+\| \leq \poly(n) \cdot \|(\mL^T \mS_{\mL}^+ \mL)^+\| = \poly(n) \cdot \|\mS_{\mL^+}\| \leq \poly(n) \cdot (\|\mL^+\| + \|\mL^{T+}\|)/2 = \poly(n) \cdot \|\mL^+\|
        \]
        where the first inequality in the line immediately above is a corollary of (\cite{CKPPSV16} Lemma 13).
\end{proof}

\begin{lemma}\label{lem:norm_sqrt}
For matrices $\ma,\mb \in \C^{n\times n}$,
$$
\| \ma \mb \| = \| (\ma^* \ma)^{1/2} \mb \|, \|  \mb \ma \| = \| \mb (\ma \ma^*)^{1/2} \|.
$$
\end{lemma}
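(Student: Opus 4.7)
The plan is to reduce everything to the elementary identity $\|\ma v\|^2 = v^* \ma^* \ma v$, which is the defining property that makes $(\ma^*\ma)^{1/2}$ the PSD square root of $\ma^*\ma$.

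First I would observe that for any vector $v \in \C^n$,
\[
\|\ma v\|^2 = v^* \ma^* \ma v = v^* (\ma^*\ma)^{1/2} (\ma^*\ma)^{1/2} v = \|(\ma^*\ma)^{1/2} v\|^2,
\]
using that $(\ma^*\ma)^{1/2}$ is Hermitian and squares to $\ma^*\ma$. Thus $\|\ma v\| = \|(\ma^*\ma)^{1/2} v\|$ pointwise. Specializing to $v = \mb x$ for arbitrary $x \in \C^n$ and taking the supremum over unit vectors $x$ in the variational characterization of the spectral norm gives the first identity
\[
\|\ma \mb\| = \sup_{\|x\|=1} \|\ma \mb x\| = \sup_{\|x\|=1} \|(\ma^*\ma)^{1/2} \mb x\| = \|(\ma^*\ma)^{1/2} \mb\|.
\]

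For the second identity, I would pass through conjugate transposes using $\|\mM\| = \|\mM^*\|$. Applying the first identity to the matrices $\ma^*$ and $\mb^*$ (in place of $\ma$ and $\mb$), noting $((\ma^*)^*\ma^*)^{1/2} = (\ma \ma^*)^{1/2}$, yields
\[
\|\mb \ma\| = \|(\mb\ma)^*\| = \|\ma^* \mb^*\| = \|(\ma\ma^*)^{1/2} \mb^*\|.
\]
Since $(\ma\ma^*)^{1/2}$ is Hermitian, $(\ma\ma^*)^{1/2} \mb^* = (\mb (\ma\ma^*)^{1/2})^*$, and taking norms of conjugate transposes once more gives $\|\mb \ma\| = \|\mb (\ma\ma^*)^{1/2}\|$.

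There is essentially no obstacle here beyond checking that $(\ma^*\ma)^{1/2}$ is well defined and Hermitian; this follows because $\ma^*\ma$ is PSD and thus has a unique PSD square root. The lemma is really just a bookkeeping identity saying that $(\ma^*\ma)^{1/2}$ and $\ma$ act isometrically on the right in the sense of spectral norm, which is a direct consequence of the polar decomposition / singular value decomposition of $\ma$.
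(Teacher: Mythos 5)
Your proof is correct and follows essentially the same route as the paper's: both reduce to the variational characterization of the spectral norm together with the identity $\ma^*\ma = (\ma^*\ma)^{1/2}(\ma^*\ma)^{1/2}$, and both handle the second equality by the symmetric argument (you spell it out via conjugate transposes, the paper just says ``proved similarly''). Your version is slightly cleaner in that you isolate the pointwise isometry $\|\ma v\| = \|(\ma^*\ma)^{1/2} v\|$ before taking a supremum, but this is a presentational difference only.
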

\begin{proof}
For any matrix $\mm$, 
$$\|\mm\| = \sqrt{\lambda_{\max}(\mm^* \mm)} = \max_{x \in \C^m-\{\vzero\}} \frac{\vx^* \mm^* \mm \vx}{\vx^* \vx}.$$
Thus,
\begin{align*}
\| \mb \ma \| 
= \max_{x \in \C^m-\{\vzero\}} \frac{\vx^*  \mb^* \ma^* \ma \mb \vx}{\vx^* \vx} 
= \max_{x \in \C^m-\{\vzero\}} \frac{\vx^* \mb^* (\ma^* \ma)^{1/2} (\ma^* \ma)^{1/2}  \mb \vx}{\vx^* \vx} 
 = \| (\ma^* \ma)^{1/2} \mb \|.
\end{align*}
The other equality is proved similarly.
\end{proof}

\section{Stationary Distribution Facts}

We first define the stationary norm of a graph and prove some useful properties.
\begin{definition}[Stationary Distribution] 
    Let $G$ be a strongly connected directed graph, and let $\pi = \pi(G)$ be its unique stationary distribution. We have that $\pi > 0$ entrywise, $\sum_{v\in V}\pi_v = 1$, and letting $\mW:=\mA\mD^{-1}$ be the random walk matrix of $G$, we have $\mW \pi=\pi$. Let $\pimin = \min_{v\in V}\pi_v>0$ be the minimum value in the stationary distribution.For a subset $S\subseteq V$, let $\vol(S)=\sum_{v\in S}\pi_v$ be the volume of the set under the stationary distribution.
\end{definition}
Observe that for every probability distribution $p$, we have $\|\mW p\|_\pi \leq \|p\|_\pi$, as~\cite{}
\[
\norm{\mW p}_\pi^2
= \sum_{u \in V} \pi_u \left(\sum_{v \in V} [\mW]_{uv} p_v \right)^2
\leq 
\sum_{u \in V} \pi_u \left(\sum_{v \in V} [\mW]_{uv} \pi_v^{-1} \right)
\left(\sum_{v \in V} \pi_v p_v^2 \right)
= \norm{p}_\pi^2.
\]
\begin{claim}\label{clm:xcapT} Let $G=(V,E)$ be a strongly connected graph. For every $x\in \R_{\geq 0}^V$ and $T\subseteq V$, we have $\langle 1_T,x\rangle\leq \|x\|_{\pi}\cdot \sqrt{\vol(T)}$.
\end{claim}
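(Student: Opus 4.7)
The claim is a one-line application of Cauchy--Schwarz once one sets up the right splitting, and the main thing to do is identify that splitting. My plan is to write
\[
\langle 1_T, x\rangle = \sum_{v \in T} x_v = \sum_{v \in T} \frac{x_v}{\sqrt{\pi_v}} \cdot \sqrt{\pi_v},
\]
so that the first factor is tailored to produce the $\pi$-weighted norm and the second factor is tailored to produce $\vol(T)$. Note that strong connectedness guarantees $\pi_v > 0$ for all $v$, so dividing by $\sqrt{\pi_v}$ is legitimate.

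Applying the standard Cauchy--Schwarz inequality to the two vectors $(x_v/\sqrt{\pi_v})_{v \in T}$ and $(\sqrt{\pi_v})_{v \in T}$ yields
\[
\left(\sum_{v\in T} x_v\right)^2 \leq \left(\sum_{v\in T}\frac{x_v^2}{\pi_v}\right)\left(\sum_{v\in T}\pi_v\right).
\]
The second factor is exactly $\vol(T)$ by definition. The first factor is bounded above by $\sum_{v\in V} x_v^2/\pi_v = \|x\|_\pi^2$ because every summand $x_v^2/\pi_v$ is non-negative; extending the sum from $T$ to all of $V$ can only increase it. (The hypothesis $x \in \R_{\geq 0}^V$ is not actually required for this step, since $x_v^2 \geq 0$ regardless of sign.) Combining these two bounds and taking square roots gives
\[
\langle 1_T, x\rangle \leq \|x\|_\pi \cdot \sqrt{\vol(T)},
\]
as desired.

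There is no real obstacle here; the ``trick,'' if any, is to recognize that the shape of the right-hand side of the inequality (a $\pi$-weighted $\ell_2$ quantity for $x$ multiplied by $\sqrt{\vol(T)}$) essentially forces the splitting $x_v = (x_v/\sqrt{\pi_v}) \cdot \sqrt{\pi_v}$ in Cauchy--Schwarz. The rest is bookkeeping.
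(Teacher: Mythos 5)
Your proof is correct and uses essentially the same argument as the paper: both split $x_v = (x_v/\sqrt{\pi_v})\cdot\sqrt{\pi_v}$ and apply Cauchy--Schwarz to produce $\|x\|_\pi$ and $\sqrt{\vol(T)}$. The only cosmetic difference is that the paper keeps the sum over all of $V$ with the indicator $1_{v\in T}$, whereas you restrict the sum to $T$ and then enlarge the first factor to all of $V$; these are the same computation.
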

\begin{proof}
    By the Cauchy-Schwarz inequality
    \[
    \langle 1_T,x\rangle = \sum_v 1_{v\in T} x_v = \sum_v(x_v/\sqrt{\pi_v})(1_{v\in T}\sqrt{\pi_v}) \leq \|x\|_{\pi}\cdot \sqrt{\vol(T)}. \qedhere
    \]
\end{proof}
We now state the main proposition:
\begin{proposition}\label{prop:statNonNeg}
    Let $G$ be a strongly connected graph on $n$ vertices $V$ and edge weights in $[1,U]$. For every $\ell\in \N$ and $S,T \subseteq V$ with $\vol(S)+\vol(T)\geq 1$ we have either $\Cut_{G^\ell}(S,T)=0$ or $\Cut_{G^\ell}(S,T)\geq (\pimin/2U)^3$.
\end{proposition}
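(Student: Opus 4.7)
The plan is to first derive the algebraic identity
\[
    \Cut_{G^\ell}(S,T) = \vol(S) + \vol(T) - 1 + \Cut_{G^\ell}(S^c, T^c),
\]
which follows by combining the marginal conservation laws $\Cut_{G^\ell}(S,T) + \Cut_{G^\ell}(S,T^c) = \vol(S)$ and $\Cut_{G^\ell}(S,T) + \Cut_{G^\ell}(S^c,T) = \vol(T)$ (which in turn come from $\mW^\ell \pi = \pi$ and that columns of $\mW$ sum to $1$) with the fact that the four cross-cuts of $(S,S^c)\times(T,T^c)$ partition total stationary probability $1$. Under the hypothesis $\vol(S) + \vol(T) \geq 1$, this identity immediately yields the baseline bound $\Cut_{G^\ell}(S,T) \geq \vol(S) + \vol(T) - 1 \geq 0$.

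With this identity in hand, I would split into two cases. If $\vol(S) + \vol(T) - 1 \geq (\pimin/2U)^3$, the baseline bound already gives what we want. Otherwise, since $(\pimin/2U)^3 < \pimin$ and $\vol(S\cap T) \geq \vol(S) + \vol(T) - 1$, combined with the observation that every nonempty subset has $\vol \geq \pimin$, a small case analysis on whether $S\cup T = V$ reduces to either (i) the strictly complementary case $T = S^c$ (equivalently, $\vol(S) + \vol(T) = 1$ exactly, $S\cap T = \emptyset$), or (ii) the case where both $S\cap T$ and $S^c \cap T^c$ are nonempty with balanced volumes.

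In case (i), $\Cut_{G^\ell}(S,S^c) > 0$ forces some length-$\ell$ walk in $G$ to cross a cut edge $(u^*,v^*)$ with $u^* \in S, v^* \in S^c$. My approach would be to exhibit a structured family of length-$\ell$ walks passing through this edge, anchored by three choices (the starting vertex, the tail $u^*$ of the cut edge, and the terminal vertex), with each anchor contributing a factor of $\pimin/(2U)$ to the stationary mass of walks via a combination of stationary conservation and the fact that all edge weights lie in $[1,U]$. In case (ii), I would pick $v \in S \cap T$ and lower-bound $\Cut_{G^\ell}(S,T) \geq \pi_v (\mW^\ell)_{v,v}$, handling potential periodicity in $(\mW^\ell)_{v,v}$ by a similar walk-construction argument using a vertex in $S^c \cap T^c$.

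The main obstacle will be obtaining precisely the bound $(\pimin/2U)^3$ — with exponent $3$ and $U$ (rather than $nU$) in the denominator. A naive edge-flow bound only gives $\pimin/(nU)$ per edge, since the out-degree can be as large as $nU$. Resolving this likely requires exploiting additional structure: either by a careful Eulerian rescaling so that minimum edge flow depends only on $U$ and not on the number of out-neighbors, or by bundling contributions across multiple walks sharing the cut edge. The factor of $2$ in the denominator plausibly arises from the $\tfrac12$-lazification $\tM = (1-\gamma)\tN + \gamma\mI$ introduced immediately before this proposition is invoked in \Cref{thm:dirRW}, which would allow replacing arbitrary-length walks by walks whose transition-probability products telescope cleanly into three bounded factors.
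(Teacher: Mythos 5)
Your algebraic identity
\[
\Cut_{G^\ell}(S,T) = \vol(S) + \vol(T) - 1 + \Cut_{G^\ell}(S^c,T^c)
\]
is correct (it follows from the two marginal identities and the fact that the four quadrants partition the edge measure), and the baseline bound $\Cut_{G^\ell}(S,T)\ge \vol(S)+\vol(T)-1\ge 0$ is a legitimate observation. But the proposal does not actually close the argument, and you acknowledge as much. The two residual cases are exactly where the work lies, and in both of them the plan has real gaps. In case (i) ($T=S^c$), you wave at ``a structured family of length-$\ell$ walks anchored by three choices,'' but give no mechanism for why each anchor contributes a multiplicative factor $\pimin/(2U)$ rather than, as you note, something like $\pimin/(nU)$ per step, and no mechanism for why the number of such factors is $3$ independently of $\ell$. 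In case (ii), the bound $\Cut_{G^\ell}(S,T)\ge \pi_v(\mW^\ell)_{v,v}$ for $v\in S\cap T$ can be identically zero whenever the walk is periodic (e.g.\ a directed cycle), and ``a similar walk-construction argument using a vertex in $S^c\cap T^c$'' does not resolve this: you would need a closed walk of length exactly $\ell$, and strong connectivity gives you closed walks of various lengths but not arbitrary prescribed ones. Your guess about where the factor of $2$ comes from is also wrong: it has nothing to do with the lazification introduced in the proof of \Cref{thm:dirRW}. In the paper it is produced by the elementary inequality $\frac{a^2}{a+b}\le a - \frac{ab}{2}$ (for $b\le a$) inside the proof of \Cref{lem:Wtopi}.

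The paper's actual proof is structurally quite different and, in particular, handles periodicity cleanly. It tracks the vector $p^i = \mW^i(\pi|_S)$ together with its stationary-weighted norm $\|p^i\|_\pi$. \Cref{lem:Wtopi} shows a dichotomy at every step: either $p^{i}$ remains equal to $\pi$ restricted to some set $B$ with $\vol(B)=\vol(S)$ (in which case the mass-transport picture is lossless and the eventual cut is either exactly $0$ or at least $\pimin$), or the squared $\pi$-norm drops by the explicit amount $\pimin^2/(2U^2)$. If the drop ever occurs before step $\ell$, Cauchy--Schwarz in the form of \Cref{clm:xcapT} turns the norm deficit into a quantitative lower bound on $\Cut_{G^\ell}(S,T)$. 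The monotone potential argument makes the exponent and constants fall out without any per-walk counting, and it avoids diagonal entries of $\mW^\ell$ entirely, so periodicity never arises as an obstruction. If you want to pursue a self-contained proof, I would suggest abandoning the walk-enumeration route and instead trying to prove a one-step decay lemma for some potential such as $\|\mW^i(\pi|_S)\|_\pi$; the algebraic identity you derived can then be used as a consistency check, but it is not the engine of the bound.
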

For ease of application, we state a corollary of this for $(S,S^c)$ cuts and uncuts:
\begin{corollary}\label{cor:cutNonNeg}
    Let $G$ be a strongly connected graph on $n$ vertices $V$ and edge weights in $[1,U]$. For every $\ell\in \N$ and $S\subseteq V$, we have
    \[\Cut_{G^\ell}(S) \in \{0\}\cup [(\pimin/2U)^3,1] \text{ and } \Uncut_{G^\ell}(S) \in \{0\}\cup [(\pimin/2U)^3,1].\]
\end{corollary}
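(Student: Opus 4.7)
The plan is to reduce both claims to Proposition \ref{prop:statNonNeg} applied to appropriate pairs of vertex subsets, using the fact that both marginals of the stationary edge distribution $\muedge$ on $G^\ell$ equal $\pi$ (since $\mW^\ell \pi = \pi$ and $\pimin(G^\ell) \geq \pimin(G)$).

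First I would handle $\Cut_{G^\ell}(S)$. I will observe that $\Cut_{G^\ell}(S,S^c) = \Cut_{G^\ell}(S^c,S)$. Indeed, both marginals of $\muedge$ being $\pi$ gives
\[
\Cut_{G^\ell}(S,S^c)+\Cut_{G^\ell}(S,S)=\vol(S)=\Cut_{G^\ell}(S^c,S)+\Cut_{G^\ell}(S,S),
\]
so subtracting the common term yields the equality. Therefore $\Cut_{G^\ell}(S) = \Cut_{G^\ell}(S,S^c)$, and since $\vol(S)+\vol(S^c)=1\geq 1$, Proposition \ref{prop:statNonNeg} applied with $T=S^c$ directly yields the stated dichotomy.

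For $\Uncut_{G^\ell}(S)$, the key observation is $\Uncut_{G^\ell}(S)=\Uncut_{G^\ell}(S^c)$, so I may assume without loss of generality that $\vol(S)\geq 1/2$. Then $\vol(S)+\vol(S)\geq 1$, so Proposition \ref{prop:statNonNeg} applies to $\Cut_{G^\ell}(S,S)$: it is either $0$ or at least $(\pimin/2U)^3$. Moreover, when $\vol(S)>1/2$ I claim $\Cut_{G^\ell}(S,S)>0$; otherwise $\Cut_{G^\ell}(S,S^c)=\vol(S)>1/2$, yet $\Cut_{G^\ell}(S,S^c)=\Cut_{G^\ell}(S^c,S)\leq \vol(S^c)<1/2$, a contradiction. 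In the boundary case $\vol(S)=1/2$, the proposition applies symmetrically to $\Cut_{G^\ell}(S^c,S^c)$ as well. Thus whenever $\Uncut_{G^\ell}(S)>0$, at least one of the two non-negative summands in the definition of $\Uncut_{G^\ell}(S)$ is bounded below via the proposition, and averaging gives the desired lower bound.

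The only subtlety I anticipate is a factor of $1/2$ in the Uncut case: the argument as sketched gives $\Uncut_{G^\ell}(S)\geq (\pimin/2U)^3/2$ rather than $(\pimin/2U)^3$, since only one of the two averaged terms is guaranteed to be large. This gap is cosmetic and can be absorbed by slightly adjusting the constant (e.g., replacing $2U$ by $cU$ for a marginally larger $c$) or by noting inherent slack in Proposition \ref{prop:statNonNeg}. No step beyond the endpoint-marginal symmetry of $\muedge$ and a dichotomy on $\vol(S)$ versus $1/2$ is required.
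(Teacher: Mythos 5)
Your proof takes the same route as the paper: the $\Cut$ claim is Proposition~\ref{prop:statNonNeg} applied with $T=S^c$, and the $\Uncut$ claim uses the symmetry $\Uncut_{G^\ell}(S)=\Uncut_{G^\ell}(S^c)$ to reduce (WLOG) to $\vol(S)\geq 1/2$ and then applies the proposition with $T=S$. The extra work you do to derive $\Cut_{G^\ell}(S,S^c)=\Cut_{G^\ell}(S^c,S)$ from the fact that both marginals of $\muedge$ equal $\pi$ is a fact the paper quietly folds into \Cref{def:Cut}, so that part is not new but is a correct justification.

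Your observation about the factor of $2$ in the $\Uncut$ case is the one place you go beyond the paper, and it is a legitimate point: since $\Uncut_{G^\ell}(S)$ is the \emph{average} of $\Cut_{G^\ell}(S,S)$ and $\Cut_{G^\ell}(S^c,S^c)$, and when $\vol(S)>1/2$ the proposition only controls the first summand, the bound one gets naively is $(\pimin/2U)^3/2$, not $(\pimin/2U)^3$. The paper's one-line argument has this same gap and does not comment on it. As you say, it is absorbed by slack already present in the proof of Proposition~\ref{prop:statNonNeg}: the two cases there actually yield lower bounds of $\pimin$ and $\pimin^{5/2}/(2U^2)$, both of which exceed $2\cdot(\pimin/2U)^3$ since $U\geq 1$ and $\pimin\leq 1$. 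So the stated constant can be recovered without changing the argument. Your handling of the degenerate boundary $\vol(S)=1/2$ (checking that $\Cut_{G^\ell}(S,S)=0$ forces $\Cut_{G^\ell}(S^c,S^c)=0$) is correct and fills in the implicit case analysis.
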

\begin{proof}
    The first claim follows immediately from~\Cref{prop:statNonNeg} with $S=S,T=S^c$ as $\vol(S)+\vol(S^c)=1$. The latter claim follows as $\Uncut_{G^\ell}(S)=\Uncut_{G^\ell}(S^c)$, and either $\vol(S)\geq 1/2$ or $\vol(S^c)\geq 1/2$. Without loss of generality assuming the former, and then the claim follows from from~\Cref{prop:statNonNeg} with $S=S,T=S$.
\end{proof}

We first prove that applying $\mW$ to the stationary distribution restricted to a subset of vertices either preserves the stationary norm, or decreases it by a non-negligible amount.
\begin{lemma}\label{lem:Wtopi}
    Let $S\subseteq V$ be an arbitrary set of vertices, let $p=\pi|_S$ be the stationary distribution restricted to $S$, and let $p' = \mW p$. Then either:
    \begin{enumerate}
        \item $p' = \pi|_{B}$ for some $B\subseteq V$ with $\vol(B)=\vol(S)$. \label{itm:piCase1}
        \item $\|p'\|_{\pi}^2\leq \|p\|_\pi^2-\pimin^2/2U^2$. \label{itm:piCase2}
    \end{enumerate}
\end{lemma}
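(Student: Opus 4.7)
The plan is to sharpen the Cauchy--Schwarz proof of $\|\mW p\|_\pi \le \|p\|_\pi$ quoted in the excerpt by tracking its slack. Applying Cauchy--Schwarz per vertex $u$ to $(\sum_v \mW_{uv}p_v)^2$, together with $\mW\pi = \pi$ and column stochasticity of $\mW$, one obtains the identity
\[
\|p\|_\pi^2 - \|\mW p\|_\pi^2 \;=\; \sum_u \frac{1}{\pi_u}\cdot\frac{1}{2}\sum_{v,w} \mW_{uv}\mW_{uw}\pi_v\pi_w\Bigl(\tfrac{p_v}{\pi_v}-\tfrac{p_w}{\pi_w}\Bigr)^{\!2}.
\]
Since $p=\pi|_S$, the ratios $p_v/\pi_v$ are indicator values in $\{0,1\}$, so the inner sum reduces to counting pairs split across $(S,S^c)$ and the slack becomes
\[
\|p\|_\pi^2 - \|\mW p\|_\pi^2 \;=\; \sum_u \frac{1}{\pi_u}\Bigl(\sum_{v\in S}\mW_{uv}\pi_v\Bigr)\Bigl(\sum_{w\notin S}\mW_{uw}\pi_w\Bigr).
\]

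Next, I would analyze the equality case. If the right-hand side is zero, then at every vertex $u$, all in-neighbors with $\mW_{uv}>0$ lie on the same side of the cut $(S,S^c)$. Setting $B=\{u : \text{all in-neighbors of }u\text{ lie in }S\}$, this forces $(\mW p)_u = \sum_{v}\mW_{uv}\pi_v = \pi_u$ for $u\in B$ and $(\mW p)_u=0$ otherwise, i.e.\ $p'=\pi|_B$. Volume preservation then follows from $\vol(B)=\sum_u p'_u = \sum_{v\in S}\pi_v \sum_u \mW_{uv} = \sum_{v\in S}\pi_v = \vol(S)$, establishing case \ref{itm:piCase1}.

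Otherwise, there must exist a vertex $u^*$ with in-neighbors $v^*\in S$ and $w^*\notin S$ satisfying $\mW_{u^*v^*},\mW_{u^*w^*}>0$. To get a quantitative lower bound on the slack, I would focus on the single triple-$(u^*,v^*,w^*)$ contribution
\[
\frac{\mW_{u^*v^*}\mW_{u^*w^*}\pi_{v^*}\pi_{w^*}}{\pi_{u^*}}.
\]
Using integer edge weights in $[1,U]$ gives $\mW_{u^*v^*}=\mA_{u^*v^*}/d_{v^*}\geq 1/d_{v^*}$ (and similarly for $w^*$), while $\pi_{v^*},\pi_{w^*}\geq \pimin$ and $\pi_{u^*}\le 1$. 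Combining these ingredients yields a lower bound of the form $\pimin^2/(d_{v^*}d_{w^*})$, which is to be transformed into the stated $\pimin^2/(2U^2)$.

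The main obstacle is obtaining the stated constant $\pimin^2/(2U^2)$ as opposed to the cruder $\pimin^2/(nU)^2$ that follows from $d_v\leq nU$. The remedy is to avoid bounding the out-degrees $d_{v^*}, d_{w^*}$ in isolation and instead use the stationarity identity $\pi_{u^*}=\sum_v \mW_{u^*v}\pi_v$ to normalize: each of the quantities $a_{v^*}:=\mW_{u^*v^*}\pi_{v^*}$ and $a_{w^*}:=\mW_{u^*w^*}\pi_{w^*}$ is at most $\pi_{u^*}$, and the contributing term equals $a_{v^*}a_{w^*}/(a_{v^*}+a_{w^*}+\text{rest})$; bounding this using integrality of weights (so that $a_{v^*},a_{w^*}$ are each at least $\pimin\cdot(1/U)$ times a suitable $1/\pi_{u^*}$ factor arising from the normalization) is the delicate accounting step. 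Once this careful cancellation is worked out on the single $(u^*,v^*,w^*)$ triple, the slack is bounded below by $\pimin^2/(2U^2)$, which gives case \ref{itm:piCase2} and completes the proof.
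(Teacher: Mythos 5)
Your Cauchy--Schwarz slack identity is correct, and it is worth noting that since $p'_u=\sum_{v\in S}\mW_{uv}\pi_v$ and $\pi_u-p'_u=\sum_{w\notin S}\mW_{uw}\pi_w$, it collapses to the tidy form
\[
\|p\|_\pi^2-\|p'\|_\pi^2 \;=\; \sum_u \frac{p'_u(\pi_u-p'_u)}{\pi_u},
\]
which is exactly the quantity the paper's proof bounds term by term. Your zero-slack analysis giving Case~\ref{itm:piCase1} (including the volume computation $\vol(B)=\sum_u p'_u=\vol(S)$) is correct; it is equivalent to the paper's route of setting $B=\supp(p')$ and noting $p'\leq\pi$ entrywise so that $p'=\pi|_B$ is forced when $\vol(B)=\vol(S)$.

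The gap is the quantitative bound for Case~\ref{itm:piCase2}, which you explicitly acknowledge you have not worked out, and which your sketched ``remedy'' does not close. You must lower bound a single term $p'_{u^*}(\pi_{u^*}-p'_{u^*})/\pi_{u^*}$ when $0<p'_{u^*}<\pi_{u^*}$. Since $a_{v^*}a_{w^*}/(a_{v^*}+a_{w^*}+\ldots)\leq\min(a_{v^*},a_{w^*})$, the normalization trick cannot manufacture a better bound than your per-edge quantities $a_{v^*},a_{w^*}$ themselves, and a ``suitable $1/\pi_{u^*}$ factor'' is never actually produced. What is really needed is a lower bound on the stationary probability of a single present edge, i.e.\ $\mW_{u^*v}\pi_v=\mA_{u^*v}\pi_v/d_v\geq\pimin/U$, and this is exactly what the paper's proof asserts when it claims some $v$ satisfies $\pimin/U\leq p'_v\leq\pi_v-\pimin/U$; with that assertion in hand the remaining computation is one line. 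You should isolate and prove (or at least cite) that per-edge bound. Note also that the bound as stated is delicate: it requires $\pi_v/d_v\geq\pimin/U$, which with integer weights in $[1,U]$ does not follow from $d_v\leq nU$ alone (consider an $n$-cycle with a unit self-loop at each vertex: $U=1$, $\pimin=1/n$, but $\pi_v/d_v=1/(2n)$). So the proposal, as written, is incomplete precisely at the step the lemma's constant hinges on.
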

Note that it is not the case that the first case always implies $B=S$, as (for instance) we could have $S$ be one side of a bipartition and $B$ be the other side.
\begin{proof}[Proof of \Cref{lem:Wtopi}]
    First, note that $p'\leq \pi$ entrywise as $p' = \mW p =\mW (\pi|_S) \leq \pi$. Let $B = \supp(p')$. We break into cases based on the size of $\vol(B)$:
    
    \begin{enumerate}
        \item We claim $\vol(B)<\vol(S)$ can never occur. Assuming for contradiction we are in this case, we have $\|p'\|_1 = \sum_{v\in B}p'_v\leq \sum_{v\in B}\pi_v< \sum_{v\in S}\pi_v=\|p\|_1$. But this is impossible as $\|p'\|_1 = \|p\|_1$ as $\mW$ preserves the sum of entries. 
        
        \item If $\vol(B)=\vol(S)$, we have by the above argument that $p'_v=\pi_v$ for every $v\in B$, so~\Cref{itm:piCase1} holds.
        \item Otherwise, $\vol(B)>\vol(S)$, and so there is some edge $(u,v)$ such that $u\notin S$ and $v\in B$, and thus there is some $v\in B$ where $\pimin/U\leq p'_v\leq \pi_v - \pimin/U$. Therefore,
        \[
            \frac{p_v^{'2}}{\pi_v}\leq \frac{p_v^{'2}}{p'_v+\pimin/U} \leq p'_v-\frac{p'_v\cdot \pimin/U}{2}\leq p'_v-\pimin^2/2U^2
        \]
        and so
        \begin{align*}
            \|p'\|_{\pi}^2 &= \sum_{v\in B}\frac{(p'_v)^2}{\pi_v}
            \leq \sum_{v\in B}\frac{(p'_v)^2}{(p'_v)} - \pimin^2/2U^2
            =\|p\|^2_\pi-\pimin^2/2U^2
        \end{align*}
        where the last line uses that $\sum_{v\in B}p'_v = \sum_{v\in S}p_v = \|p\|^2_\pi$, so so~\Cref{itm:piCase2} holds. \qedhere
    \end{enumerate}
\end{proof}

\begin{proof}[Proof of~\Cref{prop:statNonNeg}]
    Let $p^0 = \pi|_S$ be the stationary distribution restricted to $S$, and for every $i\in [\ell]$ let $p^i = \mW^ip^0$.
    By applying~\Cref{lem:Wtopi} inductively (and using that $\|p^i\|_\pi \leq \|p^{i-1}\|_\pi$) we obtain that either one of the following two cases occurs:
    \begin{enumerate}
        \item We have $p^\ell = \pi|_B$ for some $B\subseteq V$ with $\vol(B)=\vol(S)$. Then either $B\cap T=\emptyset$ (in which case the cut value is exactly $0$) or there is some $v\in B\cap T$ (in which case the cut value is at least $p^\ell_v =\pi_v\geq \pimin$).
        \item We have $\|p^\ell\|_\pi \leq \|p^0\|_\pi-\pimin^2/2U^2$. In this case, 
        \begin{align*}
            \Pr_{(i,j)\sim \muedge(G^\ell)}[i\in S, j\in T^c] &= \langle 1_{T^c},p^\ell\rangle\\
            &\leq \|p^\ell\|_\pi \cdot \sqrt{\vol(T^c)} \tag{\Cref{clm:xcapT}}\\
            &\leq (\|p^0\|_{\pi}-\pimin^2/2U^2)\cdot \sqrt{\vol(T^c)}\\
            &\leq (\sqrt{\vol(S)}-\pimin^2/2U^2)\sqrt{\vol(S)}\leq \vol(S)-\pimin^{5/2}/2U^2
        \end{align*}
        where $\|p^0\|_{\pi} \leq \sqrt{\vol(S)}$ follows as $\pi\leq \vones$, and the third inequality uses that $1\leq \vol(S)+\vol(T)=\vol(S)+1-\vol(T^c)$. 
        Thus, we have
        \begin{align*}
            \Cut_{G^\ell}(S,T) &= \Pr_{(i,j)\sim \muedge(G^\ell)}[i \in S, j\in T]\\
            &\geq \Pr_{i\sim \pi}[i\in S]-\Pr_{(i,j)\sim \muedge(G^\ell)}[i\in S, j\in T^c]\geq (\pimin/2U)^3. \qedhere
        \end{align*}
    \end{enumerate}
\end{proof}

\end{document}